
\documentclass{article}
\usepackage[margin=.8in,left=.8in]{geometry}
\usepackage{amsthm}
\usepackage{amssymb}
\usepackage{amsmath}

\usepackage{adjustbox}

\usepackage{mathtools}
\usepackage{xcolor}
\usepackage{colortbl}    
\usepackage{stmaryrd}    
\usepackage[safe]{tipa} 

\usepackage{hyperref}
\hypersetup{
  colorlinks = true,
 linkcolor= blue, citecolor=blue           
}

\hypersetup{
        colorlinks=true,
        linkcolor=blue,
        filecolor=blue,      
        urlcolor=blue,
        citecolor=blue,
        linktoc=page
    }


\let\PLAINthebibliography\thebibliography
\renewcommand\thebibliography[1]{
  \PLAINthebibliography{#1}
  \setlength{\parskip}{0.5pt}
  \setlength{\itemsep}{0.5pt plus .3ex}
}

\newcommand{\glb}
{\mathrm{glb}}

\newcommand{\CartesianSpaces}{\mathrm{CartSp}}

\newcommand{\FrechetManifolds}{\mathrm{FrMfd}}

\newcommand{\SmoothManifolds}{\mathrm{SmthMfd}}

\newcommand{\SmoothSets}{\mathrm{SmthSet}}

\newcommand{\ThickenedSmoothSets}{\mathrm{Th}\SmoothSets}

\newcommand{\ThickenedCartesianSpaces}{\mathrm{Th}\CartesianSpaces}

\newcommand{\ThickenedSmoothManifolds}{\mathrm{Th}\SmoothManifolds}

\newcommand{\ev}
{\mathrm{ev}}

\newcommand{\loc}
{\mathrm{loc}}

\newcommand{\var}
{\mathrm{var}}

\newcommand{\op}
{\mathrm{op}}
\usepackage{enumitem} \setlist{nosep} 

\usepackage{tikz}
\usetikzlibrary{cd}

\usetikzlibrary{shapes.geometric,calc}

\usepackage{pgfplots}
\usepackage{amsmath,amssymb}
\pgfplotsset{compat=1.15}

\DeclareMathAlphabet{\mathpzc}{OT1}{pzc}{m}{it} 

\newcommand\mathscr[1]{\scalebox{1.1}{$\mathpzc{#1}$}}

\definecolor{darkblue}{rgb}{0.05,0.25,0.65}
\definecolor{darkgreen}{RGB}{20,140,10}
\definecolor{lightgray}{rgb}{0.9,0.9,0.9}
\definecolor{darkorange}{RGB}{200,100,5}
\definecolor{darkyellow}{rgb}{.91,.91,0}
\definecolor{lightolive}{RGB}{189,183,107}

\definecolor{greenii}{RGB}{20,140,10}
\definecolor{orangeii}{RGB}{200,100,5}

\usepackage{multirow}

\usepackage{enumerate} 

\makeatletter
\newcommand\makebig[2]{%
  \@xp\newcommand\@xp*\csname#1\endcsname{\bBigg@{#2}}%
  \@xp\newcommand\@xp*\csname#1l\endcsname{\@xp\mathopen\csname#1\endcsname}%
  \@xp\newcommand\@xp*\csname#1r\endcsname{\@xp\mathclose\csname#1\endcsname}%
}
\makeatother

\makebig{biggg} {3.0}
\makebig{Biggg} {3.5}
\makebig{bigggg}{4.0}
\makebig{Bigggg}{4.5}

\usepackage{mathptmx}
\usepackage{graphicx}
\DeclareRobustCommand{\coprod}{\mathop{\text{\fakecoprod}}}
\newcommand{\fakecoprod}{%
  \sbox0{$\prod$}%
  \smash{\raisebox{\dimexpr.9625\depth-\dp0}{\scalebox{1}[-1]{$\prod$}}}%
  \vphantom{$\prod$}%
}

\DeclareFontFamily{OMX}{MnSymbolE}{}
\DeclareSymbolFont{mnomx}{OMX}{MnSymbolE}{m}{n}
\SetSymbolFont{mnomx}{bold}{OMX}{MnSymbolE}{b}{n}
\DeclareFontShape{OMX}{MnSymbolE}{m}{n}{
    <-6>  MnSymbolE5
   <6-7>  MnSymbolE6
   <7-8>  MnSymbolE7
   <8-9>  MnSymbolE8
   <9-10> MnSymbolE9
  <10-12> MnSymbolE10
  <12->   MnSymbolE12}{}

\usepackage[new]{old-arrows}   

\usepackage{cleveref}

\crefformat{section}{\S#2#1#3} 
\crefformat{subsection}{\S#2#1#3}
\crefformat{subsubsection}{\S#2#1#3}

\theoremstyle{italics}
\newtheorem{theorem}{Theorem}[section]
\newtheorem{lemma}[theorem]{Lemma}
\newtheorem{proposition}[theorem]{Proposition}
\newtheorem{corollary}[theorem]{Corollary}
\theoremstyle{definition}
\newtheorem{definition}[theorem]{Definition}

\newtheorem{example}[theorem]{Example}

\newtheorem{remark}[theorem]{Remark}

\newtheorem{literature}[theorem]{Literature}

\renewcommand{\emph}{\textit}

\usepackage{latexsym}
\usepackage[all]{xy}
\usepackage{color}


\usepackage{bbm} 





\newcommand{\id}{\mathrm{id}}   			


\newcommand{\CF}{\mathcal{F}}

\newcommand{\CG}{\mathcal{G}}

\newcommand{\CH}{\mathcal{H}}

\newcommand{\CK}{\mathcal{K}}

\newcommand{\CL}{\mathcal{L}}

\newcommand{\CO}{\mathcal{O}}

\newcommand{\CP}{\mathcal{P}}

\newcommand{\CV}{\mathcal{V}}

\newcommand{\CX}{\mathcal{X}}

\newcommand{\CZ}{\mathcal{Z}}

\newcommand{\CE}{\mathcal{E}}

\newcommand{\FR}{\mathbbm{R}}     			
\newcommand{\NN}{\mathbbm{N}}     			
\newcommand{\DD}{\mathbbm{D}}     			
\newcommand{\RZ}{\mathbbm{Z}}     			

\newcommand{\dd}{\mathrm{d}}     			




\newcommand{\pr}{\mathrm{pr}}     			



\newcommand{\comment}[1]{}     				
     				%

\def\tyng(#1){\hbox{\tiny$\yng(#1)$}}			
\def\tyoung(#1){\hbox{\tiny$\young(#1)$}}			

\newcommand{\beq}{\begin{eqnarray}}
	\newcommand{\eeq}{\end{eqnarray}}

\definecolor{outrageousorange}{rgb}{1.0, 0.43, 0.29}

\newcommand{\om}{\omega}
\newcommand{\epsi}{\epsilon}
\newcommand{\nn}{\nonumber}
\newcommand{\frJ}{\mathfrak{J}}

\usepackage{eulervm}

\usepackage{tocloft}

\newcommand{\Hom}{\mathrm{Hom}}

\newcommand{\Alg}{\mathrm{Alg}}

\newcommand{\frT}{\mathfrak{T}}

\begin{document}


\setlength{\abovedisplayskip}{3pt}
\setlength{\belowdisplayskip}{3pt}
\setlength{\abovedisplayshortskip}{-3pt}
\setlength{\belowdisplayshortskip}{3pt}


\title{\Large\bf 
Field Theory via Higher 
Geometry II:
\\
Thickened Smooth Sets as Synthetic Foundations 
}
\author{Grigorios Giotopoulos${}^\ast$  \quad Hisham Sati${}^{\ast, \dagger}$   
}
\date{}

\maketitle

\begin{abstract}

This is the second in a series of papers that aim to develop rigorous and most encompassing foundations for field theory, where in 
the first installment \cite{GS23} we laid out the natural formulation of bosonic variational field theory via the ``functorial geometry'' of smooth sets, namely in the topos over the site of spaces with smooth maps between them.
Here, we extend this to the category  $\ThickenedSmoothSets$
of {\it infinitesimally thickened smooth sets}.
We first describe the Cahiers topos in a simplified, but fully rigorous, $\FR$-algebraic setting -- which should serve as a more accessible introduction to the theory of Synthetic Differential Geometry to both physicists and mathematicians.
Then, we formulate local Lagrangian field theory in this rigorous setting in which infinitesimal spaces exist 
and interact correctly with the field-theoretic spaces of infinite jet bundles, off-shell and on-shell spaces of fields etc.

\smallskip 
 This setting subsumes all previous constructions (see \cite{GS23}) and further recovers all the relevant tangent bundles of traditional (off-shell and on-shell) field theory considerations via the synthetic tangent bundle construction, i.e., as ``infinitesimal curves'' in those spaces, which were previously defined only in an ad-hoc manner.
Beyond finally putting such aspects of the theory on a firm foundation, this approach recognizes the variational principle of local Lagrangian field theory, equivalently, as an intersection
of thickened smooth sets. 
 It also suggests the rigorous formalization of  perturbative field theory as the (literal) restriction to a (synthetic) infinitesimal neighborhood $\DD_\phi \hookrightarrow \CF$ around a field configuration.  
 Furthermore, our context naturally accomodates  more general rigorous considerations, where the manifolds may have boundaries and corners, a situation 
 which has been recently attracting greater attention in the field-theoretical literature.

\end{abstract}

 \begin{center}
 \begin{minipage}{11.5cm}
\small \tableofcontents
\end{minipage}
 \end{center}

\vfill

\hrule
\vspace{5pt}

{
\footnotesize
\noindent
\def\arraystretch{1}
\tabcolsep=0pt
\begin{tabular}{ll}
${}^*$\,
&
Mathematics, Division of Science; and
\\
&
Center for Quantum and Topological Systems,
\\
&
NYUAD Research Institute,
\\
&
New York University Abu Dhabi, UAE.  
\end{tabular}
\hfill
{
\includegraphics[width=3cm]{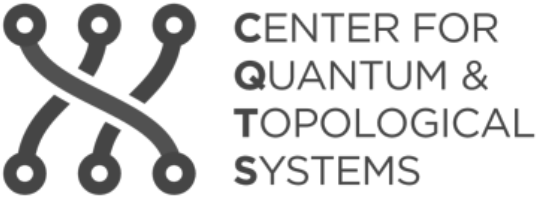}
}

\vspace{1mm} 
\noindent ${}^\dagger$The Courant Institute for Mathematical Sciences, NYU, NY

\vspace{.2cm}

\noindent
The authors acknowledge the support by {\it Tamkeen} under the 
{\it NYU Abu Dhabi Research Institute grant} {\tt CG008}.
}

\newpage

\section{Introduction} 

This is the second in a series of papers that aim to lay rigorous foundations for field theories. 
In the first installment \cite{GS23} we set up a natural formulation of variational field theory via the ``functorial geometry'' of {\it smooth sets}, namely in the topos over the site of Cartesian spaces with smooth maps between them.
Here, we further refine this formulation to thickened smooth sets, which are in addition equipped with infinitesimal extension.
We show,  by a list of constructive results and examples, that this is an even more natural setting 
to accommodate further intuitive notions used in (local) classical field theory. More explicitly, this forms a well-adapted topos
for ``\textit{Synthetic Differential Geometry}'' (SDG), defined as sheaves over a `thickened' version of Cartesian spaces that includes test probes with `infinitesimal directions', also known as the \textit{Cahiers topos}. 
That is, we consider a category of smooth geometrical objects that are furthermore probe-able by certain well-defined infinitesimal spaces. 
In other words, these ``infinitesimally thickened smooth sets'' will have, in addition to a smooth structure, a well-defined notion of infinitesimal structure, both essential for a proper treatment of field theory.  
 Schematically, 
 
\begin{center}
\fbox{Smooth structure}
+
\fbox{Infinitesimal structure}
$\leadsto$ Cahiers topos 

\end{center} 
hence containing objects of finite (smooth) and infinitesimal extent, e.g., 
 
\begin{center}
 \begin{tabular}{cc} 
\begin{tikzpicture}
  \shade[shading=radial,
         inner color=black!40,
         outer color=black!5]
        (0,0) circle (1);

  \fill (0,0) circle (1.5pt);

  \node at (-1.65,.2) {\footnotesize Thickened} ;
  \node at (-1.55, -0.2) {\footnotesize point, $\mathbb{D}$};
\end{tikzpicture}
&\qquad 
\adjustbox{raise=18pt}{
\begin{tikzpicture}
  \shade[shading=axis,
         bottom color=black!5,
         middle color=black!80,
         top color=black!5]
        (-2.5, -.5) rectangle (2.5, .5);

  \fill (0,0) circle (1.5pt);

  \draw[thick] (-2.5,0) -- (2.5,0);

  \node[right] at (2.65,0) {\footnotesize Thickened $\mathbb{R}$};
\end{tikzpicture}
}
\end{tabular} 
\end{center} 

\vspace{-4mm} 
\paragraph{The setting.}
In \cite{GS23}, we used the category  $\mathrm{CartSp}$ as probes, Cartesian spaces with smooth maps between them,
which yielded the same notion of generalized smooth spaces (smooth sets) as probing with the whole category  $\SmoothManifolds$ of smooth manifolds instead. 
A most useful fact towards further generalized notions of geometrical spaces is to work dually with the corresponding algebras of smooth functions, captured in 
the (opposite) category of commutative algebras $\mathrm{CAlg}_\FR^{op}$, giving a  fully faithful 
composite $\mathrm{CartSp} \longhookrightarrow\SmoothManifolds \longhookrightarrow \mathrm{CAlg}_\FR^{op}$. Indeed, this allows to enlarge our original site of probe spaces so as to include infinitesimal points and their products with Cartesian spaces, leading to the category of (infinitesimally) thickened Cartesian spaces
$\ThickenedCartesianSpaces$. We intuitively think of objects, denoted by $\FR^{k}\times \DD$, 
as the usual Cartesian spaces, supplied with an extra `halo of infinitesimal directions' around each point $x\in \FR^k$. This category becomes a site by defining the (Grothendieck) coverage to be that of (differentiably) good open covers,
extended trivially along infinitesimal directions. 
Considering maps between these objects as formally dual to `pullbacks' between their function algebras in CAlg$_{\FR}$, leads 
as above to the fully faithful composite 
$\mathrm{ThCartSp} \longhookrightarrow \ThickenedSmoothManifolds   \longhookrightarrow \mathrm{CAlg}_\FR^{op}$, 
where $\mathrm{ThCartSp}$ is the thickened version of $\mathrm{CartSp}$, and similarly $\ThickenedSmoothManifolds$ of $\SmoothManifolds$. Overall, we have the following diagram of fully faithful inclusions between sites of probes, along with corresponding projections that forget the infinitesimal extensions
\begin{center}
  \begin{tikzcd}[row sep=-1pt, 
    column sep=20pt
  ]
   \mathrm{CartSp} 
    \ar[rr, hook]
    \ar[dd, hook]
    &&
    \SmoothManifolds 
    \ar[rrd, hook]
    \ar[dd, hook]
    \\
    &&&&
    \mathrm{CAlg}_\FR^{op}.
    \\
  \ThickenedCartesianSpaces  
  \ar[rr, hook]
   \ar[uu, shift left=1.5ex] 
  && 
    \ThickenedSmoothManifolds
    \ar[rru, hook]
    \ar[uu, shift left=1.5ex] 
  \end{tikzcd}
\end{center} 
\smallskip 
We will show that the category of spaces probe-able by such thickened smooth probes is the proper rigorous (and intuitive) setting to naturally define the notions of tangent vectors in all sorts of field theoretic spaces: manifolds with boundaries and corners, infinite jet bundles, off-shell and on-shell field spaces. These are all constructed in such a way that immediately recovers the objects introduced in \cite{GS23} and hence in turn those already used in the literature. differential. Moreover, this setting allows us to  rigorously formalize the intuition of ``infinitesimal neighborhoods'', hence allowing to view jets of fields as sections over infinitesimal neighborhoods of points, and initial conditions as sections over infinitesimal neighborhoods of codimension-1 submanifolds of spacetime. 

\begin{literature}[\bf Synthetic differential geometry: foundations]
The idea of axiomatizing differential geometry using inspiration from topos theory 
goes back to Lawvere \cite{Law67} (see also a modernized perspective in \cite{Law98}). 
Further developments, especially in terms of refining the model-building
were originally provided by Dubuc \cite{Dubuc79} in terms of sheaf toposes on the category of smooth loci, certain formal duals of $C^\infty$-algebras (more commonly known as $C^\infty$-rings).
Detailed discussion on infinitesimals is in \cite{Penon85}. 
The topic has been taken up within differentially cohesive homotopy type theory
\cite{Wellen17}\cite{Myers22}\cite{Cher24}.
Surveys in the form of books include an overview of geometric structures in \cite{Lavendhomme96}, 
calculations and applications in \cite{Bell98},
axiomatics in \cite{Kock06}\cite{Kock10}, and further concrete constructions of well-adapted models using the technology of $C^\infty$-algebras in \cite{BD86}\cite{MoerdijkReyes}. Shorter expositions include 
\cite{Reyes86}\cite{Kock16}\cite{Shulman06}\cite{Bell00}\cite{Kostecki09}\cite{Urs23}.
See also \cite{BGS18} for further extension to differential topology. 
\end{literature}

\newpage 
\begin{literature}[\bf Synthetic differential geometry: applications]
For the original motivation of SDG with an aim to applications to motion and continuum mechanics, see \cite{Law80}\cite{Law97}\cite{Law02}. 
A framework in which Ehresmann's theory of jets can be treated from the viewpoint of synthetic differential geometry, i.e., in the categorical study of a well-adapted topos is in \cite{Kock80}. 
An alternative attempt at a formulation of jet bundles is given in \cite{Nish01}. 
Lie's geometric theory of first-order differential equations is treated in \cite{Kock25}. A modern perspective of PDE theory in the Cahiers topos is provided in \cite{KS17}. Synthetic reasoning has been used to provide a derivation of Einstein's equation for general relativity in
\cite{Reyes11}. 
Poisson algebras for non-linear scalar field theories are treated in \cite{BS17}, and our treatment here 
can be viewed as a vast generalization of their observation and for all Lagrangian field theories 
in full detail. 

\end{literature}

\noindent {\bf Our approach.} We highlight the following: 

\vspace{1mm} 
\begin{itemize}[leftmargin=4.5mm, itemsep=2pt] 
\item We are not working  in the setting of synthetic differential geometry in the axiomatic 
sense, but rather \textit{externally} via a particularly approachable and utilizable sheaf topos model, namely the  Cahiers topos of Dubuc, in our simpler and more down-to-earth incarnation of it as infinitesimally thickened smooth sets. In other words, we will often be phrasing our discussion in a \textit{plot-wise} manner (also known as level-wise or object-wise), following the intuition arising from physical considerations.

\item We carefully show that the original Cahiers topos of Dubuc is, in fact, $\FR$-algebraic in disguise. Indeed, the perhaps abstract and technically heavier machinery of $C^\infty$-algebras might have been an obstacle in the dissemination of the ideas of (and models of) synthetic differential geometry; we hope this result should make the subject more accessible, by guaranteeing that one may safely bypass familiarity with the notions of $C^\infty$-algebras (at least at the beginner's level). 

\item Generally, our approach highlights a streamlined presentation with  
pedagogy in mind to the extent possible. 
Our main goal here is to lay out a foundation that is both mathematically sound and practical, convenient, and staying 
close to the established practice/intuition of constructions using ``infinitesimals'' in physics.

\end{itemize}

\vspace{1mm} 
\noindent While it is our goal to develop Lagrangian field theory in the above rigorous sense, we highlight 
some of the previous, several being more traditional, approaches: 

\begin{literature}[{\bf Classical Lagrangian field theory}]\label{lit-cft}
The reader is referred to \cite{GS23} for extensively cited literature.
Here, we highlight the ones more directly relevant to the current
development
\cite{GMS09}\cite{Kr15}\cite{MH16}.
The approach to field theory in terms toposes of sheaves, which inspired this series, is due to  
\cite{dcct}\cite{Schreiber24}.
Another approach close to ours as a rigorous formalization of (bosonic) field theories 
 \cite{Blohmann23b}.
The calculus of variations for field theory prominently involves the variational bicomplex.
Also for this topic, extensive literature is cited in \cite{GS23}, while here we highlighted the most relevant ones, 
which include the cohomological properties within various 
approaches in 
\cite{Takens79}\cite{AD80}\cite{BDK}\cite{AF97}\cite{GMS00}.
 A detailed exposition is given in \cite{An91}.
\end{literature}

 \vspace{-5mm} 
\paragraph{Structure and outline of results.}

\vspace{1mm} 
\begin{itemize}[leftmargin=4.5mm] 
\item In \cref{Sec-inf}, we motivate and develop in detail infinitesimally thickened smooth sets as the main backdrop for 
where bosonic field theories ought to be defined. 

\vspace{1mm}
\begin{itemize}[leftmargin=6mm, itemsep=3pt]  
\item[--]
In \cref{Subsec-inf},  we consider infinitesimal probes living alongside Cartesian spaces, 
by recalling how Cartesian spaces and smooth manifolds embed into commutative algebras (Prop. \ref {CartSptoAlgebras}). We then 
describe tangent vectors on smooth manifolds via infinitesimals (Lem. \ref{TangentBundleSet}), 
define infinitesimal disks and more generally infinitesimally thickened points,
and show how the latter are subspaces of the former (Lem. \ref{InfinitesimalPointsAsSubspacesOfInfinitesimalDisks}).

\item[--] In \cref{Subsec-thickSet}, we consider sheaves over infinitesimally thickened smooth probes. 
We enlarge our original site of probe spaces CartSp from \cite{GS23} within the (opposite) category
of commutative algebras, so as to include infinitesimally thickened points and their products with Cartesian spaces (hence thickened
versions of Cartesian spaces). We demonstrate how thickened manifolds, manifolds with corners, and smooth closed subspaces, 
are naturally examples of thickened smooth sets. We also describe the thickened moduli space of de Rham forms.

\item[--] In \cref{Subsec-thickmap},
we define thickened smooth mapping spaces and thickened smooth sets of sections, allowing us to consider the ``synthetic tangent bundle'' of a manifold (Prop. \ref{TangBundlesCoincide}).
We then consider the synthetic tangent bundle of infinitesimal disks and manifold mapping spaces (Lem. \ref{ManifoldMappingSpaceTangentBundle}), as well as that of field spaces of sections (Prop.  \ref{SyntheticTangentBundleOfFieldSpace}). 
We show how to represent plots of tangent vectors via paths of plots (Lem. \ref{LinePlotsRepresentTangentVectors}), 
define the variational cotangent bundle, and exhibit the intuitively desired property of tangent vectors within thickened smooth sets, namely that path derivations depend only on tangent vectors (Lem. \ref{DerivativesAlongLinesDependOnTangentVectors}). 
We define thickened infinite jet bundles, recover their tangent bundles synthetically (Lem. \ref{SyntheticInfiniteJetTangentBundle}), 
and then describe how to explicit represent plots of $J^\infty_M F$ and its tangent bundle.

\item[--] In \cref{JetBundlesSyntheticallySec}, we consider the natural relationship of infinitesimal neighborhoods and jets, 
defining them for manifolds, and describing jets as sections over those 
(Lem. \ref{JetsofSections=InfinitesimalJets}). We define the infinitesimal shape
and synthetic infinitesimal neighborhoods of arbitrary thickened smooth sets (Def. \ref{SyntheticInfinitesimalNeighborhood}), and show how the latter recovers the more traditional notion for manifolds 
(Lem. \ref{SyntheticInfinitesimalNeighborhoodOfManifold}). We motivate how the restriction of local field theoretic functionals to these neighborhoods in field spaces ought to encode the traditional practice of perturbative field theory, by showing this explicitly in the case of field theory over a point-spacetime.
We describe $\infty$-jets of sections as sections over infinitesimal neighborhoods and define synthetic infinitesimal neighborhoods of arbitrary submanifolds. We then describe jets over submanifolds as sections over infinitesimal neighborhoods of submanifolds  (Prop. \ref{SectionsOverInfinitesimalNeighborhoodOfSumbanifolds}), hence making rigorous the intuition of initial conditions as fields defined over infinitesimal neighborhoods of (codimension 1) submanifolds. 

\end{itemize}

\item In \cref{Sec-FieldtheorySynth}, we undertake our description of local Lagrangian field theory in the above setting. 

\vspace{1mm} 
\begin{itemize}[leftmargin=6mm, itemsep=3pt]  
\item[--]
In \cref{Sec-jetSyth}, we study the geometry of the infinite jet bundle, 
considering horizontal and vertical splittings, characterizing the vertical jet bundle 
(Lem. \ref{VerticalJetBundleAsJetVerticalBundle}), the splitting of the infinite jet tangent bundle
(Prop. \ref{SmoothSplittingProp}), and the synthetic pushfoward of jet prolongated section
(Cor. \ref{SyntheticPushfowardOfJetProlongatedSection}).
Then, we describe how the variational bicomplex arises 
completely within thickened smooth sets, along with the corresponding results on vertical and
horizontal cohomologies.

\item[--] This leads us naturally, in \cref{Subsec-bicomplexSynth}, to consider the transgression and the local bicomplex. 
We describe a direct extension to ThSmthSet for all the definitions and related results regarding local Lagrangians and local currents/functionals.
For this, we describe the crucial ingredient of the thickened infinite jet prolongation (Lem. \ref{ThickenedInfiniteJetProlongation}), which 
as a concrete example 
we apply explicitly to the Euler-Lagrange operator. 
We then describe the synthetic tangent bundle of jet bundle sections (Lem. \ref{SyntheticTangentBundleOfJetBundleSections})
and the synthetic pushforward of jet prolongation (Prop. \ref{SyntheticPushforwardOfJetProlongation}). This leads us
to consider the pushforward of the evaluation map of sections $\ev : \CF^\infty\times M \rightarrow J^\infty_M F$ of the infinite jet bundle $J^\infty_M F\rightarrow M$ (Prop. \ref{PushfowardOfEvaluationMap}), together with an explicit description of the pushforward of corresponding prolongated evaluation
 (Cor. \ref{PushforwardOfProlongatedEvaluation}). 
We end by showing how the bicomplex of local forms and its Cartan calculus thus appear naturally within ThSmthSet.

\item[--] In \cref{Subsec-onshellSynth}, we describe on-shell fields as a thickened smooth critical set. 
We define critical plots of a thickened smooth map and establish 
functoriality of the thickened critical set (Thm. \ref{FunctorialityOfTheThickenedCriticalSet}) for local field theories, by exhibiting its equivalence with the Euler--Lagrange locus. 
This then naturally yields the identification of the synthetic tangent bundle of on-shell fields as the space of Jacobi fields
(Cor. \ref{OnshellSyntheticTangentBundle}). 
\end{itemize}

\medskip 
\item Our Appendix \cref{App-Synth}
serves as a technical, purely mathematical backbone, and could be of independent interest to 
those seeking deeper perspectives on synthetic differential geometry. 

\vspace{1mm} 
\begin{itemize}[leftmargin=6mm, itemsep=3pt]  
\item[--] Firstly, in \cref{App-CahierCorners}, we justify our usage of the $\FR$-algebraic version of Dubuc's Cahiers topos comprised of sheaves over infinitesimally thickened Cartesian probes $\ThickenedCartesianSpaces\hookrightarrow \mathrm{CAlg}^\op$, rather than the original $C^\infty$-algebraic version from \cite{Dubuc79}, via Thm. \ref{ThickenedCartesianSpacesAreSmoothLoci} and Cor. \ref{CahiersToposIsRalgebraic}. 
We prove a generalized version of Hadamard's lemma for partial derivatives (Lem. \ref{PartialHadamardsLemma}) and extend to manifolds 
and Cartesian products (Cor. \ref{HadamardOnManifoldCartesianProduct}), which imply that the function algebras of our thickened probes  are all quotients of smooth
Cartesian algebras (Cor. \ref{ThickenedFunctionAlgebrasAsQuotientAlgebras}). 
We establish that $\mathbb{R}$-algebras map into thickened Cartesian algebras are automatically $C^\infty$-algebraic (Prop. \ref{RAlgebraMapsAreCinfty}), which implies that our 
  thickened Cartesian spaces qualify as smooth loci (Thm. \ref{ThickenedCartesianSpacesAreSmoothLoci}). The latter immediately implies that the Cahiers topos is $\FR$-algebraic (Cor. \ref{CahiersToposIsRalgebraic}), being equivalent to our $\ThickenedSmoothSets$.

\item [--] Secondly, in \cref{ManifoldsWithCornersAndWeilBundlesSection}, 
 we show how manifolds with boundaries and corners embed both into (plain) commutative $\FR$-algebras, and also in our $\FR$-algebraic version of the Cahiers topos, i.e., thickened smooth sets. We do this also for the theory of Weil bundles over them, by first 
 including a concise review of traditional Weil bundles, in a manner directly applicable to manifolds with corners. Concretely, we extend Milnor’s exercise to the case of smooth closed subspaces 
 (Cor.  \ref{MilnorsExerciseForClosed}) and further extend to the desirable setting of manifolds with corners 
(Cor. \ref{MilnorsForManifoldsWithCorners}). We use the latter to show how smooth manifolds with boundaries and corners embed into (plain) commutative $\FR$-algebras 
(Prop. \ref{ManifoldsWithCornerstoAlgebras}).
We then treat Weil bundles of manifolds with corners in 
Prop.  \ref{WeilBundleOfManifoldWithCorners} and show how these are recovered via the
synthetic Weil bundle construction in 
Prop. \ref{SyntheticWeilBundle} by working internally to $\ThickenedSmoothSets$.

\item [--] Thirdly, in \cref{App-SyntheticJetProlongation}, we define the infinitesimal neighborhood along 
a morphism (Lem. \ref{InfinitesimalNeighborhoodOfDiagonal}) of arbitrary thickened smooth sets, 
and use it to define the 
synthetic jet bundle of any such map, which we prove is a thickened smooth set (Lem. \ref{SyntheticJetbundleisThickenedSmoothSet}).
We establish the adjunction between the diagonal neighborhood functor $\frT_\CF$ and the synthetic $\infty$-jet bundle $J^\infty_\CF$ functor 
(Prop. \ref{FormaldiskbundleJetbundleAdjunction}). 
We then define the synthetic jet prolongation, 
and show how this recovers the traditional infinite jet prolongation in the case of  fiber bundles of manifolds (Lem. \ref{SyntheticJetprolongforFiberBundle}).

\item[--] Lastly, in \cref{ModuliOfDifferentialForms}, we
 define the fully classifying version of the moduli space of de Rham forms.
Our approach here 
allows for a direct extraction of its classifying property, which has remained widely underappreciated due to its somewhat obscure realization in the original axiomatic SDG approaches. 
For instance, it is immediate in our setting that the moduli space of 1-forms $\widehat{\mathbold{\Omega}}^1 \in \CE$ classifies a subset of 
smooth maps $T\CF \rightarrow \FR$ out of the synthetic tangent bundle, which identifies the former as a linear subobject 
(Cor. \ref{1FormModuliSpaceIsSubobject}) of a larger classifier of all (non-linear) functions out of tangent bundles. 
The full (linear) classifying nature of form moduli is established in Prop. \ref{ClassifyingNatureOfFormModuli}.
The algebraic moduli space is shown to naturally be a subspace of the fully classifying moduli space 
(Lem. \ref{RelationOfTwoModuliSpaces}) and they become isomorphic in a larger topos where tangent bundles are 
representable (Cor. \ref{ModuliSpacesAreIsomorphic}). 
We finally briefly describe how the above extends to differential $n$-form classifying spaces.

\end{itemize} 

\end{itemize}



\newpage

\section{Infinitesimally thickened smooth sets}
\label{Sec-inf}

Recall that in \cite{GS23} we presented the natural formulation of variational field theory via the ``functorial geometry'' of {\it smooth sets}, namely in the topos over the site of Cartesian spaces with smooth maps between them.
In this section, we further refine the latter topos to a setting for  ``synthetic differential geometry'' via \textit{thickened smooth sets}, which are, in addition, equipped with infinitesimal extension.

\subsection{Infinitesimally thickened probes}
\label{Subsec-inf}

Our first goal is to identify a category of \textit{infinitesimal probe spaces} $\DD$, such that generalized spaces $\CF$ may be defined as being probe-able by them, in such a manner that $\CF(\DD)$ may be reasonably interpreted as the \textit{infinitesimal plots} in $\CF$. This is in direct analogy to $\CF(\FR^k)$ being interpreted as the set of smooth $k$-dimensional plots of finite (or infinite) extent in $\CF$. 

\smallskip 
None of the Cartesian probe spaces in our original site $\CartesianSpaces$ serve such a purpose, 
and so we must enlarge the site by introducing infinitesimal probe spaces. To that end, we recall the following crucial but possibly underappreciated result, which follows by “Milnor’s exercise” (Prop. \ref{MilnorsExercise}, following e.g. \cite[§35.9]{KMS93}). A (generalized) proof of the following is included in the Appendix (Prop. \ref{ManifoldsWithCornerstoAlgebras}), following the standard case (e.g. \cite[§35.10]{KMS93}).
\begin{proposition}[\bf Smooth manifolds embed into Algebras]
\label{CartSptoAlgebras}
The functor 
\begin{align*}
	C^{\infty}(-) \;:\; \SmoothManifolds & \; \xhookrightarrow{\quad \quad } \; \mathrm{CAlg}_{\FR}^{op} \\
	M& \; \xmapsto{\quad \quad} \; C^{\infty}(M) \, ,\nn 
\end{align*}
sending a finite-dimensional smooth (second countable and Hausdorff) manifold to its function algebra is fully faithful, in that for any pair $N,M \in \SmoothManifolds$ the smooth functions $f : N \xrightarrow{\;} M$ biject onto the algebra homomorphisms $f^\ast : C^\infty(M) \xrightarrow{\;} C^\infty(N)$.
Hence also the composite
\begin{align}\label{CartSpIntoAlgebras}
C^\infty(-)\, :\,\mathrm{CartSp} \xhookrightarrow{\quad \quad} \SmoothManifolds \longhookrightarrow \mathrm{CAlg}_\FR^{op}
\end{align}
is fully faithful.
\end{proposition}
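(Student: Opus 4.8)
The plan is to prove the classical \emph{Milnor exercise}: that $f\mapsto f^\ast$ is a bijection from smooth maps $N\to M$ onto unital $\FR$-algebra homomorphisms $C^\infty(M)\to C^\infty(N)$, for every $N,M\in\SmoothManifolds$; fully faithfulness of the composite \eqref{CartSpIntoAlgebras} then follows at once, since $\CartesianSpaces\hookrightarrow\SmoothManifolds$ is a full subcategory inclusion (hence fully faithful) and fully faithful functors compose. \emph{Faithfulness} is immediate: if $h\circ f=h\circ g$ for all $h\in C^\infty(M)$, then since a Whitney embedding of $M$ already supplies smooth functions separating points of $M$, we get $f=g$. For \emph{fullness}, I would first invoke the Whitney embedding theorem — this is precisely where second countability and Hausdorffness enter — to realize $M$ as a \emph{closed} embedded submanifold of some $\FR^k$, with ambient coordinates restricting to $\xi^i:=x^i|_M\in C^\infty(M)$. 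Next recall that restriction along a closed embedding is surjective on smooth functions (extend off $M$ via a tubular neighborhood and a bump function), so $C^\infty(M)\cong C^\infty(\FR^k)/I$ as $\FR$-algebras, where $I$ is the ideal of functions vanishing on $M$; given $\varphi:C^\infty(M)\to C^\infty(N)$, precomposing with this quotient yields $\tilde\varphi:C^\infty(\FR^k)\to C^\infty(N)$ which annihilates $I$.

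Now set $f:=\big(\tilde\varphi(x^1),\dots,\tilde\varphi(x^k)\big):N\to\FR^k$, a smooth map, and write $f^i:=\tilde\varphi(x^i)$. The key computation is this: for $h\in C^\infty(\FR^k)$ and $x\in N$, put $a:=f(x)\in\FR^k$ and apply Hadamard's lemma on the convex domain $\FR^k$ to write $h=h(a)\cdot 1+\sum_i(x^i-a^i)\,g_i$ with $g_i\in C^\infty(\FR^k)$. Applying the unital algebra homomorphism $\tilde\varphi$ and evaluating the resulting function of $C^\infty(N)$ at $x$, every factor $\big(f^i-a^i\big)(x)=f^i(x)-a^i$ vanishes by the very definition of $a$, so $\tilde\varphi(h)(x)=h(a)=h(f(x))$. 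Since $x\in N$ was arbitrary, $\tilde\varphi(h)=h\circ f$ for all $h\in C^\infty(\FR^k)$.

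Two conclusions follow from this single identity. Applying it to $h\in I$ gives $h\circ f=\tilde\varphi(h)=0$; since $M$ is closed, were some $f(x)\notin M$ there would be a bump function in $I$ nonvanishing at $f(x)$, a contradiction, so $f(N)\subseteq M$ and $f$ factors through a smooth map $N\to M$ (smoothness into a closed embedded submanifold being the same as smoothness into $\FR^k$ with image in $M$). And for any $\bar h\in C^\infty(M)$, choosing an extension $H\in C^\infty(\FR^k)$ with $H|_M=\bar h$ gives $\varphi(\bar h)=\tilde\varphi(H)=H\circ f=\bar h\circ f=f^\ast\bar h$, so $\varphi=f^\ast$, completing fullness and hence the proposition.

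\textbf{Expected main obstacle.} None of the category theory is difficult; the entire weight rests on the three analytic inputs — the Whitney embedding theorem, smooth extension across a closed embedding, and Hadamard's lemma — which are standard but genuinely use the manifold hypotheses. The one subtlety in the write-up is to notice that the single Hadamard computation does double duty, simultaneously forcing $f(N)\subseteq M$ and identifying $f^\ast$ with $\varphi$, so that no separate argument is needed that $f$ lands in $M$. For the strengthened versions treated in the Appendix (manifolds with boundary and corners, smooth closed subspaces) one must replace the Whitney embedding and the extension lemma by their corners/subspace analogues, as in the paper's Cor.~\ref{MilnorsForManifoldsWithCorners}; that substitution is where the genuine additional work lies.
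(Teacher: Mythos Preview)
Your proof is correct and complete. The route differs in organization from the paper's Appendix proof (Prop.~\ref{ManifoldsWithCornerstoAlgebras}), though the analytic ingredients overlap. The paper first isolates Milnor's exercise (Prop.~\ref{MilnorsExercise}, Cor.~\ref{MilnorsForManifoldsWithCorners}) --- that $\FR$-algebra maps $C^\infty(M)\to\FR$ are exactly point evaluations --- and then, given $\phi:C^\infty(M)\to C^\infty(N)$, defines $f$ pointwise by $\ev_n\circ\phi=\ev_{f(n)}$, afterwards verifying smoothness by pulling back locally-extended coordinate functions. You instead globalize from the start: Whitney-embed $M\hookrightarrow\FR^k$, lift $\varphi$ through the quotient $C^\infty(\FR^k)\twoheadrightarrow C^\infty(M)$, and let a single Hadamard-lemma computation simultaneously identify $\tilde\varphi(h)=h\circ f$ and force $f(N)\subseteq M$. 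Your approach is slightly more economical for boundaryless manifolds; the paper's modular separation of Milnor's exercise pays off when extending to boundaries, corners, and arbitrary closed subspaces (Cor.~\ref{MilnorsExerciseForClosed}, Cor.~\ref{MilnorsForManifoldsWithCorners}), since the evaluation-map characterization can be upgraded once and then reused verbatim in the fullness argument. Both arguments ultimately rest on the same three pillars you name --- Whitney embedding, smooth extension off a closed set, and Hadamard's lemma --- so the difference is one of packaging rather than substance.
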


In the spirit of algebraic geometry, it follows that the objects of $\mathrm{CAlg}_{\FR}^{op}$ may be thought of as spaces generalizing smooth manifolds, being formally dual to some algebra of functions. Indeed, among these generalized smooth spaces we may find smooth manifolds with \textit{boundaries and corners} (Prop. \ref{ManifoldsWithCornerstoAlgebras}), infinitesimal spaces, and furthermore smooth manifolds equipped with \textit{infinitesimal thickenings}.
The archetypical example of an infinitesimal space  (e.g. \cite[p. 218]{Mumford88}) is the one dual to the ``algebra of dual numbers'' (\cite[pp. xi, 4]{Kock81}\cite[p. 19]{MoerdijkReyes} following \cite{Gruenwald09})
$$
  \CO\big(\DD^1(1)\big)
  \;:=\;
  \FR[\epsi]/\epsi^2
  \,,
$$
which may be thought of as the smooth functions on the first order infinitesimal line (or neighborhood) $\DD^1(1) \hookrightarrow \FR$ of the origin $0\in \FR$ in the real line. One thinks of $\epsi\in \CO\big(\DD^1(1)\big)$ as the linear coordinate function on $\DD^1(1)$, 
which squares to zero since `all points are infinitesimally close to the origin'. Such a quotient rigorously captures the idea of dropping higher-order terms of infinitesimally small quantities, 
as often informally practiced in the physics literature. 
Following the formal dual intuition, the algebra projection map $C^\infty(\FR) \rightarrow \CO\big(\DD^1(1)\big)$, given by the first-order 
Taylor expansion $f\mapsto f(0)+\epsi \cdot f'(0)$, may be thought of as dually encoding the 
`embedding' 
$$ 
  \iota_0
  \;:\; 
  \DD^1(1) \xhookrightarrow{\quad \quad} \FR^1\, ,
$$ 
of this infinitesimal neighborhood into the real line. The suggestive naming for $\DD^1(1)$ is justified by the following classical result (e.g. \cite[\S I.7]{Kock81}).

\begin{lemma}[\bf Tangent vectors via infinitesimals]\label{TangentBundleSet}
For any smooth manifold $M$, potentially with boundary and corners  $M\in \SmoothManifolds^\mathrm{cor}$, there is a canonical bijection 
\begin{align}
	TM 
    \; \cong_{\mathrm{Set}} \;
    \mathrm{Hom}_{\mathrm{CAlg}_{\FR}}\Big(C^{\infty}(M),\CO\big(D^1(1)\big)\!\Big) 
\end{align}
of tangent vectors on $M$ with (formally dual) maps of $\DD^1(1)$ into $M$.

\begin{proof}
An algebra homomorphism $C^{\infty}(M)\rightarrow \FR[\epsi]/\epsi^2$ 
has components
$$
\big(p,X_p \big) \;:\; f \; \longmapsto \; p(f) + \epsi \cdot X_p(f)\, .
$$
Using the homomorphism property and $\epsi^2=0$, it follows for $f_1, f_2 \in C^\infty(M)$ that
$$
(f_1 \cdot f_2) \; \longmapsto \; p(f_1)\cdot p(f_2) + \epsi  
\big( p(f_1)\cdot X_p (f_2) + X_p (f_1) \cdot p(f_2) \big)\,
$$
and therefore
\begin{itemize}[
  leftmargin=.4cm,
  topsep=1pt,
  itemsep=2pt
]
\item 
the first component is an algebra homomorphism $p : C^\infty(M)\rightarrow \FR\cong C^\infty(*)$, and hence necessarily coincides with the evaluation at a point $p\in M$ (Prop. \ref{MilnorsExercise}, Cor. \ref{MilnorsForManifoldsWithCorners});
\item the second component is a derivation $X_p: C^{\infty}(M) \rightarrow \FR$ at $p\in M$ 
\footnote{More precisely, a derivation of the germs of functions $C^\infty_p(M) \rightarrow \FR$ at the point $p\in M$.}, i.e., a tangent vector at $p\in M$.\qedhere 
\end{itemize}
\end{proof} 
\end{lemma}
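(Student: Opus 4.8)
The plan is to make the correspondence between an algebra homomorphism $\phi : C^\infty(M) \to \FR[\epsi]/\epsi^2$ and a pair $(p, X_p) \in TM$ completely explicit, and then check that the two assignments are mutually inverse and natural. First I would decompose any $\FR$-linear map $\phi$ along the $\FR$-basis $\{1,\epsi\}$ of the target, writing $\phi(f) = p(f) + \epsi\, X_p(f)$ for $\FR$-linear maps $p, X_p : C^\infty(M) \to \FR$. Expanding the homomorphism condition $\phi(f_1 f_2) = \phi(f_1)\phi(f_2)$ and using $\epsi^2 = 0$ separates into two conditions: $p(f_1 f_2) = p(f_1) p(f_2)$, so that $p$ is an $\FR$-algebra homomorphism $C^\infty(M) \to \FR \cong C^\infty(\ast)$; and $X_p(f_1 f_2) = p(f_1) X_p(f_2) + X_p(f_1) p(f_2)$, the Leibniz rule relative to $p$. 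Unitality of $\phi$ forces $p(1) = 1$ and $X_p(1) = 0$.

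Next I would invoke Milnor's exercise in the form valid for manifolds with corners — Prop. \ref{MilnorsExercise} together with Cor. \ref{MilnorsForManifoldsWithCorners} — to conclude that $p$ must be evaluation $\ev_p : f \mapsto f(p)$ at a unique point $p \in M$, which fixes the base point of the dual map $\DD^1(1) \to M$. With $p$ determined, the Leibniz condition says precisely that $X_p$ is a point derivation of $C^\infty(M)$ at $p$, and the remaining step is to identify the space of such derivations with $T_p M$. Here I would work locally in a chart around $p$ onto an open subset of a corner model $[0,\infty)^k \times \FR^{n-k}$: first a bump-function argument shows $X_p$ depends only on the germ of $f$ at $p$, i.e. factors through $C^\infty_p(M) \to \FR$; then the generalized Hadamard lemma of Cor. \ref{HadamardOnManifoldCartesianProduct} (itself resting on Lem. \ref{PartialHadamardsLemma}) gives $f = f(p) + \sum_i (x^i - p^i)\, g_i$ with $g_i$ smooth and $g_i(p) = \partial_i f(p)$, so Leibniz together with $X_p(\mathrm{const}) = 0$ yields $X_p(f) = \sum_i X_p(x^i)\, \partial_i f(p)$. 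Hence $X_p$ is determined by the numbers $X_p(x^i)$ and is exactly the tangent vector $\sum_i X_p(x^i)\, \partial_i|_p \in T_p M$; conversely every tangent vector at $p$ arises this way.

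For the converse direction I would, given $(p, v) \in TM$, set $\phi_{(p,v)}(f) := f(p) + \epsi\, v(f)$ and verify directly — once more using $\epsi^2 = 0$ and the Leibniz rule for $v$ — that this is a unital $\FR$-algebra homomorphism. The two constructions $\phi \mapsto (p, X_p)$ and $(p,v) \mapsto \phi_{(p,v)}$ are manifestly mutually inverse, which gives the bijection; and it is canonical in $M$ because pushforward of tangent vectors along a smooth $h : M \to N$ corresponds under it to precomposition with $h^\ast : C^\infty(N) \to C^\infty(M)$.

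I expect the main obstacle to be precisely the third step: identifying abstract point-derivations of the \emph{global} algebra $C^\infty(M)$ with honest tangent vectors, uniformly for manifolds with boundary and corners. The subtlety is twofold — one must pass from global functions to germs (locality via bump functions, which is unproblematic on corner charts), and one needs a Hadamard-type decomposition available on the corner models $[0,\infty)^k \times \FR^{n-k}$, which is exactly what the paper's partial Hadamard lemma supplies. Everything else is routine bookkeeping with $\epsi^2 = 0$.
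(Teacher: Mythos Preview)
Your proposal is correct and follows essentially the same approach as the paper's proof: decompose $\phi$ along the basis $\{1,\epsi\}$, use $\epsi^2=0$ to separate the homomorphism condition into ``$p$ is an algebra map'' plus ``$X_p$ is a derivation at $p$'', and invoke Milnor's exercise (Prop.~\ref{MilnorsExercise}, Cor.~\ref{MilnorsForManifoldsWithCorners}) to identify $p$ with evaluation at a point. Your version is in fact more thorough --- the paper stops at ``$X_p$ is a derivation at $p$, i.e., a tangent vector'' and leaves the identification of point-derivations with $T_pM$ implicit, whereas you spell out the locality-via-bump-functions step and the Hadamard expansion, and you also write down the inverse map explicitly.
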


The intuitive interpretation of this result is that a (formal dual) map 
$$
  \DD^1(1) \xrightarrow{\quad \quad} M 
  \,,
$$ 
is an (first order) infinitesimal curve segment in $M$, or said otherwise, an (first order) infinitesimal line plot.
\begin{remark}[\bf Infinitesimal vs finite curves]\label{InfinitesimalVsFiniteCurves}
The above, almost trivial, example already hints at the power of considering infinitesimal spaces as bona fide geometrical objects. 

\begin{itemize}[leftmargin=20pt]
\item[\bf (i)] Indeed, recall that for manifolds without boundary or corners, tangent vectors are more often introduced as \textit{equivalency classes\footnote{Two curves with $\gamma_1(0)=x=\gamma_2(0)$ are considered as equivalent $\gamma_1\sim \gamma_2$ if $\partial_t(f\circ \gamma_1)|_t = \partial_t(f\circ \gamma_2)|_{t=0}$ for all $f\in C^\infty(M)$.} of smooth curves}
$$
\gamma \, : \, \FR^1_t \longrightarrow M\, .
$$
While this is an equivalent characterization for manifolds without boundary or corners, which captures the `infinitesimal part' of the curves via the imposed equivalency relation, this is no longer true when considering manifolds with boundary and corners. Indeed, it is easy to see that in the latter cases, this approach only recovers part of the full tangent bundle.

\item[\bf (ii)] For instance, consider the case of the upper half plane 
$
\FR^{1,1}_{x,y} := \FR^1_x \times [0,\infty)_{y}
$, the model space for 2-dimensional manifolds with boundary. By Lem. \ref{TangentBundleSet}, it follows that the set of infinitesimal curves $\DD^1(1)\rightarrow \FR^{1,1}_{x,y}$ recovers the full tangent bundle
$$
T\FR^{1,1} \,  \cong \, \coprod_{(x_0,y_0)\in \FR^{1,1}}\mathrm{Span}_\FR\Big\{\partial_{x}|_{(x_0,y_0)},\, \partial_y|_{(x_0,y_0)}\Big\} \, .
$$

Notice, this identification includes outward-pointing tangent vectors at the boundary, corresponding to derivations of the form $-c\cdot \frac{\partial}{\partial_y}|_{(x_0,0)}$ where $c\in \FR_{>0}$. However, this part of the tangent bundle cannot be represented by (equivalency classes) of smooth curves\footnote{Strictly speaking, one has consider \textit{bounded} curves $\gamma_t : [0,\infty) \rightarrow M$ in manifolds with boundaries/corners to define tangent vectors passing through boundary and corner points $\gamma_t(0)\in M$.} $\gamma : \FR^1_t \rightarrow  \FR^{1,1}$, since there is no smooth curve inside $\FR^{1,1}$ pointing outwards at the boundary.

\item[\bf (iii)] This phenomenon persists when considering the appropriate notion of tangent vectors (and bundles) over more complicated spaces, such as intersections of manifolds, infinite-dimensional field theoretic spaces, or even the Euler--Lagrange locus of a classical field theory.  As we shall see, the tangent bundles of the aforementioned are all correctly identified as maps out of the infinitesimal line, in an appropriate sense (Def. \ref{SyntheticTangentBundle}), whereas (equivalence classes) of curves might miss several components of their tangent bundles (cf. \cite[Rem. 7.14]{GS23}).
\end{itemize}
\end{remark}

Proceeding analogously, we may define $m$-dimensional disks of $l$-infinitesimal order $\DD^{m}(l)$, for any $m,l\in \NN$, 
and consider the corresponding (formally dual) maps $\DD^{m}(l)\rightarrow M$:
\begin{definition}[\bf Infinitesimal disks]\label{InfinitesimalDisk}
For $m,l\in \NN$, the \textit{infinitesimal m-disk of order l}, denoted by $\DD^m(l)$, is defined as the formal dual
of 
\vspace{-2mm} 
\begin{align}
\CO\big(\DD^m(l)\big)\, := \, \FR[\epsi^1,\cdots, \epsi^m]/(\epsi^1,\cdots,\epsi^m)^{l+1} \quad \in \quad \mathrm{CAlg}_{\FR} \, .
\end{align}
\end{definition}
 Just as with the infinitesimal line $\DD^1(1)$, every infinitesimal disk $\DD^m(l)$ also `embeds' into the corresponding $m$-dimensional Cartesian space $\FR^m$, around its origin,
\begin{align}\label{InfDiskIntoCartSp} \iota_0 \, : \, \DD^{m}(l) \xhookrightarrow{\quad \quad} \FR^m \, ,
\end{align}
in the formal dual sense of the algebra projection map 
$$
C^\infty(\FR^m)\xrightarrow{\quad \quad} \CO\big(\DD^m(l)\big)$$ defined by the $l$-order Taylor expansion
$$
f \;\; \longmapsto \;\; f(0)+ \epsi^i \cdot \partial_i f(0) 
\;+\; 
\tfrac{1}{2}\epsi^i\epsi^j\cdot \partial_i\partial_jf(0) \;+\; \cdots 
\;+\; \tfrac{1}{l!}\sum_{i_{1},\cdots,i_{l}=1}^m \epsi^{i_{1}}\cdots \epsi^{i_{l}}\cdot \partial_{i_{1}}\cdots \partial_{i_{l}}f(0)\, .
$$
\begin{remark}[\bf On nomenclature of thickened points]In algebro-geometric terms, one would say $$\DD^m(l)\; := \;  \mathrm{Spec}\big(\FR[\epsi^1,\cdots, \epsi^m]/(\epsi^1,\cdots,\epsi^m)^{l+1} \big)$$ 
is the \textit{spectrum} of $\FR[\epsi^1,\cdots, \epsi^m]/(\epsi^1,\cdots,\epsi^m)^{l+1} $, i.e., the (locally) ringed space 
\vspace{1mm} 
$$
\Big(*\,,\, \FR[\epsi^1,\cdots, \epsi^m]/(\epsi^1,\cdots,\epsi^m)^{l+1} \Big)
$$
over a single point. From this viewpoint, every infinitesimal disk $\DD^m(l)$ has one actual geometrical point,
and hence one may also refer to it 
as an example of an \textit{infinitesimally thickened point} (cf. the more general case of Def. \ref{InfinitesimallyThickenedPoints}).
\end{remark}

The relationship of infinitesimal disks and Cartesian spaces is, in fact, tighter. To witness this, we recall a standard result in real analysis. A proof of (a slightly generalized version of) the following is included in the Appendix (Lem. \ref{PartialHadamardsLemma}).
\begin{lemma}[\bf Hadamard's Lemma]\label{HadamardsLemma}
For any smooth function $f\in C^\infty(\FR^m)$ and any $l\in \NN$, there exist smooth functions $\{h_{i_{1}\cdots i_{l+1}}\}_{i_1,\cdots, i_{l+1}=1,\cdots, m}\subset C^\infty(\FR^m)$ such that 
\vspace{-1mm} 
\begin{align*}
f=  f(0)+ x^i \cdot \partial_i f(0) &+ \tfrac{1}{2}x^i x^j\cdot \partial_i\partial_jf(0) +\cdots +\; \tfrac{1}{l!}  \sum_{i_{1},\cdots,i_{l}=1}^m x^{i_{1}}\cdots x^{i_{l}}\cdot \partial_{i_{1}}\cdots \partial_{i_{l}}f(0)
\;+ \sum_{i_{1},\cdots,i_{l+1}=1}^m x^{i_{1}}\cdots x^{i_{l+1}}\cdot h_{i_{1}\cdots i_{l+1}}\, ,
\end{align*}
where $\{x^i\}_{i=1,\cdots,m}\subset C^\infty(\FR^m)$ are the canonical (linear) coordinate functions.
\end{lemma}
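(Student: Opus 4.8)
The plan is to reduce the statement to the one-variable Taylor formula with integral remainder by restricting $f$ to straight rays through the origin, and then to dispatch the required smoothness of the remainder coefficients by differentiation under the integral sign. Concretely, fix $x=(x^1,\dots,x^m)\in\FR^m$ and set $\phi_x(t):=f(tx)$, a smooth function of $t\in\FR$. The one-variable Taylor theorem with integral remainder gives
$$
\phi_x(1)\;=\;\sum_{k=0}^{l}\frac{\phi_x^{(k)}(0)}{k!}\;+\;\frac{1}{l!}\int_0^1(1-t)^l\,\phi_x^{(l+1)}(t)\,\dd t\,.
$$
By the chain rule, $\phi_x^{(k)}(t)=\sum_{i_1,\dots,i_k=1}^m x^{i_1}\cdots x^{i_k}\,(\partial_{i_1}\cdots\partial_{i_k}f)(tx)$, so the terms $k=0,\dots,l$ evaluated at $t=0$ reproduce exactly the claimed order-$l$ Taylor polynomial of $f$ at the origin, while the remainder rewrites as $\sum_{i_1,\dots,i_{l+1}=1}^m x^{i_1}\cdots x^{i_{l+1}}\,h_{i_1\cdots i_{l+1}}(x)$ with
$$
h_{i_1\cdots i_{l+1}}(x)\;:=\;\frac{1}{l!}\int_0^1(1-t)^l\,(\partial_{i_1}\cdots\partial_{i_{l+1}}f)(tx)\,\dd t\,.
$$

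It then remains only to verify that each $h_{i_1\cdots i_{l+1}}$ lies in $C^\infty(\FR^m)$. The integrand $(t,x)\mapsto(1-t)^l(\partial_{i_1}\cdots\partial_{i_{l+1}}f)(tx)$ is smooth on $[0,1]\times\FR^m$, and $[0,1]$ is compact, so the standard theorem on parameter-dependent integrals (the iterated Leibniz rule for differentiation under the integral sign) shows that $x\mapsto\int_0^1(\cdots)\,\dd t$ is smooth, with all partial derivatives obtained by differentiating the integrand under the integral. This completes the argument. The slightly generalized version used elsewhere in the paper — Taylor-expanding only in a chosen subset of the coordinates while the remainder depends smoothly on the remaining ones as well — follows by running the identical computation fibrewise over the frozen coordinates, which is the content of Lem.~\ref{PartialHadamardsLemma} in the Appendix.

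The one point that genuinely requires care is this smoothness of the remainder coefficients; everything else is the one-variable Taylor identity together with the multinomial bookkeeping of the chain rule, and does not depend on the $h_{i_1\cdots i_{l+1}}$ being symmetric (which they happen to be with the above integral formula, but the statement does not demand it). An alternative route is to induct on $l$: the base case $l=0$ is the classical Hadamard identity $f(x)=f(0)+\sum_i x^i\int_0^1(\partial_i f)(tx)\,\dd t$, and the inductive step applies this identity to each of the smooth coefficient functions produced at the previous stage. This still requires the parametric-smoothness input for $g_i(x)=\int_0^1(\partial_i f)(tx)\,\dd t$, so it offers no real economy over the direct integral-remainder proof; I would present the direct version.
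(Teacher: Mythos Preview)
Your proof is correct. It differs in presentation from the paper's route, though the underlying mechanism is the same ray-scaling $t\mapsto f(tx)$. The paper (Appendix, Lem.~\ref{PartialHadamardsLemma}) takes the inductive path you mention at the end: it establishes the $l=0$ identity $f(x)=f(0)+\sum_i x^i\int_0^1(\partial_i f)(tx)\,\dd t$ and then iterates, applying it to the remainder coefficients to climb from order $l$ to $l+1$. Your primary argument instead applies the one-variable Taylor formula with integral remainder directly to $\phi_x(t)=f(tx)$, which collapses the induction into a single step and yields the explicit closed form $h_{i_1\cdots i_{l+1}}(x)=\tfrac{1}{l!}\int_0^1(1-t)^l(\partial_{i_1}\cdots\partial_{i_{l+1}}f)(tx)\,\dd t$. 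Both approaches require the same analytic input (smoothness under the integral over the compact interval $[0,1]$), so neither is strictly more economical; your direct version has the minor advantage of producing manifestly symmetric remainder coefficients in one shot, while the paper's iterative version makes the passage to the partial-derivative generalization (Lem.~\ref{PartialHadamardsLemma}) marginally more transparent since one simply carries the frozen variables along at each inductive step.
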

It follows that the function algebra on the infinitesimal disk $\DD^m(l)$ (Def. \ref{InfinitesimalDisk}) is \textit{canonically} identified as the quotient
\begin{equation}\label{InfinitesimalDiskAlgebraAsQuotientOfCartesian}
\CO\big(\DD^m(l)\big)\;\; \cong \;\; C^{\infty}(\FR^m)/(x^1,\cdots, x^m )^{l+1}
\end{equation}
of smooth functions on $\FR^m$ by (the ideal generated by) $(l+1)$-fold products of the linear coordinate functions. It is in this sense that the embedding of an infinitesimal disk at the origin of the corresponding Cartesian space \eqref{InfDiskIntoCartSp} is \textit{canonical}, factoring (dually) through
$$
C^\infty(\FR^m) \longrightarrow C^\infty(\FR^m)/(x^1,\cdots, x^m )^{l+1} \xrightarrow{\quad \sim \quad } \CO\big(\DD^m(l)\big) \, .
$$
We will see that mapping 
appropriately out of these infinitesimal spaces detects higher-order variants of tangent vectors (Rem. \ref{infinitesimalMappingExamples}) and jets of sections of fiber bundles (Prop. \ref{JetsofSections=InfinitesimalJets}). 

\smallskip 
More generally, then, one may reasonably argue that (formal duals of) further quotients of the above nilpotent algebras 
$$
\CO\big(\DD^m(l)\big)\, / \,I
$$ 
should also be considered as valid infinitesimal spaces, since the latter algebras are also nilpotent in nature. The most general such consideration along these lines would be to consider any abstract, finite-dimensional and nilpotent algebra $W$ as a function algebra on an infinitesimal point $\DD$. Roughly speaking, such an infinitesimal point $\DD$ should be any space with finitely many ``infinitesimal coordinates''.
\begin{definition}[\bf Infinitesimally thickened points]\label{InfinitesimallyThickenedPoints} An infinitesimally thickened point, denoted by $\DD$, is defined as the formal dual of a \textit{``Weil algebra''} (also known as a \textit{``local Artinian algebra''}). More explicitly, it is a space with function algebra of the form
$$
\CO(\DD)\; \cong_{\mathrm{Vect}_\FR}  \FR \oplus V\, ,
$$ 
where $V$ is finite-dimensional and contains only nilpotent elements. 
\end{definition}

It turns out that any Weil algebra $\CO(\DD) \cong \FR\oplus V$  may be identified  \cite[Thm. 3.17]{MoerdijkReyes}\cite[Prop. 4.43]{CarchediRoytenberg13} with a quotient of some infinitesimal disk function algebra $\CO\big(\DD^m(l)\big)$ of Def. \ref{InfinitesimalDisk} -- canonically up to an automorphism of $\CO\big(\DD^m(l)\big)$ \cite[Lem. 1.7]{Kolar08}. Interpreted dually, this yields the following embeddings of general infinitesimal points into infinitesimal disks. 

\begin{lemma}[\bf Infinitesimal points as subspaces of infinitesimal disks]\label{InfinitesimalPointsAsSubspacesOfInfinitesimalDisks}
Let $\DD$ be an infinitesimally thickened point with Weil function algebra $\CO(\DD) \cong \FR \oplus V$ of nilpotency order $l$, i.e., $V^{l+1}=\{0\}$. There exists a subspace embedding into an infinitesimal disk
$$
\iota\,: \, \DD \xhookrightarrow{\quad \quad} \DD^m(l)\, ,
$$
with the embedding map being the formal dual map of a quotient projection $\CO\big(\DD^m(l)\big)\longrightarrow \CO\big(\DD^m(l)\big) / I\cong \CO(\DD) $, and 
any other such embedding being equivalent up an automorphism of $\DD^m(l)$
\[ 
\xymatrix@R=1.4em@C=3em  { &&  \DD^m(l)\ar[d]^{\sim}
	\\ 
	\DD \ar@{^{(}->}[rru]^{\iota} \ar@{_{(}->}[rr]^-{\hat{\iota}} && \DD^m(l)\, .
}   
\]
\end{lemma}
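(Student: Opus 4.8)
The plan is to leverage the two structural facts cited just before the statement: (a) by \cite[Thm.~3.17]{MoerdijkReyes}\cite[Prop.~4.43]{CarchediRoytenberg13}, any Weil algebra $\CO(\DD) \cong \FR \oplus V$ with $V^{l+1} = \{0\}$ is isomorphic to a quotient $\CO\big(\DD^m(l)\big)/I$ for a suitable $m$ and some ideal $I$; and (b) by \cite[Lem.~1.7]{Kolar08}, such a presentation is unique up to an automorphism of $\CO\big(\DD^m(l)\big)$. First I would make (a) precise: choose generators $\xi^1,\dots,\xi^N$ for the nilpotent part $V$ (so $N = \dim_\FR V$); each satisfies $(\xi^a)$-monomials of total degree $l+1$ vanish, giving a surjective $\FR$-algebra map $\FR[\epsi^1,\dots,\epsi^N]/(\epsi^1,\dots,\epsi^N)^{l+1} \twoheadrightarrow \CO(\DD)$, $\epsi^a \mapsto \xi^a$. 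Setting $m := N$ and $I := \ker$ of this map identifies $\CO(\DD^m(l))/I \cong \CO(\DD)$, which by \eqref{InfinitesimalDiskAlgebraAsQuotientOfCartesian} also realizes $\CO(\DD)$ as a quotient of $C^\infty(\FR^m)$.

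Next I would dualize. Since the site $\ThickenedCartesianSpaces$ (and more generally $\ThickenedSmoothSets$) is defined so that morphisms between thickened points are precisely the formal duals of $\FR$-algebra homomorphisms in $\mathrm{CAlg}_\FR$ — indeed, by Prop.~\ref{RAlgebraMapsAreCinfty} any such $\FR$-algebra map is automatically $C^\infty$-algebraic, so nothing is lost — the quotient projection $q : \CO\big(\DD^m(l)\big) \twoheadrightarrow \CO(\DD)$ is the pullback of a well-defined morphism of infinitesimally thickened points $\iota : \DD \hookrightarrow \DD^m(l)$. That $\iota$ deserves to be called a subspace embedding follows because $q$ is surjective: dually, $\iota$ is (the thickened-smooth-set analogue of) a monomorphism, i.e.\ a subobject inclusion in the sense used throughout the paper for closed subspaces. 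It may be worth remarking that this matches the Yoneda picture: for any probe $\FR^k \times \DD'$, postcomposition with $\iota$ is injective on plots precisely because $q$ is epi.

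For the uniqueness clause I would invoke (b) directly. Given a second embedding $\hat\iota : \DD \hookrightarrow \DD^m(l)$, dual to a surjection $\hat q : \CO\big(\DD^m(l)\big) \twoheadrightarrow \CO(\DD)$, Kolář's lemma \cite[Lem.~1.7]{Kolar08} asserts there is an automorphism $\varphi$ of $\CO\big(\DD^m(l)\big)$ with $\hat q = q \circ \varphi$. Dualizing $\varphi$ gives an automorphism $\varphi^\vee =: \,\sim\, : \DD^m(l) \xrightarrow{\ \sim\ } \DD^m(l)$ making the stated triangle commute: $\iota = \varphi^\vee \circ \hat\iota$, i.e.\ the diagram
\[
\xymatrix@R=1.4em@C=3em  { &&  \DD^m(l)\ar[d]^{\sim}
	\\
	\DD \ar@{^{(}->}[rru]^{\iota} \ar@{_{(}->}[rr]^-{\hat{\iota}} && \DD^m(l)
}
\]
commutes. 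One small technical point to address — and the only place I expect any friction — is reconciling the two cited existence/uniqueness statements: \cite{MoerdijkReyes} and \cite{CarchediRoytenberg13} guarantee \emph{some} presentation with \emph{some} $m$, while Kolář's uniqueness is stated for a \emph{fixed} presentation datum; so I would either (i) fix $m = \dim_\FR V$ once and for all at the outset, noting every minimal generating set has this size and the ideal $I$ is then determined up to the $\mathrm{GL}$-type ambiguity that Kolář's automorphism absorbs, or (ii) remark that if two embeddings land in disks of different nominal dimension one first stabilizes to a common $\DD^m(l)$ by the obvious coordinate inclusions. The rest is formal dualization and poses no obstacle.
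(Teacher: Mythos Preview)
Your overall strategy---construct a surjection onto $\CO(\DD)$ from an infinitesimal-disk algebra, factor through the nilpotency relation, then dualize---is exactly what the paper does. The gap is in your choice of $m$.

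You set $m = N = \dim_\FR V$, the full vector-space dimension of the nilpotent ideal. The paper instead takes a \emph{minimal} generating set $\{v^i\}_{i=1,\dots,m}$ for $V$ as an ideal, so that $m = \dim_\FR(V/V^2)$, the \emph{width} of the Weil algebra. These differ whenever $V^2 \neq 0$: for instance $\CO(\DD^1(2)) = \FR[\epsi]/(\epsi^3)$ has $\dim_\FR V = 2$ but width $1$. Your remark in (i) that ``every minimal generating set has this size'' with ``this'' meaning $\dim_\FR V$ is therefore incorrect, and this is exactly the friction you anticipated.

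The choice of $m$ matters for the uniqueness clause. With the paper's minimal $m$, any two embeddings arise from two minimal generating sets $\{v^i\}$ and $\{\widehat v^i\}$, and these project to two bases of $V/V^2$, hence are related by an invertible matrix $(a^i{}_j)$. The paper then writes down the intertwining automorphism of $\CO(\DD^m(l))$ explicitly as the linear coordinate change $\epsi^i \mapsto (a^{-1})^i{}_j \epsi^j$, making the argument entirely self-contained rather than invoking \cite[Lem.~1.7]{Kolar08} as a black box. With your larger $m$ the automorphism would need higher-order terms and the appeal to Kol\'a\v{r} is less direct, since his lemma is formulated for the minimal presentation. The fix is simple: replace your vector-space basis of $V$ by a minimal ideal-generating set, i.e., any lift of a basis of $V/V^2$.
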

\begin{proof}
Let $\{v^{i}\}_{i=1,\cdots, m}$ be a \textit{minimal} generating set for the ideal of nilpotent elements $V\hookrightarrow \CO(\DD)$. It is not hard to see that, equivalently, this corresponds to a basis $\{v^i + V^2\}_{i=1,\cdots , m}$ for the vector space $V/V^2$ (cf. \cite[Prop. 1.5]{Kolar08}). Define a surjective algebra morphism via the assignment on linear ``generators''  
\begin{align*}
 C^\infty(\FR^m) & \longrightarrow \CO(\DD) \\
x^i &\longmapsto v^i
\end{align*}
extended to all of $C^\infty(M)$ via Lem. \ref{HadamardsLemma}. By the $l$-order nilpotency of $\CO(\DD)$, the kernel of this map contains 
$ (x^1,\cdots, x^m)^{l+1}$, and so this projection factors through that the infinitesimal disk function algebra \eqref{InfinitesimalDiskAlgebraAsQuotientOfCartesian}
\begin{align*}
 C^\infty(\FR^m)  \longrightarrow \CO\big(\DD^m(l)\big) &\xrightarrow{\quad \pi \quad } \CO(\DD)\\
\epsi^i &\xmapsto{\quad \quad} v^i\, .
\end{align*}
In turn, this yields the sought-after isomorphism of $\FR$-algebras 
$$
\CO\big(\DD^m(l)\big)\longrightarrow
\CO(\DD) \; \cong\;  \CO\big(\DD^m(l)\big) / \mathrm{ker}(\pi) \, .
$$

Next, choosing any other minimal generating set $\{\,\widehat{v}^i\}_{i=1,\cdots, m}$ for $V$, or equivalently a basis $\{\,\widehat{v}^i+V^2\}_{i=1,\cdots, m}$ for $V/V^2$ corresponds to a linear map $\{v^i\mapsto \widehat{v}^i = a^i{}_j v^j  \}_{i,j=1,\cdots , m}$. This yields a different isomorphism  
\begin{align*}
\CO\big(\DD^m(l)\big)&\xrightarrow{\quad \widehat{\pi}\quad}
\CO(\DD) \; \widehat{\cong}\;  \CO\big(\DD^m(l)\big) / \mathrm{ker}(\widehat{\pi}) \\
\epsi^i &\xmapsto{\quad \quad} \widehat{v}^i =a^i{}_j v^j \, ,
\end{align*}
but which is related to the former by defining the automorphism
\begin{align*}
\CO\big(\DD^m(l)\big) &\xrightarrow{\quad \phi \quad} \CO\big(\DD^m(l)\big)\\
\epsi^i &\xmapsto{\quad \quad} (a^{-1})^{i}{}_j \epsi^j 
\end{align*}
making the following diagram commute
\[ 
\xymatrix@C=2.2em@R=.2em  {\CO\big(\DD^m(l)\big) \ar[rd] \ar[rr]^{\phi} &   & \CO\big(\DD^m(l)\big)\,.
\ar[ld]
	\\ 
& \CO\big(\DD^m(l)\big) / \mathrm{ker}(\pi) \, \cong \, \, \CO(\DD) \, \widehat{\cong} 
\, \, \CO\big(\DD^m(l)\big) /\mathrm{ker}({\widehat{\pi}})   & 
}  
\]
Interpreting the above algebra morphisms in the formal dual sense completes the proof.
\end{proof}
The dimension $m$ of the vector space $V/V^2$, hence the minimal number of nilpotent generators of any Weil algebra, is known as the \textit{width} of the algebra. From our dual perspective, given that this corresponds to the dimension of the corresponding ambient infinitesimal disk $\DD^m(l)$, we should think of this invariant as the (infinitesimal) dimension of the corresponding thickened point $\DD$. Similarly, any generator $v^i$ for $V$ contained in a minimal such set should be thought of as a kind of ``linear'' coordinate function on $\DD$. Thus, it is instructive to think of the embeddings of Lem. \ref{InfinitesimalPointsAsSubspacesOfInfinitesimalDisks} being related by an ``infinitesimal'' general linear transformation inside the corresponding disks, descending from the ambient finite  linear transformation $(a^{i}{}_j):\FR^m \longrightarrow \FR^m$.

\begin{remark}[\bf Infinitesimal disks vs. points] 
\label{OmittingGeneralInfinitesimalPoints}
For simplicity of presentation, we will often not explicitly consider these extra infinitesimal points in most of this text, and focus instead only on infinitesimal disks. All results and proofs developed herein may be straightforwardly generalized to this general case of infinitesimal points by recalling that any such  embeds into an infinitesimal disk via Lem. \ref{InfinitesimalPointsAsSubspacesOfInfinitesimalDisks}. 
\end{remark}

\subsection{Sheaves over thickened probes}
\label{Subsec-thickSet}

We are now sufficiently equipped to enlarge our original site of probe spaces $\CartesianSpaces$ \cite{GS23} within the (opposite) category of commutative algebras, so as to include these infinitesimal points and their products with Cartesian spaces (hence 
{\it thickened} versions of Cartesian spaces).
\begin{definition}[\bf Thickened Cartesian spaces]\label{ThickenedCartesianSpaces} 
The category of \textit{(infinitesimally) thickened Cartesian spaces}
$$
\ThickenedCartesianSpaces \xhookrightarrow{\quad \quad} \mathrm{CAlg}_\FR^{op}
$$
is defined as the opposite category to the full subcategory of CAlg$_{\FR}$ consisting of objects of the form 
	$$
 \CO(\FR^k\times \DD)\; := \; C^{\infty}(\FR^k) \otimes \CO(\DD)\, , 
 $$
	for $\DD$ any infinitesimally thickened point. 
\end{definition}
In summary, we denote objects in ThCartSp by $\FR^{k}\times \DD$ and consider maps between them as formally dual to `pullbacks' between their function algebras in CAlg$_{\FR}$. Intuitively,
one thinks of $\FR^{k}\times \DD$ as the usual Cartesian space, supplied with an extra `halo of infinitesimal directions' around each point $x\in \FR^k$. This category becomes a site by defining the (Grothendieck) coverage to be that of (differentiably) good open covers $\big\{U_i\xhookrightarrow{\iota_i}  \FR^k \big\}_{i\in I}$, extended trivially along infinitesimal directions, and hence of the form
\begin{align}\label{GoodOpenCoversonThCartSp}
\Big\{ U_i\times \DD \xhookrightarrow{\;\iota_i \times \id\;}  \FR^k \times \DD \Big\}_{i\in I} \, .
\end{align}
There is an obvious relation between the two site categories $\CartesianSpaces$ and $\ThickenedCartesianSpaces$. That is, the embedding of sites
\begin{align}\label{CartToFCart}
	\iota: \mathrm{CartSp} &\longhookrightarrow \mathrm{ThCartSp} \\
\FR^{k} &\longmapsto \FR^k\times \DD^{0}(0)\cong \FR^k \times \{*\} \cong \FR^k \ \nn 
\end{align}
and the projection that forgets the infinitesimal halo
\begin{align}\label{FCartToCart}
 p:\mathrm{ThCartSp}&\longrightarrow \mathrm{CartSp}\\ 
   \FR^k\times \DD&\longmapsto \FR^k \, , \nn 
\end{align}
in fact, form an \textit{adjunction}
$$
\iota \quad  \dashv \quad  p \, .
$$Namely
there is a canonical bijection 
\begin{align*} \mathrm{Hom}_{\ThickenedCartesianSpaces}\big( \iota(\FR^k), \FR^l\times \DD \big) \;\; \cong \;\;  \mathrm{Hom}_{\CartesianSpaces}\big(\FR^k,\, p(\FR^l\times \DD) \big) 
\end{align*}
natural in both $\FR^k$ and $\FR^l\times \DD$. This is easily seen since $p(\FR^l\times \DD)= \FR^l$, and due to the fact that $C^\infty(\FR^k)$ has no nilpotent elements other than the zero-function $0$. Indeed, the latter implies that 
\begin{align*}
\mathrm{Hom}_{\ThickenedCartesianSpaces}\big( \iota(\FR^k), \FR^l\times \DD \big)  & \; \cong \; \mathrm{Hom}_{\mathrm{CAlg}_\FR}\big(C^\infty(\FR^l)\otimes \CO(\DD), \, C^\infty(\FR^k) \big)\,
\\
& \; \cong\;  \mathrm{Hom}_{\mathrm{CAlg}_\FR}\big(C^\infty(\FR^l), \, C^\infty(\FR^k) \big)  ,
\end{align*}
since any nilpotent element of $C^\infty(\FR^l\otimes \DD) \cong C^\infty(\FR^l)\oplus \big(C^\infty(\FR^l)\otimes W\big)$ is necessarily mapped to $0\in C^\infty(\FR^k)$, and hence the adjunction bijection result follows. See for instance \cite{KS17} for more implications of this adjunction.

\smallskip 
Having described our new notion of thickened and smooth probe-spaces, and following the lines of logic of defining generalized spaces by the plots carved out by probing them with simpler spaces (see \cite[\S 2.1]{GS23}\cite{Schreiber24}\cite{Gi25}),
 we immediately arrive at our definition of infinitesimally thickened smooth sets.

\begin{definition}[\bf{Thickened smooth sets}]\label{ThickenedSmoothSets}
	The category of \textit{(infinitesimally) thickened smooth sets}, or \textit{thickened generalized smooth spaces}, is defined to be the (gros) sheaf topos
	\begin{align}
		\mathrm{ThSmthSet}\; :=\; \mathrm{Sh}(\mathrm{ThCartSp})\, , 
	\end{align}
	i.e., sheaves on thickened Cartesian spaces with respect to (differentiably) good open covers \eqref{GoodOpenCoversonThCartSp}.
\end{definition}
As with the case of smooth sets, thickened smooth sets are defined entirely in an \textit{operational} manner, by consistently answering the question:

\hspace{2cm}
\adjustbox{
  margin=3pt,
  bgcolor=lightgray
}{
\def\arraystretch{1.2}
\begin{tabular}{l}
\textit{`What are the ways we  can probe the would-be space smoothly  with Cartesian spaces, }
\\
\textit{ (infinitesimally) thickened points and more generally thickened Cartesian spaces?'}
\end{tabular}
}

 \vspace{1mm}
\noindent To stress the point further: Sheaves over plain Cartesian spaces \textit{define} the notion of smoothness of the would-be smooth generalized spaces, by giving meaning to what it means to smoothly probe them. Similarly, sheaves over thickened Cartesian spaces
\textit{define} the notion of smoothness \textit{and} their infinitesimal nature, by giving meaning to what it means to (smoothly) probe them with simple test spaces of finite and infinitesimal extent.

 \begin{remark}[\bf Petit sheaf characterization]\label{PetitSheafCharacterization} 
 The categorical sheaf condition under the coverage \eqref{GoodOpenCoversonThCartSp} is related to the usual topological sheaf condition in the following sense: A presheaf $\CF:\mathrm{ThCartSp}^{op}\rightarrow \mathrm{Set}$ is a thickened smooth set if and only if $\CF(\FR^k\times \DD)$ defines a sheaf on $\FR^k$, in the topological sense, for each $\FR^k\in \mathrm{CartSp}$ and every infinitesimal point $\DD$.
\end{remark}

Of course, the Yoneda embedding (see e.g.  \cite[Prop. 2.4]{GS23}) now applies for our thickened site and sheaves over it, yielding a fully faithful embedding\footnote{For a generic sheaf category $\mathrm{Sh}(C)$ with respect to a chosen coverage on $C$, it is plausible that under the Yoneda embedding not every object of the site $C$ satisfies the sheaf (`gluing') condition. In other words, it is not generally guaranteed that the Yoneda embedding factors through the corresponding full sheaf subcategory $y: C \hookrightarrow \mathrm{Sh}(C)\hookrightarrow \mathrm{PreSh}(C)$. Site categories with the property of representables being sheaves, and not just presheaves, are called \textit{``subcanonical''}. Both $\CartesianSpaces$ and $\ThickenedCartesianSpaces$ with their corresponding good open coverages form examples of subcanonical sites.}
\begin{align}\label{ThCartSpYonedaEmbedding}
y \; : \; \ThickenedCartesianSpaces & \; \xhookrightarrow{\quad \quad} \; \ThickenedSmoothSets 
\\
\FR^k\times \DD & \; \xmapsto{\quad \quad} \; \mathrm{Hom}_{\ThickenedCartesianSpaces}(-, \, \FR^k\times \DD) := \mathrm{Hom}_{\mathrm{CAlg}_{\FR}}\big(\CO(\FR^k \times \DD),\, \CO(-)\big)|_{\mathrm{ThCartSp}}  \nn \, ,
\end{align}
which allows one to view thickened Cartesian spaces as special instances of thickened smooth sets.

\medskip
\noindent {\bf Examples of thickened smooth sets.}
Following Def. \ref{ThickenedCartesianSpaces}, one may define an analogous category of (infinitesimally) thickened smooth manifolds \footnote{There is a more general definition of thickened manifolds which are only (petit) \textit{locally} isomorphic to the product form $\FR^k\times \DD$ (cf. \cite{Kock06}, therein termed ``formal manifolds''). This is in complete analogy to the case of super-manifolds being only locally products of super-Cartesian spaces $\FR^{k,q}=\FR^{k,0}\times \FR^{0,q}$. It can be shown in a similar manner that sheaves over these ``formal manifolds'' also yield an equivalent topos.} 
 $$
 \ThickenedSmoothManifolds\xhookrightarrow{\quad \quad} \mathrm{CAlg}_\FR^{op}\, ,
 $$
 with objects denoted by $M\times \DD$ being formal duals of the corresponding algebras $C^\infty(M)\otimes \CO(\DD)$. Hence, we could as well consider thickened generalized smooth spaces as the sheaf category $\mathrm{Sh}(\ThickenedSmoothManifolds)$ over thickened smooth manifolds with the coverage of being that of (differentiably) good open covers \eqref{GoodOpenCoversonThCartSp}. However, this category is precisely equivalent to that of thickened smooth sets (Def. \ref{ThickenedSmoothSets}), in a manner completely analogous to how sheaves over plain smooth manifolds are equivalent to smooth sets (see discussion around  \cite[Def. 2.1]{GS23}). In more detail, the pullback along the inclusion $\iota: \ThickenedCartesianSpaces \hookrightarrow \ThickenedSmoothManifolds $ defines an equivalence of categories 
\begin{align}\label{SheavesOverManifoldsAndSheavesOverCart}
 \iota^* \, : \, \mathrm{Sh}(\ThickenedSmoothManifolds)   \xrightarrow{\quad \sim \quad} \mathrm{Sh}(\ThickenedCartesianSpaces)  \, . 
 \end{align}
This is essentially because any (thickened) manifold $M\times \DD$ is always covered by a (differentiably good) open cover 
$\{U_i\times \DD \hookrightarrow M\times \DD \}_{i\in I}$ of (thickened) Cartesian spaces. In topos theory language, $\ThickenedCartesianSpaces$ is a \textit{dense sub-site} of $\ThickenedSmoothManifolds$.
 
\begin{example}[\bf Thickened manifolds as thickened smooth sets]\label{ManifoldsAsThickenedSmoothSets} It follows that the composition which views a thickened smooth manifold as a thickened smooth set $\ThickenedSmoothManifolds \xrightarrow{y} \mathrm{Sh}(\ThickenedSmoothManifolds) \xrightarrow{i^*} \mathrm{Sh}(\ThickenedCartesianSpaces)$, abbreviated by 
\begin{align}\label{ManifoldasFormalsSmoothset}
	  	y \;: \;  \ThickenedSmoothManifolds& \; \xhookrightarrow{\quad \quad}\; \ThickenedSmoothSets \\
	  	 M\times \DD &\;\xmapsto{\quad \quad} \; \mathrm{Hom}_{\ThickenedSmoothManifolds}(-,\, M\times \DD)|_{\mathrm{ThCartSp}} := \mathrm{Hom}_{\mathrm{CAlg}_{\FR}}\big(C^{\infty}(M)\otimes \CO(\DD),\, \CO(-)\big)|_{\mathrm{ThCartSp}} \nn \, ,
	  \end{align}
   is also fully faithful.
\end{example}

The same formula can be used to view smooth (second coundable and Hausdorff) manifolds \textit{with boundary and corners} as thickened smooth sets (see Appendix \ref{ManifoldsWithCornersAndWeilBundlesSection} for details). 
\begin{example}[\bf Manifolds with corners]\label{ManifoldsWithCornersExample}
For any $M\in \SmoothManifolds^{\mathrm{cor}}$, locally modelled on $\FR^{k,l}:= \FR^k \times [0,\infty)^{\times l}$, there is a corresponding thickened smooth set defined by
$$
y(M) \, := \, \mathrm{Hom}_{\mathrm{CAlg}_\FR}\big( C^\infty(M),\, \CO(-)\big) |_{\ThickenedCartesianSpaces} 
$$
where $C^\infty(M)$ is algebra of \textit{smooth} functions\footnote{Recall, a function $f:M\rightarrow \FR$ on a manifold with boundary / corners is smooth if it locally smooth with respect to any chart. Namely, if at any point $x\in M$ and any chart $\psi:\FR^{k,l}\xrightarrow{\sim} U_x \hookrightarrow M$ around it, the corresponding composition $f\circ \psi : \FR^{k,l}\rightarrow \FR$ extends to a smooth function $\widetilde{f}:\FR^{k+l}\rightarrow \FR$ between Cartesian spaces, in the usual sense analytical sense; 
 see \cite{Joyce19}\cite{FSJ24}\cite{Melrose} for more details on these concepts.} 
 on $M$. The proof that this presheaf satisfies the sheaf condition with respect to \eqref{GoodOpenCoversonThCartSp} can be found in Thm. \ref{ManifoldsWithCornersEmbedIntoCahiers}, where we also show that the functor
$$
y\, : \, \SmoothManifolds^{\mathrm{cor}}\xhookrightarrow{\quad \quad} \ThickenedSmoothSets
$$
is, in fact, \textit{fully faithful}. Hence, manifolds with corners are represented ``from outside'', since these also embed fully faithfully into commutative algebras (Prop. \ref{ManifoldsWithCornerstoAlgebras}).
\end{example}
Moreover, the same formula yields a thickened smooth set corresponding to any \textit{closed} subset $|K| \subset \FR^n$, viewed as a formal smooth space $K$ dual to the subalgebra of functions on $|K|$ which extend to smooth functions on $\FR^n$ (cf. \eqref{FunctionsOnClosedSubspace}). 
\begin{example}[\bf Smooth closed subspaces]\label{SmoothClosedSubspacesExample} 
For any formal ``smooth space'' $K\in \mathrm{CAlg}^\mathrm{op}$ dual to the algebra of functions 
$$
C^\infty(K) \, \cong \, C^\infty(\FR^n)/I_K \, ,
$$
where $|K|\subset \FR^n$ is a closed subset and $I_K$ is the ideal of smooth functions on $\FR^n$ which vanish along $|K|$, there is a corresponding thickened smooth set defined by\footnote{Strictly speaking, for the following to coincide with the corresponding notion using $C^\infty$-algebra morphisms, $K$ should be dense in the sense that the closure of its interior is K itself, $\mathrm{Cl}(\mathrm{Int}(X))=X$. Otherwise, one may simply define $C^\infty(K):=C^\infty(\FR^n)/ I_K^\infty $ where $I_K^\infty$ is the ideal of smooth functions with all derivatives vanishing along $K$ (cf. Lem. \ref{PlotsOfSmoothClosedSubspaces}).} 
$$
y(K) \, := \, \mathrm{Hom}_{\mathrm{CAlg}_\FR}\big( C^\infty(K),\, \CO(-)\big) |_{\ThickenedCartesianSpaces}\,. 
$$
The sheaf condition for the presheaf $y(K)$ holds precisely as in Thm. \ref{ManifoldsWithCornersEmbedIntoCahiers}.
\end{example}

\begin{remark}[\bf On Yoneda notation]To ease the heavy notation, we will often omit the Yoneda embedding symbols from Eqs. \eqref{ThCartSpYonedaEmbedding} \eqref{ManifoldasFormalsSmoothset} and Examples \ref{ManifoldsWithCornersExample}, \ref{SmoothClosedSubspacesExample}, as it will be clear from the context in which category we consider these objects. We will explicitly use the Yoneda symbols only when they are necessary and/or instructive.
\end{remark}

Another immediate example of a thickened smooth set is the ``moduli space of de Rham $n$-forms'' $\mathbold{\Omega}^n$, which we may naturally extend from its purely smooth set incarnation (\cite[Def. 2.29]{GS23}) to a working definition in the current context. 
In algebro-geometric terms, the infinitesimal $\DD$-plots of $\mathbold{\Omega}^n$ are given by the ``K\"{a}hler differential forms'' \footnote{It is not hard to check explicitly that this is the same as K\"{a}hler forms on $\CO(\DD)$ as a $C^\infty$-algebra (Rem. \ref{VariantsOfThickenedSites}, ftn. 3, Def. \ref{FinitelyGeneratedCinftyAlgebras}), as it is finitely generated (cf. \cite[\S 2]{DK84}\cite[\S 5]{Joyce19}). This latter notion is formally the correct one in our context, as it also recovers the traditional differential forms on $\FR^k$ (while the $\FR$-algebraic one does not). See Appendix \ref{ModuliOfDifferentialForms} for more on these details.} on the finitely generated $\FR$-algebra $\CO(\DD)$. Explicitly, 
\begin{align*}\mathbold{\Omega}^{1}\big(\DD^{m}(l)\big)\; :&= \; \CO\big(\DD^m(l)\big)\cdot(\dd \epsi^1,\cdots, \dd \epsi^m) 
\big/ \dd (\epsi^1,\cdots , \epsi^m)^{l+1}\, ,
\end{align*}
with the extra variables $\{\dd \epsi^i\}_{i=1,\cdots,m}$ being of degree 1, where the equivalence relation is the induced one $0=\dd (\epsi^{i_1}\cdots \epsi^{i_{l+1}}) = \dd \epsi^{i_1}\cdot \epsi^{i_2}\cdots \epsi^{i_{l+1}} +\cdots + \epsi^{i_1}\cdots \epsi^{i_{l}}\cdot \dd \epsi^{i_{l+1}}$. More generally, this means its $\DD$-plots 
are given by the $\CO(\DD)$-module freely generated in the formal variables $\{ \dd v\}_{v\in \CO(\DD)}$ subject to the induced algebra relations via $\dd(v_1 \cdot v_2)= \dd v_1 \cdot v_2 + v_1 \cdot \dd v_2$, $\dd (v_1 + v_2) = \dd v_1 + \dd v_2$, $\dd r =0$ for $r\in \FR \hookrightarrow \CO(\DD)$. That is,
\begin{align}\label{AlgebraicFormsOnInfinitesimalPoint}
\mathbold{\Omega}^1 (\DD)\, := \, \mathrm{Span}_{\CO(\DD)}\big\{ \dd v \, | \, v \in \CO(\DD)\big\} 
\big/
\dd \big( p(v^1,\cdots,v^n)\big) = \sum_i \frac{\partial p}{\partial x^i} (v^1,\cdots, v^n) \cdot \dd v^i  ,
\end{align}
where $p\in \mathrm{Pol}(\FR^n)$ is any polynomial in $n$-variables. As expected, this reduces to the above simplified form for $\DD = \DD^m(l)$. Along the lines of Lem. \ref{InfinitesimalPointsAsSubspacesOfInfinitesimalDisks}, this can take a more explicit form for a general infinitesimal point $\DD$, if necessary, via any identification $\CO(\DD) \cong \CO\big(\DD^m(l)\big) / I$ whereby
\begin{align}\label{KahlerFormsInQuotientIdentification}
\mathbold{\Omega}^{1}\big(\DD\big)\; \cong \; \CO\big(\DD\big)\cdot(\dd \epsi^1,\cdots, \dd \epsi^m)\,  / \, \dd I \, .
\end{align}
Combining appropriately with the smooth $\FR^k$-plots of $\mathbold{\Omega^1}$ yields the definition the thickened moduli space of de Rham forms.
\begin{example}[\bf Thickened moduli space of de Rham forms]\label{ThickenedModuliSpaceOfdeRhamForms}
The \textit{thickened moduli space of de Rham $1$-forms} $\mathbold{\Omega}^1$ is the thickened smooth sets with $\FR^k\times \DD$-plots
 $$ \mathbold{\Omega}^1(\FR^k\times \DD)\; := \; 
 \mathbold{\Omega}^1(\FR^k)\otimes \CO(\DD) \; \oplus \; C^\infty(\FR^k) \otimes \mathbold{\Omega}^1(\DD) \, ,$$
Similarly, the \textit{moduli space of $n$-forms} $\mathbold{\Omega}^n$ is the thickened smooth set with $\FR^k\times \DD$-plots 
$$
\mathbold{\Omega}^{n}(\FR^k\times \DD)\; := \;  \bigwedge^{n}_{\CO\left(\FR^k\times \DD\right)} \mathbold{ \Omega}^1\big(\FR^k\times\DD\big) \, .
$$
\end{example}

By construction, as with the smooth set case \cite[Lem. 2.31]{GS23} these thickened smooth sets yield a DGCA structure internal to $\ThickenedSmoothSets$
$$
\big(\mathbold{\Omega}^\bullet, \, \dd_\mathrm{dR},\, \wedge\big) \;\; \in \;\; \mathrm{DGCA}_\FR(\ThickenedSmoothSets)\, .
$$
which can be, yet again, used to define a notion of \textit{de Rham forms} on any \textit{thickened} smooth set $\CF\in \ThickenedSmoothSets$ via (cf. \cite[Def. 2.32]{GS23})
\begin{align}\label{deRhamFormsOnThickenedSmoothSet}
\Omega^\bullet_\mathrm{dR}(\CF)\, : = \, \Hom_{\ThickenedSmoothSets}(\CF,\, \mathbold{\Omega}^\bullet)\, . 
\end{align}

\begin{remark}[\bf Variants of thickened sites]\label{VariantsOfThickenedSites}

\begin{itemize}[leftmargin=20pt]
\item[\bf (i)]
There are several different variants of sites of smooth and infinitesimally thickened probe-spaces one may consider, \footnote{Namely these choices correspond to full subcategories of the category of ``\textit{smooth loci}\,'', the formal dual category of the full subcategory of finitely generated $C^\infty$-algebras (Def. \ref{FinitelyGeneratedCinftyAlgebras}), i.e., of the form $C^\infty(\FR^k) / I$ for some ideal $I\subset \FR^k$ and $k\in \NN$. The choices are determined by conditions imposed on the allowed ideals $I$, but also by different choices of coverages one may supply the probe-category with \cite{MoerdijkReyes}.} and hence many different categories of sheaves that serve as ``well-adapted models'' for synthetic differential geometry. 

\item[\bf (ii)]
This means that the category of smooth manifolds $\SmoothManifolds$ embeds fully faithfully in any such well-adapted sheaf topos $\iota: \SmoothManifolds \hookrightarrow \mathcal{E}$, while preserving transversal intersections/pullbacks among a couple further technical axioms
(see e.g. \cite{Kock06}). The existence of different models for synthetic differential geometry guarantees that several abstract arguments and universal constructions may be equally interpreted in any of these categories of generalized spaces. Thus, these categories share several properties, but may in fact differ in more subtle technical details and not be actually equivalent. A detailed analysis of several such sites, their categories of sheaves, and their interrelations has been carried out in \cite{MoerdijkReyes}. 

\item[\bf (iii)]
In this manuscript, we focus on arguably the simplest of these sites, with the least amount of probe spaces, and the corresponding sheaf topos $\ThickenedSmoothSets$ being equivalent to the ``Cahiers topos'' (Thm. \ref{ThickenedCartesianSpacesAreSmoothLoci}, Cor. \ref{CahiersToposIsRalgebraic}) of Dubuc \cite[\S 4]{Dubuc79}\cite{KockReyes87}, since this seems to be sufficiently rich for most purposes of (local) Lagrangian field theory. An instance where a more general choice of thickened site might be more appropriate, from a purely mathematical point of view, is to make manifest the full ``classifying'' property of $\mathbold{\Omega}^n$ as defined above, \footnote{Alternatively, there exists a different, and more abstract, extension of the classifying space from smooth sets. Namely, one may assign instead to any probe $\FR^k\times \DD $ the set of ``fiber-wise linear maps'' of thickened smooth sets, $\mathrm{Hom}_{\ThickenedSmoothSets}^{\mathrm{fib.lin.}}\big(T(\FR^k\times \DD),\, y(\FR)\big)$, from its synthetic tangent bundle (Def. \ref{SyntheticTangentBundle}, Ex. \ref{TangetBundleOfInfinitesimalPoint}) into the real numbers. The resulting thickened smooth set satisfies the full classifying property over $\ThickenedCartesianSpaces$. See Appendix \ref{ModuliOfDifferentialForms} for more on this.} along the lines original sources of \cite[\S 3]{Law80}\cite[\S 8]{DK84} (described therein in more abstract terms, hence perhaps somewhat underappreciated). We review the technicalities of this matter in a brief manner in the Appendix 
(\cref{ModuliOfDifferentialForms}), as it is not directly relevant to the purposes of local Lagrangian field theory; This being due to the fact that local forms on $\CF\times M$ may still be identified with certain maps into $\mathbold{\Omega}^n$ from Def. \ref{ThickenedModuliSpaceOfdeRhamForms} (see Rem. \ref{LocalFormsViaModuliSpace}).
\end{itemize}
\end{remark}

\subsection{Thickened mapping spaces and the synthetic tangent bundle}
\label{Subsec-thickmap}

Since our thickened smooth sets form a (pre-)sheaf category, there exists a natural mapping space construction for any pair of thickened smooth sets. This is witnessed by the internal hom functor
$$
\big[-,-\big] \;:\; \ThickenedSmoothSets^{op}\times \ThickenedSmoothSets \;\xrightarrow{\quad \quad} \;\ThickenedSmoothSets\, ,
$$ 
which is defined exactly by the same formula as in that of smooth sets \cite{GS23}, and in fact as for any (pre)sheaf category \cite{MacLaneMoerdijk}. We recall its defining formula for completeness, now in thickened smooth sets. 

\begin{definition}[\bf Thickened smooth mapping space]\label{ThickenedSmoothMappingSet} Let $\CG,\CH \in \ThickenedSmoothSets$, the \textit{(thickened) smooth mapping space} $[\CG,\CH]\in \ThickenedSmoothSets$ is defined by
$$
\big[\CG,\CH\big](\FR^k\times \DD) \; := \; \mathrm{Hom}_{\ThickenedSmoothSets}\big( y(\FR^k \times \DD)\times\CG, \, \CF\big)\, .
$$
\end{definition}
Crucially, thickened mapping spaces satisfy the corresponding internal hom property (or Exponential Law) property, 
\begin{align}\label{InternalHomProperty}
\mathrm{Hom}_{\ThickenedSmoothSets}\big(\CX, \,[ \CG, \CH]\big)
\;  \cong \;
\mathrm{Hom}_{\ThickenedSmoothSets
}\big( \CX \times \CG , \, \CH\big) \, ,
\end{align}
naturally in $\CX \in \ThickenedSmoothSets$. This too, is proved using the fact that any thickened smooth set may be written as a colimit of thickened Cartesian spaces (representables), verbatim as in the case of smooth sets (\cite[p. 12]{GS23}), and in fact as in any (pre)sheaf category (\cite[p. 42]{MacLaneMoerdijk}). The internal hom construction allows us to consider infinitesimally thickened mapping spaces between any two manifolds.

\begin{example}[\bf Thickened mapping space of manifolds]\label{ThickenedMappingSpaceOfManifolds}
Let $M, N \in \SmoothManifolds \hookrightarrow \ThickenedSmoothSets$. The \textit{thickened (smooth) space of maps} from $M$ to $N$ given by
\begin{align*}
[M,\, N]\big(\FR^k\times \DD\big)&\; :=\; \mathrm{Hom}_{\ThickenedSmoothSets}\big((\FR^k\times\DD)\times M, \, N\big)
\\ 
& \; \cong \;  \mathrm{Hom}_{\mathrm{CAlg}_{\FR}}\big(C^{\infty}(N),\, C^{\infty}(M\times \FR^k)\otimes \CO(\DD)\big)\, ,
\end{align*}
where $\FR^k\times \DD\in \mathrm{ThCartSp}$ is any infinitesimally thickened Cartesian probe-space. 
\end{example}
In field theoretic terms, this neatly encodes the smooth \textit{and} infinitesimal nature of the (off-shell) space of fields for a ``$\sigma$-model'' with domain $M$ and target $N$. Similarly, more general field spaces consisting of sections of an arbitrary field fiber bundle $F\rightarrow M$, viewed originally simply as smooth sets in 
 \cite[Def. 2.12]{GS23}, are promoted to thickened smooth sets by reinterpreting their defining diagram within $\ThickenedSmoothSets$ .
\begin{definition}[\bf Thickened smooth set of sections]\label{ThickenedSmoothSetOfSections}
Let $\pi: F\rightarrow M$ be a fiber bundle of smooth manifolds. The thickened smooth set of sections $\CF=\mathbold{\Gamma}_M(F)\in \mathrm{SmoothSet}$ is defined by
\begin{align}
	\CF(\FR^k\times \DD)\; :=\; \big\{\phi^{k,\epsi}:\FR^k\times \DD \times M \rightarrow F \; | \; \pi\circ \phi^{k,\epsi} = p_M \big\}\, ,
\end{align}
where $\FR^k\times\DD\in \mathrm{ThCartSp}$ and $p_M :\FR^k\times\DD\times M\rightarrow M$ is the projection onto $M$. Explicitly, these consist of (formal duals of) algebra maps $(\phi^{k,\epsi})^*: C^{\infty}(F) \rightarrow  C^{\infty}(M\times \FR^k) \otimes \CO(\DD) $ such that
the following diagram commutes 
\[ 
\xymatrix@R=1.4em  { &&  C^{\infty}(F) \ar[lld]_<<<<<<<<<{(\phi^{k,\epsi})^*}
	\\ 
	C^{\infty}(M\times \FR^k) \otimes \CO(\DD) && C^{\infty}(M)\;.  \ar[ll]_<<<<<{p^*_M} \ar[u]_{\pi^*}
}   
\]

\end{definition}
The thickened bosonic field space above may be equivalently identified as a pullback / fiber product construction (cf. discussion above
Rem 2.13 of \cite{GS23}), computed in thickened smooth sets 
\begin{align} \label{FieldSpaceAsPullback}
\xymatrix@=1.6em 
{\CF  \ar[d] \ar[rr] &&  
[M,F] \ar[d]^{\pi_*} 
	\\ 
	\mathrm{id}_M\cong* \; \ar@{^{(}->}[rr]  && [M,M]
	\, , } 
\end{align}
since we have already extended the three corners of the diagram to thickened smooth sets. In more detail, the mapping spaces on the right column are via Ex. \ref{ThickenedMappingSpaceOfManifolds}, while the bottom embedding is defined probe-wise by sending the (unique) $(\FR^k\times \DD)$-plot of the point $\id_M\cong *$ to the projection-plot  $p_M : \FR^k\times \DD \times M \rightarrow M$ inside 
$[M,\, M](\FR^k\times \DD)$, for each $\FR^k\times \DD \in \ThickenedCartesianSpaces$. 

\medskip 
As with any (pre)sheaf category, this is computed as the probe-wise fiber product in the category of sets, for each $\FR^k\times \DD\in \ThickenedCartesianSpaces$, and immediately recovers the plots of Def. \ref{ThickenedSmoothSetOfSections}. The top arrow then exhibits the field space of sections as a (thickened smooth) subspace of the mapping space
$$\CF \xhookrightarrow{\quad \quad} [M,F]\, ,$$
as expected by the point-set inclusion $\Gamma_M(F) \hookrightarrow C^\infty(M,F)$. We note that, as with \cite[Rem. 2.14]{GS23}, field spaces of sections should really be viewed as also being (petit) sheaves over $\mathrm{Open}(M)$, hence forming overall a sheaf 
\begin{align*}
\mathbold{\Gamma}_{(-)} (F)\; : \;  \mathrm{Open}(M)\times \ThickenedCartesianSpaces &\xrightarrow{\quad \quad}  \mathrm{Set} \, \\  
(U,\, \FR^k\times \DD) &\xmapsto{\quad \quad} \mathbold{\Gamma}_U(F)(\FR^k \times \DD)        \,
\end{align*}
with respect to the product coverage of sites, but we shall suppress this point throughout in favor of ease of exposition.

\medskip
\noindent
{\bf Synthetic tangent bundles.} 
 The fruitful -- and quite intuitive -- interaction of infinitesimal spaces with the internal hom functor of thickened smooth sets is one of the main reasons for the thickening of our Cartesian site. For instance, 
by Lem. \ref{TangentBundleSet} and the Yoneda embedding of Ex. \ref{ManifoldsAsThickenedSmoothSets}, the following \textit{sets} are in canonical bijection
\begin{align*}
	TM\, &\cong \,\mathrm{Hom}_{\mathrm{CAlg}_{\FR}}\Big(C^\infty(M), \CO\big(\DD^1(1)\big)\!\Big)\\
 \; &\cong \;
\mathrm{Hom}_{\ThickenedSmoothSets}\big(\DD^1(1),\, M\big)\, .
\end{align*}
That is, maps of thickened smooth sets from the infinitesimal line $\DD^1(1)$ to a manifold $M$ detect precisely its tangent vectors. By the same formula, however, one may define two thickened smooth set structures on the tangent bundle set $TM$, which could (a priori) be different: the first by embedding it as a manifold via Ex. \ref{ManifoldsAsThickenedSmoothSets} and \ref{ManifoldsWithCornersExample}, denoted by $y(TM)$, and the second by the internal hom construction $\big[\DD^1(1),\,  M\big]$ given as in Def. \ref{ThickenedSmoothMappingSet}. The latter defines the \textit{``synthetic tangent bundle''} of $M$, and in fact recovers the smooth structure of the traditional tangent bundle $TM$ in the following sense.
\begin{proposition}[\bf Synthentic tangent bundle of manifold]\label{TangBundlesCoincide}
Let $M\in \SmoothManifolds^{\mathrm{cor}}$ be a manifold and $TM \in \SmoothManifolds^{\mathrm{cor}}$ its traditional tangent bundle. Then $y(TM)\in \ThickenedSmoothSets$ coincides with the synthetic tangent bundle of $M$, 
$$
y(TM)\, \cong\, \big[\DD^1(1), \,  M\big] \;\; \in \;\; \ThickenedSmoothSets\, .
$$
\begin{proof}
We only show the isomorphism at the level of smooth sets here, i.e., restricting the site to Cartesian spaces, as it is more straightforward to spell out. This also follows from our Prop. \ref{SyntheticWeilBundle} which provides a slightly different proof, but we provide the following further simplified argument for expository purposes. The thickened probes require some more argumentation, which has been worked out for the boundaryless (\cite{Dubuc79}\cite[Prop. 1.12]{MoerdijkReyes}) and boundary case in  (\cite[Prop. 2.8]{Reyes07}) with the case of corners following with minimal modifications given our results from Prop. \ref{ManifoldsWithCornerstoAlgebras} and Thm. \ref{ManifoldsWithCornersEmbedIntoCahiers}. The $\FR^k$-plots of the synthetic tangent bundle are given by 
	$$
 \big[D^1(1),\, M\big]\big(\FR^k\big)= \big[D^1(1)\times\FR^k,\, M\big]\cong \mathrm{Hom}_{\mathrm{CAlg}_{\FR}}\Big(C^{\infty}(M),\,  \CO\big(D^{1}(1)\big)\otimes C^{\infty}\big(\FR^k\big)\!\Big)\, ,
 $$
An algebra morphism $C^{\infty}(M)\rightarrow \FR[\epsi]/\epsi^2\otimes C^{\infty}(\FR^k)\cong_{\mathrm{Vect}_{\FR}}C^{\infty}(\FR^k)\oplus \epsi \cdot C^{\infty}(\FR^k)$ is given, suggestively, by 
$$
(p,X_{p})\;:\; f \;\longmapsto \; p(f) + \epsi \cdot X_p(f)\, .
$$
	Expanding the algebra morphism property, and using $\epsi^2=0$,
	$$
 (f_1 \cdot f_2)\;\longmapsto \; p(f_1)\cdot p(f_2) + \epsi  \big( p(f_1)\cdot X_p (f_2) + X_p (f_1) \cdot p(f_2) \big)\, .
 $$
	The first component defines an algebra map $p: C^\infty(M)\rightarrow C^{\infty}(\FR^k)$, and so dually a smooth map of manifolds $\FR^k\rightarrow M$ by Prop. \ref{CartSptoAlgebras}. For each $x\in \FR^k$, we may compose with evaluation at x (also an algebra map) to land in $\FR \oplus \epsi \cdot \FR$. Then the second component is identified as a tangent vector $X_{p(x)}$ at $p(x)\in M$ for each $x\in M$, as in Lem. \ref{TangentBundleSet}. In particular, these fit in a diagram as
		\[ 
	\xymatrix@R=1.5em@C=2.6em  { &&  TM \ar[d]^{\pi}
		\\ 
		\FR^k \ar[rru]^{X_p} \ar[rr]^>>>>>>>>{p} && M
	}   
	\]
	of smooth manifolds. Equivalently, these are sections of the pullback bundle, $X\in \Gamma_{\FR^k}(p^*TM)$, acting as derivations on the pullback algebra $p^* C^\infty(M)\subset C^\infty(\FR^k)$, or yet equivalently derivations $X_p: C^\infty(M) \rightarrow C^\infty(\FR^k)$ relative to the algebra morphism $p:C^\infty(M)\rightarrow C^\infty(\FR^k)$, and hence are \textit{smooth} sections -- by the same reasoning of derivations  $C^\infty(M)\rightarrow C^\infty(M)$ being in bijection with \textit{smooth} sections of $TM$ (a fact which again relies on Hadamard's Lemma \ref{HadamardsLemma}); for a proof see \cite[\S 4]{Flanders}.     
    On the other hand, a plot in $y(TM)$ is given simply by
	$$
    X\;\;\in\;\; y(TM)\big(\FR^k\big)= \mathrm{Hom}_{\SmoothManifolds^{\mathrm{cor}}}\big(\FR^k,TM\big)\, ,
    $$
	i.e., a smooth map of manifolds $X: \FR^k \rightarrow TM$. By composing with the projection $\pi:TM\rightarrow M$, this is equivalent to a triangle as above. Hence the two sets of plots are in bijection, which is easily seen to be functorial under pullbacks of probe Cartesian spaces.
\end{proof}
\end{proposition}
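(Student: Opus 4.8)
The plan is to reduce the asserted isomorphism of thickened smooth sets to a natural, probe-wise bijection of $\FR$-algebra hom-sets, and to read off both sides as the same piece of data. Fix a thickened Cartesian probe $\FR^k\times\DD$ and write $B:=\CO(\FR^k\times\DD)$. First I would rewrite the mapping side: since tensor over $\FR$ is the coproduct in $\mathrm{CAlg}_\FR$ and $\DD\times\DD^1(1)$ is again an infinitesimally thickened point --- its function algebra $\CO(\DD)\otimes\FR[\epsi]/\epsi^2$ being a Weil algebra --- the object $\FR^k\times\DD\times\DD^1(1)$ still lies in $\ThickenedCartesianSpaces$ and is there the categorical product of $\FR^k\times\DD$ with $\DD^1(1)$; as the Yoneda embedding preserves it, $y(\FR^k\times\DD)\times\DD^1(1)\cong y(\FR^k\times\DD\times\DD^1(1))$ in $\ThickenedSmoothSets$. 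Hence by Definition~\ref{ThickenedSmoothMappingSet}, the Yoneda lemma, and the defining formula for $y$ on manifolds with corners (Ex.~\ref{ManifoldsWithCornersExample}), one gets $[\DD^1(1),M](\FR^k\times\DD)\cong\Hom_{\mathrm{CAlg}_\FR}\!\big(C^\infty(M),\,B\otimes\FR[\epsi]/\epsi^2\big)$. Exactly as in the proof of Lemma~\ref{TangentBundleSet}, expanding such a map as $f\mapsto\psi(f)+\epsi\cdot X(f)$ and using $\epsi^2=0$ identifies this hom-set with the set of pairs $(\psi,X)$, where $\psi\in\Hom_{\mathrm{CAlg}_\FR}(C^\infty(M),B)$ and $X\colon C^\infty(M)\to B$ is a $\psi$-relative derivation; under $\epsi\mapsto 0$ the pair maps to $\psi$.

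Next I would identify the other side, $y(TM)(\FR^k\times\DD)=\Hom_{\mathrm{CAlg}_\FR}(C^\infty(TM),B)$, with the same data. The structural input is a presentation of $C^\infty(TM)$: it is generated, as an $\FR$-algebra closed under smooth postcomposition, by $\pi^\ast C^\infty(M)$ together with the fibrewise-linear functions $\mathrm{d}f$ (for $f\in C^\infty(M)$, with $\mathrm{d}f(v):=v(f)$), the only relations being those inherited from the $C^\infty$-structure of $C^\infty(M)$ and the Leibniz relation $\mathrm{d}(fg)=\pi^\ast f\cdot\mathrm{d}g+\pi^\ast g\cdot\mathrm{d}f$. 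I would obtain this by embedding $M$ as a closed submanifold with corners $M\hookrightarrow\FR^N$, so that $C^\infty(M)=C^\infty(\FR^N)|_M$ (Cor.~\ref{MilnorsExerciseForClosed}, Cor.~\ref{MilnorsForManifoldsWithCorners}, Prop.~\ref{ManifoldsWithCornerstoAlgebras}); then $TM\hookrightarrow T\FR^N=\FR^N\times\FR^N$ is again closed --- here one genuinely uses that $T\FR^{k,l}=\FR^{k,l}\times\FR^{k+l}$ carries a full linear fibre, cf.\ Rem.~\ref{InfinitesimalVsFiniteCurves} --- with the two coordinate blocks restricting to $\pi^\ast(x^i|_M)$ and $\mathrm{d}(x^i|_M)$. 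Since every $\FR$-algebra map from $C^\infty(TM)$ into $B$ automatically respects the $C^\infty$-structure (Prop.~\ref{RAlgebraMapsAreCinfty}, Thm.~\ref{ThickenedCartesianSpacesAreSmoothLoci}), the assignment $\Theta\mapsto(\Theta\circ\pi^\ast,\,\Theta\circ\mathrm{d})$ is a bijection of $\Hom_{\mathrm{CAlg}_\FR}(C^\infty(TM),B)$ onto the pairs $(\psi,X)$ with $\psi$ an algebra map and $X$ a $\psi$-relative derivation --- the Leibniz relation on generators automatically upgrading to the chain rule along all $C^\infty$-operations. This matches the previous unpacking, compatibly with the two projections to $\psi$ (on the left induced by $\pi\colon TM\to M$, on the right by $\epsi\mapsto 0$).

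Finally, the resulting bijection is natural in $\FR^k\times\DD$ because it is effected by postcomposition with one universal element $\Phi:=(\pi^\ast,\mathrm{d})\colon C^\infty(M)\to C^\infty(TM)\otimes\FR[\epsi]/\epsi^2$, i.e.\ by a canonical morphism $y(TM)\to[\DD^1(1),M]$ in $\ThickenedSmoothSets$; being a plot-wise bijection, it is an isomorphism. I expect the main obstacle to be the presentation-of-$C^\infty(TM)$ step, together with the accompanying claim that no $\FR$-algebra map $C^\infty(TM)\to B$ carries information beyond $(\psi,X)$: this is precisely where the corner-sensitive versions of Milnor's exercise and the ``$\FR$-algebra maps into thickened Cartesian algebras are automatically $C^\infty$'' results of the Appendix do the real work, which also explains why the enlarged algebraic site is needed. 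Over plain $\FR^k$-probes (the smooth-set level) the step collapses to the classical fact, via Hadamard's Lemma~\ref{HadamardsLemma}, that $\psi$-relative derivations $C^\infty(M)\to C^\infty(\FR^k)$ are exactly the smooth sections of $\psi^\ast TM\to\FR^k$ --- equivalently the smooth maps $\FR^k\to TM$ over $M$, which are the plots of $y(TM)$.
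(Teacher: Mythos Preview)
Your proposal is correct and in fact more ambitious than the paper's argument, but it is organised differently. The paper deliberately restricts to unthickened probes $\FR^k$ and, having decomposed a plot of $[\DD^1(1),M]$ into a pair $(p,X_p)$ with $p\colon\FR^k\to M$ smooth and $X_p$ a $p$-relative derivation, invokes the classical identification (via Hadamard) of such derivations with smooth sections of $p^*TM$, i.e.\ with smooth maps $\FR^k\to TM$; the thickened case is deferred to Prop.~\ref{SyntheticWeilBundle} in the Appendix, which argues chart-wise. You instead attack all thickened probes $\FR^k\times\DD$ uniformly by replacing the geometric ``derivations $=$ sections'' step with an algebraic presentation of $C^\infty(TM)$ obtained from a global closed embedding $M\hookrightarrow\FR^N$, so that both sides become the set of pairs $(\psi,X)$ and the comparison is effected by a single universal map $\Phi=(\pi^*,\dd)$. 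This buys you naturality for free and avoids a separate smooth-vs-thickened split. The trade-off is that the real content is now concentrated in the surjectivity of $\Theta\mapsto(\psi,X)$, i.e.\ in showing that the lift $\tilde\Theta\colon C^\infty(\FR^{2N})\to B$ built from $(\psi,X)$ actually kills $I_{TM}$; you rightly flag this as the obstacle, but note that Prop.~\ref{RAlgebraMapsAreCinfty} as stated covers only sources with polynomially generated ideals, and Lem.~\ref{PlotsOfSmoothClosedSubspaces} has a density hypothesis that fails for $TM\subset\FR^{2N}$ of positive codimension, so to close this step cleanly you would still need to pass through the chart-wise Weil-bundle argument of Prop.~\ref{SyntheticWeilBundle} (your global-embedding route is an alternative packaging of the same fact). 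Over plain $\FR^k$-probes, as you observe in your last paragraph, your argument collapses exactly to the paper's.
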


\begin{remark}[\bf Mapping out of infinitesimal points]\label{infinitesimalMappingExamples} Working along the same lines  it is not hard to see that maps out of infinitesimal disks of higher order and dimension detect higher order tangent bundles and variants thereof. 

 \begin{itemize}[leftmargin=20pt]
\item[{\bf (i)}] For instance, for $M\in \SmoothManifolds^\mathrm{cor}$, expanding explicitly the defining conditions gives 
\begin{align}\label{ManifoldTangentBundleFiberProduct}
\big[\DD^2(1),\, M\big]\, \cong \, y(TM\times_{M} TM)\, \cong \, y(TM) \times_M y(TM)
\end{align} 
as the fiber product of $TM$ with itself. For instance, at the level of $*$-plots, this isomorphism is given by sending a pair of tangent vectors $(p,X^1_p): C^\infty(M) \rightarrow \FR[\epsi^1]/(\epsi^1)^2$ and $(p,X^2_p): C^\infty(M) \rightarrow \FR[\epsi^2]/(\epsi^2)^2$ to (the formal dual) of 
\begin{align*}
\big(p,X^1_p,X^2_p\big) \; : \; C^\infty(M) &\; \longrightarrow \; \FR[\epsi^1,\epsi^2]\big/(\epsi^1,\epsi^2)^2 \\
f &\; \longmapsto \; p(f) + \epsi^1 \cdot X_p^1(f) + \epsi^2 \cdot X_p^2(f) \;.
\end{align*}
Notice that this assignment defines an algebra map if and only if $\epsi^1 \cdot \epsi^2 =0$; namely it \textit{cannot extend} a map from the product $\DD^1(1)\times \DD^1(1)$ into $M$ (unless either of $X^{1,2}_p$ vanishes).

\item[{\bf (ii)}]
Similarly, $\big[\DD^1(1)\times \DD^1(1),\, M\big] \cong y\big(T(TM)\big)$ is the second order tangent bundle, which may also be deduced abstractly by repeatedly applying the internal hom property \eqref{InternalHomProperty} (along with Prop. \ref{TangBundlesCoincide}). Along the same lines, $\big[\DD^1(2),\, M\big]\cong y\big(J^2(\FR,M)\big)$ is the manifold of second order jets of curves in $M$ and, analogously, $\big[\DD^m(l),\, M]\cong y\big(J^l(\FR^k,M)\big)$ is the manifold of $l$-jets of $\FR^k$-plots in $M$. 

More generally, for any infinitesimally thickened point $\DD$ (Def. \ref{InfinitesimallyThickenedPoints}) the mapping space into any manifold $M$ is itself a manifold (Prop. \ref{SyntheticWeilBundle}) \cite[Thm. 3]{Kock81}\cite[Prop. 2.8]{Reyes07} 
\begin{align}\label{WeilBundleMappingSpace}
\big[\DD,\, M\big] \; \cong \; y(T_\DD M)\, ,
\end{align}
where  \footnote{The given sources show the result for the case of manifolds with boundaries. The cornered case for smooth plots is our Prop. \ref{SyntheticWeilBundle}, with thickened case following as in \cite[Thm. 3]{Kock81}\cite[Prop. 2.8]{Reyes07} with minimal modifications, given our Prop. \ref{ManifoldsWithCornerstoAlgebras} and Thm. \ref{ManifoldsWithCornersEmbedIntoCahiers}.}
$T_\DD M \in \SmoothManifolds^{\mathrm{cor}}$
is the ``\textit{Weil bundle}'' over $M$ with respect to the Weil algebra $\CO(\DD)$ (see Appendix \cref{App-WeilBundles}), to be thought of as a kind of generalized tangent (or jet prolongation) bundle.

\item[{\bf (iii)}] Furthermore, as we shall see in \cref{JetBundlesSyntheticallySec}, mapping appropriately out of special infinitesimal (neighborhoods of) points canonically attached to points of manifolds $M$ detects (or \textit{defines}) smooth spaces of jets of sections of arbitrary fiber bundles over $M$. 
\end{itemize} 
\end{remark}

The upshot in the identification of the tangent bundle $TM$ with the mapping space $\big[\DD^1(1),M\big]$ is that the latter construction may be applied for an arbitrary target $\CF\in \ThickenedSmoothSets$. Hence, following \cite{Law80}\cite{Law98}, we may extend it to a definition of the ``\textit{synthetic tangent bundle}'' of any thickened smooth set. The discussion and examples that follow will show that this is indeed a reasonable definition, making fully rigorous the intuition of tangent vectors being `first-order infinitesimal curves', as often invoked in particular in the field theory literature (e.g.,  \cite[Ex. 15]{GS23}).
\begin{definition}[\bf Synthetic tangent bundle]\label{SyntheticTangentBundle}
The \textit{synthetic tangent bundle} $T \CF$ of a thickened smooth set $\CF \in \ThickenedSmoothSets$ is defined as the mapping thickened smooth set
$$
T\CF\,  :=\,  \big[\DD^1(1),\,  \CF\big] \;\; \in \;\; \ThickenedSmoothSets \, .
$$
\end{definition}
The set of tangent vectors on $\CF$ is then given simply as the corresponding $*$-plots 
$$
\big[\DD^1(1), \, \CF\big](*)=\mathrm{Hom}_{\ThickenedSmoothSets}\big(\DD^1(1),\, \CF\big)\, .
$$ 
Crucially, this construction does not yield a bare set of tangent vectors but a fully-fledged (thickened) smooth set structure. The (unique) embedding map  $\iota: *\cong \FR^0 \hookrightarrow \DD^1(1)$ of the point into the infinitesimal line defines a natural projection 
\begin{align}\label{SynthBundleProj}
\xymatrix@=1.6em  {   T\CF :=\big[\DD^1(1),\, \CF\big] \ar[d]^{(-)\circ \iota} 
	\\ 
\CF \cong  [*, \,\CF]
	\, , } 
\end{align}
given, under the Yoneda embedding, by precomposition with $\iota: * \rightarrow  \DD^1(1)$. In the case of $M\in \SmoothManifolds\hookrightarrow \ThickenedSmoothSets$ being a smooth manifold, it is easy to see that under the isomorphism of Prop. \ref{TangBundlesCoincide}, this projection corresponds to the usual bundle projection $TM\rightarrow M$. 

\smallskip 
Furthermore, the construction of the synthetic tangent bundle forms a functor $$T:=\big[\DD^1(1),\, - \big] \;\; : \;\;  \ThickenedSmoothSets \longrightarrow \ThickenedSmoothSets\, ,$$ with smooth maps $f:\CF\rightarrow \CG$ mapped to
\begin{align}\label{SyntheticPushforward} 
Tf\,: \,T\CF & \xrightarrow{\quad \quad} T\CG \\
v^k_\epsi&\xmapsto{\quad \quad}  f\circ v^k_\epsi \nn \, ,
\end{align} 
where $v^k_\epsi$ is a plot of tangent vectors, i.e., $v^k_\epsi: \DD^1(1)\times \FR^k\times \DD \rightarrow \CF$ making the composition above manifest. As such, it automatically satisfies the expected functoriality property 
$$
T(g\circ f) \equiv T g\circ T f \;\; : \;\; T\CF \longrightarrow T\CG \longrightarrow T\CH\, , 
$$
for any maps $f:\CF \rightarrow \CG$ and $g: \CG \rightarrow \CH$. Hence $Tf : T\CF \rightarrow T\CG$ may be naturally interpreted as the differential (pushforward) of any map $f: \CF\rightarrow \CG$, and so we will often also denote it as $\dd f: T\CF \rightarrow T\CG$. Indeed, in the case of $\CF=M,\, \CG=N$ being two finite-dimensional manifolds, this recovers precisely the traditional pushforward 
\begin{align}\label{SyntheticPushforwardRecoversTraditional}
Tf \equiv \dd f \; : \; TM \longrightarrow TN \, , 
\end{align}
as seen immediately (plot-wise) by the identification\footnote{For instance, at the point-set level and using the derivation point of view, $Tf$ sends a derivation $(p,X_p)\in T_p M \cong \mathrm{Der}_p\big(C^\infty(M)\big)$ to the derivation $f\circ (p,X_p)\in T_f(p) N \cong \mathrm{Der}_{f(p)} \big(C^\infty(N)\big)$ acting as $h\mapsto h \circ f \mapsto (p,X_p) (h\circ f) = \big(h\circ f(p), X_p(h\circ f)\big)$, which is precisely the definition of the traditional pushforward.} used in the proof of Prop. \ref{TangBundlesCoincide}. 

Let us also note a crucial categorical property of the synthetic tangent functor that will be very  useful in applications. Namely by its very  definition (Def. \ref{SyntheticTangentBundle}) and the internal hom property \eqref{InternalHomProperty}, it enjoys a left-adjoint functor 
\begin{align}\label{TangentFunctorHasLeftAdjoint}
\big(\DD^1(1)\times -\big ) \quad \dashv \quad  T(-)\, .
\end{align}
In other words, there is a canonical bijection 
\begin{align*} \mathrm{Hom}_{\ThickenedSmoothSets}\big(\DD^1(1)\times \CF,\, \CG \big) \; \cong \; \mathrm{Hom}_{\ThickenedSmoothSets}(\CF,\, T\CG)  
\end{align*}
natural in both $\CF$ and $\CG$.

\begin{remark}[\bf Infinitesimally linear spaces]\label{InfinitesimallyLinearSpaces} 
Note that any synthetic tangent bundle $T \CF = [\DD^1(1),\, \CF]$ enjoys a fiber-wise ``$\FR$-module'' structure via a canonical `scaling' map
\begin{align}\label{TangentBundleScalingMap}
\cdot \; : \; \FR \times T\CF \longrightarrow T\CF
\end{align}
induced\footnote{In the ``\textit{internal language}'' of the sheaf topos (see e.g. \cite{Kock06}\cite{MoerdijkReyes}) this action is often denoted simply as $(r,v)\mapsto r\cdot v$, where $r$ and $v$ are arbitrary (generalized) elements (i.e. plots) of $\FR$ and $T\CF$, and $r\cdot v(d) := v(r \cdot d)$ for $d$ any plot of $\DD$. More explicitly, this is defined as adjunct of the composition 
$
\ev \circ (m \times \id_{T\CF}) \, : \, y\big(\FR \times \DD^1(1)\big) \times T\CF \longrightarrow y(\DD^1(1)) \times T\CF \longrightarrow \CF  
$
under \eqref{TangentFunctorHasLeftAdjoint}, where in turn $\ev: y(\DD^1 (1)) \times T\CF \rightarrow \CF $ is the adjunct of the identity $\id_{T\CF} : T\CF\rightarrow T\CF$ under \eqref{TangentFunctorHasLeftAdjoint}. Spelling out the above plot-wise yields the map 
$$
\big(r_\epsi^k,\, d_\epsi^k, \, \CZ_\epsi^k\big) \longmapsto 
\CZ_\epsi^k \circ \Big(\big(m(r_\epsi^k, d^k_\epsi)\times  \id_{\FR^k\times \DD}\big) \circ \Delta_{\FR^k \times \DD} 
\Big)
$$
where $\Delta_{\FR^k\times \DD} :\FR^k\times \DD\rightarrow (\FR^k\times \DD)\times (\FR^k\times \DD )$ is the diagonal map, which justifies the internal language notation.
} by the (Yoneda embedded) scaling map $m : \FR\times \DD^1(1) \rightarrow \DD^1(1)$ of the real line on the infinitesimal line. 

\begin{itemize} [leftmargin=6mm] 
\item[\bf (i)]  This is given dually by
\begin{align*}
\CO\big(\DD^1(1)\big) &\longrightarrow C^\infty(\FR)\otimes \CO\big(\DD^1(1)\big)
\\
\epsi &\longmapsto t\cdot \epsi
\end{align*}
where $t\in C^\infty(\FR)$ is the  coordinate (identity) function on $\FR$. Of course, the latter scaling map descends precisely from the traditional finite scaling map 
$$
m \, : \, \FR_t \times \FR_x \longrightarrow \FR_x \, ,
$$ 
acting on points as $(t_0,x_0)\mapsto t_0\cdot x_0$, and hence dually on the coordinate function (by pulling back) as $x \mapsto t\cdot x$.

\item[\bf (ii)]   In full generality, however, one is not able to supply a fiber-wise $\FR$-linear structure $+ : T\CF \times_\CF T\CF \rightarrow \CF$. Nevertheless, this is possible whenever the space $\CF$ satisfies a natural categorical property regarding the behavior of mapping spaces of infinitesimal disks $\DD^n(1)$ into it, aptly called ``infinitesimal linearity'' (see e.g. \cite[Prop. 7.2]{Kock06}). In a bit more detail, recall the two canonical embedding maps 
$$
\iota_{1/2} \, : \, \DD^1(1)\longhookrightarrow \DD^2(1)
$$
given dually by the coordinate projections $\epsi_1 \mapsto \epsi, \, \epsi_2 \mapsto 0$ and $\epsi_1\mapsto 0, \, \epsi_2 \mapsto \epsi$, respectively. These yield the \textit{cocone} diagram
$$
\xymatrix@R=1.4em@C=3em
{ {*} \ar[d] \ar[r] &  
\DD^1(1)\ar[d]^{\iota_2 } 
	\\ 
\DD^1(1) \ar[r]^{\iota_1} & \DD^2(1)
	\, , } 
$$
which, in turn, under the internal hom functor into $\CF$ yields the \textit{cone} diagram
$$
\xymatrix@=1.6em@C=2.5em 
{ [\DD^2(1),\,\CF] \ar[d]_{(-)\circ \iota_1} \ar[rr]^-{(-)\circ \iota_2} &&  
T \CF \ar[d] 
	\\ 
T \CF \ar[rr] && \CF  
	\, . } 
$$

\item[\bf (iii)]   Now, if the latter diagram is actually a \textit{limit} cone, then the corresponding (unique) universal map into the fiber product of tangent bundles
$$
k\, : \, [\DD^2(1),\,\CF] \; \xlongrightarrow{\sim} T\CF\times_\CF T\CF 
$$
is an \textit{isomorphism} (cf. Rem. \ref{infinitesimalMappingExamples}{\bf(i)}). This map may be interpreted as a canonical way to uniquely combine any pair of (plots of) tangent vectors $\DD^1(1)\rightarrow \CF$ to (plots of) maps $\DD^2(1)\rightarrow \CF$, and vice-versa.\footnote{Recall the explicit form of  \eqref{ManifoldTangentBundleFiberProduct} as a further motivation for the appearance of $\DD^2(1)$, rather than $\DD^1(1)\times \DD^1(1)$.} In such a situation, one defines a natural fiber-wise ``addition'' map of thickened smooth sets over $\CF$ via
$$
+ \;\; : \;\; T\CF \times_\CF T\CF \xlongrightarrow{\;\;k^{-1}\;\;} [\DD^2(1),\, \CF] 
\xlongrightarrow{\quad (-)\circ \mathrm{Diag}\quad } [\DD^1(1),\, \CF ] = T\CF \, , 
$$
where $\mathrm{Diag}: \DD^1(1) \hookrightarrow \DD^2(1)$ is the ``diagonal embedding'' given dually by $\epsi_1 \mapsto \epsi, \, \epsi_2 \mapsto \epsi$. 
The above addition map can then be easily shown to be $\FR$-linear with respect to the natural scaling action \eqref{TangentBundleScalingMap} (cf. \cite[Prop. 7.2]{Kock06}\cite[Prop. 1.3]{MoerdijkReyes}). Its associativity, however, requires further that the cocone diagrams formed by all the higher-dimensional inclusions $\iota_{1/\cdots/n}: \DD^1(1) \hookrightarrow \DD^n(1)$, also  becomes limit diagrams upon (internal) mapping into $\CF$. The totality of the latter conditions is the definition of the ``\textit{infinitesimal linearity}'' property of a thickened space $\CF$. \footnote{Infinitesimal linearity is subsumed by a more general condition, called ``\textit{microlinearity}''  \cite[\S 2.3]{Lavendhomme96}\cite[\S V.1]{MoerdijkReyes} 
(the notion is originally due to \cite{Bergeron}), which states that a larger class of cocone diagrams should turn into limit diagrams upon internal mapping into such a ``microlinear'' space $\CF$. This allows for categorically natural formulations of concepts from classical differential geometry, beyond the fiber-wise linearity of tangent bundles.}
We will not need to employ the abstract details of these conditions in this text, since in the 
field-theoretic spaces of interest this fiber-wise linear structure exists and is immediately apparent, as we shall see.\footnote{In particular, all smooth manifolds and infinitesimal points are
microlinear spaces. Moreover, the class of microlinear spaces is furthermore closed under limits and the formation of mapping spaces via the internal hom functor \cite[\S V.1]{MoerdijkReyes}. These abstract results guarantee that the field-theoretic tangent bundles constructed in this text (Lem. \ref{ManifoldMappingSpaceTangentBundle}, Prop. \ref{SyntheticTangentBundleOfFieldSpace}, Lem. \ref{SyntheticInfiniteJetTangentBundle}, Lem. \ref{SyntheticTangentBundleOfJetBundleSections}, Cor. \ref{OnshellSyntheticTangentBundle}) will enjoy a fiber-wise linear structure.}
\end{itemize} 
\end{remark}

The following is an exotic application of the synthetic tangent bundle construction (from the traditional differential geometric point of view). Namely, it provides a well-behaved (and intuitive) notion of a tangent bundle on infinitesimal disks and points, even if they have only one actual geometric point. Note, any attempt using finite smooth curves would instead, necessarily, yield a trivial tangent bundle. 
\begin{example}[\bf Tangent bundle of infinitesimal disks]\label{TangetBundleOfInfinitesimalPoint}
Viewing the thickened point $\DD^m(l)\in \ThickenedCartesianSpaces$ as a thickened smooth set, its synthetic tangent bundle is defined as
$$T\DD^m(l) \, := \,  \big[\DD^1(1), \, \DD^m(l) \big]\, ,$$
i.e., the thickened smooth space with $(\FR^k\times \DD)$-plots
\begin{align*} 
T \DD^m(l) (\FR^k\times \DD)\, :&= \, \mathrm{Hom}_{\ThickenedSmoothSets}\big(\DD^1(1)\times \FR^k\times \DD , \, 
\DD^m(l) \big)\, 
\\
&\cong \, \mathrm{Hom}_{\ThickenedCartesianSpaces}\big(\DD^1(1)\times \FR^k\times \DD\, , \, 
\DD^m(l)\big)
\\ \, &\cong \, 
\mathrm{Hom}_{\mathrm{CAlg}_\FR}\big( \CO(\DD^m(l)) \, , \, \CO(\DD^1(1))\otimes C^\infty(\FR^k) \otimes \CO(\DD) \big)\, , 
\end{align*}
by the Yoneda embedding. By an application of Lem. \ref{HadamardsLemma}, it is not hard to explicitly check\footnote{This bijection is shown in \cite[Prop. 1.11]{MoerdijkReyes} where they consider only $C^\infty$-algebra maps. Explicitly, it is determined by sending the map which acts on generators as $\epsi^i \mapsto g^i(h,x,\epsi)$ to that acting as $\epsi^i\mapsto g(0,x,\epsi)$ and $y^i \mapsto \partial_h g^i(0,x,\epsi)$. By 
 Prop. \ref{RAlgebraMapsAreCinfty}${\bf (ii)}$, these $C^\infty$-algebra hom-sets actually coincide with those of all $\FR$-algebra maps between the given algebras.} that 
$$
\mathrm{Hom}_{\mathrm{CAlg}_\FR}\big( \CO(\DD^m(l)) \, , \, \CO(\DD^1(1))\otimes C^\infty(\FR^k) \otimes \CO(\DD) \big) \, \cong \,  \mathrm{Hom}_{\mathrm{CAlg}_\FR}\Big( \CO\big(T D^m(l) \big)  \, , \, C^\infty(\FR^k)\otimes \CO(\DD) \Big) \, , 
$$
naturally in $\FR^k\otimes \DD$, where 
$$
\CO\big(T D^m(l)\big)\, := \,\CO\big(\DD^m(l)\big)\otimes C^\infty(\FR^m)\big/ 
\big\{0=y^{i_1}\epsi^{i_2}\cdots \epsi^{i_{l+1}} +\epsi^{i_1}y^{i_2}\cdots \epsi^{i_{l+1}} +\dots+ \epsi^{i_1}\epsi^{i_2}\cdots y^{i_{l+1}}\big\} \, ,
$$
with $\{y^i\}$ being the linear coordinate functions on $\FR^m$, written suggestively as ``functions on the tangent bundle $T \DD^m(l)$". In terms of thickened smooth sets, the above equation says that
$$
T \DD^m(l) \;\; \cong_{\, \ThickenedSmoothSets}  
\;\; \mathrm{Hom}_{\mathrm{CAlg}_\FR}\Big( \CO\big(T D^m(l) \big)  \, , \, \CO(-) \Big) \Big|_{\ThickenedCartesianSpaces} \, , 
$$
and so the tangent bundle to any infinitesimally thickened point $\DD^m(l)$ is ``represented from outside''\footnote{$T \DD^m(l)$ is not an object in our site $\ThickenedCartesianSpaces$, and so not representable in the strict sense.} by the (formal dual) space with function algebra 
$\CO\big(T D^m(l)\big)$. 

Intuitively, the $\CO\big(\DD^m(l)\big)$ factor of the algebra represents the function algebra on the base of $T \DD^m(l) \rightarrow \DD^m(l)$, with its nilpotency representing the infinitesimal nature. On the other hand, the $C^\infty(\FR^m)$ factor represents the functions on the fiber, signaling that even though the base is infinitesimal and has only one actual point, its tangent fiber is not infinitesimal, having infinitely many points. The quotient relation describes how the infinite extent fiber $\FR^m$ is `attached' to the infinitesimal base. Furthermore, the fiber possesses an obvious $\FR$-linear structure, witnessed dually as the map of function algebras
\begin{align}\label{LinearityOfTangentInfinitesimalPoint}
+^* \; : \; \CO\big(T D^m(l)\big) &\xrightarrow{\quad \quad} \CO\big(T D^m(l)\big)\otimes_{\CO(\DD^m(l))} \CO\big(T D^m(l)\big) \\
\epsi^i &\xmapsto{\quad \quad} \epsi^i \nn \\
y^i &\longmapsto y_1^i + y_2^i \nn \, ,
\end{align}
where $\{y_1^i\}$ and $\{y_2^i\}$ correspond to the linear coordinates of each of the (different) copies of $C^\infty(\FR^m)$ within 
the codomain.

Since $T=[\DD^1 \, , \, -]$ has a left adjoint, it preserves limits and, in particular, direct products, and so 
$
T\big(\FR^k\times \DD^m(l)\big)\cong T(\FR^k)\times T\big(\DD^m(l)\big)\in \ThickenedSmoothSets . 
$ Working as above, it follows that the synthetic tangent bundle of an arbitrary probe $\FR^k\times \DD^m(l)\in \ThickenedCartesianSpaces$ is represented by the formal dual of  
$$
\CO\Big(T\big(\FR^k\times \DD^m(l)\big)\!\Big):=\CO\big(\DD^m(l)\big)\otimes C^\infty\big(\FR^m_y\times \FR^k\times \FR^k\big)
\!\Big/ \!
\big\{0=y^{i_1}\epsi^{i_2}\cdots \epsi^{i_{l+1}} +\epsi^{i_1}y^{i_2}\cdots \epsi^{i_{l+1}} +\dots+ \epsi^{i_1}\epsi^{i_2}\cdots y^{i_{l+1}}\big\}\, , 
$$ 
with the two extra $C^\infty(\FR^k)$-factors corresponding to the usual smooth functions on $T\FR^k\cong\FR^k\times \FR^k \in \SmoothManifolds$. Furthermore, the formula from Eq. \eqref{LinearityOfTangentInfinitesimalPoint} straightforwardly generalizes to a fiber-wise $\FR$-linear structure on $T\big(\FR^k\times \DD^m(l)\big)$. Moreover, one can work similarly, under any identification of Lem. \ref{InfinitesimalPointsAsSubspacesOfInfinitesimalDisks}, to show that the tangent bundle $T (\FR^k\times \DD)$ of thickened Cartesian space has a fiber-wise linear structure, described dually by a further quotient of the above algebras. In fact, it is further true that all objects in our site are actually microlinear (Rem. \ref{InfinitesimallyLinearSpaces}) \cite[Prop 17.6]{Kock06}.
\end{example}

We now show how the synthetic tangent bundle construction immediately recovers the correct notion of tangent bundles on the infinite-dimensional field spaces arising in physics. We first show this for the $\sigma$-model field spaces $[M,N]$ from Ex. \ref{ThickenedMappingSpaceOfManifolds}.
\begin{lemma}[\bf Manifold mapping space tangent bundle]\label{ManifoldMappingSpaceTangentBundle}
For any smooth manifolds $M,N\in \SmoothManifolds$,  the synthetic tangent bundle of the thickened smooth mapping space $[M,\, N]\in \ThickenedSmoothSets$ is naturally isomorphic to $[M,\, TN]$,
$$
T\big[M,\, N\big] := \big[\DD^1(1), [M,N] \big] \;\; \cong  \;\; \big[M,\, TN\big]\, .
$$
\begin{proof}
We provide two proofs, the first being abstract and resting on universal properties and Prop. \ref{TangBundlesCoincide}, with the second being more explicit by working plot-wise. Firstly, recall the internal hom property \eqref{InternalHomProperty}
$$
\big[\DD^1(1), \, [M,N]\big] \cong \big[\DD^1(1)\times M,\, N\big]\, ,
$$
by which the identification follows via the sequence of (natural) isomorphisms
\begin{align*}
T\big[M,N\big]:&= \big[\DD^1(1), \, [M,N]\big] \cong \big[\DD^1(1)\times M, \, N\big] \\
&\cong \big[M\times \DD^1(1), \, N\big] \cong \big[M, \, [\DD^1(1),\, N] \big] \\
&\cong \big[M,TN\big]\, ,
\end{align*}
where we used the internal hom property \eqref{InternalHomProperty}, the symmetric monoidal structure of $\ThickenedSmoothManifolds$ (or $\ThickenedSmoothSets$) via the direct product, followed again by the internal hom property, and finally the canonical identification of Prop. \ref{TangBundlesCoincide}.

Alternatively, being more explicit, by the internal hom property the $*$-plots of $T[M,N]$ are given by
\begin{align*}
T[M,N](*)&\cong \mathrm{Hom}_{\ThickenedSmoothSets} \big(\DD^1(1)\times M, \, N\big) \\
&\cong \mathrm{Hom}_{\mathrm{CAlg}_\FR}\Big(C^\infty(N), \, C^\infty(M)\otimes \CO\big(\DD^1(1)\big)\!\Big) \, .
\end{align*}
Now, precisely as in the proof of Prop. \ref{TangBundlesCoincide}, an algebra map 
\begin{align*}C^\infty(N) &\longrightarrow C^\infty(M) \otimes \CO\big(\DD^1(1)\big) \\
f &\longmapsto f\circ \phi + \epsi\cdot \CZ_\phi(f)\, ,
\end{align*}
i.e., determined by a pair of maps $(\phi,\CZ_\phi):M\rightarrow N\times TN$ 
such that the diagram
		\[ 
\xymatrix@R=1.6em@C=2.6em  { &&  TN \ar[d]^{\pi}
	\\ 
	M \ar[rru]^{\CZ_\phi} \ar[rr]^>>>>>>>>{\phi} && N
}   
\]
commutes. Equivalently, this is simply a section of the pullback bundle $\CZ_{\phi}\in \Gamma_M(\phi^*TN)$, recovering with standard notion of tangent vectors on mapping spaces of manifolds while avoiding analytical technicalities. Similarly, the $\FR^k\times \DD$-plots are given by
\begin{align*}
T\big[M,\,N\big](\FR^k \times \DD)&\cong \big[\DD^1(1)\times M,\, N\big] (\FR^k\times \DD)\\
& \cong \mathrm{Hom}_{\ThickenedSmoothSets} \big((\FR^k\times \DD) \times \DD^1(1)\times M, \, N\big) \\
&= \mathrm{Hom}_{\mathrm{CAlg}_\FR}\Big(C^\infty(N), \, C^\infty(M\times \FR^k)\otimes \CO(\DD)\otimes  \CO\big(\DD^1(1)\big)\!\Big) \, .
\end{align*}
Computing analogously, algebra maps $C^\infty(N) \rightarrow C^\infty(M\times \FR^k)\otimes \CO(\DD)\otimes  \CO\big(\DD^1(1)\big)$ can be checked to correspond (dually) to pairs of $(\FR^k\times \DD)$-parametrized maps $(\phi^{k,\epsi}, \, \CZ_{\phi^{k,\epsi}}): \FR^k\times \DD \times M \longrightarrow N \times TN$ such that the following diagram commutes 
		\[ 
\xymatrix@R=1.6em@C=2.6em  { &&  TN \ar[d]^{\pi}
	\\ 
	\FR^k\times \DD \times M \ar[rru]^{\CZ_{\phi^{k,\epsi}}} \ar[rr]^>>>>>>>>>{\phi^{k,\epsi}} && N
    \mathrlap{\,.}
}   
\]
Such diagrams are completely determined by the diagonal maps $\CZ_{\phi^{k,\epsi}}:\FR^k\times \DD \times M\rightarrow TN$, with the bottom maps always recovered as the compositions $\phi^{k,\epsi}=\pi \circ \CZ_{\phi^{k,\epsi}}$. But these diagonal maps are precisely the defining plots of $[M,TN]$
\begin{align*}
\big[M,TN\big](\FR^k \times \DD):&= \mathrm{Hom}_{\ThickenedSmoothSets}\big(\FR^k\times \DD\times M, \, TN\big)
\\ &\cong \mathrm{Hom}_{\ThickenedSmoothManifolds}\big(\FR^k\times \DD\times M, \, TN\big) \, ,
\end{align*} 
which completes the proof.
\end{proof}
\end{lemma}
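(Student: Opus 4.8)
The plan is to give two proofs, paralleling the style of the preceding results: a short abstract one resting on universal properties, and a concrete one unwinding plots.

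\emph{Abstract route.} First I would invoke the exponential law \eqref{InternalHomProperty} to rewrite $T[M,N] := \big[\DD^1(1),[M,N]\big]$ as $\big[\DD^1(1)\times M,\, N\big]$. Since the cartesian product is symmetric monoidal in $\ThickenedSmoothSets$, this equals $\big[M\times \DD^1(1),\, N\big]$, and applying the exponential law once more yields $\big[M,\,[\DD^1(1),N]\big]$. By Prop.~\ref{TangBundlesCoincide}, the inner mapping space $[\DD^1(1),N]$ is canonically isomorphic to $y(TN)$ — crucially at the level of all thickened probes, not just smooth-set probes, which the references cited there handle for the cornered case. Composing these natural isomorphisms gives $T[M,N]\cong[M,TN]$, naturally.

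\emph{Explicit route.} Alternatively, I would compute the $\FR^k\times\DD$-plots by hand. Using \eqref{InternalHomProperty} and Ex.~\ref{ThickenedMappingSpaceOfManifolds},
\[
T[M,N](\FR^k\times\DD)\;\cong\;\mathrm{Hom}_{\mathrm{CAlg}_\FR}\!\Big(C^\infty(N),\; C^\infty(M\times\FR^k)\otimes\CO(\DD)\otimes\CO\big(\DD^1(1)\big)\Big).
\]
Writing the $\FR[\epsi]/\epsi^2$-factor of the codomain additively, such an algebra map decomposes as $f\mapsto \phi^\ast(f)+\epsi\cdot\CZ(f)$. Exactly as in the proof of Prop.~\ref{TangBundlesCoincide}, the homomorphism property together with $\epsi^2=0$ forces $\phi^\ast: C^\infty(N)\to C^\infty(M\times\FR^k)\otimes\CO(\DD)$ to be an algebra map — hence dual to a map $\phi^{k,\epsi}:\FR^k\times\DD\times M\to N$ by Prop.~\ref{CartSptoAlgebras} (extended to thickened probes) — and $\CZ$ to be a derivation relative to $\phi^\ast$, i.e.\ a section of the pullback bundle $(\phi^{k,\epsi})^\ast TN$, equivalently a map $\CZ_{\phi^{k,\epsi}}:\FR^k\times\DD\times M\to TN$ lying over $N$. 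Smoothness of this section is automatic via Hadamard's Lemma~\ref{HadamardsLemma}. But such a $\CZ_{\phi^{k,\epsi}}$ is precisely a defining plot of $[M,TN](\FR^k\times\DD)$, with the bottom map $\phi^{k,\epsi}=\pi\circ\CZ_{\phi^{k,\epsi}}$ recovered automatically. Finally I would note that this bijection commutes with restriction along morphisms of thickened probes, which is immediate from functoriality of precomposition.

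\emph{Main obstacle.} The abstract proof is bookkeeping once Prop.~\ref{TangBundlesCoincide} is available; its only delicate input is that $[\DD^1(1),N]\cong y(TN)$ over \emph{all} of $\ThickenedCartesianSpaces$, including along the infinitesimal directions $\DD$. For the explicit proof, the subtle point is the same one already faced in Prop.~\ref{TangBundlesCoincide}: checking that the ``constant part'' of an algebra map into $C^\infty(M\times\FR^k)\otimes\CO(\DD)\otimes\CO(\DD^1(1))$ genuinely lands in, and is an algebra map onto, $C^\infty(M\times\FR^k)\otimes\CO(\DD)$, and that the resulting derivation is automatically a \emph{smooth} section of the pullback bundle — both of which ultimately rest on Hadamard's Lemma and the full faithfulness of the embedding of (cornered) manifolds into $\FR$-algebras.
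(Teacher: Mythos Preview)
Your proposal is correct and follows essentially the same approach as the paper: both give an abstract proof via the exponential law and Prop.~\ref{TangBundlesCoincide}, and an explicit plot-wise proof decomposing algebra maps into $C^\infty(M\times\FR^k)\otimes\CO(\DD)\otimes\CO(\DD^1(1))$ as a base map plus a relative derivation. The only minor difference is that the paper first spells out the $*$-plot case before the general $\FR^k\times\DD$-plots, whereas you go directly to the general case.
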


In this case, the fiber-wise $\FR$-linear structure of $T[M,N]\cong [M,TN]$ becomes immediately apparent as that inherited by the target $TN$. The synthetic tangent bundle construction also immediately recovers the tangent bundle to field spaces of sections (\cite[Def. 2.19]{GS23}), generalizing the above result.

\begin{proposition}[\bf Synthetic tangent bundle of field space]\label{SyntheticTangentBundleOfFieldSpace}
The synthetic tangent bundle $T\CF:= [\DD^1(1),\, \CF] $ of a thickened smooth field space $\CF=\mathbold{\Gamma}_M(F)$ is canonically isomorphic to $\mathbold{\Gamma}_{M}(VF)$,
$$
T\CF\, \cong\, \mathbold{\Gamma}_M(VF) \;\; \in \;\; \ThickenedSmoothSets \, ,
$$
where $VF\rightarrow F\xrightarrow{\pi} M$ is the vertical fiber bundle corresponding to the field bundle $\pi:F\rightarrow M$.
\begin{proof} 
This follows by universal properties. Namely, recall the pullback (limit) characterization of the field space from \eqref{FieldSpaceAsPullback},
\begin{align*} 
\xymatrix@=1.6em 
{\CF  \ar[d] \ar[rr] &&  
[M,F] \ar[d]^{\pi_*} 
	\\ 
	\mathrm{id}_M\cong* \; \ar@{^{(}->}[rr]  && [M,M]
	\, . } 
\end{align*}
Since the tangent functor $T:= [\DD^1(1),\, - ]$ has a left adjoint \eqref{TangentFunctorHasLeftAdjoint}, namely $\DD^1(1)\times (-)$, it preserves all (small) limits and in particular pullbacks. Thus, the synthetic tangent bundle $T\CF$
is identified as the pullback
\begin{align*} 
\xymatrix@=1.6em 
{T\CF  \ar[d] \ar[rr] &&  
T[M,F] \ar[d]^{T(\pi_*)} 
	\\ 
	T(\mathrm{id}_M)\cong T(*)\cong *\; \ar@{^{(}->}[rr]  && T[M,M]
	\, . } 
\end{align*}
By Lem. \ref{ManifoldMappingSpaceTangentBundle}, the right vertical arrow is canonically identified with $[M,\, TF]\xlongrightarrow{(T\pi)_*} [M,TM]$, while the bottom arrow is canonically identified with the embedding of the point $* \hookrightarrow [M,TM]$ using its zero-section incarnation. Explicitly, the latter is defined by sending the (unique) $\FR^k\times \DD$-plot of $0_M\cong*$ to the `zero-section' $\FR^k\times \DD \times M \xrightarrow{\pr_M} M \xrightarrow{0_M} TM$, which can be seen by applying the tangent functor adjunction plot-wise to the embedding $\id_M\cong *\hookrightarrow [M,M]$ from diagram \eqref{FieldSpaceAsPullback}.

Thus, the tangent bundle $T\CF$ is equivalently the pullback 
\begin{align*} 
\xymatrix@=1.6em 
{T\CF  \ar[d] \ar[rr] &&  
[M,\, TF] \ar[d]^{(T\pi)_*} 
	\\ 
	0_M\cong *\; \ar@{^{(}->}[rr]  && [M,\, TM]
	\, , } 
\end{align*}
where by computing object-wise this says that the $\FR^k\times \DD$-plots of $T\CF$ are given by 
$$
T\CF(\FR^k\times \DD) = \big\{ \CZ_{\phi^{k,\epsi}}\in [M,\,TF](\FR^k\times \DD) \, \big| \,  T\pi\circ \CZ_{\phi^{k,\epsi}} = 0_M \circ \pr_M  \big\}\, .
$$
These are maps $\CZ_{\phi^{k,\epsi}}:\FR^k\times \DD \times M\rightarrow TF$ making the diagram
\[ 
\xymatrix@C=4pc @R=2pc{ && TF \ar[d]^{T\pi}
	\\ 
	\FR^k\times \DD \times M  \ar[r]^-{\;\;\; \pr_M} \ar[rru]^-{\CZ_{\phi^{k,\epsi}}} &M \ar[r]^{0_M} &TM \, 
}   
\]
commute, and hence in particular factoring through the vertical subbundle $VF\hookrightarrow TF$. By further composing on the right with the projection $TM\rightarrow M$, one sees that maps $\CZ_{\phi^k_\epsi}$ comprise precisely of sections of the form
\[ 
\xymatrix@C=2.6pc @R=1.5pc { &&  VF \ar[d]^{\pi}
	\\ 
	\FR^k\times \DD \times M \ar[rru]^-{\CZ_{\phi^{k,\epsi}}} \ar[rr]^-{\pr_M} && M
}   
\]
over $M$, which are exactly the defining $\FR^k\times \DD$-plots of $\mathbold{\Gamma}_M(VF)$. 
\end{proof}
\end{proposition}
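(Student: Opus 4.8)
The plan is to realize $\CF = \mathbold{\Gamma}_M(F)$ as a limit and transport the synthetic tangent functor through it. I would start from the pullback presentation \eqref{FieldSpaceAsPullback}, which exhibits $\CF$ as the fiber product in $\ThickenedSmoothSets$ of the cospan $[M,F] \xrightarrow{\pi_*} [M,M]$ and $\id_M \cong * \hookrightarrow [M,M]$. Since $T := [\DD^1(1),-]$ carries the left adjoint $\DD^1(1)\times(-)$ by \eqref{TangentFunctorHasLeftAdjoint}, it preserves all small limits, in particular this pullback. Hence $T\CF$ is the fiber product of $T[M,F] \xrightarrow{T(\pi_*)} T[M,M]$ and $T(*) \cong * \to T[M,M]$.

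Next I would identify the corners of the transported square via Lemma \ref{ManifoldMappingSpaceTangentBundle}, which supplies natural isomorphisms $T[M,F] \cong [M,TF]$ and $T[M,M] \cong [M,TM]$; naturality of this isomorphism in the target manifold (applied to $\pi : F\to M$) identifies $T(\pi_*)$ with $(T\pi)_* : [M,TF] \to [M,TM]$. It then remains only to pin down the transported bottom map $* \to [M,TM]$.

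The key step — and the only point I expect to require care — is to verify that this bottom map is the \emph{zero-section embedding} $* \hookrightarrow [M,TM]$, i.e. the one sending, probe-wise, the unique $(\FR^k\times\DD)$-plot to $\FR^k\times\DD\times M \xrightarrow{\pr_M} M \xrightarrow{0_M} TM$. I would check this plot-wise: the original embedding $\id_M\cong *\hookrightarrow [M,M]$ sends the unique plot to $\pr_M : \FR^k\times\DD\times M \to M$, and running the adjunction \eqref{TangentFunctorHasLeftAdjoint} — equivalently, the internal-hom chain used in the proof of Lemma \ref{ManifoldMappingSpaceTangentBundle}, together with the identification of Prop. \ref{TangBundlesCoincide} — turns $\pr_M$ into the plot $0_M \circ \pr_M$, as claimed.

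Finally, with all corners identified, I would compute the pullback object-wise in $\mathrm{Set}$: an $(\FR^k\times\DD)$-plot of $T\CF$ is a map $\CZ : \FR^k\times\DD\times M \to TF$ with $T\pi\circ\CZ = 0_M\circ\pr_M$. Chasing $TF \xrightarrow{T\pi} TM \to M$ then forces $\CZ$ to factor through the vertical subbundle $VF\hookrightarrow TF$ and to cover $\pr_M$, so $\CZ$ is exactly an $(\FR^k\times\DD)$-plot of $\mathbold{\Gamma}_M(VF)$ in the sense of Def. \ref{ThickenedSmoothSetOfSections}; naturality in the probe is inherited from the pullback, yielding the claimed isomorphism of thickened smooth sets. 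The only genuine obstacle is the zero-section identification above; the remainder is either formal limit-preservation or a direct plot computation.
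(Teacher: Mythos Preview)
Your proposal is correct and follows essentially the same route as the paper's proof: pullback presentation of $\CF$, limit-preservation of $T$ via its left adjoint, identification of the corners through Lemma~\ref{ManifoldMappingSpaceTangentBundle}, the plot-wise check that the bottom map becomes the zero-section embedding, and the final object-wise computation yielding the plots of $\mathbold{\Gamma}_M(VF)$. Even the point you flag as the ``only genuine obstacle'' is precisely the step the paper singles out for explicit verification.
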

Of course, composing a plot $\CZ_{\phi^{k,\epsi}}$ on the right with the projection $VF\rightarrow F$ yields $\FR^k\times \DD$-parametrized sections of $F\rightarrow M$,
$$ 
\phi^{k,\epsi} := \pi_{F} \circ \CZ_{\phi^{k,\epsi}} \, .
$$
That is, any plot of tangent vectors $\CZ_{\phi^{k,\epsi}}$ covers a (unique) plot of fields $\phi^{k,\epsi}$, which realizes precisely the expected canonical projection of thickened smooth spaces from \eqref{SynthBundleProj},
$$
T\CF\cong \mathbold{\Gamma}_M(VF)\; \xlongrightarrow{\;\; (\pi_F)_*\;\; } \; \mathbold{\Gamma}_M(F)=\CF  \, ,
$$
and hence justifying the notation for plots of the former. Furthermore, the content of \cite[Lem. 2.17, ftn. 12]{GS23} generalizes to the case of $\FR^k\times \DD$-parametrized plots, namely:
\begin{lemma}[{\bf Representing plots of  tangent vectors via paths of plots}]\label{LinePlotsRepresentTangentVectors}
For any plot of tangent vectors $\CZ_{\phi^{k,\epsi}} \in T\CF(\FR^k\times \DD)$ covering a plot of fields $\phi^{k,\epsi}=\pi_F \circ \CZ_{\phi^{k,\epsi}} \in \CF(\FR^k\times \DD) $
\begin{align}\label{TangentFieldPlotDiagram}
\xymatrix@R=1.3em{ && VF\ar[d]^{\pi_F} \\&&  F \ar[d]
	\\ 
	\FR^k\times \DD \times M\ar[rruu]^{\CZ_{\phi^{k,\epsi}} } \ar[rru]^>>>>>>>>>>{{\phi^{k,\epsi}}} \ar[rr]_>>>>>>>>>{\pr_1}&&  M \, ,  
}   
\end{align}
\noindent there exists a parametrized section $\phi^{k,\epsi}_t
:\FR^k\times \DD \times M\times \FR^1_t\rightarrow F$ over $M$, such that $\phi^{k,\epsi}_0=\phi^{k,\epsi}$ and $\partial_t \phi^{k,\epsi}_t|_{t=0} = \CZ_{\phi^{k,\epsi}}$, with the derivative computed pointwise over $M$. That is, the map 
\begin{align*} \mathbold{\Gamma}_M(F)(\FR^k\times \DD \times \FR^1_t) &\longrightarrow \mathbold{\Gamma}_M(VF)(\FR^k\times \DD)
\\
\phi^{k,\epsi}_t &\longmapsto \partial_t \phi^{k,\epsi}_t|_{t=0} 
\end{align*}
\noindent is surjective, for each probe $\FR^k\times \DD\in \ThickenedCartesianSpaces$.
\end{lemma}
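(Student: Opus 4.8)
The plan is to reduce the statement, via Prop.~\ref{SyntheticTangentBundleOfFieldSpace}, to the assertion that a plot of tangent vectors is nothing but a $(\FR^k\times\DD)$-parametrized vertical vector field along $\phi^{k,\epsi}$, and then to \emph{integrate} this vertical field by postcomposition with a fixed fiberwise exponential map, exactly as in the purely smooth-set footnote \cite[Lem.~2.17, ftn.~12]{GS23}; the only new point over \emph{loc.\ cit.} is to check that the extra infinitesimal factor $\DD$ is harmless, which it is, because every construction involved is a postcomposition with a genuine smooth map of manifolds and hence natural in the probe.

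First I would fix, once and for all, a classical piece of auxiliary data on the field bundle: a \emph{fiberwise exponential map}
$$
\exp^V\;:\;VF\longrightarrow F\,,
$$
a morphism of bundles over $M$ (i.e.\ $\pi\circ\exp^V=\pi\circ\pi_F$, where $\pi_F:VF\to F$ is the vector bundle projection and $\pi:F\to M$ the field bundle) which restricts to the identity along the zero section, $\exp^V(0_y)=y$, and whose fiber derivative along the zero section is the identity of $VF$. Such a map exists by the standard partition-of-unity construction (put a fiber metric on the vector bundle $VF\to F$, take the fiberwise geodesic exponential of the induced fiberwise connection, and rescale the metric so that the fiberwise metrics are complete, so that $\exp^V$ is defined on all of $VF$ rather than only near the zero section); this goes through for $M\in\SmoothManifolds^{\mathrm{cor}}$ since manifolds with corners admit partitions of unity, using our Prop.~\ref{ManifoldsWithCornerstoAlgebras} and Thm.~\ref{ManifoldsWithCornersEmbedIntoCahiers}. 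Given then $\CZ_{\phi^{k,\epsi}}\in T\CF(\FR^k\times\DD)\cong\mathbold{\Gamma}_M(VF)(\FR^k\times\DD)$, regarded via Prop.~\ref{SyntheticTangentBundleOfFieldSpace} as a morphism $\CZ_{\phi^{k,\epsi}}:\FR^k\times\DD\times M\to VF$ over $M$ with $\pi_F\circ\CZ_{\phi^{k,\epsi}}=\phi^{k,\epsi}$, I would set
$$
\phi^{k,\epsi}_t\;:=\;\exp^V\circ\big(t\cdot\CZ_{\phi^{k,\epsi}}\big)\;:\;\FR^k\times\DD\times\FR^1_t\times M\longrightarrow F\,,
$$
where $t\cdot\CZ_{\phi^{k,\epsi}}$ is the composite of $(\pr_{\FR^1_t},\,\CZ_{\phi^{k,\epsi}}\circ\pr)$ with the scalar multiplication $\FR\times VF\to VF$ of the vector bundle $VF\to F$. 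This is a legitimate morphism of thickened smooth sets, being the postcomposition of the plot $\CZ_{\phi^{k,\epsi}}$ with honest smooth maps of manifolds; in particular the nilpotents in $\CO(\DD)$ cause no trouble, and the assignment is natural in $\FR^k\times\DD\in\ThickenedCartesianSpaces$.

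Next I would verify the three required properties. That $\phi^{k,\epsi}_t$ is a section of $F\to M$ follows from $\pi\circ\exp^V=\pi\circ\pi_F$ together with $\pi_F\circ\CZ_{\phi^{k,\epsi}}=\phi^{k,\epsi}$ and $\pi\circ\phi^{k,\epsi}=\pr_M$. The identity $\phi^{k,\epsi}_0=\phi^{k,\epsi}$ holds since at $t=0$ the map $t\cdot\CZ_{\phi^{k,\epsi}}$ is the zero section along $\phi^{k,\epsi}$, on which $\exp^V$ is the identity. For $\partial_t\phi^{k,\epsi}_t|_{t=0}=\CZ_{\phi^{k,\epsi}}$ I would unwind the derivative map: precomposing $\phi^{k,\epsi}_t$ (viewed as a morphism $\FR^k\times\DD\times\FR^1_t\to\mathbold{\Gamma}_M(F)$) with the Taylor inclusion $\DD^1(1)\hookrightarrow\FR^1_t$ at the origin yields, by the internal hom property, an element of $[\DD^1(1),\mathbold{\Gamma}_M(F)](\FR^k\times\DD)=T\CF(\FR^k\times\DD)$, and this is exactly the pointwise-over-$M$ $t$-derivative at $0$ appearing in the statement. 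Computing this restriction is just the first-order Taylor expansion in $t$: since $d(\exp^V)$ along the zero section is the identity, $t\mapsto\exp^V(tv)$ has first-order part $y+\epsi\cdot v$ (using $\epsi^2=0$), so the restriction is $\phi^{k,\epsi}(-)+\epsi\cdot\CZ_{\phi^{k,\epsi}}(-)$, which is precisely the derivation/algebra-map description of $\CZ_{\phi^{k,\epsi}}$ under Prop.~\ref{SyntheticTangentBundleOfFieldSpace} (the computation mirrors that of Lem.~\ref{TangentBundleSet}). Since $\CZ_{\phi^{k,\epsi}}$ was an arbitrary plot of $T\CF$ over $\FR^k\times\DD$, this exhibits $\phi^{k,\epsi}_t\mapsto\partial_t\phi^{k,\epsi}_t|_{t=0}$ as surjective for every probe.

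The only non-formal ingredient, inherited unchanged from \cite{GS23}, is the existence of a \emph{globally} defined fiberwise exponential $\exp^V$ over $M$ (defined on all of $VF$, so that $\phi^{k,\epsi}_t$ is a section for every real $t$, and available for corner bases) -- I expect this completeness/partition-of-unity point to be the only step requiring genuine (though standard) work. Everything else is bookkeeping: all maps in play are postcompositions with fixed smooth maps of manifolds and precompositions with inclusions of representables, so compatibility with the infinitesimal directions $\DD$, and naturality in $\FR^k\times\DD$, is automatic.
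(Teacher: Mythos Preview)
Your proof is correct but takes a different route from the paper's. The paper does not redo the exponential-map construction in the thickened setting; instead it \emph{reduces} the thickened case to the already-established non-thickened case \cite[Lem.~2.17, ftn.~12]{GS23} via the Weil bundle machinery. Concretely, it uses that $[\DD,F]\cong T_\DD F$ is again a finite-dimensional fiber bundle over $M$ (Prop.~\ref{SyntheticWeilBundle}), and that $T_\DD(VF)\cong V(T_\DD F)$, so that applying the internal-hom adjunction in the $\DD$-factor turns the diagram \eqref{TangentFieldPlotDiagram} into a \emph{non-thickened} $\FR^k$-plot of tangent vectors to $\mathbold{\Gamma}_M(T_\DD F)$; the original result is then invoked for this new bundle, and the adjunction is reversed.

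Your approach is more elementary: you observe that the exponential-map construction from \cite{GS23} is literally a postcomposition with a fixed smooth map $\exp^V:VF\to F$ of manifolds, hence automatically natural in the probe and indifferent to the nilpotent directions. This avoids the Weil bundle detour entirely and makes transparent why the thickening is harmless. The paper's route, by contrast, is more modular (it treats the non-thickened lemma as a black box) and showcases the adjunction $(\DD\times-)\dashv[\DD,-]$ together with the manifold-representability of $[\DD,F]$, which is thematically on-message for the paper. Both arguments need the same non-formal input you flag: a globally defined fiberwise exponential (or equivalently, a complete fiber metric/spray on $VF\to F$), which is what underlies the cited footnote in \cite{GS23} in the first place.
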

\begin{proof}
The case of (non-thickened) $\FR^k$-plots of tangent vectors to field space is precisely the content of \cite[Lem. 2.17, ftn. 12]{GS23}. The case of an arbitrary thickened $\FR^k\times \DD$-plots follows by reducing it to the former case.

In more detail, recall from Rem. \ref{infinitesimalMappingExamples} and Eq. \eqref{WeilBundleMappingSpace} (see Prop. \ref{SyntheticWeilBundle}) that the mapping space out of any infinitesimal point into the total space of the field bundle $F\rightarrow M$ is, in fact, canonically isomorphic to another fiber bundle \textit{of manifolds} over $M$
$$
[\DD, F] \, \cong \, T_\DD F \longrightarrow F \rightarrow M \, .
$$
Similarly, applying the same mapping space construction to the (synthetic) tangent bundle $TF\cong [\DD^1(1),F]$ over $F$ (Prop. \ref{TangBundlesCoincide}), and using the internal hom property we have
\begin{align*}
T_\DD(TF) \, &\cong \, 
\big[\DD, [\DD^1(1), F] \big] \cong [\DD\times \DD^1(1), F]
\\
& \cong \, \big[\DD^1(1), [\DD, F]\big]\, \cong\, T (T_\DD F) \, ,
\end{align*}
and since these are functorial with respect to the base projections maps onto $M$ (and $T_\DD M$), the corresponding vertical subbundle $VF\hookrightarrow TF$ must be preserved under the functor $T_\DD: \ThickenedSmoothManifolds \rightarrow \ThickenedSmoothManifolds$ (and hence as thickened smooth sets)
$$
T_\DD (VF) \, \cong \, V(T_\DD F) \xhookrightarrow{\quad} T(T_\DD F).
$$

Applying the internal hom property with respect to the $\DD$-factor of the $\FR^k\times \DD$-probe to of the diagram \eqref{TangentFieldPlotDiagram}, this uniquely corresponds to an analogous diagram of \textit{non-thickened} $\FR^k$-plots
\begin{align*}
\xymatrix@R=1.3em{ && V(T_\DD F)\ar[d]^{\pi_{T_\DD F}} \\&& T_\DD F \ar[d]
	\\ 
	\FR^k \times M\ar[rruu]^{\CZ_{\phi^{k,\DD}} } \ar[rru]^>>>>>>>{{\phi^{k,\DD}}} \ar[rr]_>>>>>>>>>{\iota^\DD_{\pr_1}}&&  T_\DD M \, ,  
}   
\end{align*}
where the bottom adjuct map necessarily maps $\FR^k\times M$ to an embedded submanifold of $T_\DD M$, canonically diffeomorphic to the base $M\cong \iota^\DD_{\pr_1}(\FR^k\times M)\hookrightarrow T_\DD M$.\footnote{This can be seen analogously to the case of the tangent bundle $T M\equiv T_{\DD^1(1)} M$ and the canonical embedding given by the zero  section $s_0:M\hookrightarrow TM$.} 
Applying now the original result \cite[Lem. 2.17, ftn. 12]{GS23} yields a map $\phi^{k,\DD}_{t
}:\FR^k\times \DD \times M\times \FR^1_t\rightarrow T_\DD F$ such that $\phi^{k,\DD}_{0}=\phi^{k,\DD}$ and $\partial_t \phi_{t}^{k,\DD} |_{t=0} = \CZ_{\phi^{k,\DD}}$. Finally, applying the 
internal hom adjunction again, in reverse with respect to $T_\DD = [\DD,-]$, yields the desired result. 
\end{proof}

In other words, since $[\FR^1_t,\CF](\FR^k\times \DD) := \mathrm{Hom}_{\ThickenedSmoothSets}(\FR^1_t \times \FR^k\times \DD , \CF) \cong \CF(\FR^k\times \DD\times \FR^1_t)$, this exhibits the following sequence of projections (plot-wise surjections, hence epimorphisms)
\begin{align}\label{PathsToTangentToFields}
\mathbold{P}(\CF):= \big[\FR^1_t,\, \CF\big] & \;\xrightarrow{\quad \quad}\qquad T\CF \quad \;\;\;\longrightarrow \;\; \CF\\
\phi^{k,\epsi}_t &\;\xmapsto{\quad \quad} \; \partial_{t} \phi^{k,\epsi}_t\big|_{t=0} \; \longmapsto \; \phi^{k,\epsi}_{t=0}\nn
\end{align}
where $\phi^{k,\epsi}_t \in \mathbold{P}(\CF)(\FR^k\times \DD)$ is an $\FR^k\times \DD$-plot in the (thickened) path space of fields. Due to the first plot-wise surjection, one may represent arbitrary plots $\CZ_{\phi^{k,\epsi}}$ of the tangent bundle $T\CF$ by the derivative of a (non-unique) plot in paths of fields
$$
\CZ_{\phi^{k,\epsi}}\, = \, \partial_t \phi^{k,\epsi}_t \big|_{t=0} \, ,
$$
just as with tangent vectors of finite-dimensional manifolds. Importantly, let us also note that although by the explicit plot-wise (and point-wise over M!) construction of the projection map $\partial_t |_{t=0} :
\mathbold{P}(\CF)\rightarrow T\CF 
$ seems rather ad hoc from the categorical point of view of (purely) smooth sets (cf. \cite[Ex. 2.16]{GS23}), here it is easily seen that to coincide with the (internal) restriction of (plots of) paths along $\iota_0: \DD^1(1)\hookrightarrow \FR^1_t$, i.e., 
\begin{align}\label{DerivativeOfPathsIsRestriction}
\partial_{t}|_{t=0} \, \equiv \, \iota_0^* \;\; : \;\; \mathbold{P}(\CF) \,:=\, [\FR^1_t, \, \CF] \xrightarrow{\quad \quad} T\CF \, := \,  \big[ \DD^1(1),\, \CF \big] \, ,
\end{align}
just as with the case of finite-dimensional manifolds.

Indeed, to see this notice  that restricting any path of plots $\phi^{k,\epsi}_t \in \mathbold{P}\CF(\FR^k\times \DD) \cong \CF(\FR^k\times \DD \times \FR^1_t)$ 
\[ 
\xymatrix@C=2.6pc @R=1.3pc { &&  F \ar[d]^{\pi}
	\\ 
	\FR^k\times \DD\times \FR^1_t \times M \ar[rru]^>>>>>>>>>>>>{\phi^{k,\epsi}_t} \ar[rr]^-{\pr_M} && M
}   
\]
along $\FR^k\times \DD \times \DD^1(1)\xhookrightarrow{\;\id_{\FR^k\times \DD}\times \iota_0\;} \FR^k\times \DD \times \FR^1_t$
yields an $\FR^k\times \DD\times \DD^1(1)$ parametrized section of $F$ over M. Equivalently by the internal hom property \eqref{InternalHomProperty} with respect to the $\DD^1(1)$, this is a section of $TF\cong [\DD^1(1), F]\rightarrow TM\cong [\DD^1(1), M]$ (Prop. \ref{TangBundlesCoincide}) which moreover, by induced the section condition, necessarily factors through $VF$ 
\[ 
\xymatrix@C=2.6pc @R=1.5pc { &&  VF \ar[d]^{\pi}
	\\ 
	\FR^k\times \DD \times M \ar[rru]^-{\CZ_{\phi^{k,\epsi}}} \ar[rr]^-{\pr_M} && M
    \mathrlap{\,.}
}   
\]
Computing the form of the dual map, as e.g., in the proof of Lem. \ref{ManifoldMappingSpaceTangentBundle}, then shows that it coincides precisely with the intuitive derivative of $\phi^{k,\epsi}_t$ at $t=0$.

\begin{remark}[\bf General mapping space tangent bundles]\label{GeneralMappingSpaceTangentBundles}
The diagrammatic categorical arguments in the proofs of Lem. \ref{ManifoldMappingSpaceTangentBundle} and Prop. \ref{SyntheticTangentBundleOfFieldSpace} apply for arbitrary objects in $\ThickenedSmoothSets$. That is, it follows identically that for any two thickened smooth spaces $\CG,\CH$, there is a canonical isomorphism
$$
T[\CG,\CH]\; \cong \; [\CG, T\CH]\, .
$$ Similarly, for any `bundle' map of thickened smooth spaces $\pi: \CK\rightarrow \CG$, the synthetic tangent bundle to the space of sections $\mathbold{\Gamma}_\CG(\CK)$ is canonically identified as the pullback $T \mathbold{\Gamma}_{\CG}(\CK)\cong \mathbold{\Gamma}_\CG(V\CK)$, with the latter being defined as the pullback
\begin{align*} 
\xymatrix@=1.6em 
{\mathbold{\Gamma}_\CG(V\CK) \ar[d] \ar[rr] &&  
[\CG,\, T\CK] \ar[d]^{(T\pi)_*} 
	\\ 
	0_\CG\cong * \ar@{^{(}->}[rr]  && [\CG,\, T\CG]
	\, . } 
\end{align*}
The catch here is that in this generality, these constructions are quite formal in nature. $T\CH$ is defined simply as $\big[ \DD^1(1), \, \CH \big]$ which might not have a more `concrete' description, and might even lack a fiber-wise linear structure. 
Similarly $T \mathbold{\Gamma}_\CG (\CK)$ might lack a fiber-wise linear structure, and hence the bottom map may not necessarily be an actual \textit{zero}-section, but instead is defined simply as the (unique) adjunct $0_M: \CG \rightarrow T\CG$ of the projection $p_\CG: \DD^1(1)\times \CG \rightarrow \CG$.
\end{remark}

Having identified the appropriate notion of a tangent bundle on field spaces, we may now import the correct notion of a cotangent bundle from \cite[Def. 5.35]{GS23}, now defined directly within thickened smooth sets. 
\begin{definition}[\bf Variational cotangent bundle]\label{VariationalCotangentBundle}
The \textit{variational cotangent bundle} $\pi_\CF: T^*_{\mathrm{var}} \CF\rightarrow \CF$ of a thickened field space 
$\CF\in \ThickenedSmoothSets$ is defined as the (smooth, thickened) vector bundle 
 
\[
\xymatrix@=1em  { T^*_{\mathrm{var}}\CF\,:= \, \mathbold{\Gamma}_M(\wedge^d T^*M \otimes V^*F)  
\ar[d] & 
	\\ 
\CF\,:= \, \mathbold{\Gamma}_M(F)
	\, , } 
\]
 
\noindent where the projection $\pi_\CF$ is given by postcomposition of (plots of) sections of $\wedge^d T^*M \otimes V^*F$
with the (manifold) vector bundle projection $\pi_F: \wedge^d T^*M \otimes V^*F\rightarrow F$.  
\end{definition}
The crucial property of the tangent and variational cotangent bundle is the existence of a pairing on fibers over (plots of) field configurations
\begin{align}\label{VariationalCotangentTangentPairing}
\langle - \, , \, - \rangle \quad : \qquad   T\CF  \times & T^*_\mathrm{var}\CF \;\; \xrightarrow{\quad \quad} \Omega^{d}_\mathrm{vert}(M) \,:=\, \mathbold{\Gamma}_M(\wedge^d T^*M)
\\
\big(\CZ_{\phi^{k,\epsi}} \, &, \, \mathcal{B}_{\phi^{k,\epsi}}\big)
\; \longmapsto  \quad 
\big\langle \CZ_{\phi^{k,\epsi}} \, , \, \mathcal{B}_{\phi^{k,\epsi}} \big\rangle \, , \nn
\end{align}
induced by composing pairs of plots tangent $\CZ_{\phi^{k,\epsi}}: \FR^k\times \DD \times M \rightarrow VF $ and cotangent vectors $\mathcal{B}_{\phi^{k,\epsi}}$ over a common plot of fields $\phi^{k,\epsi} \in \CF$, point-wise over $M$, with the canonical fiber-wise duality pairing 
\newpage 
$$
V F\times \big(V^*F \otimes \wedge^d T^*M\big) \xrightarrow{\quad \quad} \wedge^d T^*M\, .
$$
When further composed with integration\footnote{When well-defined, i.e. for $M$ compact, or more generally for plots of fields with compact support along $M$.} \eqref{ThickenedIntegration}, this yields a real-valued non-degenerate\footnote{In the sense of a pairing of $C^\infty(\FR^k)\otimes \CO(\DD)$-modules, plot-wise for each $\FR^k\times \DD\in \ThickenedCartesianSpaces$. See the proof of Thm. \ref{FunctorialityOfTheThickenedCriticalSet} for more details.} pairing of tangent vectors and (variational) cotangent vectors on field space within $\ThickenedSmoothSets$, in direct analogy to the case of the tangent and cotangent bundle of a finite-dimensional smooth manifold. As naturally expected, this pairing appears precisely in the dynamical \textit{variational} principle -- which, as we shall see (Thm. \ref{FunctorialityOfTheThickenedCriticalSet}), still holds in $\ThickenedSmoothSets$ and hence serves as the core motivation for the above definition.

\medskip 
We include a further intuitively desired property of tangent vectors within thickened smooth sets. In \cite[Rem. 2.18]{GS23}, we pointed out that tangent vectors to field spaces do not \textit{necessarily} define $\FR$-valued derivations of the full algebra of smooth functions $C^\infty(\CF)$, if defined in the category of smooth sets as $\mathrm{Hom}_{\SmoothSets}\big(\CF,\, \FR\big)$. This peculiarity is automatically (and generally) cured in the category thickened smooth sets.
\begin{lemma}[\bf Path derivations depend only on tangent vectors]\label{DerivativesAlongLinesDependOnTangentVectors}
Let $\CF$ be a thickened smooth set and $\phi_t\in \CF(\FR)\cong \mathrm{Hom}_{\ThickenedSmoothSets}(\FR,\, \CF)$ a line plot in $\CF$. Denote the smooth real-valued functions on $\CF$ by $C^\infty(\CF) :=\mathrm{Hom}_{\ThickenedSmoothSets}(\CF,\, \FR)$. Then the induced derivation
\begin{align*}
C^\infty(\CF)&\xrightarrow{\quad \quad} \FR \\
f&\xmapsto{\quad \quad} \frac{\partial}{\partial t} (f\circ \phi_t) \big|_{t=0}
\end{align*}
depends only on the corresponding synthetic tangent vector $\phi_\epsi := \phi_t \circ \iota_0 : \DD^1(1)\hookrightarrow \FR^1 \rightarrow \CF$.
\begin{proof}
Suppressing the mentioning of the Yoneda embedding symbol as before, we have the commutative diagram
	\[ 
\xymatrix@R=1.3em@C=3em{ &&  \CF  \ar[d]^{f}
	\\ 
	\FR \ar[rru]^{\phi_t} \ar[rr]^>>>>>>>{f\circ {\phi}_t} && \FR \, ,
}   
\]
where by precomposing with the canonical inclusion $\iota_0: \DD^1(1) \hookrightarrow \FR$ we get 
	\[ 
\xymatrix@R=1.3em@C=3em{ & &&  \CF  \ar[d]^{f}
	\\ 
\DD^1(1)\ar[r]^-{\iota_0}&	\FR \ar[rru]^{\phi_t} \ar[rr]^>>>>>>>>{f\,\circ \, {\phi}_t} && \FR \, .
}   
\]
The bottom composition $(f\circ \phi_t) \circ \iota_0:\DD^1(1)\rightarrow \FR$ expands dually as 
\begin{align*}
 C^\infty(\FR) & \; \longrightarrow \;  \CO\big(\DD^1(1)\big) \\
 f& \; \longmapsto \; f(\phi_0) + \epsi \cdot \frac{\partial}{\partial t}(f\circ \phi_t) \big|_{t=0} \; ,
\end{align*}
while the upper composition $f\circ \phi_\epsi = f \circ (\phi_t\circ \iota_0)$ manifestly depends only the corresponding tangent vector, and the result follows. 
\end{proof}
\end{lemma}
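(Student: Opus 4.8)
The plan is to exploit associativity of composition together with the defining property of the canonical inclusion $\iota_0 : \DD^1(1) \hookrightarrow \FR^1$, namely that it is \emph{precisely} ``differentiation at the origin'' realized internally to $\ThickenedSmoothSets$ (its dual algebra map being the first-order Taylor expansion $h \mapsto h(0) + \epsi\cdot h'(0)$). First I would record the chain of maps in $\ThickenedSmoothSets$
\[
\DD^1(1) \xrightarrow{\ \iota_0\ } \FR^1 \xrightarrow{\ \phi_t\ } \CF \xrightarrow{\ f\ } \FR,
\]
and observe that the composite $f\circ\phi_t : \FR^1 \to \FR = y(\FR^1)$ is, by the Yoneda lemma, nothing but an element of $C^\infty(\FR^1)$ --- an honest smooth function of $t$ --- whose value at $t=0$ is $f(\phi_0)$ and whose $t$-derivative at $0$ is the real number appearing in the statement. (Smoothness of $f\circ\phi_t$ in $t$ is automatic here, since $f$ lands in the representable $y(\FR)$, so no analytic technicality arises.)

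Next I would compute the $\DD^1(1)$-plot $f\circ\phi_t\circ\iota_0$ of $y(\FR)$ in two ways. On one hand, writing $g := f\circ\phi_t \in C^\infty(\FR^1)$, the plot $g\circ\iota_0 : \DD^1(1)\to y(\FR^1)$ is the algebra homomorphism $C^\infty(\FR^1)\to \CO(\DD^1(1)) = \FR[\epsi]/\epsi^2$ equal to $\iota_0^* \circ g^*$; evaluating it on the coordinate function $t \in C^\infty(\FR^1)$ gives
\[
t \;\longmapsto\; g(0) + \epsi\cdot g'(0) \;=\; f(\phi_0) + \epsi\cdot \tfrac{\partial}{\partial t}(f\circ\phi_t)\big|_{t=0},
\]
so that the derivation in question is recovered as the coefficient of $\epsi$ (equivalently, by Lem.~\ref{TangentBundleSet}, this plot is exactly the tangent vector $f_*\phi_\epsi$ of $\FR$ at $f(\phi_0)$, whose derivation part is $\partial_t(f\circ\phi_t)|_{t=0}$). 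On the other hand, by associativity $f\circ\phi_t\circ\iota_0 = f\circ(\phi_t\circ\iota_0) = f\circ\phi_\epsi$, which manifestly depends only on $f$ and on the synthetic tangent vector $\phi_\epsi = \phi_t\circ\iota_0 : \DD^1(1)\to\CF$. Combining the two descriptions shows that $(f,\phi_\epsi)\mapsto \partial_t(f\circ\phi_t)|_{t=0}$ factors through $(f,\phi_\epsi)\mapsto f\circ\phi_\epsi$, which is the assertion.

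I do not expect a genuine obstacle: the entire content is the availability of $\DD^1(1)$ as an honest probe space in the thickened site and the identification of $\iota_0$ with first-order differentiation --- exactly what thickening the site bought us, since in plain $\SmoothSets$ one simply cannot precompose with $\iota_0$. The only point deserving care is to phrase ``depends only on'' as the factorization above, rather than as a bare equality of numbers, and to make sure the coordinate-function computation of the $\epsi$-coefficient is written out so that the reader sees the derivation emerge. This is the argument sketched in the paper, and I would present it essentially verbatim.
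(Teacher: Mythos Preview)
Your proposal is correct and follows essentially the same approach as the paper's proof: both compute the composite $\DD^1(1)\xrightarrow{\iota_0}\FR\xrightarrow{\phi_t}\CF\xrightarrow{f}\FR$ by associating it two ways, using that the dual of $\iota_0$ is first-order Taylor expansion to extract the derivative as the $\epsi$-coefficient, while the reassociated form $f\circ\phi_\epsi$ manifestly depends only on the tangent vector. Your version is slightly more explicit about why $f\circ\phi_t$ is an honest smooth function (via Yoneda into the representable $y(\FR)$), but otherwise the arguments are identical.
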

This means that in the particular case of $\CF=\mathbold{\Gamma}_{M}(F)$ being a thickened smooth field space, any tangent vector 
$$
\CZ_\phi=\partial_t \phi_t |_{t=0} \quad \in  \quad T\CF(*)\cong \Gamma_M(VF)\, 
$$ 
defines a unique real-valued derivation, for any choice of representative line plot $\phi_t$ (which exists by \cite[Lem. 2.17]{GS23}). This is in stark contrast with \cite[Rem. 2.18]{GS23} where the field space is viewed simply as a smooth set, ignoring its infinitesimal structure. Indeed, maps between \textit{thickened} smooth sets `\textit{respect the infinitesimal structure}' and not only the smooth structure, thus guaranteeing the intuitive sought-after property that derivatives along smooth curves should depend only on the first order class of the curve. It follows similarly that vector fields on $\CZ\in \CX(\CF)$ on the field space $\CF$, defined in the thickened smooth setting as sections (cf.  \cite[Def. 2.20]{GS23}) 
\[ 
\xymatrix@R=1.2em@C=3em{ &&  T\CF  \ar[d]^{\pi}
	\\ 
	\CF \ar[rru]^{\CZ} \ar[rr]^>>>>>>>{\id_\CF} && \CF \, ,
}   
\] define smooth derivations 
$$
\CZ: C^\infty(\CF) \xrightarrow{\quad \quad} C^\infty(\CF) \, ,
$$ in contrast to the purely smooth setting of \cite[(33)]{GS23}. 

\medskip
\noindent {\bf Infinite jet bundles.}
 We close this section by importing the infinite jet bundle in the thickened smooth setting, and hence recovering its tangent bundle, too, via the synthetic tangent bundle construction. In \cite[\S 3.1]{GS23}, we have considered the infinite jet bundle $J^\infty_M F$ as a limit in Fr\`{e}chet manifolds and its smooth set incarnation $\mathrm{Hom}_{\FrechetManifolds}(-,J^\infty_M F)\in \mathrm{SmSet}$, which is equivalently  the limit of the smooth set incarnations of the finite order jet bundles $J^n_M F\in \SmoothManifolds$, i.e., 
$$
\lim_{\SmoothSets} y(J^n_M F)\;\; \in \;\; \SmoothSets . 
$$
Following the above discussion, it is straightforward to promote the infinite jet bundle to a thickened smooth set, by considering instead each finite order jet bundle as a thickened smooth set (Ex. \ref{ManifoldasFormalsSmoothset}), and computing the limit in thickened smooth sets.
\begin{definition}[\bf Thickened infinite jet bundle]\label{InfiniteJetBundleFormalSmoothLimit} The thickened infinite jet bundle $J^\infty_M F$ of a fiber bundle $F\rightarrow M$ of manifolds is given by 
$$
J^\infty_M F\, := \,  \lim_{\ThickenedSmoothSets} y(J^n_M F) \quad \in \;\;  \ThickenedSmoothSets \, .
$$
\end{definition}
By construction, this limit comes supplied with the (universal) projection maps 
\begin{align}\label{JetBundleProjections}
\pi^\infty_n \;:\; J^\infty_M F \xrightarrow{\quad \quad} J^n_M F
\end{align}
of thickened smooth sets. It is straightforward to see that the synthetic jet bundle 
$$
T\big(J^\infty_MF\big) := \big[\DD^1(1), \, J^\infty_M F \big]
$$
recovers that of \cite[Def. 4.1]{GS23}, however, computed in thickened smooth sets and hence also providing it with an infinitesimal structure.

\begin{lemma}[\bf Infinite jet synthetic tangent bundle]\label{SyntheticInfiniteJetTangentBundle}
The synthetic tangent bundle $T J^\infty_M F := \big[\DD^1(1),\,  J^\infty_M F\big]$ of an infinite jet bundle is canonically isomorphic to the limit of tangent bundles of finite jet bundles
$$
TJ^\infty_M F \; \cong \;\;  \lim_{\ThickenedSmoothSets}y(T J^n_M F) \, . 
$$
\begin{proof}Once again, this follows since the tangent functor $T:= \big[\DD^1(1),\, - \big]$ has a left adjoint and hence preserves (small) limits. More explicitly, we can see the result by standard properties of limits and (internal) hom functors. Suppressing the Yoneda embedding, the $(\FR^k\times \DD)$-plots of the two thickened smooth sets are canonically (and functorially) identified as 
\begin{align*}
\big[\DD^1(1), J^\infty_M F \big](\FR^k\times \DD)\; &= \; \mathrm{Hom}_{\ThickenedSmoothSets}\big(\FR^k\times \DD\times \DD^1(1) , J^\infty_M F\big) \;=\; \mathrm{Hom}_{\ThickenedSmoothSets}\Big(\FR^k\times \DD\times \DD^1(1) , \, \lim_{\ThickenedSmoothSets} J^n_M F\Big)
\\
&\;\cong\; \lim_{Set} \mathrm{Hom}_{\ThickenedSmoothSets}\big(\FR^k\times \DD\times \DD^1(1) , J^n_M F\big)
\;\cong \; \lim_{Set} \mathrm{Hom}_{\ThickenedSmoothSets}\Big(\FR^k\times \DD, \, \big[\DD^1(1), J^n_M F\big] \Big)
\\
&\;\cong\; \lim_{Set}  \mathrm{Hom}_{\ThickenedSmoothSets}\big(\FR^k\times \DD, \, T J^n_M F \big)
\;\cong\; \mathrm{Hom}_{\ThickenedSmoothSets}\Big(\FR^k\times \DD, \, \lim_{\ThickenedSmoothSets}  T J^n_M F \Big) \\
&\;\cong\; \lim_{\ThickenedSmoothSets}  T J^n_M F ( \FR^k\times \DD) 
\end{align*}
where we used the fact that limits commute with the hom functor, the internal hom property and Prop. \ref{TangBundlesCoincide}. 
\end{proof}
\end{lemma}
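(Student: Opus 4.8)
The plan is to run the same abstract argument used in the proof of Proposition~\ref{SyntheticTangentBundleOfFieldSpace}. The crucial input is that the synthetic tangent functor $T = \big[\DD^1(1),\,-\big]$ on $\ThickenedSmoothSets$ possesses a left adjoint, namely $\DD^1(1)\times(-)$, by Definition~\ref{SyntheticTangentBundle} together with the internal hom property~\eqref{InternalHomProperty} (this is exactly~\eqref{TangentFunctorHasLeftAdjoint}). Consequently $T$ preserves all small limits. Since by Definition~\ref{InfiniteJetBundleFormalSmoothLimit} the thickened infinite jet bundle is presented as the limit $J^\infty_M F = \lim_{\ThickenedSmoothSets} y(J^n_M F)$ over the tower of finite-order jet bundles, applying $T$ yields at once a canonical isomorphism $T J^\infty_M F \cong \lim_{\ThickenedSmoothSets} T\big(y(J^n_M F)\big)$, compatible with the universal projections~\eqref{JetBundleProjections}.

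The second, and really the only substantive, step is to recognize the levelwise terms. Each finite-order jet bundle $J^n_M F$ is a finite-dimensional smooth manifold (indeed $J^n_M F \to M$ is again a smooth fiber bundle of manifolds), so Proposition~\ref{TangBundlesCoincide} applies and gives $T\big(y(J^n_M F)\big) = \big[\DD^1(1),\, J^n_M F\big] \cong y\big(T J^n_M F\big)$ in $\ThickenedSmoothSets$. It then remains to note that this family of isomorphisms is natural in $n$: under the Yoneda embedding the transition map $J^{n+1}_M F \to J^n_M F$ induces $T\big(y(J^{n+1}_M F)\big) \to T\big(y(J^n_M F)\big)$, which under the identification of Proposition~\ref{TangBundlesCoincide} is just $y$ of the ordinary pushforward $T J^{n+1}_M F \to T J^n_M F$, by~\eqref{SyntheticPushforwardRecoversTraditional}. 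Hence the towers $\{T(y(J^n_M F))\}_n$ and $\{y(T J^n_M F)\}_n$ are isomorphic as diagrams in $\ThickenedSmoothSets$, and passing to limits gives the claimed $T J^\infty_M F \cong \lim_{\ThickenedSmoothSets} y(T J^n_M F)$.

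If one prefers a more hands-on verification rather than invoking preservation of limits abstractly, one can unwind everything probe by probe: for $\FR^k\times\DD\in\ThickenedCartesianSpaces$ write $\big[\DD^1(1),\, J^\infty_M F\big](\FR^k\times\DD) = \Hom_{\ThickenedSmoothSets}\big(\FR^k\times\DD\times\DD^1(1),\, \lim_n J^n_M F\big)$, then use successively that $\Hom(X,-)$ sends limits to limits, the internal hom property~\eqref{InternalHomProperty} to bring $\DD^1(1)$ inside, Proposition~\ref{TangBundlesCoincide} to replace $\big[\DD^1(1),\, J^n_M F\big]$ by $y(T J^n_M F)$, and once more that $\Hom(X,-)$ commutes with limits, landing on $\Hom_{\ThickenedSmoothSets}\big(\FR^k\times\DD,\, \lim_n T J^n_M F\big)$. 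Checking that this chain of bijections is natural in $\FR^k\times\DD$ upgrades it to the desired isomorphism of thickened smooth sets. The main obstacle is thus not the categorical bookkeeping, which is routine, but merely confirming the levelwise input -- that each $J^n_M F$ is a manifold of the sort covered by Proposition~\ref{TangBundlesCoincide} and that the resulting isomorphisms respect the tower structure -- so that the limit is well-defined and independent of choices.
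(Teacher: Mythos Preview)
Your proof is correct and follows essentially the same approach as the paper: the key input is that $T=[\DD^1(1),-]$ has a left adjoint and hence preserves the limit defining $J^\infty_M F$, together with the levelwise identification $T\big(y(J^n_M F)\big)\cong y(TJ^n_M F)$ from Proposition~\ref{TangBundlesCoincide}. Your explicit probe-wise unwinding in the final paragraph is exactly the chain of bijections the paper writes out, and your additional remark on naturality in $n$ (via~\eqref{SyntheticPushforwardRecoversTraditional}) is a welcome clarification that the paper leaves implicit.
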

It follows similarly that the corresponding limit of the (pushforward) projection maps 
\begin{align}\label{TangentJetBundleProjections}
 \dd \pi^\infty_n \;:\; TJ^\infty_M F \xrightarrow{\quad \quad} TJ^n_M F 
 \end{align}
coincide with the synthetic pushforwards \eqref{SyntheticPushforward}, i.e., acting via (internal) composition plot-wise $\DD^1(1)\times (\FR^k\times \DD)  \rightarrow J^\infty_M F \xrightarrow{\pi^\infty_n} J^n_M F$.
We note that from this thickened point of view, the result of \cite[Lem. 4.2]{GS23}, which identifies tangent vectors on $J^\infty_M F$  with first order classes of lines in $J^\infty_M F$, follows immediately as a corollary of Lem. \ref{DerivativesAlongLinesDependOnTangentVectors}. We will see in  \cref{JetBundlesSyntheticallySec} that there is a more natural way, categorically and intuitively speaking, to define the infinite jet bundle of \textit{any} (thickened) smooth map $\pi:\CG\rightarrow \CF$ -- which does indeed recover the above definition in the case of a finite-dimensional fiber bundle.

\medskip
\noindent {\bf Representing plots of $J^\infty_M F$ and its tangent bundle.}
Note that the limit characterization of both $J^\infty_M F$ and $T(J^\infty_M F)$ means that their plots may be formally represented and manipulated precisely as in the purely smooth setting, essentially by carrying along the dependence on nilpotent elements corresponding to the infinitesimal thickening of the Cartesian probes. More explicitly, by definition, $\FR^k\times \DD$-plots of $J^\infty_M F$ are compatible families of plots of $\{ J^n_M F\}_{0\leq |I|} $
$$
J^\infty_M F(\FR^k\times \DD)\; \cong\; \Big\{ \big\{s^{k,n}_{\epsi}:\FR^k\times \DD \rightarrow J^n_M F\;\;  \big|  \;\;   \pi^{n}_{n-1}\circ  s^{k,n}_{\epsi}
= s^{k,n-1}_{\epsi}\big\}_{n\in \NN} \Big\}\, .
$$
That is, any plot $s^k_\epsi: \FR^k\times \DD \rightarrow J^\infty_M F$ is dually completely determined by its (compatible) action on (a cover of) the local coordinate functions $\{x^\mu,\, u^a_I\}_{0\leq |I|\leq n}$ on each $J^n_M F$, hence by an infinite list of compatible assignments
\begin{align}\label{InfinityJetPlotOnCoordinates}
x^\mu &\xmapsto{\quad \quad} s^\mu(c^k,\epsi)  \\ 
u^a_I &\xmapsto{\quad \quad} s^a_I(c^k,\epsi) \quad \in \quad C^\infty(\FR^k)\otimes \CO(\DD) \, . \nn 
\end{align}
In this local representation, the limit projections $\pi^\infty_n: J^\infty_M F\rightarrow J^n_M F$ from \eqref{JetBundleProjections} act plot-wise by `forgetting' the $|I|> n$ members of the family, i.e., by forgetting assignment on coordinates $u^a_I$ with $|I|>n$.   

\medskip 
Analogously, by Lem. \ref{SyntheticInfiniteJetTangentBundle}, $\FR^k\times \DD$-plots of $T J^\infty_M F$ are compatible families of plots of $\{TJ^n_M F\}_{0\leq |I|}$, and hence any plot 
\[ 
\xymatrix@R=1.2em@C=2.5em  { &&  TJ^\infty_M F \ar[d]
	\\ 
	\FR^k\times \DD  \ar[rru]^{X_{s^k_\epsi}} \ar[rr]^>>>>>>>{s^k_\epsi} && J^\infty_M F
}   
\]
covering a plot $s^k_{\epsi}$ of $J^\infty_M F$ may be represented by an infinite formal sum
\begin{align}
\label{InfinityJetTangentPlotsFormalSum}
X_{s^k_\epsi} \;=\; X^\mu_{s^k_\epsi} \, \frac{\partial}{\partial x^\mu}\Big\vert_{s^k_\epsi} \;+\; 
\sum_{|I|=0}^\infty Y^a_{I,s^k_\epsi} \frac{\partial}{\partial u_I^a} \Big\vert_{s^k_\epsi} \, ,
\end{align}
where  $\big\{X^\mu_{s^k_\epsi}, Y^a_{I,s^k_\epsi}\big\}_{0\leq |I|} \subset C^\infty(\FR^k)\otimes \CO(\DD)$, which stands for the infinite list of compatible assignments on the corresponding coordinates on the tangent bundles $TJ^n_M F$
\begin{align}\label{InfinityJetTangentPlotOnCoordinates}
x^\mu &\xmapsto{\quad \quad} s^\mu(c^k,\epsi) \nn \\ 
\dot{x}^\mu &\xmapsto{\quad \quad} X^\mu_{s^k_\epsi} (c^k,\epsi) \\
u^a_I &\xmapsto{\quad \quad} s^a_I(c^k,\epsi) \nn \\
\dot{u}^a_I &\xmapsto{\quad \quad} Y^a_{I,s^k_\epsi} (c^k,\epsi) \quad \in \quad C^\infty(\FR^k)\otimes \CO(\DD) \, . \nn 
\end{align}
In this local representation, the pushforward limit projections $\dd \pi^\infty_n: J^\infty_M F\rightarrow J^n_M F$ from \eqref{TangentJetBundleProjections} act plot-wise by `forgetting' the $|I|> n$ members of the family, i.e., by truncating the formal sum at $|I|=n$. The formal sum representation \eqref{InfinityJetTangentPlotsFormalSum} also makes the fiber-wise (and plot-wise) $\FR$-linear structure of $TJ^\infty_M F$ manifest, hence constituting a well-defined map of thickened smooth sets
$$
+ \; : \; 
TJ^\infty_M F \times_{J^\infty_MF} TJ^\infty_M F \xrightarrow{\quad \quad} TJ^\infty_M F\, ,
$$
as intuitively expected.

\subsection{Infinitesimal neighborhoods and jets of sections}
\label{JetBundlesSyntheticallySec}

We have seen (Lem. \ref{TangentBundleSet}) that tangent vectors on a finite-dimensional manifold M are detected by maps out of the 
infinitesimal line $\DD^1(1)$. Another equivalent point of view of tangent vectors on $M\in \mathrm{Man}$, is as 1-jets of maps $\FR\rightarrow M$ at $0\in \FR$, 
or yet equivalently jets of sections of the trivial bundle $\FR\times M \rightarrow \FR$ at $0\in \FR$. By canonical 
embedding of the infinitesimal line into the real line $ 
\DD^1(1) \xhookrightarrow{\iota_0} \FR^1$ (see \eqref{InfDiskIntoCartSp}),  Lem. \ref{TangentBundleSet} may be equivalently 
be phrased as the bijection between jets of sections $J^1_{0\in \FR} (\FR\times M)$ at $0\in \FR$ of the base, with the set of maps
$\DD^1(1)\rightarrow \FR\times M$ such that the diagram 
\[ 
\xymatrix@R=1.4em@C=3em  { &&  \FR\times M\ar[d]^{\pr_1}
	\\ 
	\DD^1(1) \ar[rru]^{X_p} \ar@{^{(}->}[rr]^-{\iota_0} && \FR
}   
\]
commutes. The interpretation is in line with intuition: 1-jets of sections of $\FR\times M\rightarrow \FR$ at the origin are the same as sections over the first order 
infinitesimal neighborhood of the origin. It follows similarly that 
the analogous diagrams 
\[ 
\xymatrix@R=1.4em@C=3em   { &&  \FR\times M\ar[d]^{\pr_1}
	\\ 
	\DD^1(k) \ar[rru]^{X_p} \ar@{^{(}->}[rr]^-{\iota_0} && \FR
}   
\]
with $\DD^1(k)$ instead will detect $k$-jets $J^k_{0\in \FR} (\FR\times M)$ of curves in $M$, for any $k\in \NN \cup \{\infty\}$.
That is, $k$-order jets of sections of $\FR\times M\rightarrow \FR$ at $0\in \FR$ are precisely sections over out of the $k$-order
infinitesimal neighborhood at $0\in \FR$. We will show how this intuition naturally generalizes to the case of any 
fiber bundle $F\rightarrow M$ of manifolds, allowing to detect (or define) jets of sections as sections over infinitesimal neighborhoods 
at a point in the base. In fact, a natural notion of an infinitesimal neighborhood in \textit{any} thickened smooth sets allows one to define a notion of jets for maps between any two thickened smooth sets (Def. \ref{syntheticJets}). Moreover, the latter yields another way to recover the infinite jet bundle (Thm. \ref{SyntheticJetBundleOfFiberBundle}, due to \cite{KS17})  with its (thickened) smooth structure from Def. \ref{InfiniteJetBundleFormalSmoothLimit}, directly in $\ThickenedSmoothSets$, once again avoiding 
any mention of infinite-dimensional manifold theory.

\smallskip 
\begin{remark}[\bf Manifolds with corners and infinitesimal neighborhoods] The discussion that follows applies essentially verbatim for manifolds with boundaries and corners, due to our results from \cref{ManifoldsWithCornersAndWeilBundlesSection} (Prop. \ref{ManifoldsWithCornerstoAlgebras}, Thm \ref{ManifoldsWithCornersEmbedIntoCahiers}), with minimal or no modifications. For simplicity of presentation, we will not abstain from referring to cornered manifolds repeatedly.
\end{remark}

\newpage 
\smallskip 
Before stating the abstract definition of the infinitesimal neighborhood of a point (or plot) of an arbitrary thickened smooth set, we identify the
notion of an infinitesimal neighborhood at a point $p$ in an arbitrary smooth manifold $M\in \mathrm{Man}$ via the embedding $\mathrm{Man}\hookrightarrow\mathrm{CAlg}^{op}_{\FR}$of Lem. \ref{CartSptoAlgebras}, extending the case of those of Cartesian spaces. For $p\in M$ a point in a manifold, 
we are after natural `infinitesimal spaces' attached to $p\in M$ that should serve as infinitesimal neighborhoods 
in at $p\in M$. 

\smallskip 
As discussed in \cref{Subsec-inf}, these ought to be the formal duals of 
certain nilpotent algebras, now naturally associated to the point $p\in M$. Of course, there is a well-known collection of such algebras associated to the pair $(p,M)$.

\begin{definition}[\bf Infinitesimal neighborhoods of manifolds]
\label{InfitesimalNbdinManTraditional}
Let $p\in M$ be a point in a finite-dimensional manifold. For any $k\in \NN \cup \{\infty\}$, 
\begin{itemize}[leftmargin=6mm] 
 \item[{\bf (i)}] the \textit{$k$-order 
infinitesimal neighborhood at $p$}, denoted by $\DD_{p,M}(k)$, is the formal dual of the jet algebra denoted by
\begin{align*}
J^k_p(M):=  \Big\{j^k_p f=[f] \, \big{|}\, f \sim f' \in C^\infty(M) \iff \partial_{|I|} f (p) 
 = \partial_{|I|} f' (p) \hspace{0.3cm} \forall k\geq I \geq0 \Big\} \quad  \in \quad \mathrm{CAlg}_\FR \, ,
\end{align*}
based at $p\in M$, with the partial derivatives being computed in any local chart;

\item[{\bf (ii)}] the \textit{$\infty$-order neighborhood at $p$} is given by the formal dual of the colimit of $k$-order neighborhoods in $\mathrm{CAlg}_\FR$, or dually as the limit of $k$-order infinitesimal neighborhoods in $\mathrm{CAlg}_\FR^{op}$. \footnote{
If such infinite limits of infinitesimal points were originally included in the site $\ThickenedCartesianSpaces$,  since the Yoneda embedding preserves limits, it would be equivalently the limit in $\ThickenedSmoothSets$. However, strictly speaking, being infinite-dimensional $J^\infty_p (M)$ is not an example of a traditional Weil algebra (Def. \ref{InfinitesimallyThickenedPoints}), hence only represented from the outside in $\ThickenedSmoothSets$.
}
\end{itemize} 
\end{definition}

Tautologically, of course, $J^k_p(M)$ coincides with the $k$-jets of sections of the trivial bundle $ M\times \FR \rightarrow M$. The algebra structure is inherited by that of the fibers, that is $j^k_p(f)\cdot j^k_p(f'):= j^k_p(f\cdot f')$ for any representatives $f,f'\in C^\infty(M)$ of the corresponding jet classes. For the case of the Cartesian space $\FR^d$, there is a canonical isomorphism $ \DD^d(k)\cong \DD_{0,\FR^d}(k)$ given dually by the (formal) Taylor expansion to order $k$, with respect to the canonical coordinates on $\FR^d$,
\begin{align}\label{JetsofCartesianareInfDisk}
J^k_0(\FR^d) &\; \xlongrightarrow{\sim} \; \CO\big(\DD^d(k)\big)=\FR[\epsi^1,\cdots, \epsi^d]/(\epsi^1,\cdots,\epsi^d)^{k+1} \\ 
j^k_0 f&\; \longmapsto \; \sum_{I=0}^{k} \frac{1}{I!}\partial_{|I|}f(0)\cdot \epsi^{I}= f(0) + \partial_i f(0) \cdot \epsi^i +\cdots 
\nn \, ,
\end{align}
by which we shall identify $\DD_{0,\FR^d}(k)$ with $\DD^d(k)$ from now on. Given a local chart $\phi:\FR^d \xrightarrow{\sim} U_p\hookrightarrow M$ centered at $p\in M$, there is an induced isomorphism on jet algebras given by
\begin{align}\label{infNbdIsomorphism}
j^k\phi^*\;:\; J^k_p(M)& \;\xlongrightarrow{\sim}\; J^k_0(\FR^d) \cong \CO\big(\DD^d(k)\big) \\
j^k_p f&\;\longmapsto \; j^k_0(f\circ \phi)=f\circ \phi (0) + \partial_{i} (f\circ \phi) (0) \cdot \epsi^i +\cdots \, , \nn
\end{align}

\vspace{1mm} 
\noindent which is, in fact, already implicit in Def. \ref{InfitesimalNbdinManTraditional} of $J^k_p(M)$. In particular, any such isomorphism makes the nilpotent nature of $J^k_p(M)$ apparent, giving an explicit splitting $J^k_p(M)\cong_{\mathrm{Vect}}\FR\oplus V$ where $V$ contains only nilpotent elements. Of course, this means that $j^k_p(f)\in J^k_p(M)$ is nilpotent if and only if $f(p)=0$. Denoting the formal dual of this isomorphism by 
$$
j^k\phi \, :\, \DD^d(k)\xlongrightarrow{\sim} \DD_{p,M}(k) \, ,
$$
the nilpotency of the latter algebra, together with the canonical `embedding' 
$$
\DD_{p,M}(k)\longhookrightarrow M\, ,
$$
given dually by the projection 
$$
f\longmapsto j^k_p f \, ,
$$ 
justify its interpretation as an infinitesimal neighborhood in $M$. The above somewhat tautological observations are summarized in the following commutative diagram 
\begin{equation}\label{infNbdIsoDiagram}
\begin{gathered}
\xymatrix@C=2em@R=1.5em  { \DD_{p,M}(k) \; \ar@{^{(}->}[rr] &&  M 
	 \\ 
  \DD^d(k)\ar[u]_{\sim}^{j^k\phi} \; \ar@{^{(}->}[rr]^{\iota_0}  &&  \FR^d  \ar@{^{(}->}[u]_{\phi}
}
\end{gathered}
\end{equation}
for each local chart $\phi:\FR^d\rightarrow M$ centered at $p\in M$. Hence, one may also think of the infinitesimal neighborhood at a point, up to isomorphism, as an infinitesimal disk attached to it via a local chart. \footnote{This latter -- up to isomorphism -- point of view is taken as a definition of infinitesimal neighborhoods in manifolds by \cite{KS17}.}

\medskip 
We may now generalize the interpretation of jets $j^k_p f$ of any bundle $F\in M$ over $M$ as sections over infinitesimal neighborhoods $\DD_{p,M}(k)\hookrightarrow M$.

\newpage 
\begin{lemma}[\bf Jets as sections over infinitesimal neighborhoods]\label{JetsofSections=InfinitesimalJets}
Let $F\rightarrow M$ be a fiber bundle over an $d$-dimensional manifold $M\in \mathrm{Man}$. Then: 

\begin{itemize}
\item[{\bf (i) }] 
The set of $k$-jets $J^k_p(F)$ of sections at a point $p\in M$ are in 1-1 correspondence with sections over the $k$-order infinitesimal neighborhood of $p\in M$, i.e., maps $\DD_{p,M}(k)\rightarrow F$ such that the following diagram commutes 
\[ 
\xymatrix@R=1em  { &&  F \ar[d]^{\pi}
	\\ 
	\DD_{p,M}(k)\ar[rru] \ar@{^{(}->}[rr] && M
    \mathrlap{\,.}
}   
\]

\item[{\bf (ii) }]
Furthermore, for any choice of local chart $\phi :\FR^d\hookrightarrow U_p\subset M$ centered at $p\in M$, they are in 1-1 correspondence with maps $\DD^d(k)\rightarrow F$ such that the following diagram commutes 
\vspace{-2mm} 
\[
\xymatrix@C=4pc @R=1.2pc { &&  F \ar[d]^{\pi}
	\\ 
	\DD^d(k)\ar[rru] \ar@{^{(}->}[r]^>>>>>>{\iota_0} &\FR^d \ar[r]^{\phi} & M
     \mathrlap{\,.}
}   
\]
\end{itemize} 
\begin{proof}
Fix a local chart $\phi:\FR^d\rightarrow U_p\subset M$ centered at p. A section $\sigma$ of F over $\DD^d(k)\xhookrightarrow{\iota_0}\FR^d\xrightarrow{\phi} M$ fits in the commutative diagram
\vspace{-2mm} 
\[
\xymatrix@C=4pc @R=1.2pc  { &&  F \ar[d]^{\pi}
	\\ 
	\DD^d(k)\ar[rru]^\sigma \ar@{^{(}->}[r]^>>>>>>{\iota_0} &\FR^d \ar[r]^{\phi} & M \, .
}\,   
\]
Pasting the commutative diagram \eqref{infNbdIsoDiagram} given by the isomorphism $j^k\phi :\DD^d(k) \rightarrow \DD_{p,M}(k)$, we get 
\[
\xymatrix@C=3pc @R=1.2pc  { &&  F \ar[d]^{\pi}
	\\ 
	\DD^d(k)  \ar[rru]^\sigma \; \ar[d]_{j^k\phi} \ \ar@{^{(}->}[r] &\FR^d \ar[r] & M \\
 \DD_{p,M}(k) \ar@{^{(}->}[rru]
}\,    
\]
which commutes since the internal diagrams do. Since $j^k\phi$ is an isomorphism, we may define $\sigma \circ (j^k\phi)^{-1}:\DD_{p,M}(k)\rightarrow F$, which forms a section of F over the infinitesimal
neighborhood at $p$, since 
\[ 
\xymatrix@R=1em@C=2.6em  { &&  F \ar[d]^{\pi}
	\\ 
	\DD_{p,M}(k)\ar[rru]^-{\sigma\circ (j^k \phi)^{-1}} \ar@{^{(}->}[rr] && M
}   
\]
commutes by the previous diagram. Similarly, any section a section $\tilde{\sigma}$ over the infinitesimal neighborhood 
gives a section over $\DD^{d}(k)\xhookrightarrow{\iota_0}\FR^d \xrightarrow{\phi} M$ defined by $\tilde{\sigma}\circ j^k\phi$.
This proves the {\bf (ii)} bijection of the statement.

Thus, it remains to prove the correspondence with jets of sections $J^k_p(F)$ of $F\rightarrow M$ in a local chart of $\tilde{\phi}:\FR^d\times \FR^m\xrightarrow{~} \ F|_{U_p} $, compatible with the chart on the base $\phi:\FR^d\rightarrow M$. 
Denote the corresponding coordinates functions on $\FR^d\times \FR^m$ by $\{x^i,y^j\}$ with $i=1,\cdots,d$ and $j=1,\cdots, m$ respectively.
By Hadamard's Lemma \ref{HadamardsLemma}, a map $\sigma: \DD^d(k)\rightarrow F|_{Up} \xrightarrow{\phi^{-1}} \FR^d\times \FR^m$ 
is completely determined, dually by the image of the linear coordinate functions $\{x^i,y^j\} \subset C^\infty(\FR^d\times \FR^m)$. 
The commutativity of the diagram of a section $\sigma: \DD^d(k)\rightarrow F|_{Up} \xrightarrow{\phi^{-1}} \FR^d\times \FR^m$ over $\FR^d$
then fixes the dual map to be of the form
\begin{align*}\sigma^*: \CO(\FR^d\times \FR^m) &\;\longrightarrow \; \CO\big(\DD^d(k)\big) \\
x^i&\; \longmapsto  \;  \epsi^i \\[-3pt] 
y^j &\;\longmapsto \; c^j_{i} \cdot \epsi^i + c^j_{i_1 i_2} \cdot \epsi^{i_1}\epsi^{i_2} +\cdots + c^{j}_{i_1 \cdots i_d} \cdot \epsi^{i_1}\cdots \epsi^{i_k} \,,
\end{align*}
for some collection of real numbers $\big\{c^j_i,\,  c^{j}_{i_1 i_2},\,  \cdots ,\,  c^{j}_{i_1\cdots i_k} \big\}$ for $j=1,\cdots, m$, $i=1,\cdots, d$, symmetric under exchange of indices. Note that we have tacitly assumed, without loss of generality, that the 
chart on $F|_{U_p}$ is further centered around the image of the point  $\tilde{\pi}:*\hookrightarrow \DD^d(k)\rightarrow F$ in the fiber $\pi^{-1}(p)$. Such a map polynomially extends along $\DD^d(k)\xhookrightarrow{\iota_0} \FR^d $ to a map 
given dually by
\begin{align*}\tilde{\sigma}^*: \CO(\FR^d\times \FR^m) &\;\longrightarrow \; \CO(\FR^d) \\
x^i&\;\longmapsto \;   x^i \\[-3pt] 
y^j &\;\longmapsto \; c^j_{i} \cdot x^i + c^j_{i_1 i_2} \cdot x^{i_1}x^{i_2}
+\cdots + c^{j}_{i_1 \cdots i_d} \cdot x^{i_1}\cdots x^{i_k} \, ,
\end{align*} 
or explicitly in the manifold picture, 
\begin{align*}
\tilde{\sigma}: \FR^d & \;\longrightarrow \; \FR^d\times \FR^m \\[-2pt]
\tilde{x}=\tilde{x}^i\cdot e_{i} &\; \longmapsto \; \big(\tilde{x}^i\cdot e_i\, , \; c^j_i\cdot \tilde{x}^i\cdot e_j 
+ \cdots + c^j_{i_1,\cdots i_d} \cdot \tilde{x}^{i_1}\cdots\tilde{x}^{i_k} \cdot e_j \big)\, ,  
\end{align*}
where ${\tilde{x}^i}$ denote the 
standard coordinates of a point $\tilde{x}\in \FR^d$ and $\{e_{i},e_{j}\}$ are the standard basis vectors 
for $\FR^d$ and $\FR^m$ respectively. By construction, $\tilde{\sigma}$ fits in the commutative diagram
\[ 
\xymatrix@C=3pc @R=1pc  { &\quad \FR^d\times \FR^m \ar[r]^-{\tilde{\phi}}  & F|_{U_p} \ar[d]^{\pi}
	\\ 
	\FR^d \ar[rr]^{\phi} \ar[ru]^{\tilde{\sigma}} &&U_p \, ,
}   
\]
and so the map $\tilde{\phi} \circ \tilde{\sigma}\circ \tilde{\phi}: U_p\rightarrow F$ defines a local section of $F$, 
whose jet class is completely determined by $\sigma: \DD^d(k) \rightarrow F$, by definition. Conversely, the representative 
of any jet class $j^k \tilde{\sigma} \in J^k_p(F)$, defines a map $\DD^d(k)\rightarrow F|_{U_p}$ as its (formal) Taylor 
expansion in the given compatible trivialization as above. 
\end{proof}
\end{lemma}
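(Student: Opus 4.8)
The plan is to first reduce claim \textbf{(i)} to claim \textbf{(ii)}, and then to establish \textbf{(ii)} by an explicit computation in a compatible chart. For the reduction I would fix a local chart $\phi : \FR^d \xrightarrow{\sim} U_p \hookrightarrow M$ centered at $p$ and invoke the canonical isomorphism $j^k\phi : \DD^d(k) \xrightarrow{\sim} \DD_{p,M}(k)$ together with the commuting square \eqref{infNbdIsoDiagram}. Pasting that square onto the defining triangle of a section $\sigma : \DD^d(k) \to F$ over $\FR^d \xrightarrow{\phi} M$ yields a section $\sigma \circ (j^k\phi)^{-1} : \DD_{p,M}(k) \to F$ over $M$, and conversely precomposing a section over $M$ with $j^k\phi$ recovers one over $\FR^d \xrightarrow{\phi} M$; since $j^k\phi$ is an isomorphism these assignments are mutually inverse, so the map-sets of \textbf{(i)} and \textbf{(ii)} are canonically in bijection and it suffices to match either one with $J^k_p(F)$.

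For the core of the argument I would pick a chart $\widetilde\phi : \FR^d \times \FR^m \xrightarrow{\sim} F|_{U_p}$ of the total space compatible with $\phi$, so that $\pi$ corresponds to the projection $\FR^d \times \FR^m \to \FR^d$, with linear coordinate functions $\{x^i, y^j\}$; after a translation I may assume the fiber point through which the section passes is the origin of $\FR^m$. By Hadamard's Lemma \ref{HadamardsLemma} a map $\sigma : \DD^d(k) \to F|_{U_p}$ is determined dually by the images of the $\{x^i,y^j\}$. The requirement that $\sigma$ be a section over $\FR^d \xrightarrow{\phi} M$ rigidly pins down $x^i \mapsto \epsi^i$, while the fiber coordinates go to arbitrary elements of the maximal ideal of $\CO(\DD^d(k)) = \FR[\epsi^1,\dots,\epsi^d]/(\epsi^1,\dots,\epsi^d)^{k+1}$, i.e.
$$
y^j \;\longmapsto\; c^j_{i}\,\epsi^{i} + c^j_{i_1 i_2}\,\epsi^{i_1}\epsi^{i_2} + \cdots + c^j_{i_1\cdots i_k}\,\epsi^{i_1}\cdots\epsi^{i_k}
$$
with real coefficients symmetric in their lower indices. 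This is exactly the data of a $k$-jet of a local section: two local sections agree to $k$-th order at $p$ precisely when all their partial derivatives up to order $k$ coincide there, which is the very quotient relation defining $\CO(\DD^d(k))$. Conversely I would send a jet class to its polynomial representative $y^j = c^j_i x^i + \cdots + c^j_{i_1\cdots i_k} x^{i_1}\cdots x^{i_k}$, whose $k$-th order Taylor expansion at the origin defines a map $\DD^d(k) \to F$ in the required triangle; the two passages are manifestly inverse, giving the bijection of \textbf{(ii)}. For $k = \infty$ one replaces $J^k_p$ by the colimit over $k$ (dually, the limit of the $\DD^d(k)$), and each step above commutes with this (co)limit.

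Finally, I would verify that the bijection in \textbf{(i)} does not depend on the auxiliary charts: a change of compatible trivialization acts by the transition diffeomorphisms on base and total space, which intertwine the two explicit descriptions, so the identification with the intrinsically defined $J^k_p(F)$ is unambiguous; for manifolds with corners the same argument goes through verbatim using Prop.~\ref{ManifoldsWithCornerstoAlgebras} and Thm.~\ref{ManifoldsWithCornersEmbedIntoCahiers}. The hard part will not be any single estimate — Hadamard's Lemma \ref{HadamardsLemma} already supplies the real-analytic input that a map out of $\DD^d(k)$ is nothing more than a truncated Taylor jet — but rather the careful bookkeeping: checking that the section condition forces $x^i \mapsto \epsi^i$ on the nose, that centering the fiber chart costs no generality, and that the coordinate-level correspondence is natural enough to descend to the chart-free objects $\DD_{p,M}(k)$ and $J^k_p(F)$.
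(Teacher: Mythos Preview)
Your proposal is correct and follows essentially the same approach as the paper: first reducing \textbf{(i)} to \textbf{(ii)} via the isomorphism $j^k\phi$ and diagram \eqref{infNbdIsoDiagram}, then establishing \textbf{(ii)} by picking a compatible trivialization, invoking Hadamard's Lemma to see that a section over $\DD^d(k)$ is determined by the images of the coordinates (with $x^i \mapsto \epsi^i$ forced and $y^j$ mapping to an arbitrary truncated polynomial), and matching this data with the Taylor coefficients that define a $k$-jet. Your additional remarks on chart-independence and the $k=\infty$ case go slightly beyond what the paper spells out here, but are in the same spirit.
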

We have thus characterized the set of jets of sections $J^k_p(F)$ of a bundle at a point $p\in M$, as sections over the 
infinitesimal neighborhood $\DD_{p,M}(k)\hookrightarrow M$, hence dispensing of \textit{almost} all reference to local charts.
Of course, the original definition (Def. \ref{InfitesimalNbdinManTraditional}) of $J^k_p(M)$ still implicitly employs local charts. 

\smallskip 
Working within thickened smooth sets, 
there is a completely natural and intuitive definition of infinitesimal neighborhoods in any thickened smooth set, which does not refer 
to any choice of local chart, and hence actually allows one to define jets of sections of any map between thickened smooth sets. 
To that end, let us first declare following definition.
\begin{definition}[\bf Infinitesimal Shape]\label{infinitesimalShape}
Let $\CF\in\ThickenedSmoothSets$ be a thickened smooth set. The \textit{infinitesimal shape}, 
or \textit{de Rham shape}, $\mathfrak{J} \CF\in \ThickenedSmoothSets$ of $\CF$ is defined by
\begin{align}
    \frJ  \CF \;:\; \mathrm{ThCartSp} &\longrightarrow \mathrm{Set} \\
    \FR^k\times \DD &\longmapsto \CF(\FR^k) 
\nn 
\end{align}
for any $\FR^k \times \DD \in \mathrm{ThCartSp}$.
\end{definition}
In words, the infinitesimal shape of a space $\CF$ has the same $\FR^k$-plots but no infinitesimal plots. Notice there is a natural smooth map
\begin{align}\label{comonadunit}
\eta_\CF \;:\; \CF &\longrightarrow \frJ \CF \\
\phi^k_{\epsi} &\longmapsto \phi^k_{\epsi=0}
\end{align}
which simply sends every $\FR^k\times \DD$-plot to its pullback $\phi^k_{\epsi=0}:=\iota^* \phi^k_{\epsi}$  
along the canonical embedding $\iota: \FR^k\times \{*\} \hookrightarrow \FR^k \times \DD$. 
Thinking of $(\FR^k\times \DD)$-plots in $\CF$ as smooth $\FR^k$-plots with an extra thickening given by an infinitesimal extension, 
then $\eta_\CF$ exhibits $\frJ \CF$ as the space obtained by identifying / collapsing the infinitesimal extension, 
since any $\phi^k_{\epsi},\psi^k_{\epsi} \in \CF(\FR^k\times \DD) $ with $\phi^k_{\epsi=0}=\psi^k_{\epsi=0}$ map to precisely the 
same $(\FR^k\times \DD)$-plot in $\frJ \CF$. Colloquially, one says that the operation $\frJ$ identifies `infinitesimal neighbors'. 
It is immediate to see that the infinitesimal shape forms a functor \footnote{
Incidentally, there exists a right adjoint $\mathfrak{R} : \ThickenedSmoothSets\rightarrow \ThickenedSmoothSets$ called reduction, which reduces a thickened smooth set 
to its underlying body smooth set, in a precise sense, rather than 
reducing its plots. These, in fact, arise from a deeper structure involving a quadruple of adjoints, termed ``\textit{Differential Cohesion}''. For full details on these aspects, see \cite{dcct}\cite{KS17}. }
$$
\frJ \;:\; \ThickenedSmoothSets \longrightarrow \ThickenedSmoothSets\, ,
$$
with the corresponding maps $\eta_{\CF}: \CF\rightarrow \frJ \CF$ comprising a natural transformation
$$
\eta \;:\; \mathrm{id} \longrightarrow \frJ \, .
$$
With the infinitesimal shape at hand, we may form an intuitive thickened smooth set that deserves the name of the ``$\infty$-order infinitesimal 
neighborhood'' of a point $p:{*}\rightarrow \CF$.

\begin{definition}[\bf Synthetic infinitesimal neighborhood]\label{SyntheticInfinitesimalNeighborhood}
Let $p:* \rightarrow \CF$ be a point in a thickened smooth set $\CF\in \ThickenedSmoothSets$. The  \textit{(synthetic) infinitesimal neighborhood} 
 $\DD_{p,\CF}$ of $\CF$ at $p$ is defined as the subspace given by the pullback
 \[
\xymatrix@C=1.8em@R=1.4em  { \DD_{p,\CF}  \ar[rr] \ar[d] && {*} \ar[d]^{p} 
	 \\ 
  \CF \ar[rr]^{\eta_\CF}  &&  \frJ \CF \, .
}
\]
\end{definition}
Explicitly, this means that the $\FR^k\times \DD$-plots of $\DD_{p,\CF}$ are given by
\begin{align} 
\DD_{p,\CF}(\FR^k\times \DD)&=
\Big\{ \big(\phi^k_{\epsi},{*}\big)\in \CF(\FR^k\times \DD)\times {*}(\FR^k \times \DD)  
\, \big\vert \; \eta_\CF (\phi^k_{\epsi}) = p(*)  \Big\} \nn \\
&\cong
\Big\{\phi^k_{\epsi}\in \CF(\FR^k \times \DD) \, \big\vert \; \phi^k_{\epsi=0} = \mathrm{const}_{p} \in \CF(\FR^k) \Big\} \, .
\end{align}
Said otherwise, the plots of $\DD_{p,\CF}$ are constant onto the point $p\in \CF(*)$ along their purely finite directions, 
but free to probe all infinitesimally thickened directions inside $\CF$ at p. The left vertical map of the pullback defines 
a natural embedding 
$$
\iota_{p} :\DD_{p,\CF} \longhookrightarrow \CF\, ,
$$ realizing the infinitesimal neighborhood as a thickened (smooth) subspace of $\CF$. Thus, the \textit{synthetic} infinitesimal neighborhood captures precisely the intuitive properties 
of a reasonable notion of a ``$\infty$-order infinitesimal neighborhood'', applicable for any ambient thickened smooth set. Of course, this definition recovers 
that the traditional notion of an infinitesimal neighborhood of a finite-dimensional manifold (Def. \ref{InfitesimalNbdinManTraditional}).

\begin{lemma}[\bf Synthetic infinitesimal neighborhood at point in manifold]
\label{SyntheticInfinitesimalNeighborhoodOfManifold}
Let $y(M)\in \mathrm{ThickenedSmoothSet}$ be a smooth $d$-dimensional manifold viewed as a thickened smooth set and $p:{*}\rightarrow y(M)$ a 
point in $M$. There is a canonical isomorphism 
$$
\DD_{p,M} \cong y\big(\DD_{p,M}(\infty)\big)\, 
$$
between the synthetic and the traditional $\infty$-order infinitesimal neighborhoods at $p\in M$.
\begin{proof}
It suffices to show that their plots are in natural bijection. We prove this for the cases of $\big(\FR^k\times \DD^m(l)\big)$-plots, with the more general cases of thickened probes $\FR^k \times \DD$ following similarly via any subspace identification of from Lem. \ref{InfinitesimalPointsAsSubspacesOfInfinitesimalDisks}. On the one hand, we have
 \begin{align*}   
\DD_{p,M}\big(\FR^k\times \DD^m(l)\big)&\cong \Big\{\phi^k_{\epsi}\in y(M)\big((\FR^k \times \DD^m(l)\big) \; \big|\; \phi^k_{\epsi=0} = \mathrm{const}_{p} \in y(M)(\FR^k) \Big\} \\
&\cong \Big\{ \phi^k_\epsi \in \mathrm{Hom}_{\mathrm{CAlg}_\FR^{op}}\big(\FR^k\times \DD^m(l), M \big)\; \big| \; \phi^k_{\epsi=0}= \mathrm{const}_p \in \mathrm{Hom}_{\mathrm{Man}}(\FR^k, M) \Big\} \, .
\end{align*}
Now any algebra morphism $C^\infty(M)\rightarrow C^\infty(\FR^k)\otimes \CO\big(\DD^m(l)\big)$ corresponds precisely to a higher order differential  operator $X_p: C^\infty(M)\rightarrow C^\infty(\FR^k)$ relative to the evaluation pullback algebra map $p: C^\infty(M)\rightarrow C^\infty(\FR^k)$, given by ``constant evaluation'' at $p\in M$ (cf. proof of Lem. \ref{TangBundlesCoincide}). This means that the value of the differential operator $X_p$ depends only on germs of functions, hence factors through any local chart subalgebra 
$$
X_p \, : \, C^\infty(M)\xlongrightarrow{\iota^*_{U_p}} C^\infty(U_p)\xlongrightarrow{\sim} C^\infty(\FR^d) \longrightarrow C^\infty(\FR^k)
$$
centered at $p\in U_p \hookrightarrow M$.

In other words, any such plot $\phi^k_\epsi$ factors through a local chart of $M$ 
$$
\phi^k_\epsi \, : \, \FR^k\times \DD^m(l) \longrightarrow \FR^{d}\xrightarrow{\sim} U_p \hookrightarrow M \, ,
$$
and so is determined by a map of thickened Cartesian spaces $\tilde{\phi}^k_{\epsi} \in \mathrm{Hom}_{\ThickenedCartesianSpaces}\big(\FR^k \times \DD^m(l), \,\FR^d\big)$. By Prop. \ref{RAlgebraMapsAreCinfty}, these are dually 
determined by the image of the linear coordinates $\{x^i\} \in C^\infty(\FR^d)$ 
$$
x^i \;\; \longmapsto \;\;  c^i_{a} \epsi^a + c^i_{ab} \cdot \epsi^a\epsi^b + \cdots + c^{i}_{a_1\cdots a_l} \cdot \epsi^{a_1}\cdots \epsi^{a_{l}} \quad  \in \quad  C^\infty(\FR^k)\otimes\CO\big(\DD^m(l)\big),
$$
 with $\{c^i_{a},\cdots c^{i}_{a_1\cdots a_l}\}\subset C^\infty(\FR^k)$, with no term valued purely in $C^\infty(\FR^k)\hookrightarrow C^\infty(\FR^k)\otimes\CO\big(\DD^m(l)\big)$ since this vanishes by the constancy assumtion along $p:\FR^k\hookrightarrow \FR^k\times \DD^m(l)\rightarrow M$ (and the fact that the chart is chosen to be centered at $p\in M$). 
 
 On the other hand, 
\begin{align*}
y\big(\DD_{p,M}(\infty)\big)(\FR^k \times \DD^{m}(l))&\cong \mathrm{Hom}_{\mathrm{CAlg}_{\FR}}\big(J^\infty_pM\, ,\,  C^\infty(\FR^k)\otimes\CO( \DD^m(l)) \big) \, , 
\end{align*}
and in the given local chart centered  p, 
$J^\infty_p M\cong \CO(\DD^d(\infty))\cong \FR[[\epsi^1,\cdots, \epsi^d]]$ by \eqref{infNbdIsomorphism}, 
so that an algebra map $J^\infty_p M \rightarrow C^\infty(\FR^k)\otimes\CO\big(\DD^m(l)\big)$ is again completely determined by the image of the 
generators $\{\epsi^i\}$, as
$$
\epsi^i \;\; \longmapsto \;\; c^i_{a} \epsi^a + c^i_{ab}\cdot \epsi^a\epsi^b + \cdots +
c^{i}_{a_1\cdots a_l}\cdot \epsi^{a_1}\cdots \epsi^{a_l} \quad \in \quad C^\infty(\FR^k)\otimes\CO\big(\DD^m(l)\big)\, .
$$
Thus, we see there is a canonical correspondence between plots in $\DD_{p,M}$ and those in 
$y\big(\DD_{p,M}(\infty)\big)$, which is moreover manifestly functorial under pullbacks of probe spaces.
\end{proof}
\end{lemma}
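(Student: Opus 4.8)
The plan is to exhibit a natural bijection between the $(\FR^k\times\DD)$-plots of $\DD_{p,M}$ and those of $y\big(\DD_{p,M}(\infty)\big)$. Since both are objects of $\ThickenedSmoothSets$, it suffices to produce such a bijection naturally in $\FR^k\times\DD\in\ThickenedCartesianSpaces$, and by Lem.~\ref{InfinitesimalPointsAsSubspacesOfInfinitesimalDisks} it is enough to treat probes of the form $\FR^k\times\DD^m(l)$, the case of a general infinitesimal point following from any chosen embedding into a disk. Throughout I would work in the dual algebra picture (Lem.~\ref{CartSptoAlgebras} together with the Yoneda description of Ex.~\ref{ManifoldsAsThickenedSmoothSets}), fixing once and for all a local chart $\phi:\FR^d\xrightarrow{\sim}U_p\hookrightarrow M$ centred at $p$, so that $J^l_p(M)\cong\CO\big(\DD^d(l)\big)$ and $J^\infty_p(M)\cong\CO\big(\DD^d(\infty)\big)=\FR[[\epsi^1,\dots,\epsi^d]]$ via the isomorphisms~\eqref{infNbdIsomorphism}.

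First I would unwind the pullback of Def.~\ref{SyntheticInfinitesimalNeighborhood}: an $(\FR^k\times\DD^m(l))$-plot of $\DD_{p,M}$ is an algebra map $\phi^*:C^\infty(M)\to C^\infty(\FR^k)\otimes\CO\big(\DD^m(l)\big)$ whose composite with the augmentation $C^\infty(\FR^k)\otimes\CO\big(\DD^m(l)\big)\to C^\infty(\FR^k)$ is the constant evaluation $f\mapsto f(p)\cdot 1$. Writing $I:=C^\infty(\FR^k)\otimes\mathfrak{n}$ for the ideal generated by the infinitesimal coordinates, which satisfies $I^{l+1}=0$, the constancy condition forces $\phi^*$ to carry the maximal ideal $\mathfrak{m}_p\subset C^\infty(M)$ of functions vanishing at $p$ into $I$, hence $\mathfrak{m}_p^{l+1}$ into $I^{l+1}=0$. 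By Hadamard's lemma (Lem.~\ref{HadamardsLemma}) one has $C^\infty(M)/\mathfrak{m}_p^{l+1}\cong J^l_p(M)$, so $\phi^*$ factors through $J^l_p(M)$, and composing with the canonical projection $J^\infty_p(M)\to J^l_p(M)$ produces an algebra map $J^\infty_p(M)\to C^\infty(\FR^k)\otimes\CO\big(\DD^m(l)\big)$, i.e.\ a plot of $y\big(\DD_{p,M}(\infty)\big)$.

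Conversely, starting from a plot $\psi^*:J^\infty_p(M)\to C^\infty(\FR^k)\otimes\CO\big(\DD^m(l)\big)$, I would precompose with the jet projection $C^\infty(M)\to J^\infty_p(M)$ and check that the resulting map has constant finite part; in the chart this amounts to the claim that every $\FR$-algebra map $\FR[[\epsi^1,\dots,\epsi^d]]\to C^\infty(\FR^k)$ sends each $\epsi^i$ to $0$, which one sees because $1-a\cdot\psi^*(\epsi^i)$ must be a unit of $C^\infty(\FR^k)$ for every $a\in\FR$ (being the image of the power series $\sum_n a^n(\epsi^i)^n$), forcing $\psi^*(\epsi^i)$ to miss every nonzero value, hence to vanish identically --- equivalently one may invoke Prop.~\ref{RAlgebraMapsAreCinfty} to reduce to $C^\infty$-algebra maps, which are compatible families over the finite jets. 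The same computation shows $\psi^*$ itself kills $(\epsi^1,\dots,\epsi^d)^{l+1}$ and so factors through $\CO\big(\DD^d(l)\big)\cong J^l_p(M)$. Putting the two directions together, both plot sets get canonically identified with $\Hom_{\mathrm{CAlg}_\FR}\big(J^l_p(M),\,C^\infty(\FR^k)\otimes\CO(\DD^m(l))\big)$, the identifications are mutually inverse, and naturality under pullback of probes is automatic since every step is a functorial operation; independence of the chosen chart is guaranteed because the mediating object $J^\infty_p(M)$ is chart-independent by Def.~\ref{InfitesimalNbdinManTraditional}.

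I expect the main obstacle to be exactly the ``no constant term'' step for algebra maps out of the formal power series ring $\FR[[\epsi]]$: one must argue that such maps, despite the target $C^\infty(\FR^k)$ carrying a genuine (non-nilpotent) smooth structure, nonetheless detect only finitely many jet orders, and this is where either the unit trick above or the reduction to $C^\infty$-algebra morphisms (Prop.~\ref{RAlgebraMapsAreCinfty}) is essential. A secondary, purely bookkeeping obstacle is the cornered case: there one substitutes Prop.~\ref{ManifoldsWithCornerstoAlgebras} and Thm.~\ref{ManifoldsWithCornersEmbedIntoCahiers} for the smooth-manifold facts, after which Hadamard's lemma and the chart presentation of $J^l_p(M)$ go through verbatim.
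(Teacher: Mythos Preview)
Your proposal is correct and follows essentially the same route as the paper: both identify the plots of each side, in a fixed chart centred at $p$, with algebra maps determined by sending the coordinate generators to nilpotent combinations of the $\epsi^a$'s with coefficients in $C^\infty(\FR^k)$. The paper phrases the factoring-through-the-chart step in the language of higher-order differential operators depending only on germs, whereas you phrase it ideal-theoretically via $\mathfrak m_p^{l+1}\mapsto I^{l+1}=0$; these are the same argument in different clothing.

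Your treatment is in one respect more complete: the paper simply asserts that an $\FR$-algebra map $\FR[[\epsi^1,\dots,\epsi^d]]\to C^\infty(\FR^k)\otimes\CO(\DD^m(l))$ sends each $\epsi^i$ to a nilpotent element (and is hence determined by those values), while you supply a justification via the unit trick $1-a\,\psi^*(\epsi^i)$. This point is genuinely needed, since a priori an $\FR$-algebra map out of a formal power series ring is not determined by its values on the generators unless those values are nilpotent. So your version closes a small gap the paper leaves implicit.
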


\paragraph{Application: Formalizing perturbative field theory.} 
In the traditional practice of perturbative field theory, one commonly considers the perturbation theory of local functionals of the fields (action functionals, currents, Euler--Lagrange operators etc.) by syntactically ``Taylor expanding'' in a ``small perturbation h'' around a fixed field configuration $\phi \in \CF$. Here, we motivate that the restriction of said local functionals to infinitesimal neighborhoods around fixed field configurations 
$$
\DD_{\phi, \CF} \longhookrightarrow \CF 
$$ 
ought to rigorously encode this traditional notion of perturbation theory around a fixed field configuration $\phi$. We leave the explicit proof of this result in the infinite-dimensional setting of field theory for future work, as it will require much more technical effort than the following simple motivating example.

\smallskip 
To witness the plausibility of this, we consider the toy-model case of 0-dimensional field theory, that is, with field bundle $M\rightarrow *$ over a 0-dimensional (point) space-time. In that case, the space of fields is $[*,M]\cong M\in \SmoothManifolds \hookrightarrow \ThickenedSmoothSets$, with dynamics described by (the criticality of) a smooth map of manifolds $S:M\rightarrow \FR$. Embedding the picture in thickened smooth sets, or equivalently simply in $\ThickenedSmoothManifolds\hookrightarrow \mathrm{CAlg}_\FR^{op}$ in this finite-dimensional case, we may choose a field configuration $p\in M$ (critical or not) and consider the  restriction on the infinitesimal neighborhood around it
$$
\DD_{p,M}(\infty) \longhookrightarrow M\, .
$$
Now notice that the (defining) function algebra on the infinite order infinitesimal disk $\DD^d(\infty)\hookrightarrow \FR^d$ is further canonically identified with the (formal completion of) the symmetric algebra of the cotangent space $T_0^* \FR^d$
$$
\CO(\DD^d(\infty)\big) \, := \,  \FR[[\epsi^1,\cdots,\epsi^d]]\quad  \cong \quad  \FR[[\dd x^1,\cdots , \dd x^d]] 
 \, \cong \, \mathrm{Sym}^\bullet (T^*_0 \FR^d) \, .
$$
Under this identification, by the duality of $T^*_0 \FR^d$ and $T_0 \FR^d$, the elements of this algebra may also be thought of as `formal polynomial maps' on the tangent space $T_0\FR^d$ at $0\in \FR^d$, i.e., formal polynomials 
$$
T_0 \FR^d \longrightarrow \FR \,.
$$ 
The same can be said for the (jet) function algebras of infinitesimal neighborhoods in any manifold $M$ (Def. \ref{InfitesimalNbdinManTraditional}), but the isomorphism to formal polynomials then depends on the choice of chart. Indeed, recall by \eqref{infNbdIsomorphism} that a choice of chart gives an isomorphism 
\begin{align*}
j^\infty\phi^*:J^\infty_p(M)& \;\xlongrightarrow{\sim}\;  \CO\big(\DD^d(\infty) \big) \cong \mathrm{Sym}^\bullet (T^*_0 \FR^m) \\
j^k_p f&\;\longmapsto  \; j^k_0(f\circ \phi)=f\circ \phi (0) + \partial_{i} (f\circ \phi) (0) \cdot \dd x^i + \tfrac{1}{2} \partial_i \partial_j (f\circ \phi)(0) \cdot \dd x^i \cdot \dd x^j +\cdots \, , \nn
\end{align*}
which upon pulling back \textit{each} of cotangent vector generators $\{\dd x^i\}$ along the same chart
yields a composite algebra isomorphism
\begin{align}\label{JetAlgebraAsFormalCotangentPolynomials}
\xymatrix@C=1.8em@R=1.4em  { J^\infty_p(M)  \ar[rr]^{\sim_\phi} \ar[d]^{j^\infty \phi^*} && \mathrm{Sym}^\bullet (T^*_p M) 
	 \\ 
  \CO\big(\DD^d(\infty)\big) \ar[rr]^{\sim}  &&  \mathrm{Sym}^\bullet (T^*_0 \FR^m) \ar[u]^{\phi^*} \, .
}
\end{align}

Thus any choice of local chart $\phi$, interprets the (canonical jet) function algebra on $\DD_{p,M} \hookrightarrow M$ as formal power series on the tangent space
\begin{align*}
j^\infty_p f |_\phi \;\; : \;\; T_p M \;\;  &\longrightarrow \;\; \FR \\ 
h^i \partial_i|_{p} &\longmapsto f\circ \phi (0) + \partial_{i} (f\circ \phi) (0) \cdot h^i + \partial_i \partial_j (f\circ \phi)(0) \cdot h^i h^j +\cdots .
\end{align*}
Now, given an `action' functional $S:M\rightarrow \FR$ (or analogously its differential $\dd S : M \rightarrow T^*M$ etc.),
we may restrict to the infinitesimal neighborhood around $p\in M$

$$S\vert_{\DD_{p,M}}: \DD_{p,M} \hookrightarrow M \longrightarrow \FR \, ,$$
which, dually, corresponds canonically to the $\infty$-jet $j^\infty_p S\in J^\infty_p (M) $, via $\mathrm{Hom}_{\mathrm{CAlg}}\big(C^\infty(\FR),\, J^\infty_p(M)\big) \, \cong \, J^\infty_p(M)$. Thus, under a \textit{choice} of coordinate chart  around p, whereby $J^\infty_p M\cong_\phi \mathrm{Sym}^\bullet (T^*_p M)$ via \eqref{JetAlgebraAsFormalCotangentPolynomials}, this yields exactly 
the corresponding formal Taylor expansion
$$
Tayl_{p,\phi}(S) = S(p) + \sum_{|l|=0}^\infty \tfrac{1}{l!} \partial_{l}S (p) \cdot \dd x^l \, . 
$$
By duality, this may be seen as acting on tangent vectors $h= h^i\partial_i\vert_p \in T_p M$,
hence recovering precisely the formal expansion
$$
``S(p+h)''\equiv \mathrm{Tayl}_{p,\phi}(S)(h) := S(p) + \partial_{i} S(p) \cdot h^i + \partial_i \partial_j S(p) \cdot h^i h^j +\cdots 
$$
in the ``small perturbation'' $h\in T_p M$, with respect to the given chart.

\smallskip 
\begin{remark}[\bf Equivalent perturbative field theories]
Formulated in this rigorous setting, it is immediate that a choice of different chart $\psi$ gives an equivalent perturbative field theory, in the sense that there exists (by construction) an isomorphism of infinitesimal disks $j^\infty( \psi \circ \phi^{-1}) = j^\infty \psi \circ j^\infty \phi^{-1}$ intertwining the two restricted (i.e., perturbative) action functionals. In this sense, the abstract restriction to the infinitesimal neighborhood $S\vert_{\DD_{p,M}}:\DD_{p,M} \rightarrow \FR$ encodes perturbative field theory in a coordinate-invariant manner. 
\end{remark}

\newpage 
Notice, combining Lem. \ref{JetsofSections=InfinitesimalJets} and Lem. \ref{SyntheticInfinitesimalNeighborhoodOfManifold}, it follows that the set of $\infty$-order jets $J^\infty_p(F)$ of a fiber bundle $F\rightarrow M$ is in bijection with sections over the synthetic infinitesimal neighborhood
\[ 
\xymatrix@C=2pc @R=1pc { &&  y(F) \ar[d]^{\pi}
	\\ 
	\DD_{p,M} \ar@{^{(}->}[rr] \; \ar[rru]  && y(M)
}   
\]
as maps of thickened smooth sets. Formulated as above, the notion of $\infty$-jets of sections may be immediately extended to arbitrary `bundles' of
thickened smooth sets.
\begin{definition}[\bf $\infty$-jets of sections]\label{syntheticJets}
Let $\pi:\CG \rightarrow \CF$ be a map of thickened smooth sets and $p:*\rightarrow \CF$ a point in $\CF$. We define the set $J^\infty_p(\CG)$ of `$\infty$-jets of sections at p' as the set of sections of $\pi$ over the infinitesimal neighborhood at $p$.
That is, morphisms $\DD_{p,\CF} \rightarrow \CG$ of thickened smooth sets such that
\[ 
\xymatrix@C=2pc @R=1pc  { &&  \CG\ar[d]^{\pi}
	\\
	\DD_{p,\CF} \ar[rru] \ar@{^{(}->}[rr] && \CF
}   
\]
commutes. Suggestively, we denote the collection of jets over all points by $J^\infty(\CG)(*):= \underset{p\in M}{\bigcup}J^\infty_p(\CG) $.
\end{definition}
\begin{remark}[\bf Finite order synthetic neighborhoods]\label{FiniteOrderFormalDisks}
Finite order synthetic disks $\DD_{p,\CF}(k)$ can also be defined, for each $k\in \NN$, again as a pullback construction by employing a `$k$-order 
infinitesimal shape' functor $\frJ^k:\ThickenedSmoothSets \rightarrow \ThickenedSmoothSets$. Along the same lines, one may define $k$-jets of sections as sections over the 
$k$-order infinitesimal neighborhoods $\DD_{p,\CF}(k)\hookrightarrow \CF$, which again recover those over finite-dimensional manifolds. We focus on the infinite-order infinitesimal neighborhoods as they are the ones of most direct relevance to 
our field-theoretic perspective.
\end{remark}
Having formulated now the notion of $\infty$-jets at a point purely in terms of maps in thickened smooth sets, there is a natural thickened smooth set structure on $J^\infty_p(F)$
given as in \eqref{FieldSpaceAsPullback}, by considering $\FR^k\times \DD$-parametrized sections over $\DD_{p,M}$. 
This suggests that it is possible to further endow the full collection $\underset{p\in M }{\bigcup}J^\infty_p(F)$ of jets over of fiber a
bundle $F\rightarrow M$ with a thickened smooth structure,  by working entirely within $\ThickenedSmoothSets$. This is indeed the case and in fact applies in the generality of arbitrary bundles in $\ThickenedSmoothSets$ and their jets 
(Def. \ref{syntheticJets}), but that requires some further natural constructions which we include in the Appendix \cref{App-SyntheticJetProlongation}, since these are not necessary for the field theoretic purposes of this text.

\paragraph{\bf Infinitesimal neighborhoods of submanifolds.}
Let us instead step back into the more concrete setting of sections of finite-dimensional fiber bundles $F\rightarrow M$ and their jets along embedded submanifolds, a notion that naturally arises in field theory when considering ``initial conditions'' of fields along a submanifold (see e.g. \cite[\S 6.3]{GS23}).
Viewing any point $p:*\hookrightarrow M$ as an embedded submanifold, what we have shown in Lem. \ref{JetsofSections=InfinitesimalJets} and Lem. \ref{SyntheticInfinitesimalNeighborhoodOfManifold} may be phrased as follows: Sections of the jet bundle  $J^\infty F\rightarrow M$ over $p:*\hookrightarrow M$ are precisely sections of the original bundle $F\rightarrow$ over the infinitesimal neighborhood $\DD_{p,M}\hookrightarrow M$

\begin{align}\label{SectionsOverInfinitesimalNbdDiagram}
\left(\!\!\!\!
\raisebox{18pt}{
\xymatrix@R=1.4em@C=3em  { &&  J^\infty_M F \ar[d]^{}
	\\ 
	{*}  \ar[urr]^-{j^\infty_p \phi} 
    \ar@{^{(}->}[rr]^-{p} && M
}   
}
\!\right)
\quad 
\xleftrightarrow{\quad 1\mbox{-} 1\quad} 
\quad 
\left(\!\!\!\!
\raisebox{18pt}{
\xymatrix@R=1.4em@C=3em  { &&  F\ar[d]^{}
	\\ 
	\DD_{p,M} \ar[urr]^-{}  
    \ar@{^{(}->}[rr]^-{\iota_p} && M
}   
}
\!\!\right) \, .
\end{align}
This suggests that a similar equivalent description for sections of the jet bundle over \textit{any} embedded submanifold $\Sigma \hookrightarrow M$ should exist, for an appropriate notion of an ``infinitesimal neighborhood around $\Sigma$ in $M$''. However, 
working within $\ThickenedSmoothSets$, there is a natural generalization of Def. \ref{SyntheticInfinitesimalNeighborhood} to yield such a notion (cf. Def. \ref{InfinitesimalNeighborhoodOfDiagonal}).

\begin{definition}[\bf Synthetic infinitesimal neighborhood of submanifold]\label{SyntheticInfinitesimalNeighborhoodOfSubmanifold}
Let $\iota_\Sigma:\Sigma \hookrightarrow M$ be an embedding submanifold of $M$, viewed as  thickened smooth sets. The ($\infty$-order)  \textit{synthetic infinitesimal neighborhood} 
 $\DD_{\Sigma,M}$ of $\Sigma$ in $M$ is defined as the thickened smooth subspace given by the pullback
 \[
\xymatrix@C=1.8em@R=1.4em  { \DD_{\Sigma,M}  \ar[rr] \ar[d] && \Sigma \ar[d]^{ \eta_M \circ \iota_\Sigma} 
	 \\ 
  M \ar[rr]^{\eta_M}  &&  \frJ M \, .
}
\]
\end{definition}

Explicitly, this means that the $\FR^k\times \DD$-plots of $\DD_{\Sigma,M}$ are given by
\begin{align*} 
\DD_{\Sigma,M}(\FR^k\times \DD)&=
\Big\{ \big(\phi^k_{\epsi},\sigma^k_\epsi\big)\, \in \,  M(\FR^k\times \DD)\times \Sigma(\FR^k \times \DD)  
\, \big\vert \; \eta_M (\phi^k_{\epsi}) = \eta_M (\sigma^k_{\epsi})  \Big\} \nn \\
&\cong
\Big\{ \big(\phi^k_{\epsi},\sigma^k_\epsi\big)\, \in \,  M(\FR^k\times \DD)\times \Sigma(\FR^k \times \DD)  
\, \big\vert \; \phi^k_{\epsi=0} = \sigma^k_{\epsi=0} \in \Sigma(\FR^k)  \Big\} \nn \, .
\end{align*}
In words, the plots of $\DD_{\Sigma,M}$ are given by arbitrary plots  $\sigma^k_\epsi$ in $\Sigma$, paired with plots $\phi^k_\epsi$ of $M$ which agree with $\sigma^k_\epsi$ in their finite directions, but are free to probe $M$ in all their infinitesimally thickened directions! Of course, this is precisely what one would intuitively expect from a notion of an infinitesimal neighborhood of a subspace $\Sigma$ in $M$, with the left vertical map giving the expecting embedding
$$
\DD_{\Sigma,M} \longhookrightarrow M \, .
$$
Naturally, working explicitly with this thickened smooth set would be greatly facilitated by showing it is representable (from the outside) by the (formal dual) of some algebra $\DD_{\Sigma,M}(\infty) \in \mathrm{CAlg}^{\op}$. This is much like the case of the point, that is an algebra canonically associated to the pair $(\Sigma, M)$ with an appropriate nilpotency behavior ``along normal directions of $\Sigma$ in $M$''.

\begin{example}[\bf Infinitesimal neighborhood of line in plane]
As motivation for the abstract definition of these algebras, let us consider the toy example of a fiber bundle $F\rightarrow \FR^2$ over the plane and the canonical embedding along the $x$-axis $\iota : \FR^1_x\hookrightarrow \FR^{2}_{x,y}$. In this situation, we intuitively know what should be the infinitesimal neighborhood of $\FR^1_x$ in $\FR^2$, namely the thickened Cartesian subspace
$$
\DD_{\FR^1_x,\FR^2_{x,y}(k)}\,:=\, \FR^1_x \times \DD^1(k) \; \longhookrightarrow \; \FR^2 
$$
given dually by the projection (cf. Cor. \ref{HadamardOnManifoldCartesianProduct})
$$
C^\infty\big(\FR^2_{x,y}\big) \xrightarrow{\qquad} C^\infty\big(\FR^2_{x,y}\big) \big/ \big(y^{k+1}\big) 
\, \cong \,  C^\infty\big(\FR^1_x\big) \otimes \CO\big(\DD^1(k)\big) \, .
$$
Working precisely along the lines\footnote{The only difference being that the coefficients $\big\{c^j_i, c^j_{i_1 i_2}, \cdots, c^j_{i_1 \cdots i_k}\big\}$ appearing therein are now functions on $\FR^1_x$, rather than constants.} of Lem. \ref{JetsofSections=InfinitesimalJets} then yields a canonical bijection between sections of the jet bundle $J^k F\rightarrow \FR^2$ over $\FR^1_x\hookrightarrow \FR^2$ and sections of $F\rightarrow M$ over $\FR^1_x\times \DD^1(k)\hookrightarrow \FR^2$. Moreover, working along the lines of Lem. \ref{SyntheticInfinitesimalNeighborhoodOfManifold}, one can show that this represent the \textit{synthetic infinitesimal neighborhood of $\FR^1_x$} from Def. \ref{SyntheticInfinitesimalNeighborhoodOfSubmanifold}, i.e., 
$$
\DD_{\FR^1_x,\FR^2_{x,y}} \, \cong \, y\big( \DD_{\FR^1_x,\FR^2_{x,y}}(\infty) \big) \quad \in \quad \ThickenedSmoothSets \, .
$$
\end{example}

Now notice that the latter algebra $C^\infty(\FR^2_{x,y}) / (y^{k+1}) \cong C^\infty(\FR^1_x) \otimes \CO\big(\DD^1(k)\big)$ is also canonically identified with the algebra of $\textit{jets of functions on $\FR^2_{x,y}$ along $\FR^1_x$ }$
\begin{align*}
J^k_{\FR^1_{x}}(\FR^2):=  \Big\{j^k_{\FR^1_x} f=[f] \, \big{|}\, f \sim f' \in C^\infty(M) \iff \partial_{|I|} f |_{\FR^1_x} 
 = \partial_{|I|} f' |_{\FR^1_x} \hspace{0.3cm} \forall k\geq I \geq0 \Big\} \, ,
\end{align*}
where the derivatives are only along the normal $\FR^1_y$-axis in $\FR^2_{x,y}$. This formulation generalizes directly to yield the appropriate notion for an infinitesimal neighborhood along arbitrary \textit{embeddings} of submanifolds.
\begin{definition}[\bf Infinitesimal neighborhoods of submanifold]
\label{InfitesimalNbdofSubManifoldTraditional}
Let $\Sigma\hookrightarrow M$ be an embedded submanifold. For any $k\in \NN \cup \{\infty\}$, 

\noindent {\bf (i)} the \textit{$k$-order 
infinitesimal neighborhood around $\Sigma$ in $M$}, denoted by $\DD_{\Sigma,M}(k)$, is the formal dual of the jet algebra of $M$ along $\Sigma$
\begin{align*}
J^k_\Sigma(M):=  \Big\{j^k_p f=[f] \, \big{|}\, f \sim f' \in C^\infty(M) \iff \partial_{|I|} f |_{\Sigma} 
 = \partial_{|I|} f' |_{\Sigma} \hspace{0.3cm} \forall k\geq I \geq0 \Big\} \quad  \in \quad \mathrm{CAlg}_\FR \, ,
\end{align*}
with the partial derivatives ranging over \textit{normal} derivatives w.r.t. (any) adapted slice chart $\FR^\sigma\hookrightarrow \FR^{\sigma}\times \FR^{d-\sigma}$ for $\Sigma\hookrightarrow M$;

\noindent {\bf (ii)} the \textit{$\infty$-order neighborhood around $\Sigma$ in M} is given by the formal dual of the colimit of $k$-order neighborhoods in $\mathrm{CAlg}_\FR$, or dually as the limit of $k$-order infinitesimal neighborhoods in $\mathrm{CAlg}_\FR^{op}$
$$
\DD_{\Sigma,M}(\infty) \, := \, \lim_{\mathrm{CAlg}^{\op}} \DD_{\Sigma,M}(k) \;\; \in \;\; \mathrm{CAlg}^{\op} \, .
$$
\end{definition}

With this definition at hand, we have the following.

\begin{proposition}[\bf Sections over infinitesimal neighborhoods of submanifolds]\label{SectionsOverInfinitesimalNeighborhoodOfSumbanifolds}
Let $F\rightarrow M$ be a fiber bundle over an $d$-dimensional manifold $M\in \mathrm{Man}$ and $\Sigma\hookrightarrow M$ an embedding submanifold. Then: 

\noindent{\bf (i)}  There is a canonical isomorphism 
$$
\DD_{\Sigma,M} \cong y\big(\DD_{\Sigma,M}(\infty)\big)\, 
$$
between the synthetic and the traditional $\infty$-order infinitesimal neighborhoods around $\Sigma\hookrightarrow M$.

\noindent {\bf (ii)} 
Sections of the jet bundle $J^\infty_M F$ over $\Sigma\hookrightarrow M$ are in 1-1 correspondence with sections of $F\rightarrow M$ over the $\infty$-order infinitesimal neighborhood of $\Sigma$ in $M$, 

\[ 
\left(\!\!\!\!
\raisebox{18pt}{
\xymatrix@R=1.4em@C=3em  { &&  J^\infty_M F \ar[d]^{}
	\\ 
	\Sigma  \ar[urr]^-{j^\infty_\Sigma \phi} 
    \ar@{^{(}->}[rr]^-{} && M
}   
}
\!\right)
\quad 
\xleftrightarrow{\quad 1\mbox{-} 1\quad} 
\quad 
\left(\!\!\!\!
\raisebox{18pt}{
\xymatrix@R=1.4em@C=3em  { &&  F\ar[d]^{}
	\\ 
	\DD_{\Sigma,M} \ar[urr]^-{}  
    \ar@{^{(}->}[rr]^-{} && M
}   
}
\!\!\right) \, .
\]

\end{proposition}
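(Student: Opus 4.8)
The plan is to mirror the two-part structure of the statement: first establish the representability isomorphism (i), then deduce the section correspondence (ii) by gluing Lemma \ref{JetsofSections=InfinitesimalJets} together with (i) in families over $\Sigma$. For (i), I would follow the blueprint of Lemma \ref{SyntheticInfinitesimalNeighborhoodOfManifold} almost verbatim, but now working in adapted slice charts for the embedding $\iota_\Sigma:\Sigma\hookrightarrow M$. Concretely, fix an adapted chart $\FR^\sigma\times\FR^{d-\sigma}\xrightarrow{\sim}U\hookrightarrow M$ carrying $\Sigma\cap U$ to $\FR^\sigma\times\{0\}$, write coordinates $\{x^a\}_{a=1}^\sigma$ along $\Sigma$ and $\{y^i\}_{i=1}^{d-\sigma}$ normal to it. A plot $(\phi^k_\epsi,\sigma^k_\epsi)\in\DD_{\Sigma,M}(\FR^k\times\DD)$ has $\sigma^k_\epsi$ an arbitrary $(\FR^k\times\DD)$-plot of $\Sigma$ and $\phi^k_\epsi$ an $(\FR^k\times\DD)$-plot of $M$ with $\phi^k_{\epsi=0}=\sigma^k_{\epsi=0}$. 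Dually, $\phi^k_\epsi$ is an algebra map $C^\infty(M)\to C^\infty(\FR^k)\otimes\CO(\DD)$ which, after restricting along the chart (germs suffice, as in the proof of Lemma \ref{SyntheticInfinitesimalNeighborhoodOfManifold}) and using the coincidence-at-$\epsi=0$ condition, is forced to send each normal coordinate $y^i$ into the ideal of nilpotents $C^\infty(\FR^k)\otimes W\hookrightarrow C^\infty(\FR^k)\otimes\CO(\DD)$ — crucially, with the coefficients along $\{x^a\}$ now being functions on $\FR^k$ valued in $C^\infty(\Sigma\cap U)$ rather than constants. By Hadamard's Lemma (Lem. \ref{HadamardsLemma}) in the partial form of Cor. \ref{HadamardOnManifoldCartesianProduct}, such a map factors through the quotient $C^\infty(M)\to C^\infty(M)/I_\Sigma^{k+1}$, where $I_\Sigma$ is the vanishing ideal of $\Sigma$; for $k=\infty$ one passes to the limit, which by Prop. \ref{RAlgebraMapsAreCinfty} does not depend on the $\FR$- versus $C^\infty$-algebraic distinction. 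This exhibits a natural bijection between $\DD_{\Sigma,M}(\FR^k\times\DD)$ and $\Hom_{\mathrm{CAlg}_\FR}\!\big(J^\infty_\Sigma(M),\,C^\infty(\FR^k)\otimes\CO(\DD)\big)=y\big(\DD_{\Sigma,M}(\infty)\big)(\FR^k\times\DD)$, and functoriality in the probe is immediate as in Lemma \ref{SyntheticInfinitesimalNeighborhoodOfManifold}. A clean way to organize the chart-independence is to note that $J^k_\Sigma(M)$ is intrinsically the quotient $C^\infty(M)/I_\Sigma^{k+1}$ and run the whole argument with that presentation, invoking charts only to verify the nilpotency/width and the local product form $J^k_\Sigma(M)|_U\cong C^\infty(\Sigma\cap U)\otimes\CO(\DD^{d-\sigma}(k))$.

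For part (ii), I would first treat it in an adapted slice chart $U$ exactly as in Lemma \ref{JetsofSections=InfinitesimalJets}, the only change being (as the footnote to the line-in-plane example already signals) that the Taylor coefficients $\{c^j_i,\dots,c^j_{i_1\cdots i_k}\}$ are now smooth functions on $\Sigma\cap U$ rather than real numbers; this gives a local bijection between sections of $J^\infty_M F\to M$ over $\Sigma\cap U$ and maps $\DD_{\Sigma\cap U,U}\to F$ over $U$ covering the embedding. Then I would glue: pick a cover of $\Sigma$ by adapted charts, apply the local statement on each, and observe that the identifications are compatible on overlaps because both sides are (petit) sheaves over $\mathrm{Open}(\Sigma)$ — the jet-bundle-section side manifestly, and the synthetic-neighborhood side by Remark \ref{PetitSheafCharacterization} together with the pullback presentation of sections as in diagram \eqref{FieldSpaceAsPullback}. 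Alternatively, and more in the spirit of the paper, one can bypass explicit gluing by combining (i) with the synthetic adjunction: sections of $J^\infty_M F\to M$ over $\Sigma$ are, by the definition of the thickened jet bundle (Def. \ref{InfiniteJetBundleFormalSmoothLimit}) unwound through the finite-order approximations and the identification $J^k F|_U$ in an adapted chart, exactly maps $\DD_{\Sigma,M}\to F$ over $M$; here (i) is what lets one replace the abstract pullback $\DD_{\Sigma,M}$ by the representable $y(\DD_{\Sigma,M}(\infty))$ so that ``map over $M$'' becomes ``algebra map under $C^\infty(M)$'', making the correspondence with jet sections a matter of definition-chasing.

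The main obstacle I anticipate is not the chart-local computation — that is a routine upgrade of Lemma \ref{JetsofSections=InfinitesimalJets} with constants replaced by functions on $\Sigma$ — but rather the \emph{globalization over} $\Sigma$: verifying that the local data patch consistently and that the resulting object is genuinely $y\big(\DD_{\Sigma,M}(\infty)\big)$ and not merely locally so. The subtlety is that $I_\Sigma^{k+1}$ is a global ideal in $C^\infty(M)$ and one must check that its local description in adapted charts (generated by $(k{+}1)$-fold products of normal coordinates) is coordinate-independent and behaves well under the gluing of charts whose transition functions mix $\{x^a\}$ and $\{y^i\}$ — here one uses that the \emph{normal bundle} $N\Sigma$ is well-defined and that $I_\Sigma/I_\Sigma^2\cong\Gamma(N^*\Sigma)$ canonically, so that $J^k_\Sigma(M)$ is intrinsically the $k$-th order jet algebra of the normal directions, independent of the adapted chart. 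Once this intrinsic normal-bundle description of $J^k_\Sigma(M)$ and of $J^k F|_\Sigma$ is in place, both the representability in (i) and the section correspondence in (ii) follow by the same sheaf-theoretic and adjunction arguments used for the point case, with no genuinely new analytic input beyond Hadamard's Lemma; the cornered-manifold case goes through by the remark preceding the statement, invoking Prop. \ref{ManifoldsWithCornerstoAlgebras} and Thm. \ref{ManifoldsWithCornersEmbedIntoCahiers}.
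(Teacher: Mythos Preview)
Your proposal is correct and follows essentially the same approach as the paper: the paper's proof simply states that (i) follows as in Lemma \ref{SyntheticInfinitesimalNeighborhoodOfManifold} and (ii) follows as in Lemma \ref{JetsofSections=InfinitesimalJets} with minimal modifications, by working over adapted slice charts for $\Sigma\hookrightarrow M$ and trivializations of $F\to M$ while carrying along the new dependence of the coefficient functions along $\Sigma$. Your elaboration --- the explicit identification of the normal-coordinate nilpotency via the partial Hadamard Lemma, the replacement of constants by functions on $\Sigma\cap U$, and the petit-sheaf gluing over $\mathrm{Open}(\Sigma)$ --- is exactly the content the paper gestures at, and your discussion of the globalization subtlety (the intrinsic description of $J^k_\Sigma(M)$ via $I_\Sigma^{k+1}$ and the normal bundle) is a helpful unpacking rather than a different route.
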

\begin{proof}
The proof of $\bf(i)$ follows as that of Lem. \ref{SyntheticInfinitesimalNeighborhoodOfManifold}. The proof of $\bf(ii)$ follows as that of Lem. \ref{JetsofSections=InfinitesimalJets} with minimal modifications, by working over (adapted) covers of slice charts for $\Sigma \hookrightarrow M$ and trivialization of $F\rightarrow M$ and essentially carrying along the new dependence of the functions therein along $\Sigma$.
\end{proof}

This result finally puts into firm ground the intuitive notion that initial data of PDEs are literally certain\footnote{That is, those that factor through the prolongated shell of the corresponding differential operator $\CP:= P\circ j^\infty$, $S^\infty_P\hookrightarrow J^\infty_M F$, now formed as a thickened subspace within $\ThickenedSmoothSets$ (cf. \cite[Defs. 5.16, 6.29]{GS23}).} sections over an infinitesimal neighborhood of a submanifold (see, e.g., the discussion around \cite[Def. 6.28]{GS23}).

\newpage 
\section{Local 
Lagrangian field theory}
\label{Sec-FieldtheorySynth}
We have defined the very basic building blocks for local field theory as thickened smooth sets, namely, the field spaces $\CF:=\mathbold{\Gamma}_M (F)$ and the infinite jet bundle $J^\infty_M F$ etc., and recovered their tangent bundles via the synthetic tangent bundle construction $T= \big[\DD^1(1), -\big]$. All the constructions performed on $J^\infty_M F$ (cf. \cite{GS23}) that are relevant to field theory (cf. Rem. \ref{CaveatOnJetBundleForms}) can now be interpreted as taking place in $\ThickenedSmoothSets$, and we briefly review how this works in \cref{Sec-jetSyth}. 

\smallskip 
More importantly, however, working in our thickened setting allows for a fully solid consideration of the differential geometry on the actual field space, and its product with spacetime $\CF \times M$. That is achieved by showing how all relevant tangent bundles and the pushfoward maps of the prolongation $j^\infty$ and prolongated evaluation $ev^\infty$, which have previously only been defined -- at best -- in an ad-hoc manner (cf. \cite{GS23}), are now rigorously recovered as applications of the same \textit{synthetic tangent functor} of thickened smooth sets.

\smallskip 
Beyond putting these aspects of the theory on a fully solid ground, this bears the further fruit of recognizing the variational principle of local Lagrangian field theory, equivalently, as an intersection
of thickened smooth sets (Thm. \ref{FunctorialityOfTheThickenedCriticalSet}, Rem. \ref{CriticalityOnGeneralSpacetimes}). This means that the on-shell field space is naturally endowed with the correct infinitesimal structure, whose \textit{synthetic} tangent bundle rigorously recovers the (again ad-hoc) notion of ``Jacobi fields'' (Cor. \ref{OnshellSyntheticTangentBundle}). Lastly, it offers a prospect for further rigorously formalizing perturbative field theory as \textit{literally} the restriction to a (synthetic) infinitesimal neighborhood (Def. \ref{SyntheticInfinitesimalNeighborhood}) around a field configuration $\DD_\phi \hookrightarrow \CF$.

\subsection{Geometry of the infinite jet bundle}
\label{Sec-jetSyth}

\noindent {\bf Horizontal splitting.}  The discussion spelled out in 
\cite[\S 4.2]{GS23} mostly depends on universal properties of limits, and applies essentially verbatim in $\ThickenedSmoothSets$, as smooth manifolds embed fully faithfully (Ex. \ref{ManifoldasFormalsSmoothset}). Namely, there is a short exact sequence of thickened smooth sets
\begin{align}\label{TangentInfinityJetExactSequence} 0_{J^\infty_M F}\longrightarrow V J^\infty_M F\xhookrightarrow{\quad \quad} T J^\infty_M F 
\xrightarrow{\quad \quad} J^\infty_M F \times_{M} TM \longrightarrow 0_{J^\infty_M F}\, ,
\end{align}
where $V J^\infty_M F$
is the vertical subbundle  \cite[Def. 4.6]{GS23}, defined as the equalizer of $\dd \pi^\infty_M : TJ^\infty_M F \rightarrow TM$ (see \eqref{TangentJetBundleProjections}) and $0_M \circ \pi_M \circ  \dd \pi^\infty_M: TJ^\infty_M F\rightarrow TM$, or equivalently as the pullback
\begin{align}\label{VerticalJetbundleAsPullback} 
\xymatrix@=1.6em 
{ VJ^\infty_M F \ar[d] \ar[rr] &&  
TJ^\infty_M F\ar[d]^{\dd \pi^\infty_M } 
	\\ 
M \ar[rr]^{0_M} && TM
	\, . } 
\end{align}
Yet, equivalently, there is the following useful characterization:

\begin{lemma}[\bf Vertical jet bundle characterization]\label{VerticalJetBundleAsJetVerticalBundle}
The vertical subbundle of $TJ^\infty_M F$ is canonically identified with the infinite jet bundle of $VF\rightarrow F\rightarrow M$, i.e., 
$$
V(J^\infty_M F) \;  \cong  \; J^\infty_M (VF)\, ,
$$
as bundles of thickened smooth sets over $F$, and hence over $M$.
\end{lemma}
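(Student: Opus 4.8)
The plan is to establish the isomorphism $V(J^\infty_M F) \cong J^\infty_M(VF)$ by exploiting the universal (limit) characterizations of both sides together with the left-adjointness of the synthetic tangent functor. First I would recall that $J^\infty_M F := \lim_{\ThickenedSmoothSets} y(J^n_M F)$ (Def. \ref{InfiniteJetBundleFormalSmoothLimit}) and, by Lem. \ref{SyntheticInfiniteJetTangentBundle}, $T J^\infty_M F \cong \lim_{\ThickenedSmoothSets} y(T J^n_M F)$. Since $V J^\infty_M F$ is defined (via \eqref{VerticalJetbundleAsPullback}) as a pullback of $\dd\pi^\infty_M : TJ^\infty_M F \to TM$ along $0_M : M \to TM$, and since all the relevant projection maps $\dd\pi^\infty_n$ are compatible limit projections, I would first argue that $V(J^\infty_M F) \cong \lim_{\ThickenedSmoothSets} y(V J^n_M F)$: the pullback commutes with the limit because finite limits commute with limits in the topos $\ThickenedSmoothSets$, and the vertical subbundle $VJ^n_M F$ of each finite-order jet bundle is itself the corresponding pullback of $\dd\pi^n_M : TJ^n_M F \to TM$ along $0_M$.

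The second and parallel step is to identify the other side as the same limit. By definition $J^\infty_M(VF) = \lim_{\ThickenedSmoothSets} y(J^n_M(VF))$, so it suffices to produce, for each finite order $n$, a canonical isomorphism of manifolds (hence of representables) $V(J^n_M F) \cong J^n_M(VF)$ compatible with the projection maps. This is a classical fact in the traditional theory of jet bundles of fiber bundles — the vertical bundle of the $n$-th jet bundle of $F \to M$ coincides with the $n$-th jet bundle of the vertical bundle $VF \to M$ (it follows from the naturality of jet prolongation applied to the short exact sequence $VF \hookrightarrow TF \twoheadrightarrow M \times_M \ldots$, or more concretely by comparing adapted jet coordinates $u^a_I$ and their vertical variations $\dot u^a_I$). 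I would invoke this at the level of finite-dimensional manifolds (citing, e.g., the standard references used elsewhere in the paper such as \cite{An91} or \cite{Kr15}), then apply the fully faithful Yoneda embedding $y: \ThickenedSmoothManifolds \hookrightarrow \ThickenedSmoothSets$ of Ex. \ref{ManifoldasFormalsSmoothset} to transport these isomorphisms — and their compatibility with the tower of projections — into $\ThickenedSmoothSets$.

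Combining the two steps, both $V(J^\infty_M F)$ and $J^\infty_M(VF)$ are exhibited as the limit in $\ThickenedSmoothSets$ of the same diagram $\{y(V J^n_M F)\}_n \cong \{y(J^n_M(VF))\}_n$, hence are canonically isomorphic; and since all the maps in sight are compatible with the projections onto $y(F)$ (and thence onto $y(M)$), the isomorphism is one of bundles of thickened smooth sets over $F$ (and over $M$), as claimed. Alternatively — and this would be the cleaner route to present — one can argue entirely synthetically: using the left adjoint $\DD^1(1)\times(-) \dashv T(-)$ from \eqref{TangentFunctorHasLeftAdjoint}, the tangent functor preserves limits, so $T(\lim y(J^n_M F)) \cong \lim T y(J^n_M F)$, and the vertical subbundle, being a finite limit (equalizer/pullback) built from $T$ and the structure maps, likewise commutes past the defining limit of $J^\infty_M F$; then one only needs the finite-order identity $V(J^n_M F)\cong J^n_M(VF)$.

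The main obstacle I anticipate is the finite-order identification $V(J^n_M F) \cong J^n_M(VF)$ as bundles over $F$: while standard, making the compatibility with the projection tower $\{\pi^n_{n-1}\}$ fully explicit — so that the induced map on limits is an isomorphism of bundles over $y(F)$ rather than merely over $y(M)$ — requires a careful coordinate or naturality argument. The topos-theoretic parts (finite limits commuting with limits, $T$ preserving limits, Yoneda full faithfulness) are routine given the results already in the excerpt; the genuine content lies in that classical differential-geometric fact, which I would state carefully, give a short coordinate-level justification in adapted jet coordinates, and cite to the variational-bicomplex literature.
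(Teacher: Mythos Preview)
Your proposal is correct and follows essentially the same approach as the paper: first reduce $V(J^\infty_M F)$ to $\lim_{\ThickenedSmoothSets} y(VJ^n_M F)$ via the pullback characterization and commutation of limits, then invoke the classical finite-order identification $V(J^n_M F)\cong J^n_M(VF)$ in local jet coordinates (the paper cites \cite{Sardanashvily02} for this), and finally transport through the fully faithful Yoneda embedding to conclude. The paper is slightly terser about the compatibility with the projection tower (relying on the word ``canonically'' and the explicit coordinate map $(x^\mu, u^a_I, \dot{(u^a_I)}) \mapsto (x^\mu, (u^a)_I, (\dot{u}^a)_I)$), but your flagged ``obstacle'' is exactly the point where the paper's proof is most implicit.
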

\begin{proof}
The pullback characterization \eqref{VerticalJetbundleAsPullback} along with the limit property from Lem. \ref{SyntheticInfiniteJetTangentBundle} implies that 
\begin{align}\label{VerticalJetBundleAsLimit}
VJ^\infty_M F \cong \lim_{\ThickenedSmoothSets} VJ^n_M F\, ,
\end{align}
where each $VJ^n_M F\hookrightarrow TJ^n_M F$ is traditional (finite-dimensional) vertical tangent bundle of the $n^{\mathrm{th}}$-order jet bundle. At the finite-dimensional manifold level, it is a standard fact that
$$
J^n_M (VF) \cong V (J^n_M F)
$$
as bundles of manifolds over $F$, canonically, as can be immediately seen in local coordinates (see e.g.
\cite[(6.8)]{Sardanashvily02})
$$
\big(x^\mu,\, u^a_{I},\,  \dot{(u^a_{I})} \big) \longmapsto \big(x^\mu, \,(u^a)_{I},\,  (\dot{u}^a)_I\big)\, . 
$$
Carrying this isomorphism through the fully faithful embedding into $\ThickenedSmoothSets$ (Ex. \ref{ManifoldasFormalsSmoothset}), it follows that 
\begin{align*}
VJ^\infty_M F \;\cong \; \lim_{\ThickenedSmoothSets} VJ^n_M F  \cong  \lim_{\ThickenedSmoothSets} J^n_M VF =: J^\infty_M VF \, .
\end{align*}

\vspace{-4mm} 
\end{proof}

At this point, we recall the canonical smooth pushforward maps of manifolds (\cite[(76)]{GS23})
\begin{align*}
H^n \;:\; J^{n+1}_M F \times_M TM & \; \xrightarrow{\quad \quad}  \;T(J^n_M F) 
\\
\big(\,j^{n+1}_p \phi,\,  X_p\big) &\; \xmapsto{\quad \quad} \; 
\dd \big(j^n \phi \big)_p (X_p) \nn
\end{align*}
\noindent for each $n\in \NN$, where $\phi:U\subset  M\rightarrow F $ on the right-hand side is any representative local section of
$j^{n+1}_p \phi$. These fit into the commutative diagram of smooth manifolds 
\begin{align}\label{CartanDistribCompatibility}
\begin{gathered}
\xymatrix@R=.8em@C=2.6em  {J^{n+1}_M F\times_M TM \ar[dd]_{\pi^{n+1}_n\times \id} \ar[rr]^-{H^n} &&  
T\big(J^n_M F\big) 
\ar[dd]^{\dd \pi^{n}_{n-1} } 
	\\ \\
J^n_M F\times_M TM \ar[rr]^-{H^{n-1}}  && T\big(J^{n-1}_M F\big) 
\, , }
\end{gathered}
\end{align}
which we interpret as a commutative diagram of thickened smooth sets via Ex. \ref{ManifoldasFormalsSmoothset}. This is already sufficient to provide the splitting from $\cite[Prop. 4.8]{GS23}$, however now within $\ThickenedSmoothSets$.

\begin{proposition}[\bf Splitting of infinite jet tangent bundle]
\label{SmoothSplittingProp}
The family of smooth bundle maps $\big\{H^n:J^{n+1}_M F \times_M TM\longrightarrow  T(J^n_M F)\big\}_{n\in \NN}$ determines a map of thickened smooth sets   
\begin{align}\label{SplittingMap}
H\;:\; \,J^\infty_M F\times_{M} TM \; \xrightarrow{\quad \quad} \; T(J^\infty_M F)\, ,
\end{align}
which splits the corresponding exact sequence \eqref{TangentInfinityJetExactSequence}, and hence induces a canonical isomorphism of vector-bundle-thickened smooth sets
\begin{align*}
TJ^\infty_M F\;  \cong \;  VJ^\infty_M F \, \oplus \, H J^\infty_M F\, ,
\end{align*}
\noindent
over $J^\infty_M F$, where the plots of $H J^\infty_M F$ are defined by the (plot-wise) image of the map \eqref{SplittingMap}.
\begin{proof}
By the limit property 
of $T(J^\infty_M F)=\mathrm{lim}_n^{\ThickenedSmoothSets} T(J^n_M F)$ from Lem. \ref{SyntheticInfiniteJetTangentBundle} and commuting limits through the hom functor, we get
\vspace{-1mm} 
\begin{align*}
\mathrm{Hom}_{\ThickenedSmoothSets}\big(J^\infty_M F\times_{M} TM \, , \, T(J^\infty_M F)  \big) 
\; \cong \; 
\mathrm{lim}_n^{\mathrm{Set}}\,  \mathrm{Hom}_{\ThickenedSmoothSets}\big(J^\infty_M F\times_M TM\, , \,T(J^n_M F)  \big)\, . 
\end{align*}
Hence a map $f:J^\infty_M F\times_M TM \rightarrow T(J^\infty_M F) $ corresponds to a family of maps of thickened smooth sets
$\big\{f^n: J^\infty_M F\times_M TM \rightarrow T(J^n_M F)\big\}_{n\in \NN}$ such that $\dd \pi^n_{n-1} \circ f^n = f^{n-1}$, and vice-versa. 
Pulling back the maps from \eqref{CartanDistribCompatibility} along the canonical projections $\pi_{n+1}:J^\infty_M F\rightarrow J^{n+1}_M F$, we get a family
\vspace{0mm} 
$$
(\pi_{n+1}\times \id)^* H^n \;\;: \;\;
J^\infty_M F\times_M TM \; \longrightarrow \; J^{n+1}_M F\times_M TM \longrightarrow T(J^n_M F)
$$ 
which by the  
commutativity of diagram \eqref{CartanDistribCompatibility} satisfies 
\vspace{1mm} 
\begin{align*}
\dd\pi^n_{n-1}\circ \big(\pi_{n+1}\times \id)^*H^n&= \dd \pi^{n}_{n-1}\circ H^n \circ (\pi_{n+1}\circ \id) 
\\[-1pt] 
&= H^{n-1}\circ (\pi^{n+1}_n \times \id ) \circ (\pi_{n+1} \times \id) 
\\[-1pt]
&= H^{n-1}\circ (\pi_{n}\times \id)
\\[-1pt]
&= (\pi_n \times \id)^* H^{n-1} \, .
\end{align*}

\vspace{0mm} 
\noindent Thus the family $\big\{(\pi_{n+1}\times \id)^*H^n:J^\infty_M F\times_M TM \rightarrow T(J^n_M F)\big\}_{n\in \NN}$ uniquely corresponds to a 
map $H:J^\infty_M F \times_M TM \rightarrow T(J^\infty_M F)\, .$ Following the formal sum representation from \eqref{InfinityJetTangentPlotsFormalSum}, this map may be represented in local coordinates (plot-wise)
\begin{align}\label{InftyJetBundleHorizontalTangentCoordinate}
\bigg(s^k_\epsi \, ,\,  X^\mu \frac{\partial}{\partial x^\mu}\Big\vert_{\pi^\infty_M(s^k_\epsi)} \bigg) \; \longmapsto \;
X^\mu \bigg(\frac{\partial}{\partial x^\mu} \Big\vert_{s^k_\epsi} + \sum_{|I|=0}^{\infty} s^a_{I+\mu}
\cdot    \frac{\partial}{\partial u^a_I} \Big\vert_{s^k_\epsi}\bigg)\, ,
\end{align}
by which it easily follows that it splits the sequence \eqref{TangentInfinityJetExactSequence}  (cf. \cite[(78),(79)]{GS23}).
\end{proof}
\end{proposition}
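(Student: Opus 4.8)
The plan is to build $H$ by the universal property of the limits defining $J^\infty_M F$ and $T(J^\infty_M F)$, and then to verify the splitting plot-wise in local jet coordinates.

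First I would invoke Lem.~\ref{SyntheticInfiniteJetTangentBundle}, which gives $T(J^\infty_M F)\cong\lim_{\ThickenedSmoothSets}T(J^n_M F)$, together with the fact that $\Hom_{\ThickenedSmoothSets}(J^\infty_M F\times_M TM,-)$ commutes with this limit. Thus a map $H\colon J^\infty_M F\times_M TM\to T(J^\infty_M F)$ is exactly a tower $\{f^n\colon J^\infty_M F\times_M TM\to T(J^n_M F)\}_{n\in\NN}$ compatible with the projections $\dd\pi^n_{n-1}$. I would set $f^n:=(\pi_{n+1}\times\id)^{*}H^n$, namely the manifold map $H^n$ precomposed with the limit projection $\pi_{n+1}\colon J^\infty_M F\to J^{n+1}_M F$ (transported into $\ThickenedSmoothSets$ via Ex.~\ref{ManifoldasFormalsSmoothset}), and check the required relation $\dd\pi^n_{n-1}\circ f^n=f^{n-1}$ from the commutativity of the square \eqref{CartanDistribCompatibility} and the tower identities $\pi^{n+1}_n\circ\pi_{n+1}=\pi_n$. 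This produces $H$ as in \eqref{SplittingMap}.

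Next, to see that $H$ splits \eqref{TangentInfinityJetExactSequence}, I would read off its plot-wise action in local jet coordinates, obtaining the total-derivative lift \eqref{InftyJetBundleHorizontalTangentCoordinate}: a plot $(s^k_\epsi,\,X^\mu\,\partial/\partial x^\mu)$ of $J^\infty_M F\times_M TM$ maps to $X^\mu\big(\partial/\partial x^\mu|_{s^k_\epsi}+\sum_{|I|=0}^{\infty}s^a_{I+\mu}\,\partial/\partial u^a_I|_{s^k_\epsi}\big)$. Composing with the bundle epimorphism $p=(\pi^\infty_M,\dd\pi^\infty_M)\colon TJ^\infty_M F\to J^\infty_M F\times_M TM$ manifestly returns $(s^k_\epsi,\,X^\mu\,\partial/\partial x^\mu)$, since $p$ only retains the $\partial/\partial x^\mu$-component; hence $p\circ H=\id$, so $H$ is a section of $p$. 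In particular $H$ is a plot-wise monomorphism, so $HJ^\infty_M F:=\mathrm{Im}(H)$ is a sub-presheaf of $TJ^\infty_M F$, and since it is the image of the idempotent $H\circ p$ on the sheaf $TJ^\infty_M F$ (equivalently, the equalizer of $H\circ p$ and $\id$) it is a retract of a sheaf, hence itself a thickened smooth set with a natural embedding $HJ^\infty_M F\hookrightarrow TJ^\infty_M F$.

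Finally, applying the splitting lemma plot-wise --- where \eqref{TangentInfinityJetExactSequence} becomes a short exact sequence of $C^\infty(\FR^k)\otimes\CO(\DD)$-modules and $H$ is $\FR$-linear (linear in $X^\mu$ and additive by the formula above) --- yields the canonical decomposition $TJ^\infty_M F\cong VJ^\infty_M F\oplus HJ^\infty_M F$ of $\FR$-module objects over $J^\infty_M F$: every tangent plot \eqref{InfinityJetTangentPlotsFormalSum} decomposes uniquely as its vertical part plus $H\big(p(-)\big)$. The tower-compatibility bookkeeping and the coordinate computation are routine; the one point that deserves care is the sheaf-theoretic status of the image $HJ^\infty_M F$, which is settled by the observation that a split monomorphism between sheaves has a sheaf as image, being the splitting of an idempotent in the topos $\ThickenedSmoothSets$.
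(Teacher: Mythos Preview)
Your proposal is correct and follows essentially the same route as the paper: construct $H$ via the limit characterization of $T(J^\infty_M F)$ from Lem.~\ref{SyntheticInfiniteJetTangentBundle} by assembling the family $(\pi_{n+1}\times\id)^*H^n$, verify compatibility via \eqref{CartanDistribCompatibility}, and read off the splitting from the local coordinate formula \eqref{InftyJetBundleHorizontalTangentCoordinate}. You in fact supply more detail than the paper on why $HJ^\infty_M F$ is a sheaf and on the plot-wise splitting lemma, points the paper leaves implicit by simply citing \cite[(78),(79)]{GS23}.
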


Finally, let us also fully characterize the synthetic pushforward \eqref{SyntheticPushforward} 
$$ \dd( j^\infty \phi) \;:\; TM \xrightarrow{\quad \quad} T J^\infty_M F 
$$ 
along the infinite jet prolongation  $j^\infty \phi \,: \, M \rightarrow J^\infty_M F$ of any field configuration $\phi\in \CF(*)$. 

\begin{corollary}[\bf Synthetic pushfoward of jet prolongated section]\label{SyntheticPushfowardOfJetProlongatedSection}
$\,$

\noindent {\bf (i)} 
The synthetic pushforward of the infinite jet prolongation $j^\infty \phi : M \rightarrow F$ of a field configuration $\phi \in \Gamma_M (F)$ is equivalently represented by the family of (traditional) pushforwards 
$$
\big\{ \dd j^n \phi : TM \xrightarrow{\quad \quad} TJ^n_M F\big\}_{n\in \NN}
$$
of smooth manifolds.

\noindent {\bf (ii)} 
Furthermore, it is equivalently given by the composition
$$
H\circ (j^\infty \phi , \, \id_{TM}) \circ (\pi_M, \, \id_{TM}) 
\quad : \quad 
TM \longrightarrow J^\infty_M F \times_M TM \xrightarrow{\quad \quad}  HJ^\infty_M F \hookrightarrow TJ^\infty_M F\, ,
$$
and hence in particular factors through the horizontal subbundle of $TJ^\infty_M F$. 

\noindent {\bf (iii)} 
It follows that this is represented in local coordinates as 
$$ X^\mu \frac{\partial}{\partial x^\mu}\Big\vert_{p}  \; \longmapsto \;
X^\mu \bigg(\frac{\partial}{\partial x^\mu} \Big\vert_{j^\infty_p \phi} + \sum_{|I|=0}^{\infty} \frac{\partial \phi^a} {\partial x^{I+\mu}} (p)
\cdot    \frac{\partial}{\partial u^a_I} \Big\vert_{j^\infty_p \phi}\bigg)\, ,
$$
and similarly on arbitrary plots (cf. \eqref{InftyJetBundleHorizontalTangentCoordinate}).
\end{corollary}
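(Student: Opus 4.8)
The plan is to reduce everything to the finite-order case by exploiting the two limit presentations $J^\infty_M F = \lim_{\ThickenedSmoothSets} y(J^n_M F)$ (Def. \ref{InfiniteJetBundleFormalSmoothLimit}) and $TJ^\infty_M F \cong \lim_{\ThickenedSmoothSets} y(TJ^n_M F)$ (Lem. \ref{SyntheticInfiniteJetTangentBundle}), together with the fact that the synthetic tangent functor $T = [\DD^1(1),-]$ preserves these limits, having the left adjoint $\DD^1(1)\times(-)$ (see \eqref{TangentFunctorHasLeftAdjoint}). Throughout I identify $T(y(M))\cong y(TM)$ and, more generally, $T(y(N))\cong y(TN)$ for finite-dimensional manifolds via Prop. \ref{TangBundlesCoincide}.

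For {\bf (i)}, I would observe that the prolongation $j^\infty\phi : M \to J^\infty_M F$ is, by construction, the map into the limit determined by the compatible family of traditional finite-order prolongations $\{j^n\phi : M \to J^n_M F\}_{n\in\NN}$, i.e. $\pi^n_{n-1}\circ j^n\phi = j^{n-1}\phi$. Applying $T$ and commuting it through the limit, the synthetic pushforward $\dd(j^\infty\phi) := T(j^\infty\phi)$ corresponds, under the universal property of $\lim_{\ThickenedSmoothSets} y(TJ^n_M F)$, precisely to the family $\{T(j^n\phi) : TM \to TJ^n_M F\}_{n\in\NN}$; and by \eqref{SyntheticPushforwardRecoversTraditional} each $T(j^n\phi)$ is the classical pushforward $\dd j^n\phi$ of smooth manifolds, which is the assertion.

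For {\bf (ii)}, recall from Prop. \ref{SmoothSplittingProp} that the splitting $H$ corresponds, under the same limit, to the compatible family $\{(\pi_{n+1}\times\id)^*H^n : J^\infty_M F\times_M TM \to TJ^n_M F\}_{n\in\NN}$. Precomposing with $(j^\infty\phi,\id_{TM})\circ(\pi_M,\id_{TM})$ — where the latter is just the canonical identification $TM\cong M\times_M TM$, after which $(j^\infty\phi,\id_{TM})$ lands in the fibered product $J^\infty_M F\times_M TM$ because $j^\infty\phi$ covers $\id_M$ — and using $\pi_{n+1}\circ j^\infty\phi = j^{n+1}\phi$, the $n$-th component of the right-hand side of (ii) becomes $H^n\circ (j^{n+1}\phi,\id_{TM})\circ(\pi_M,\id_{TM}) : TM \to TJ^n_M F$. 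By the defining formula $H^n(j^{n+1}_p\phi, X_p) = \dd(j^n\phi)_p(X_p)$, this composite is exactly $\dd j^n\phi$, which by part (i) is the $n$-th component of $\dd(j^\infty\phi)$. Since a morphism into a limit is determined by its components, the two sides of (ii) coincide; and the factorization through $HJ^\infty_M F$ is then immediate, as $HJ^\infty_M F$ is by definition the plot-wise image of $H$. Part {\bf (iii)} follows by substituting the prolongation plot $j^\infty\phi$ (or an arbitrary $\FR^k\times\DD$-plot thereof) into the local coordinate expression \eqref{InftyJetBundleHorizontalTangentCoordinate} for $H$: its jet coordinates are the iterated derivatives $s^a_I = \partial\phi^a/\partial x^I$, so $s^a_{I+\mu} = \partial\phi^a/\partial x^{I+\mu}$, and the stated formula drops out.

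The argument is almost entirely a chase through universal properties and functoriality of $T$; the only genuinely delicate point is the bookkeeping in {\bf (ii)} — keeping the various fibered products over $M$ and the restriction maps $(\pi_{n+1}\times\id)^*(-)$ straight, and invoking at the right moment that $H^n$ is \emph{defined} precisely so that $H^n\circ(j^{n+1}\phi,\id_{TM})$ equals the traditional pushforward $\dd j^n\phi$ (which is well-defined independently of the choice of local representative section, since $j^n\phi$ is a global section of $J^n_M F\to M$). Once this identification is in hand, parts {\bf (i)} and {\bf (iii)} require no further work.
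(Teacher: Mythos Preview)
Your proof is correct and follows essentially the same approach as the paper: reduce to the finite-order case via the limit characterizations (Lem.~\ref{SyntheticInfiniteJetTangentBundle} and \eqref{SyntheticPushforwardRecoversTraditional}) for {\bf (i)}, then invoke the defining property of the $H^n$ (that $H^n(j^{n+1}_p\phi, X_p) = \dd(j^n\phi)_p(X_p)$) component-wise for {\bf (ii)}, from which {\bf (iii)} is read off via \eqref{InftyJetBundleHorizontalTangentCoordinate}. You have simply spelled out in more detail what the paper compresses into two sentences.
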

\begin{proof}
The first claim follows immediately by limit characterization of $TJ^\infty_M F$ from Lem. \ref{SyntheticInfiniteJetTangentBundle}, and the fact that the synthetic pushforward along maps of
finite-dimensional manifolds recovers the traditional pushfoward between their tangent bundles \eqref{SyntheticPushforwardRecoversTraditional}. The second claim follows immediately by the definition of the splitting map \eqref{SplittingMap} (cf. \eqref{CartanDistribCompatibility}), which also yields the local coordinate representation by \eqref{InftyJetBundleHorizontalTangentCoordinate}.
\end{proof}

\medskip
\noindent {\bf Forms and vector fields on the infinite jet bundle.}
 Differential forms on $J^\infty_M F \in \ThickenedSmoothSets$ are defined analogously as maps of thickened smooth sets out of its (synthetic) tangent bundle. The subalgebra of global finite order forms (cf. Rem. \ref{CaveatOnJetBundleForms}) is identified precisely with the corresponding notion appearing within smooth sets (\cite[\S 4.3]{GS23}), and hence the corresponding Cartan calculus and cohomology results follow verbatim. For a more expansive discussion of the standard concepts recalled below, and their relation to the traditional literature, the interested reader may consult \cite{GS23}.

\begin{definition}[\bf Forms on the infinite jet bundle]\label{JetBundleDifferentialForms}
The set of differential m-forms on the infinite jet bundle is defined as
\begin{align}\Omega^m(J^\infty_M F) \; := \;  \mathrm{Hom}^{\mathrm{fib.lin. an.}}_{\SmoothSets}
\big(T^{\times m}(J^\infty_M F)\,,\, \FR \big) \, , 
\end{align} 
i.e., real-valued, fiber-wise $\FR$-linear antisymmetric maps of thickened smooth sets.
For $p+q=m$, a differential $m$-form $\omega\in \Omega^{m}(J^\infty_M F)$ on the infinite jet bundle is called:
\begin{itemize} 
\item[{\bf (i)}] $p$-\textit{horizontal} if the following restriction is the zero map
\vspace{0mm} 
$$
\om \;: \;  V^{\times p} (J^\infty_M F) \times_{J^\infty_M F} T^{\times q}(J^\infty_M F) \; \longhookrightarrow \; T^{\times m}(J^\infty_M F) \; \longrightarrow \; \FR \;;
$$

\item[{\bf (ii)}] $q$-\textit{vertical} if the following restriction is the zero map
\vspace{0mm} 
$$
\om\; : \; H^{\times q}(J^\infty_M F) \times_{J^\infty_M F} T^{\times p}(J^\infty_M F)\; \longhookrightarrow \; T(J^\infty_M F)
\; \longrightarrow \; \FR \;;
$$ 

\item[{\bf (iii)}] a $(p,q)$-form if it is both $p$-horizontal \textit{and} $q$-vertical.
\end{itemize} 
\end{definition}


\begin{remark}[\bf Point-set characterization of forms on $J^\infty_M F$]\label{CaveatOnJetBundleForms} 
Recall that smooth functions $J^\infty_M F \rightarrow \FR$, defined within pure smooth sets $\SmoothSets$, are completely characterized via their point-set description as those of locally finite order, due to the fully faithful embedding $\FrechetManifolds\hookrightarrow \SmoothSets$ (\cite[Prop. 3.7, Rem. 3.8]{GS23}) and the natural Fr\'{e}chet (locally pro-) manifold structure on $J^\infty_M F$. In particular, these include those of globally finite order, which are the only ones appearing in examples from field theory. 
\begin{itemize} 
\item[\bf (i)]On abstract grounds, there exists a fully faithful functor\footnote{This is given by the left Kan extension along $\iota : \CartesianSpaces \hookrightarrow \ThickenedCartesianSpaces$ from \eqref{CartToFCart}, composed with sheafication $L: \mathrm{PreSh}(\ThickenedCartesianSpaces) \rightarrow \mathrm{Sh}(\ThickenedCartesianSpaces)$.} $\iota_! : \SmoothSets \hookrightarrow \ThickenedSmoothSets $, which hence fully faithfully embeds also Fr\'{e}chet manifolds into thickened smooth manifolds and so produces an -- a-priori different -- version of the thickened infinite jet bundle $\iota_! \big( \lim_{\SmoothSets} J^k_M F \big)$.

The fact that this version of the jet bundle is the same as that from Def. \ref{InfiniteJetBundleFormalSmoothLimit}, i.e., that passage from the category of smooth Fr{\'e}chet manifolds to the \emph{Cahiers topos} of \emph{thickened smooth sets} 
 preserves the projective limits that define infinite jet bundles, is proved in \cite{GKSS25}. This implies that smooth functions on the jet bundle defined in $\ThickenedSmoothSets$, and similarly differential forms on it, are again fully characterized as being exhausted by the point-set \textit{locally} finite order functions and differential forms on $J^\infty_M F$.
 
\item[\bf (ii)] Nevertheless, both for our purposes here and for the sake of exposition, the weaker statement that $\mathrm{Hom}_{\ThickenedSmoothSets}(J^\infty_M F, \, \FR) $ (injectively) includes the subalgebra of \textit{globally} finite order functions
$$
C^\infty_{\mathrm{glb}}(J^\infty_M F, \, \FR)  \xhookrightarrow{\quad \quad}  \mathrm{Hom}_{\ThickenedSmoothSets}(J^\infty_M F, \, \FR)\, ,
$$
is sufficient. But this is obviously true, precisely as in $\SmoothSets$, by pulling back finite order functions $J^n_M F \rightarrow \FR$ along the canonical projections $\pi^\infty_n: J^\infty_M F\rightarrow J^n_M F$, now interpreted as maps of thickened smooth sets (cf. Def. \ref{InfiniteJetBundleFormalSmoothLimit}, Ex. \ref{ManifoldsAsThickenedSmoothSets}). Analogously, the same applies for differential forms defined within thickened smooth sets, but focusing on the globally finite order forms 
$$
\Omega^m_{\mathrm{glb}}(J^\infty_M F)\xhookrightarrow{\quad \quad} \Omega^m (J^\infty_M F) \, . 
$$
\item[\bf (iii)] From now onwards, in order to ease the descriptive adjectives, we will often abusively speak of `(smooth) $m$-forms and functions' on the thickened infinite jet bundle to mean only those of globally finite order, bearing in mind that in reality these form the larger set of \textit{locally} finite order forms.
\end{itemize} 
\end{remark}

Henceforth, we shall focus on globally finite order forms on $J^\infty_M F$. Any such $m$-form 
$$
\om\, := (\pi^\infty_n)^*\om_n \, \; : \; T^{\times m} (J^\infty_M F) \xlongrightarrow{\, \dd \pi^\infty_n\, } T^{\times m} (J^n_M F) \xrightarrow{\, \, \om_n\, \,} \FR
$$
is represented in local coordinates (of $J^n_M F$) as
\begin{align}\label{mformInfinityJetLocalCoordinates}
\om = \sum_{p+q=m} \, \sum_{I_1,\cdots, I_p=0}^{n} \om_{\mu_1\cdots \mu_p a_1 \cdots a_q}^{I_1\dots I_q} 
\dd x^{\mu_1}\wedge \cdots \wedge \dd x^{\mu_p}\wedge \dd u^{a_1}_{I_1}\wedge\cdots \wedge \dd u^{a_q}_{I_q}\, ,
\end{align}
due to the fully faithful embedding from Ex. \ref{ManifoldsAsThickenedSmoothSets}. Already by this local representation, the wedge product structure 
$$
\wedge\;\; :\;\;  \Omega^m(J^\infty_M F) \; \times \; \Omega^{m'} (J^\infty_M F) \xrightarrow{\quad \quad} \Omega^{m+m'} (J^\infty_M F )
$$
becomes immediate, and similarly for the de Rham differential 
$$
\dd \; :\; \Omega^m(J^\infty_M F) \xrightarrow{\quad \quad} \Omega^{m+1}(J^\infty_M F)
$$ 
acting locally as 
$$
\dd \om \, = \, \bigg(\dd x^\mu \wedge  \frac{\partial}{\partial x^\mu}+
\sum_{|I|=0}^{\infty} \dd u^a_I \wedge  \frac{\partial}{\partial u^a_I}\bigg) \om  \, . 
$$
\begin{remark}[\bf Forms on $J^\infty_M F$ via the moduli space]\label{FormsOnJetBundleViaModuliSpace}
Note that the proof of \cite[Lem. 4.15]{GS23} holds identically in the thickened smooth setting, exactly by the same formal reasons, thus yielding a canonical DGCA injection
$$
\Omega^\bullet_{\mathrm{glb}}(J^\infty_M F) \longhookrightarrow \Omega^\bullet_{\mathrm{dR}}(J^\infty_M F) 
$$
into the de Rham forms \eqref{deRhamFormsOnThickenedSmoothSet} of the infinite jet bundle.
\end{remark}

\begin{definition}[\bf Vector fields on infinite jet bundle]\label{VectorFieldsOnJetBundle}
The set of (smooth, thickened) vector fields on $J^\infty_M F$ is defined as
$$\CX(J^\infty_M F) := \Gamma_{J^\infty_M F} (TJ^\infty_M F)\, ,$$
that is, as sections of its tangent bundle
\[ 
\xymatrix@R=1.7em@C=3em{ &&  TJ^\infty_M F  \ar[d]
	\\ 
	J^\infty_M F \ar[rru]^{X} \ar[rr]^>>>>>>>{\id_{J^\infty_M F}} && J^\infty_M F \, 
}   
\]
within $\ThickenedSmoothSets$. 
\end{definition}

Let us expand on the definition. By the limit property of $TJ^\infty_M F$ from Lem. \ref{SyntheticInfiniteJetTangentBundle}, a vector field $X\in \CX(J^\infty_M F)$ is equivalently given by a 
compatible family of maps $\{X: J^\infty_M F\rightarrow TJ^n_M F\}_{k\in \NN}$, such that the diagrams
	\[ 
\xymatrix@R=1.4em@C=3em{ &&  TJ^n_M F \ar[d]^{\dd \pi^n_{n-1}}
	\\ 
	J^\infty_M F \ar[rru]^{X^n} \ar[rr]^{\qquad X^{n-1}} && TJ^{n-1}_M F 
    \mathrlap{\;,}
}    
\qquad 
\xymatrix@R=1.4em@C=3em{ &&  TJ^n_M F \ar[d]
	\\ 
	J^\infty_M F \ar[rru]^{X^n} \ar[rr]^{\qquad  \pi^{\infty}_{n}} && J^n_M F
}    
\]
\noindent commute for all $n\in \NN$.

\smallskip 
Note Rem. \ref{CaveatOnJetBundleForms} applies verbatim for each member $X^n$ of such a family. As with differential forms, even if these are exhausted by those of locally finite order, we focus on the physically relevant case of globally finite order vector fields
$$
\CX_{\mathrm{glb}}(J^\infty_M F) \subset \CX(J^\infty_M F)\, ,
$$ where each of the vector fields in the family is actually of finite order, i.e., a pullback 
$$ 
X^n \, \equiv  \, X^n_{j_n} \circ \pi^\infty_n \;\; : \;\; J^\infty_M F \longrightarrow J^{j_n}_M F \xrightarrow{\quad \quad} TJ^n_M F \, . 
$$
In this case, by the fully faithful embedding of Ex. \ref{ManifoldsAsThickenedSmoothSets}, such a family 
$\big\{X^n: J^\infty_M F\rightarrow T(J^n_M F)\big\}_{n\in \NN} $ 
may be represented in local coordinates by an infinite (formal) sum
\begin{align}\label{InftyJetBundleVectorFieldCoordinates}
X= X^\mu \frac{\partial}{\partial x^\mu} + \sum_{|I|=0}^\infty Y^a_I \frac{\partial}{\partial u_I^a} \, , 
\end{align} 
\noindent for an infinite list of finite order smooth functions $\big\{X^\mu,\{Y_I^a\}_{0\leq |I|}\big\}\subset C^\infty (J^\infty_M F)$, 
with every $X^n$ corresponding to the case where the sum is terminated at order $|I|=k$ (cf.  \cite[(67)]{GS23}). It follows directly (e.g., \cite[Lem. 4.5]{GS23}) that vector fields are in bijection with derivations of functions on $J^\infty_M F$
$$
\CX(J^\infty_M F)\; \cong \;  \mathrm{Der}\big(C^\infty(J^\infty_M F)\big)\, .
$$ 
Similarly, it follows that (cf.  \cite[Rem 4.12]{GS23})
$$
\Omega^1(J^\infty_M F)
\;\; \cong \;\; \mathrm{Hom}_{C^\infty(J^\infty_M F)\mbox{-}\mathrm{Mod}}\Big(\CX(J^\infty_M F) \, , \, C^\infty(J^\infty_M F) \Big) \, 
$$ 
as modules over $C^\infty(J^\infty_M F)$, and hence further that
\begin{align}
\label{FormsInfinityJetBundleAsVectorFieldMaps}
\Omega^m(J^\infty_M F)\; &\cong \;  \mathrm{Hom}_{C^\infty(J^\infty_M F)\mbox{-}\mathrm{Mod}}^{\mathrm{antis.}}\Big(\CX^{\times m}(J^\infty_M F) \, 
, \, C^\infty(J^\infty_M F) \Big) \,  \  
\\
\; &\cong  \;  \bigwedge^{m}_{C^\infty(J^\infty_M F)} \Omega^1(J^\infty_M F) \, . \nn
\end{align}
The isomorphisms here are witnessed via the contraction operator 
\begin{align*}
\iota_X \, : \,\Omega^1(J^\infty_M F) &\xrightarrow{\quad \quad} C^\infty(J^\infty_M F) \\
\om &\xmapsto{\quad \quad} \om \circ X
\end{align*}
whose form in local coordinates follows exactly as in finite-dimensional manifolds using \eqref{mformInfinityJetLocalCoordinates} and \eqref{InftyJetBundleVectorFieldCoordinates}, and which extends as a (graded) derivation to a map
$$
\iota_X\; :\; \Omega^m(J^\infty_M F) \xrightarrow{\quad \quad} \Omega^{m-1} (J^\infty_M F) \, .
$$
The Lie derivative operator is then defined as
$$
\mathbb{L}_X \, := \, [\dd, \iota_X] \; : \; \Omega^{m}(J^\infty_M F) \xrightarrow{\quad \quad} \Omega^{m}(J^\infty_M F) \, ,
$$
whose form in local coordinates is an infinite-dimensional manifold, and which satisfies the corresponding Cartan calculus identities.

\medskip
\noindent {\bf Variational bicomplex.} 
 By the same arguments as in  \cite[\S 5]{GS23}, the bicomplex of globally (or locally) finite order forms may now be seen as arising completely within thickened smooth sets, along with the all-important corresponding results on vertical and horizontal cohomologies.
In more detail, by the splitting of Prop. \ref{SmoothSplittingProp}, the module of vector fields splits into vertical and horizontal parts
$$
\CX(J^\infty_M F) \; \cong \;
\CX_H(J^\infty_M F)\, \oplus \, \CX_V(J^\infty_M F)\, ,
$$
which in the local coordinate representation of \eqref{InftyJetBundleVectorFieldCoordinates} translates to the splitting of any vector field $X=X_H+ X_V$ with 
\begin{align}\label{HorizontalVerticalVectorFieldCoordinates}
X_H= X^\mu \bigg( \frac{\partial}{\partial x^\mu} + \sum_{|I|=0}^{\infty} u^{a}_{I+\mu} \frac{\partial}{\partial u^a_I} \bigg) \, , 
\qquad 
X_V=\sum_{|I|=0}^{\infty}\big(Y_I^a - X^\mu \cdot u^a_{I+\mu}\big) \cdot \frac{\partial}{\partial u^a_I}  \,.
\end{align}

Similarly, by the identification of $m$-forms from \eqref{FormsInfinityJetBundleAsVectorFieldMaps}, it follows that (globally finite order) 1-forms split into horizontal and vertical parts $\big( (1,0)-$ and $(0,1)-$forms respectively, as per Def. \ref{JetBundleDifferentialForms}$\big)$
\begin{align*}
\Omega^1(J^\infty_M F) \; \cong \;  \Omega^1_H(J^\infty_M F)\oplus \Omega^1_V(J^\infty_M F)
\end{align*}
which in the local coordinate representation from \eqref{mformInfinityJetLocalCoordinates} translates to the splitting of any $1$-form $\om=\om_H+\om_V$ with
\begin{align}
\label{HorizontalVertical1formLocalCoordinates}
\om_H= (\om_H)_\mu \cdot \dd x^\mu \, , \hspace{1cm}  \om_V = \sum_{|I|=0}^{n} (\om_V)_a^{I} \cdot 
(\dd u^a_I - u^{a}_{I+\mu} \dd x^\mu)=:\sum_{|I|=0}^{n} (\om_V)_a^{I} \cdot \theta^a_I \, .
\end{align}
This further implies that the differential forms $\Omega^\bullet_{\glb}(J^\infty_M F)$ carry a bialgebra structure, as they split into the sum of $(p,q)$-forms (Def. \ref{JetBundleDifferentialForms})
\begin{align*}
\Omega^\bullet(J^\infty_M F)\; \cong \;
\Omega^{\bullet,\bullet}(J^\infty_M F)\; := \; 
\bigoplus_{m\in \NN} \bigoplus_{p+q=m} \Omega^{p,q}(J^\infty_M F )\, ,
\end{align*}
where in local coordinates a $(p,q)$-form $\om \in \Omega^{p,q}(J^\infty_M F)$ takes the form
\vspace{-1mm} 
\begin{align}\label{pqformInfinityJetLocalCoordinates}
\om &= \sum_{I_1,\cdots, I_p=0}^{n} \om_{\mu_1\cdots \mu_p a_1 \cdots a_q}^{I_1\dots I_q} \dd x^{\mu_1}\wedge \cdots \wedge \dd x^{\mu_p}\wedge 
\theta^{a_1}_{I_1}\wedge\cdots \wedge  \theta^{a_q}_{I_q}\, . 
\end{align}
By appropriately projecting into the corresponding subspaces (\cite[Def 5.3]{GS23}), the de Rham differential splits
into a horizontal and vertical differential
$$
\dd = \dd_H + \dd_V 
$$
where 
\begin{align*}
\dd_H: \Omega^{p,q}(J^\infty_M F)\xrightarrow{\quad \quad}  \Omega^{p+1,q}(J^\infty_M F), \hspace{1.5cm} \dd_V:
\Omega^{p,q}(J^\infty_M F)\xrightarrow{\quad \quad} \Omega^{p,q+1}(J^\infty_M F) \, ,
\end{align*}
which are completely determined by their action in local coordinates as:
$$
\dd_H(f)= \bigg(\frac{\partial f}{\partial x^\mu} + \sum_{I=0}^{n}     u^a_{I+\mu} \frac{\partial f}{\partial u_{I}^a} \bigg)
\cdot \dd x^\mu=: D_\mu(f) \cdot \dd x^\mu  \, ,\qquad   \dd_{V}(f) = \sum_{|I|=0}^{n} \frac{\partial f}{\partial u^a_I} \cdot 
(\dd u^a_I - u^{a}_{I+\mu} \dd x^\mu) =:  \sum_{|I|=0}^{n} \frac{\partial f}{\partial u^a_I} \cdot  \theta^a_I \, ,
$$
for any $f\in C^\infty(J^\infty_M F)$, and
\begin{align*}
\dd_H ( \dd x^\mu)= 0, \hspace{1cm} \dd_V (\dd x^\mu)= 0, \hspace{1cm} \dd_H( \dd_V u^a_I) = 
- \dd_V u^{a}_{I+\mu} \wedge \dd x^\mu ,  \hspace{1cm} \dd_V ( \dd_V  u^a_I) = 0 \, .
\end{align*}
These satisfy $\dd_H^2 = \dd_V^2 = \dd_H \circ \dd_V + \dd_V \circ \dd_H =0$, thus defining the ``\textit{variational bicomplex}'' (\cite[Def. 5.4]{GS23}) 
\begin{align}\label{VariationalBicomplex}
\big(\Omega^{\bullet,\bullet}(J^\infty_M F), \, \dd_H, \, \dd_V \big) \, .
\end{align}

It follows that Takens' Acyclicity Theorem on the cohomology of the variational bicomplex applies (\cite[Prop. 5.5]{GS23}), and its augmentation (extension) by functional/source forms using the interior Euler operator (\cite[Def. 5.9, Prop. 5.10]{GS23}) follow verbatim, along with the corresponding results on the Euler--Lagrange subcomplex (\cite[Def. 5.11, Prop. 5.18]{GS23}). For the sake of brevity, we shall not review these here.
\subsection{Transgression and the local bicomplex}
\label{Subsec-bicomplexSynth}

 In the same vein, all the definitions and related results regarding local Lagrangians and local currents/functionals (\cite[\S3.2]{GS23}), the corresponding (shell and) on-shell spaces of fields (\cite[\S 5.2]{GS23}), local vector fields and conserved currents (\cite[\S6.1, \S6.2]{GS23}) follow identically in $\ThickenedSmoothSets$, provided the prolongation map 
$$
j^\infty: \CF \xrightarrow{\quad \quad} \Gamma_M(J^\infty F) 
$$ 
is actually a well-defined map of \textit{thickened} smooth sets, such that its synthetic pushforward \eqref{SyntheticPushforward} yields precisely the `ad-hoc' pushforward defined in \cite[Eq. (76) and Ex. 4.7]{GS23}. 
More generally, the local bicomplex on $\CF\times M$ and the corresponding presymplectic structure of local field theories 
(\cite[\S7]{GS23}) follows verbatim, provided the prolongated evaluation map 
$$
\mathrm{ev}^\infty \;:\; \CF\times M \xrightarrow{\quad \quad} J^\infty_M F
$$ is a well-defined map of thickened smooth sets, such that its synthetic pushforward yields precisely the one from 
\cite[Def. 7.4]{GS23}.

The fact that the infinite jet prolongation exists as a map of thickened smooth sets follows by internal considerations (cf. Def. \ref{SyntheticInfiniteJetProlongation} of Appendix \cref{App-SyntheticJetProlongation}), i.e., using the viewpoint of jets as sections over infinitesimal neighborhoods from \eqref{SectionsOverInfinitesimalNbdDiagram}. However, for the sake of making this abstract statement more concrete, we now expand on the definition of infinite jet prolongation in the current setting in a different manner. This boils down to defining it locally over trivializing charts and by a further careful application of the universal property of limits.

\smallskip 
To that end, recall (Def. \ref{ThickenedSmoothSetOfSections}) that
any $\FR^k\times \DD$-plot $\phi^{k,\epsi}$ of $\mathbold{\Gamma}_M(F)$ is given dually by a `pullback' map of algebras $(\phi^{k,\epsi})^* : C^\infty(F) \rightarrow \CO(\FR^k\times \DD \times M)$. To define such a map, it is sufficient to declare its (compatible) action on (a cover of) adapted coordinate functions $\{x^\mu,u^a\}$
\begin{align*}
\big(\phi_\epsi^k\big)^* \;:\;  C^\infty(F) &\longrightarrow \CO\big(\FR^k\times \DD \times M\big) \\ 
x^\mu &\longmapsto x^\mu  \\[-3pt]
u^a &\longmapsto \phi^a(x^\mu, c^k ,\epsi)
\mathrlap{\,,}
\end{align*}
where $\phi^a(x^\mu,c^k, \epsi) \in C^\infty(\FR^k\times M) \otimes \CO(\DD)$ is a smooth (locally and compatibly defined) function on $\FR^k\times M$, polynomially extended in $\epsi \in \CO(\DD)$. 
This holds by the (topological/petit) sheaf properties of the given function algebras.

From this point of view, it is easy to define the (first) jet prolongation map, acting on (plots of) sections as 
\begin{align*}
j^1 \;:\quad \CF \;\; &\longrightarrow \mathbold{\Gamma}_M (J^1 F)
\\
\phi^{k,\epsi} &\longmapsto j^1 \phi^{k,\epsi} 
\mathrlap{\,,}
\end{align*}
where $j^1 \phi^{k,\epsi}$ is the $\FR^k\times \DD$-plot of $\mathbold{\Gamma}_M (J^1 F)$ defined dually by its action on any (cover of) coordinate charts
\begin{align*}
\big(j^1\phi^{k,\epsi}\big)^* \; :\; 
C^\infty(J^1_M F) &\longrightarrow \CO\big(\FR^k\times \DD \times M\big) \\ 
x^\mu &\longmapsto x^\mu  \\
u^a &\longmapsto \phi^a(x^\mu, c^k ,\epsi)
\\
u^a_\mu &\longmapsto \partial_\mu \phi^a (x^\mu, c^k, \epsi) 
\mathrlap{\,,}
\nn
\end{align*}
where $\partial_\mu$ denotes the derivations with respect to $x^\mu$, i.e. the partial derivatives along $M$ on the manifold $\FR^k\times M$, extended trivially along the nilpotent algebra elements of $\CO(\DD)$. Since the locally defined target functions are compatible, this glues to a morphism of global functions. This assignment on plots is obviously functorial under maps of probes, and hence defines a map of thickened smooth sets. Completely analogously, the $n^{\mathrm{th}}$-jet prolongation defines a map of thickened smooth sets 
\begin{align}\label{nthJetProlongation}
j^n \;:\quad \CF \;\; &\longrightarrow \mathbold{\Gamma}_M (J^n F)
\\
\phi^{k,\epsi}&\longmapsto j^n \phi^{k,\epsi} \nn 
\end{align}
where $j^n\phi^{k,\epsi}$ is defined dually (and locally) by
\begin{align}\label{nthJetOfSectionPlotLocally}
\big(j^n\phi^k_\epsi\big)^* \;:\; C^\infty(J^n_M F) &\longrightarrow \CO\big(\FR^k\times \DD \times M\big) \\ 
x^\mu &\longmapsto x^\mu \nn \\
\big\{u^a_I\big\}_{|I| \leq n} &\longmapsto \big\{\partial_I \phi^a(x^\mu, c^k ,\epsi) \big\}_{ |I| \leq n }. \nn 
\end{align}
The compatibility of finite-order jet prolongations implies the existence of the infinite jet prolongation valued in the thickened smooth set of sections of the infinite jet bundle $J^\infty_M F$, which we shall henceforth denote as
$$
\CF^\infty \, := \, \mathbold{\Gamma}_M(J^\infty F) \quad \in \quad \ThickenedSmoothSets \, .
$$
\begin{lemma}[\bf Thickened infinite jet prolongation]\label{ThickenedInfiniteJetProlongation}
The infinite jet prolongation exists as a map of thickened smooth sets of sections
$$
j^\infty \;:\; \CF \xrightarrow{\quad \quad} \CF^\infty \, .
$$
\end{lemma}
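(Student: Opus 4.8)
The plan is to assemble the infinite jet prolongation $j^\infty\colon \CF\to\CF^\infty$ from the finite-order prolongations $j^n\colon \CF\to\mathbold{\Gamma}_M(J^n F)$ constructed in \eqref{nthJetProlongation}, using the limit characterization $J^\infty_M F=\lim_{\ThickenedSmoothSets}y(J^n_M F)$ from Def.~\ref{InfiniteJetBundleFormalSmoothLimit} together with the fact that the thickened smooth set of sections functor $\mathbold{\Gamma}_M(-)$ preserves the relevant limits. The first step is to verify compatibility of the family $\{j^n\}_{n\in\NN}$ with the jet projections $\pi^n_{n-1}\colon J^n_M F\to J^{n-1}_M F$, i.e.\ that $(\pi^n_{n-1})_*\circ j^n=j^{n-1}$ as maps $\CF\to\mathbold{\Gamma}_M(J^{n-1}F)$. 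This is immediate plot-wise and locally: by \eqref{nthJetOfSectionPlotLocally}, both sides send a plot $\phi^{k,\epsi}$ to the plot whose dual map acts on $\{x^\mu,u^a_I\}_{|I|\le n-1}$ by $x^\mu\mapsto x^\mu$ and $u^a_I\mapsto \partial_I\phi^a$, and the projection $\pi^n_{n-1}$ simply forgets the coordinates $u^a_I$ with $|I|=n$; since the locally defined target functions glue (by the petit sheaf property of the function algebras $\CO(\FR^k\times\DD\times M)$, used already in constructing $j^n$), the equality holds globally.

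Second, I would invoke the universal property of the limit. For each probe $\FR^k\times\DD\in\ThickenedCartesianSpaces$, the set $\mathbold{\Gamma}_M(J^\infty F)(\FR^k\times\DD)$ is, by Def.~\ref{ThickenedSmoothSetOfSections} and the limit characterization of $J^\infty_M F$, naturally identified with the limit $\lim_{n}\mathbold{\Gamma}_M(J^n F)(\FR^k\times\DD)$ in $\mathrm{Set}$ — concretely, a compatible family of sections $\{s^{k,n}_\epsi\colon \FR^k\times\DD\times M\to J^n_M F\}_{n\in\NN}$ with $\pi^n_{n-1}\circ s^{k,n}_\epsi=s^{k,n-1}_\epsi$. (This uses that $\mathbold{\Gamma}_M(-)$, being itself defined as a pullback \eqref{FieldSpaceAsPullback} and limits commuting, sends the limit cone $\{y(J^n_M F)\}$ to a limit cone; alternatively one checks directly that a section into $\lim_n J^n_M F$ is the same as a compatible family of sections into the $J^n_M F$.) The compatible family $\{j^n\phi^{k,\epsi}\}_{n\in\NN}$ from Step~1 therefore determines a unique element $j^\infty\phi^{k,\epsi}\in\CF^\infty(\FR^k\times\DD)$, and this assignment is natural in the probe because each $j^n$ is.

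Third, I would record the explicit local form: $j^\infty\phi^{k,\epsi}$ is the plot of $\mathbold{\Gamma}_M(J^\infty F)$ dual to the algebra map sending $x^\mu\mapsto x^\mu$ and $u^a_I\mapsto\partial_I\phi^a(x^\mu,c^k,\epsi)$ for all multi-indices $I$, valued in $C^\infty(\FR^k\times M)\otimes\CO(\DD)$ — exactly the ``infinite list of compatible assignments'' of \eqref{InfinityJetPlotOnCoordinates}. This makes manifest that $j^\infty$ is a morphism of thickened smooth sets: functoriality under pullback of probes is inherited from the $j^n$, and the target really lands in the sections-presheaf since the section condition $\pi\circ(-)=p_M$ is preserved at each finite order and hence in the limit. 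One should also note, for use in the sequel, that under the identification of $j^\infty$ as ``sections over infinitesimal neighborhoods'' from \eqref{SectionsOverInfinitesimalNbdDiagram} (via Def.~\ref{SyntheticInfiniteJetProlongation} in the Appendix) this construction agrees with the internal one, but that comparison is not needed for the present statement.

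\textbf{Main obstacle.} The only genuinely delicate point is the second step: justifying that the plots of $\mathbold{\Gamma}_M(J^\infty F)$ really are compatible families of plots of $\mathbold{\Gamma}_M(J^n F)$, i.e.\ that forming sections over $M$ commutes with the projective limit defining $J^\infty_M F$ \emph{inside} $\ThickenedSmoothSets$. This is where one must be careful that the limit in Def.~\ref{InfiniteJetBundleFormalSmoothLimit} is computed in the sheaf topos (so it is a limit of sheaves, computed object-wise on $\ThickenedCartesianSpaces$) and that the pullback \eqref{FieldSpaceAsPullback} defining $\mathbold{\Gamma}_M(-)$ together with the internal hom $[M,-]$ interacts correctly with it — both $[M,-]$ and the pullback are limits, so they commute with the limit over $n$, but spelling this out (rather than hand-waving ``$\mathbold{\Gamma}_M$ preserves limits'') requires unwinding the internal-hom definition probe-wise and invoking that limits commute in $\mathrm{Set}$, exactly as in the proof of Lem.~\ref{SyntheticInfiniteJetTangentBundle}. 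Everything else — Step~1's compatibility and Step~3's local description — is a routine local coordinate computation that mirrors the finite-order case of \cite{GS23} verbatim, carrying the nilpotent $\CO(\DD)$-dependence along unchanged.
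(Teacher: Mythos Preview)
Your proposal is correct and follows essentially the same approach as the paper: assemble $j^\infty$ from the compatible family $\{j^n\}_{n\in\NN}$ via the limit property of $J^\infty_M F$, noting that a plot of $\mathbold{\Gamma}_M(J^\infty F)$ is precisely a compatible family of plots of $\mathbold{\Gamma}_M(J^n F)$. The paper's proof is more terse---it handles your ``main obstacle'' directly by observing that a map $\FR^k\times\DD\times M\to J^\infty_M F$ satisfying the section condition is equivalently a compatible family of maps into $J^n_M F$ each satisfying the section condition, rather than invoking that $\mathbold{\Gamma}_M$ preserves limits via the pullback/internal-hom description---but the content is the same.
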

\begin{proof} 
By its defining limit property from Def. \ref{InfiniteJetBundleFormalSmoothLimit}, any map $\psi^{k,\epsi,\, \infty} : \FR^k\times \DD \times M \rightarrow J^\infty_M F$ of thickened smooth sets is equivalently given by a compatible family of maps of thickened smooth manifolds 
$$
\big\{ \psi^{k,\epsi,\, n} : \FR^k\times \DD \times M \longrightarrow J^n_M F \big\}_{n\in \NN} \, .
$$
It follows that a plot of sections in  $\mathbold{\Gamma}_M(J^\infty F)$, i.e., a map $\FR^k\times \DD \times M \rightarrow J^\infty_M F$ that satisfies the section condition 
	\[ 
	\xymatrix@R=1em@C=2.6em  { &&  J^\infty_M F \ar[d]
		\\ 
		\FR^k\times \DD \times M \ar[rru]^-{\psi^{k,\epsi,\, \infty}} \ar[rr]^>>>>>>>>{\pr_M} && M \, ,
	}   
	\]
is equivalently a compatible family of maps of thickened smooth manifolds $
\big\{ \psi^{k,\epsi, \, n} : \FR^k\times \DD \times M \longrightarrow J^n_M F \big\} 
$ each of which satisfies the corresponding section condition over $M$. 

Thus, given any $\FR^k\times \DD$-plot $\phi^{k,\epsi}$ of $\CF$ the compatible family of sections
$$\big\{j^n \phi^{k,\epsi} : \FR^k\times \DD \times M \longrightarrow J^n_M F\big\}_{n\in \NN}\, , 
$$
obtained by applying each finite jet prolongation \eqref{nthJetProlongation},  corresponds precisely to a plot of $\mathbold{\Gamma}_M(J^\infty F)$
$$ 
j^\infty \phi^{k,\epsi} \;:\; \FR^k\times \DD \times M \longrightarrow J^\infty_M F
$$
which is  functorial under pullbacks of probes, hence defining the desired map of thickened smooth sets
$
j^\infty : \CF \longrightarrow \CF^\infty 
$. 
\end{proof}
Notice, by \eqref{nthJetOfSectionPlotLocally}, the infinite jet prolongation of an $\FR^k\times \DD$-plot may be presented dually via its action on the infinite list of local coordinates as
\begin{align}\label{InfinityJetOfSectionPlotLocally}
(j^\infty\phi^{k,\epsi})^* \;:\; C^\infty(J^\infty_M F) &
\; \longrightarrow\; \CO \big(\FR^k\times \DD \times M \big) \\ 
x^\mu &\;\longmapsto\; x^\mu \nn 
\\[-3pt]
\big\{u^a_I\big\}_{0\leq |I|} &\;\longmapsto\; \big\{\partial_I \phi^a(x^\mu, c^k ,\epsi) \big\}_{0 \leq |I|} \nn \, . 
\end{align}

\begin{example}[\bf Differential operators and the Euler--Lagrange equation]
With the infinite jet prolongation at hand, we may plot-wise precompose any bundle map $P:J^\infty F \rightarrow G$ over $M$ to identify differential operators as maps of thickened smooth sets of sections (\cite[Lem. 3.15]{GS23})
$$
\CP \, := \, P \circ j^\infty \, .
$$
Of particular importance to field theory is the Euler--Lagrange operator corresponding to the Euler--Lagrange ``source'' form $EL := \delta_V L \in \Omega^{d,1}_{s}(J^\infty_M F)$ (see \cite[pp. 50-51]{GS23}), i.e., equivalently the corresponding bundle map $EL:J^\infty_M F \rightarrow \wedge^d T^*M \otimes V^*F$ over M, which induces the Euler--Lagrange differential operator 
\begin{align}\label{EulerLagrangeOperator}
\CE \CL \, : = \, EL \circ j^\infty \;\; : \quad \CF \;\; & \xrightarrow{\quad  \quad} T^*_\var \CF \\
 \phi^{k,\epsi} &\longmapsto  EL\circ j^\infty \phi^{k,\epsi} \nonumber \, ,
\end{align}
defining a section of the variational cotangent bundle (Def. \ref{VariationalCotangentBundle}). In local coordinates where $L=\bar{L}\big(x^\mu,\{u^a_I\}_{0\leq |I|}\big) \cdot \dd x^1\cdots \dd x^d$ it follows that $EL=\delta_V L = \sum (-1)^{|I|} D_I \Big(\! \frac{\partial \bar{L}}{\partial u^a_I} \!\Big) \cdot \dd_F u^a \otimes \dd x^1\cdots \dd x^n $, and hence by Eq. \eqref{InfinityJetOfSectionPlotLocally} the result of applying the Euler--Lagrange operator on a $\FR^k\times \DD$-plot of fields may be represented (dually) as (cf. \cite[(123)]{GS23})
\begin{align}\label{ELOperatorOfSectionPlotLocally}
\big(EL\circ j^\infty\phi^{k,\epsi}\big)^* \;\;:\;\; 
C^\infty\big(\wedge^d T^*M \otimes V^* F\big) &\longrightarrow \CO\big(\FR^k\times \DD \times M \big) \\ 
x^\mu &\longmapsto x^\mu \nn 
\\ 
u^a &\longmapsto \phi^a(x^\mu,c^k,\epsi) \nn
\\
\hat{u}^a &\longmapsto 
\sum_{|I|=0}^{\infty} (-1)^{|I|} \frac{\partial}{\partial x^I} 
\Bigg(\frac{\delta \bar{L}\big(x^\mu ,\{\partial_J \phi^b(x^\mu,c^k,\epsi)\}_{|J|\leq k} \big)}{\delta (\partial_I \phi^a )} \Bigg) \nn \, , 
\end{align}
where $\{\hat{u}^a\}$ are the coordinates along the fibers of $\wedge^d T^*M \otimes V^*F$, adapted to the canonical coordinate associated to $\dd x^1\cdots \dd x^n$ for the chosen base coordinates $\{ x^\mu\}$. Abusing notation (even more), we may denote all of this simply as
\begin{align}\label{ELOperatorOfSectionPlotAbusively}
\CE \CL(\phi^{k,\epsi}) \, = \,  \sum_{|I|=0}^{\infty} (-1)^{|I|} \frac{\partial}{\partial x^I} 
\Bigg(\frac{\delta \bar{L}\big(x^\mu ,\{\partial_J \phi^b(x^\mu,c^k,\epsi)\}_{|J|\leq k} \big)}{\delta (\partial_I \phi^a )} \Bigg) \cdot \dd_F u^a \otimes \dd x^1\cdots \dd x^n \, ,
\end{align}
so that one may think of the (thickened, smooth) Euler--Lagrange operator acting on $\FR^k\times \DD$-plots of fields as usual along spacetime directions, and carrying along the parametrizing coordinates of $\FR^k\times \DD$.

The vanishing of this map of thickened smooth sets yields the Euler--Lagrange ``on-shell'' equations 
$$
\CE \CL = 0
$$ 
now incarnated within thickened smooth sets. We return to the proper description of this thickened zero-locus in Thm. \ref{FunctorialityOfTheThickenedCriticalSet} as a pullback/intersection of thickened smooth sets, where we furthermore identify it as the expected critical locus of the corresponding local action functional.
\end{example}

We now turn to the synthetic pushforward \eqref{SyntheticPushforward} of the infinite jet prolongation map. To that end, let us note the immediate analog result for the synthetic tangent bundle of $\mathbold{\Gamma}_M(J^\infty F)$. 
\begin{lemma}[\bf Synthetic tangent bundle of jet bundle sections]\label{SyntheticTangentBundleOfJetBundleSections}
The synthetic tangent bundle $T \CF^\infty\:= \big[\DD^1(1),\, \CF^\infty \big] $ of the space of jet bundle sections is naturally identified with the space of jet bundle sections of the vertical fiber bundle $VF\rightarrow F \rightarrow M$, i.e., 
$$
T\CF^\infty\, \cong\, \mathbold{\Gamma}_M(J^\infty VF)\;\; \in \;\; \ThickenedSmoothSets \, .
$$
\end{lemma}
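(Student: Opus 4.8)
The plan is to mimic exactly the strategy used for Proposition~\ref{SyntheticTangentBundleOfFieldSpace}, replacing the field bundle $F\to M$ by the infinite jet bundle $J^\infty_M F\to M$, but being careful since $J^\infty_M F$ is not a finite-dimensional manifold but only a limit of such in $\ThickenedSmoothSets$. First I would recall the pullback characterization of $\CF^\infty = \mathbold{\Gamma}_M(J^\infty F)$ as the limit
\[
\xymatrix@=1.6em
{\CF^\infty  \ar[d] \ar[rr] &&
[M,\, J^\infty_M F] \ar[d]^{\pi_*}
	\\
	\mathrm{id}_M\cong* \; \ar@{^{(}->}[rr]  && [M,M]
	\, , }
\]
exactly as in \eqref{FieldSpaceAsPullback}, which makes sense entirely within $\ThickenedSmoothSets$ since all three corners are already thickened smooth sets (using Def.~\ref{InfiniteJetBundleFormalSmoothLimit} for the top-right corner). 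Since the synthetic tangent functor $T=[\DD^1(1),-]$ has the left adjoint $\DD^1(1)\times(-)$ from \eqref{TangentFunctorHasLeftAdjoint}, it preserves this limit, so $T\CF^\infty$ is the pullback of $T[M,J^\infty_M F]$ and $T(*)\cong *$ over $T[M,M]$.

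Next I would identify the corners of the pushed-forward diagram. By Remark~\ref{GeneralMappingSpaceTangentBundles} (the general mapping-space tangent bundle identity $T[\CG,\CH]\cong[\CG,T\CH]$, valid for arbitrary thickened smooth sets), the top-right corner becomes $[M,\, T J^\infty_M F]$ with the right vertical map being $(T\pi)_*$ where $\pi:J^\infty_M F\to M$; and the bottom map becomes the zero-section embedding $*\cong 0_M\hookrightarrow[M,TM]$, precisely as in the proof of Proposition~\ref{SyntheticTangentBundleOfFieldSpace}. Thus $T\CF^\infty$ is the pullback
\[
\xymatrix@=1.6em
{T\CF^\infty  \ar[d] \ar[rr] &&
[M,\, T J^\infty_M F] \ar[d]^{(T\pi)_*}
	\\
	0_M\cong * \; \ar@{^{(}->}[rr]  && [M,\, TM]
	\, . }
\]
Computing this object-wise, its $\FR^k\times\DD$-plots are maps $\CZ:\FR^k\times\DD\times M\to TJ^\infty_M F$ with $T\pi\circ\CZ = 0_M\circ\pr_M$, which (composing further with $TM\to M$) are exactly sections over $M$ of the \emph{vertical} subbundle $VJ^\infty_M F\hookrightarrow TJ^\infty_M F$. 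This identifies $T\CF^\infty\cong\mathbold{\Gamma}_M(VJ^\infty_M F)$.

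Finally I would invoke Lemma~\ref{VerticalJetBundleAsJetVerticalBundle}, which gives the canonical isomorphism $V(J^\infty_M F)\cong J^\infty_M(VF)$ as bundles of thickened smooth sets over $F$ (and hence over $M$). Postcomposing plots of sections along this isomorphism yields
\[
T\CF^\infty\;\cong\;\mathbold{\Gamma}_M\big(VJ^\infty_M F\big)\;\cong\;\mathbold{\Gamma}_M\big(J^\infty_M VF\big),
\]
which is the claim. The main obstacle I anticipate is making precise that the identification of the top-right corner $T[M,J^\infty_M F]\cong[M,TJ^\infty_M F]$ together with the limit characterization $TJ^\infty_M F\cong\lim_{\ThickenedSmoothSets} TJ^n_M F$ (Lemma~\ref{SyntheticInfiniteJetTangentBundle}) is genuinely compatible — i.e.\ that one may freely commute the three operations (passing to the tangent bundle, taking the mapping space out of $M$, and forming the limit over jet order $n$) without an ordering subtlety; this is handled by the fact that each of $[M,-]=[\,-\,]^M$, $T=[\DD^1(1),-]$, and the relevant limit all commute with limits (the first two by having left adjoints, being representable internal homs), so the argument goes through formally, but it is worth spelling out the chain of natural isomorphisms on plots explicitly as in Lemma~\ref{SyntheticInfiniteJetTangentBundle} rather than hand-waving. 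A secondary point worth a remark is that the resulting fiber-wise linear structure on $T\CF^\infty$ is the one inherited plot-wise from the $\FR$-linear structure of $TJ^\infty_M F$ made manifest by the formal-sum representation \eqref{InfinityJetTangentPlotsFormalSum}, consistent with the microlinearity closure properties noted in Remark~\ref{InfinitesimallyLinearSpaces}.
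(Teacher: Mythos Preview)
Your proposal is correct and follows essentially the same approach as the paper's own proof: apply the argument of Proposition~\ref{SyntheticTangentBundleOfFieldSpace} (via Remark~\ref{GeneralMappingSpaceTangentBundles}) to obtain the pullback characterization of $T\CF^\infty$ over $[M,TM]$, identify its plots as sections of $VJ^\infty_M F$, and then invoke Lemma~\ref{VerticalJetBundleAsJetVerticalBundle} to pass to $J^\infty_M VF$. Your additional remarks on commuting the limit with $[M,-]$ and $T$ are sound but not strictly needed, since the paper's argument already works directly with $TJ^\infty_M F$ as a single object via Remark~\ref{GeneralMappingSpaceTangentBundles}.
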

\begin{proof}
Working precisely as in the proof of Prop. \ref{SyntheticTangentBundleOfFieldSpace} (cf. also Rem. \ref{GeneralMappingSpaceTangentBundles}) yields the tangent bundle in question equivalently as the pullback 
\begin{align*} 
\xymatrix@=1.6em 
{T\CF^\infty  \ar[d] \ar[rr] &&  
[M,\, T J^\infty_M F] \ar[d]^{(\dd \pi^\infty_M)_*} 
	\\ 
	0_M\cong *\; \ar@{^{(}->}[rr]  && [M,\, TM]
	\, . } 
\end{align*}
By definition then, analogously to the case of $T\CF$, the plots of $T\CF^\infty$ correspond to maps $X_{\psi^{k,\epsi,\, \infty}}: \FR^k\times \DD\times M \rightarrow TJ^\infty_M F$ making the diagram 

\[ 
\xymatrix@C=4pc @R=1.8pc{ && TJ^\infty_M F \ar[d]^{\dd \pi^\infty_M}
	\\ 
	\FR^k\times \DD \times M  \ar[r]^-{\;\;\;\pr_M} \ar[rru]^-{X_{\psi^{k,\epsi,\, \infty}}} &M \ar[r]^{0_M} &TM \, 
}   
\]
commute, and hence factoring through the vertical subbundle $VJ^\infty_MF\hookrightarrow TJ^\infty_M F$ by \eqref{VerticalJetbundleAsPullback}. Under the canonical isomorphism $VJ^\infty_M F \cong J^\infty_M VF$ from Lem. \ref{VerticalJetBundleAsJetVerticalBundle}, these maps correspond exactly to the commuting diagrams of the form
\[ 
\xymatrix@C=2.6pc @R=1.4pc { &&  VJ^\infty_M F \cong J^\infty_M V F \ar[d]
	\\ 
	\FR^k\times \DD \times M \ar[rru]^{X_{\psi^{k,\epsi,\, \infty}}} \ar[rr]^-{\pr_M} && M
    \mathrlap{\,,}
}   
\]
i.e., precisely the defining plots of $\mathbold{\Gamma}_M(J^\infty VF)$.
\end{proof}

\begin{proposition}[\bf Synthetic pushforward of jet prolongation]\label{SyntheticPushforwardOfJetProlongation} The synthetic pushforward 
\begin{align*}
\dd j^\infty_F\, : \, T\CF \xrightarrow{\quad \quad}  T \CF^\infty   
\end{align*}
of the jet prolongation map $j^\infty_F: \CF \rightarrow \CF^\infty$
is given, under the identifications of Prop. \ref{SyntheticTangentBundleOfFieldSpace} and Lem. \ref{SyntheticInfiniteJetTangentBundle}, precisely by the corresponding prolongation map
\begin{align*}
j^\infty_{VF} \, :\,  \mathbold{\Gamma}_M (VF)\xrightarrow{\quad \quad}  \mathbold{\Gamma}_M(J^\infty VF) \, . 
\end{align*}
That is, the following diagram commutes
\begin{align*}
\begin{gathered}
\xymatrix@R=.5em@C=2.6em  {T\CF \ar[dd]_{\sim} \ar[rr]^-{\dd j^\infty_F} &&  T\CF^\infty \ar[dd]^{\sim } 
	\\ \\
\mathbold{\Gamma}_M(VF) \ar[rr]^-{j^\infty_{VF}}  && \mathbold{\Gamma}_M(J^\infty VF)
\, . }
\end{gathered}
\end{align*}
\end{proposition}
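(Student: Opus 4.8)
The strategy is to compute the synthetic pushforward $\dd j^\infty_F = T(j^\infty_F)$ plot-wise and match it term-by-term against the prolongation map $j^\infty_{VF}$ under the identifications already established. Recall from \eqref{SyntheticPushforward} that for any plot of tangent vectors $\CZ_{\phi^{k,\epsi}} \in T\CF(\FR^k\times\DD)$, the pushforward acts by post-composition: $\dd j^\infty_F(\CZ_{\phi^{k,\epsi}}) = j^\infty_F \circ \CZ_{\phi^{k,\epsi}}$, where $\CZ_{\phi^{k,\epsi}} : \DD^1(1) \times \FR^k\times\DD \to \CF$. By the left-adjoint characterization \eqref{TangentFunctorHasLeftAdjoint}, this is equivalently the image of the plot $\phi^{k,\epsi}_\delta : \DD^1(1)\times\FR^k\times\DD \times M \to F$ (a $\DD^1(1)\times\FR^k\times\DD$-plot of $\CF$) under the map $j^\infty_F$ applied at the probe $\DD^1(1)\times\FR^k\times\DD$. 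So the first step is: take a generic tangent plot, represent it via Prop.~\ref{SyntheticTangentBundleOfFieldSpace} as a plot $\CZ_{\phi^{k,\epsi}} : \FR^k\times\DD\times M \to VF$ over $M$ covering $\phi^{k,\epsi}$, and equivalently (via Lem.~\ref{LinePlotsRepresentTangentVectors}, or just by the adjunction with the extra $\DD^1(1)$-factor) as an $\epsi$-augmented path $\phi^{k,\epsi}_\delta$ with $\partial_\delta \phi^{k,\epsi}_\delta|_{\delta=0} = \CZ_{\phi^{k,\epsi}}$.

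\textbf{Key steps.} First I would apply $j^\infty_F$ at the probe $\FR^k\times\DD\times\DD^1(1)$ using the explicit local formula \eqref{InfinityJetOfSectionPlotLocally}: dually, $(j^\infty \phi^{k,\epsi}_\delta)^*$ sends $x^\mu \mapsto x^\mu$ and $u^a_I \mapsto \partial_I \phi^a(x^\mu, c^k, \epsi, \delta)$, where now $\phi^a$ carries the extra nilpotent generator $\delta$ with $\delta^2 = 0$. Writing $\phi^a(x,c^k,\epsi,\delta) = \phi^a(x,c^k,\epsi) + \delta\cdot \CZ^a_{\phi^{k,\epsi}}(x,c^k,\epsi)$ and using that $\partial_I$ (the total/partial derivative along $M$) commutes with the $\delta$-linear structure, we get $\partial_I\phi^a = \partial_I\phi^a(x,c^k,\epsi) + \delta\cdot\partial_I \CZ^a_{\phi^{k,\epsi}}$. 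Reading this back through the adjunction \eqref{TangentFunctorHasLeftAdjoint}, the $\delta$-linear part is precisely the plot of $T\CF^\infty$ whose coordinates are $u^a_I \mapsto \partial_I\phi^a$, $\dot u^a_I \mapsto \partial_I \CZ^a_{\phi^{k,\epsi}}$; under the identification $T\CF^\infty \cong \mathbold{\Gamma}_M(J^\infty VF)$ from Lem.~\ref{SyntheticTangentBundleOfJetBundleSections} this is exactly the section $j^\infty_{VF}(\CZ_{\phi^{k,\epsi}})$ by the defining local formula for $j^\infty_{VF}$ (the same \eqref{nthJetOfSectionPlotLocally} applied to the bundle $VF \to M$). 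Then I would check naturality in the probe $\FR^k\times\DD$, which is automatic since everything is defined by functorial post-composition, and observe that the two vertical isomorphisms in the square (Prop.~\ref{SyntheticTangentBundleOfFieldSpace} on the left, and Lem.~\ref{SyntheticTangentBundleOfJetBundleSections} on the right, which itself factors through Lem.~\ref{VerticalJetBundleAsJetVerticalBundle} and Lem.~\ref{SyntheticInfiniteJetTangentBundle}) are the canonical ones, so the diagram commutes on the nose.

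\textbf{Alternative and main obstacle.} An abstract alternative, which I would sketch as a remark, is to avoid coordinates entirely: $T(j^\infty_F)$ is the map on sections induced by the vertical tangent functor applied fiberwise, and since $VJ^\infty_M F \cong J^\infty_M(VF)$ compatibly over $M$ (Lem.~\ref{VerticalJetBundleAsJetVerticalBundle}) and $T$ preserves the pullback defining $\CF$ (and $\CF^\infty$) as a space of sections (Prop.~\ref{SyntheticTangentBundleOfFieldSpace}, Rem.~\ref{GeneralMappingSpaceTangentBundles}), functoriality of $T$ forces the pushforward of $j^\infty_F$ to be the section-level map induced by $j^\infty$ on $VF$; this is really just the statement that jet prolongation is natural with respect to the vertical-tangent construction on fiber bundles. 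The main obstacle will be bookkeeping rather than conceptual: one must be careful that the total derivatives $\partial_I$ appearing in \eqref{InfinityJetOfSectionPlotLocally} genuinely commute with the $\delta$-augmentation — i.e., that differentiating along $M$ and taking the $\delta$-linear part are interchangeable operations on $C^\infty(\FR^k\times M)\otimes\CO(\DD)\otimes\CO(\DD^1(1))$ — and that the identification $VJ^\infty_M F \cong J^\infty_M VF$ used to land the pushforward in $\mathbold{\Gamma}_M(J^\infty VF)$ is compatible with the coordinate assignment $\dot u^a_I \leftrightarrow (\dot u^a)_I$ of Lem.~\ref{VerticalJetBundleAsJetVerticalBundle}. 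Once these two compatibilities are spelled out, the commuting square is immediate.
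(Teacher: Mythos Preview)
Your proposal is correct and follows essentially the same route as the paper's proof: both unpack a tangent plot as a $\DD^1(1)$-thickened plot of $\CF$, apply $j^\infty_F$ at the enlarged probe via the local formula \eqref{InfinityJetOfSectionPlotLocally}, and then read off the result through the identification $VJ^\infty_M F \cong J^\infty_M VF$ of Lem.~\ref{VerticalJetBundleAsJetVerticalBundle}. Your write-up is somewhat more explicit than the paper's---you spell out the $\delta$-expansion $\phi^a + \delta\cdot\CZ^a$ and flag the commutation of $\partial_I$ with the $\delta$-factor as the key bookkeeping point---whereas the paper leaves this implicit in the phrase ``chasing through the given identifications''; your abstract alternative (naturality of jet prolongation under the vertical tangent functor) is also in the spirit of the paper but not stated there.
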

\begin{proof}
This follows by chasing through the given identifications. More explicitly, the synthetic pushforward \eqref{SyntheticPushforward} $\dd j^\infty_F$ acts on
any plot $\CZ_{\phi^{k,\epsi}}: \FR^k\times \DD \rightarrow T \CF $ of the tangent bundle, i.e., equivalently on any section
\[ 
\xymatrix@C=2.6pc @R=1.2pc { &&  F \ar[d]^{\pi}
	\\ 
	\DD^1(1)\times \FR^k\times \DD \times M \ar[rru]^-{\CZ_{\phi^{k,\epsi}}} \ar[rr]^-{\pr_M} && M \, ,
}   
\]
by postcomposition, thus yielding a prolongated section
\[ 
\xymatrix@C=2.6pc @R=1.2pc { &&  J^\infty_M F \ar[d]^{\pi^\infty}
	\\ 
	\DD^1(1)\times \FR^k\times \DD \times M \ar[rru]^-{j^\infty_F \raisebox{-1pt}{(}\CZ_{\phi^{k,\epsi}}
 \raisebox{-1pt}{)}\;\;\;} \ar[rr]^-{\pr_M} && M \, ,
}   
\]
which is by definition a plot $j^\infty_F (\CZ_{\phi^{k,\epsi}}) : \FR^k \times \DD \rightarrow T \mathbold{\Gamma}_M(J^\infty F) $, covering the plot $j^\infty \phi^{k,\epsi} : \FR^k\times \DD \rightarrow \CF^\infty$. But by the identification of Lem. \ref{syntheticJetbundle}, along with the defining expression of $j^\infty$ in local coordinates \eqref{InfinityJetOfSectionPlotLocally}, this is identified precisely with the section 
\[ 
\xymatrix@C=2.6pc @R=1.2pc { &&  J^\infty_M VF \cong V J^\infty_M F \ar[d]^{\pi^\infty}
	\\ 
	 \FR^k\times \DD \times M \ar[rru]^{j^\infty_{VF}  \raisebox{-1pt}{(}\CZ_{\phi^{k,\epsi}} \raisebox{-1pt}{)}\;\;\;\;} \ar[rr]^-{p_M} && M \, ,
}   
\]
where now $\CZ_{\phi^{k,\epsi}}$ is viewed as a plot of $\mathbold{\Gamma}_M(VF)$ under the identification of Prop. \ref{SyntheticTangentBundleOfFieldSpace}, and hence acted upon with the corresponding jet prolongation.
\end{proof}
Let us now turn to the matter of the prolongated evaluation map within thickened smooth sets. This is defined by the composition
\begin{align}\label{ProlongatedEvaluationMap}
\ev^\infty\; :\;  \CF \times M \xrightarrow{\quad (j^\infty,\, \id_M)\quad} \CF^\infty\times M \xlongrightarrow{\;\;\ev \;\;} J^\infty_M F\, . 
\end{align}
where the (canonical) evaluation map explicitly acts on $\FR^k\times \DD$-plots as
\begin{align}\label{EvaluationMap}
\mathrm{ev}\;:\;\mathbold{\Gamma}_M(J^\infty F) \times M & \; \xrightarrow{\quad \quad}  \; J^\infty_M F 
\\[-2pt]
\big(\psi^{k,\epsi,\, \infty}, \, p^{k,\epsi}\big) &\; 
\xmapsto{\quad \quad} \; 
\psi^{k, \epsi,\, \infty} \circ \big(\id_{\FR^k\times \DD}, \, p^{k,\epsi}\big)  \,, \nn 
\end{align}
where the pair of plots are maps $\psi^{k,\epsi,\, \infty}: \FR^k \times \DD \times M \rightarrow J^\infty_M F$ and $p^{k,\epsi}: \FR^k\times \DD \rightarrow M$, so that the composition above yields a plot $\FR^k\times \DD \rightarrow \FR^k\times \DD \times M \rightarrow J^\infty_M F$ of the jet bundle. At the level of $*$-plots, this of course encompasses the usual set-theoretic evaluation map
$$
(\psi^\infty,p) \xmapsto{\quad \quad} \psi^\infty(p) \quad \in \quad J^\infty_M F(*) \, .
$$
By expanding this definition carefully in local coordinates, it is easy to see that the plots in the image of the prolongated evaluation map \eqref{ProlongatedEvaluationMap} 
\begin{align*}
\CF\times M &\xrightarrow{\quad \quad} J^\infty_M F \\
\big(\phi^{k,\epsi},\,  p^{k,\epsi}\big)  
&\xmapsto{\quad \quad} j^\infty \phi^{k,\epsi} \circ 
\big(\id_{\FR^k\times \DD}, \, p^{k,\epsi}\big) 
\end{align*}
may be represented dually (and locally), using \eqref{InfinityJetOfSectionPlotLocally}, as
\begin{align}\label{ProlongatedEvaluationPlotLocally}
C^\infty(J^\infty_M F) &\xrightarrow{\quad \quad} \CO\big(\FR^k\times \DD \big) \\ 
x^\mu &\xmapsto{\quad \quad} p^\mu(c^k,\epsi) \nn \\
\big\{u^a_I\big\}_{0\leq |I|} &\xmapsto{\quad \quad} \big\{\partial_I \phi^a\big(p^\mu(c^k,\epsi),\,  c^k ,\epsi\big) \big\}_{0 \leq |I|} \nn \, . 
\end{align}

Having defined prolongated evaluation map within $\ThickenedSmoothSets$, we may apply the synthetic tangent functor 
$T=[\DD^1(1), \, - ]$ to obtain the synthetic pushforward \eqref{SyntheticPushforward}
$$
\dd \ev^\infty \; : \;  T(\CF \times M) \cong T\CF \times TM \xrightarrow{\qquad} TJ^\infty_M F \, . 
$$
In order to derive a more concrete description plot-wise for this pushforward, we first describe the pushforward along the usual evaluation maps \eqref{EvaluationMap}.
\begin{proposition}[\bf Pushforward of evaluation map]\label{PushfowardOfEvaluationMap} 
$\, $ 

\noindent {\bf (i)} For the evaluation map $\ev : \CF \times M\rightarrow F$ of a finite-dimensional fiber bundle $F\rightarrow M$ (cf. \eqref{EvaluationMap}), the corresponding synthetic pushforward is given by 
\begin{align*}
\dd \ev \; : \quad T\CF \times & TM \quad \xrightarrow{\quad \quad} \quad  TF 
\\ 
\big(\CZ_{\phi^{k, \epsi}}, &\, X_{p^{k,\epsi}}\big) \xmapsto{\quad \quad} 
\dd_M \phi^{k,\epsi}\circ \big(\id_{\FR^k\times \DD}, X_{p^{k,\epsi}}\big) 
\;\; + \;\;  \CZ_{\phi^{k,\epsi}} \circ \big(\id_{\FR^k\times \DD} \times p^{k,\epsi}\big) \, .
\end{align*}

\newpage 
\vspace{1mm} 
\noindent {\bf (ii)} Similarly, for the evaluation map $\ev : \CF^\infty\times M \rightarrow J^\infty_M F$ of the infinite jet bundle $J^\infty_M F\rightarrow M$ \eqref{EvaluationMap}, the corresponding synthetic pushforward is given by 
\begin{align*}
\dd \ev \;:\quad T\CF^\infty \times & TM \quad \xrightarrow{\quad \quad} \quad  T J^\infty_M F \\ 
\big(\CZ_{\psi^{k,\epsi,\, \infty}},&\, X_{p^{k,\epsi}}\big) 
\xmapsto{\quad \quad} \dd_M \psi^{k,\epsi,\, \infty} \circ \big(\id_{\FR^k\times \DD}, X_{p^{k,\epsi}}\big) 
\;\;  + \;\; \CZ_{\psi^{k,\epsi,\, \infty}} \circ \big(\id_{\FR^k\times \DD} \times p^{k,\epsi} \big) . 
\end{align*}
\end{proposition}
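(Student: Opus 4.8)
The plan is to prove both parts by the same strategy: reduce the synthetic pushforward along the evaluation map to a statement about algebra homomorphisms out of nilpotent-tensored algebras, and then read off the ``Leibniz rule'' structure ($\dd_M$ of the section plus the tangent-to-the-section term) directly from the coordinate expression. Since (i) is just the special case of (ii) where the fiber bundle is $F$ itself rather than $J^\infty_M F$ — and the evaluation map and its domain-tangent-bundle have the same formal shape in both cases by Lem. \ref{SyntheticTangentBundleOfFieldSpace}, Lem. \ref{SyntheticTangentBundleOfJetBundleSections} and Lem. \ref{SyntheticInfiniteJetTangentBundle} — I would prove (i) in full and then indicate that (ii) follows verbatim, carrying along the infinite family of jet coordinates and invoking the limit characterization of $TJ^\infty_M F$ to extend the argument member-by-member.

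First I would unwind what a plot of $T\CF \times TM$ is. By Prop. \ref{SyntheticTangentBundleOfFieldSpace} and the internal hom property \eqref{InternalHomProperty}, an $(\FR^k\times\DD)$-plot of $T\CF$ is a map $\CZ_{\phi^{k,\epsi}}: \DD^1(1)\times\FR^k\times\DD\times M \to F$ satisfying the section condition over $M$, equivalently (dually) an algebra map $C^\infty(F)\to C^\infty(\FR^k\times M)\otimes\CO(\DD)\otimes\CO(\DD^1(1))$ of the form $u^a \mapsto \phi^a(x,c^k,\epsi) + \epsilon\cdot\CZ^a(x,c^k,\epsi)$, $x^\mu\mapsto x^\mu$, exactly as in the proof of Prop. \ref{TangBundlesCoincide}. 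Likewise an $(\FR^k\times\DD)$-plot $X_{p^{k,\epsi}}$ of $TM$ is dually $C^\infty(M)\to C^\infty(\FR^k)\otimes\CO(\DD)\otimes\CO(\DD^1(1))$, with the first component the base plot $p^{k,\epsi}$ and the $\epsilon$-component the tangent data $X_{p^{k,\epsi}}$. Here the subtlety is that $T(\CF\times M)\cong T\CF\times TM$ uses a \emph{single} $\DD^1(1)$ for both factors (because $T=[\DD^1(1),-]$ preserves products, a left adjoint by \eqref{TangentFunctorHasLeftAdjoint}), so the two $\epsilon$'s are identified — this is precisely what produces the \emph{sum} of the two terms in the formula rather than a pair.

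Next I would compute $\dd\ev = T\ev$ on such a plot via \eqref{SyntheticPushforward}: it is postcomposition $\ev\circ(\CZ_{\phi^{k,\epsi}}\times X_{p^{k,\epsi}})$, suitably interpreted, i.e. plug the $(\DD^1(1)\times\FR^k\times\DD)$-parametrized point $X_{p^{k,\epsi}}$ into the $(\DD^1(1)\times\FR^k\times\DD)$-parametrized section $\CZ_{\phi^{k,\epsi}}$, using \eqref{EvaluationMap}. Dually this is a map $C^\infty(F)\to C^\infty(\FR^k)\otimes\CO(\DD)\otimes\CO(\DD^1(1))$ sending $u^a$ to $\phi^a\big(p^\mu(x,\epsi)+\epsilon X^\mu(x,\epsi),\; c^k,\epsi\big) + \epsilon\cdot\CZ^a\big(p^\mu(x,\epsilon),c^k,\epsilon\big)$; expanding the first term to first order in $\epsilon$ using $\epsilon^2=0$ gives $\phi^a(p,c^k,\epsilon) + \epsilon\big(X^\mu\cdot\partial_\mu\phi^a(p,c^k,\epsilon) + \CZ^a(p,c^k,\epsilon)\big)$. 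Reading off the $\epsilon$-component and recalling that $\dd_M\phi^{k,\epsi}\circ(\id,X_{p^{k,\epsi}})$ is exactly the pushforward of the base tangent vector through the section map (its coordinate form being $X^\mu\partial_\mu\phi^a$, by \eqref{SyntheticPushforwardRecoversTraditional} applied fiberwise over $\FR^k\times\DD$) identifies the result with the claimed sum, and functoriality under probe pullbacks is immediate. I would then remark that for (ii) the identical computation runs with $\{u^a_I\}_{0\le|I|}$ in place of $u^a$, the section data being a compatible family of finite-order sections, and Lem. \ref{SyntheticInfiniteJetTangentBundle} guaranteeing the assembled family represents a genuine plot of $TJ^\infty_M F$; the fiberwise linearity needed to even write ``$+$'' is supplied by Rem. \ref{InfinitesimallyLinearSpaces} together with Lem. \ref{SyntheticInfiniteJetTangentBundle}.

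The main obstacle I anticipate is bookkeeping rather than conceptual: making the identification $T(\CF\times M)\cong T\CF\times TM$ genuinely \emph{compatible} with the two coordinate expansions — that is, being careful that the ``$\dd_M$'' term and the ``tangent-to-section'' term really are the images of the \emph{same} universal $\epsilon$, so that the Leibniz-type expansion is forced and not merely one valid choice. Concretely this means checking that the canonical iso $[\DD^1(1),\CF\times M]\cong[\DD^1(1),\CF]\times[\DD^1(1),M]$ (from $T$ preserving products) matches, under the Yoneda/algebra dictionary, the diagonal embedding $\DD^1(1)\hookrightarrow\DD^1(1)\times\DD^1(1)$ used implicitly when one evaluates a section at a point ``in the same infinitesimal direction'' — this is the same phenomenon flagged in Rem. \ref{infinitesimalMappingExamples}(i) about $\DD^2(1)$ versus $\DD^1(1)\times\DD^1(1)$, and it is exactly what guarantees the cross-term $X^\mu\partial_\mu\phi^a$ appears additively. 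Once that identification is pinned down, the rest is a one-line $\epsilon$-expansion.
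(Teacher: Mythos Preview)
Your proposal is correct and follows essentially the same approach as the paper: unwind the synthetic pushforward as postcomposition with $\ev$, pass to the dual algebra map, and read off the Leibniz-type sum from the $\epsilon$-expansion, then extend to (ii) via the limit characterization of $TJ^\infty_M F$. The only cosmetic difference is that the paper works at the level of $*$-plots (using derivation language rather than local bundle coordinates $u^a$) and writes the composition $\CZ_\phi \circ (\id_{\DD^1(1)}\times X_p)$ directly, which immediately resolves your ``single $\epsilon$'' concern without further comment.
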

\begin{proof}
For the sake of transparency, we show this at the level of $*$-plots, with that of general $(\FR^k\times \DD)$-plots following similarly by (trivially) carrying along the dependence along the probe-legs. By definition, a tangent vector on $\CF\times M$ is a pair of maps $(\CZ_\phi, X_p)\in T\CF\times TM$ where 
$$
\CZ_\phi : \DD^1 (1)\times M \longrightarrow F
$$ 
covers $\pr_M : \DD^1(1)\times M \rightarrow M$, and 
$$
X_p: \DD^1(1)\longrightarrow M\, .
$$
Now, by definition of the synthetic pushfoward \eqref{SyntheticPushforward} acts via postcomposition and using \eqref{EvaluationMap}, we get  
$$
\dd \ev \big(\CZ_\phi, X_p\big) := \CZ_\phi \circ \big(\id_{\DD^1(1)}\times X_p\big) 
\;\; : \;\; \DD^1(1) \longrightarrow \DD^1(1)\times M \longrightarrow F\, .
$$
Each of the maps in this composition is given dually by
\begin{align*}
\big(\id_{\DD^1(1)}\times X_p\big)^* 
\;\; : \;\;  \CO\big(\DD^1(1)\big)\otimes C^\infty(M) &\longrightarrow \CO\big(\DD^1(1)\big) \\
f+\epsi\cdot g &\longmapsto f(p) + \epsi\cdot \big(X_p (f) + g(p)\big) \, ,
\end{align*}
and 
\begin{align*}
\CZ_\phi^* \; : \; C^\infty(F) &\longrightarrow \CO\big(\DD^1(1)\big)\otimes C^\infty(M) \\
h &\longmapsto h\circ \phi + \epsi\cdot \CZ_\phi(h) 
\end{align*}
where $\CZ_\phi(h)$ is the function on $M$ given by $\CZ_\phi(h) (p):= \CZ_{\phi(p)}(h) $, i.e., by acting point-wise with the corresponding (vertical) tangent vector (equiv. derivation) $\CZ_{\phi(p)} \in V_p F$. It follows that $\dd \ev (\CZ_\phi,\, X_p):\DD^1(1) \rightarrow F$ is given dually as
\begin{align*}
C^\infty(F) & \; \longrightarrow \; \CO\big(\DD^1(1)\big) \\
h& \; \longmapsto \; (h\circ \phi)(p) +\epsi \cdot \big( X_p(h\circ \phi) + \CZ_\phi(h) (p)\big) \; =\; h\big(\phi(p)) + \epsi\cdot \big(\dd \phi_p X_p +\CZ_{\phi(p)}(h)\big) \, ,
\end{align*}
which is precisely the derivation corresponding to the tangent vector
$$
\dd \phi_p (X_p) \;\; + \;\; \CZ_{\phi}(p) \;\; \in \;\; T_{\phi(p)} F \, ,
$$
confirming the first claim.

Notice that by the first claim the formula holds for each finite-order jet evaluation map $\mathbold{\Gamma}_M(J^n F)\times M \rightarrow J^n_M F$, and hence the second claim follows by the limit property of $J^\infty_M F$ and its tangent bundle's $TJ^\infty_M F \cong \lim TJ^n_M F$ from Lem. \ref{SyntheticInfiniteJetTangentBundle},
along with the identification $VJ^\infty_M F\cong J^\infty_M V F$ from Lem. \ref{VerticalJetBundleAsJetVerticalBundle}.
\end{proof}

\begin{corollary}[\bf Pushforward of prolongated evaluation]\label{PushforwardOfProlongatedEvaluation}
$\,$
\begin{itemize}
\item[{\bf (i)}] The synthetic pushforward of the prolongated evaluation map $\ev^\infty : \CF\times M \rightarrow J^\infty_M F$ is given by 
\begin{align*}
\dd \ev^\infty \; : \quad T\CF \times & TM \quad \xrightarrow{\quad \quad} \quad  T J^\infty_M F \cong HJ^\infty_M F\oplus J^\infty_M VF  \\ 
\big(\CZ_{\phi^{k,\epsi}},&\, X_{p^{k,\epsi}}\big) 
\xmapsto{\quad \quad} \dd_M \big(j^\infty \phi^{k,\epsi} \big) \circ \big(\id_{\FR^k\times \DD}, X_{p^{k,\epsi}}\big) 
\;\; + \;\;  j^\infty \CZ_{\phi^{k,\epsi}} \circ \big(\id_{\FR^k\times \DD} \times p^{k,\epsi} \big) \, ,
\end{align*}

\vspace{1mm} 
\noindent with each of the summands factoring through the horizontal and vertical subbundles, respectively.

\vspace{1mm} 
\item[{\bf (ii)}] 
In (compatible) local coordinates where $\CZ_{\phi} = \CZ^a \cdot \frac{\delta}{\delta \phi^a}:=  \CZ_{\phi}^a \cdot \frac{\partial}{\partial u^a} \in T_\phi \CF \cong \Gamma_M(\phi^*VF) $ 
and $X_{p}=X^\mu \cdot \frac{\partial}{\partial x^\mu} |_{p}$ $\in T_p M$, the pushforward is represented by
\big(cf.  Cor. \ref{SyntheticPushfowardOfJetProlongatedSection}{\bf (iii)}\big)
\vspace{1mm}
\begin{align}\label{PushforwardOfProlongatedEvaluationLocalCoords}
\hspace{-4mm} 
\bigg(\CZ_{\phi}^a \cdot \frac{\partial}{\partial u^a}, \, X^\mu \cdot \frac{\partial}{\partial x^\mu} \Big|_{p} 
\bigg) \quad \longmapsto \quad    
  X^\mu \bigg(\frac{\partial}{\partial x^\mu} \Big\vert_{j^\infty_p \phi} 
+ \sum_{|I|=0}^{\infty} \frac{\partial \phi^a} {\partial x^{I+\mu}} (p)
\cdot    \frac{\partial}{\partial u^a_I} \Big\vert_{j^\infty_p \phi}\bigg) 
 + \; \sum_{|I|=0}^\infty \frac{\partial \CZ_\phi^a}{\partial x^I} (p)
\cdot \frac{\partial}{\partial u^a_I} \Big|_{j^\infty_p \phi} \, ,
\end{align}

\vspace{0mm}
\noindent
and similarly for arbitrary plots.
\end{itemize} 
\end{corollary}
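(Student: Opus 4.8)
The plan is to derive both claims by factoring $\ev^\infty$ through the prolongation and the plain evaluation, applying the synthetic tangent functor $T = [\DD^1(1),-]$, and then composing the pushforward formulas already established for each factor. Concretely, by \eqref{ProlongatedEvaluationMap} we have $\ev^\infty = \ev \circ (j^\infty, \id_M)$; since $T$ has a left adjoint \eqref{TangentFunctorHasLeftAdjoint} it preserves products, so $T(\CF\times M) \cong T\CF \times TM$ and $\dd(j^\infty,\id_M) \cong (\dd j^\infty , \id_{TM})$, whence functoriality of $T$ gives $\dd\ev^\infty \cong \dd\ev \circ (\dd j^\infty \times \id_{TM})$. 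Thus it suffices to compose the two known maps in this order.

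First I would feed in Prop. \ref{SyntheticPushforwardOfJetProlongation}: under the identifications $T\CF \cong \mathbold{\Gamma}_M(VF)$ (Prop. \ref{SyntheticTangentBundleOfFieldSpace}) and $T\CF^\infty \cong \mathbold{\Gamma}_M(J^\infty VF)$ (Lem. \ref{SyntheticTangentBundleOfJetBundleSections}), the map $\dd j^\infty$ sends a plot $\CZ_{\phi^{k,\epsi}}$ covering $\phi^{k,\epsi}$ to the prolonged plot $j^\infty_{VF}(\CZ_{\phi^{k,\epsi}})$, which covers $j^\infty\phi^{k,\epsi}$. Next I would apply Prop. \ref{PushfowardOfEvaluationMap}(ii) to the pair $\bigl(j^\infty_{VF}(\CZ_{\phi^{k,\epsi}}),\, X_{p^{k,\epsi}}\bigr)$, with $\psi^{k,\epsi,\,\infty}:= j^\infty\phi^{k,\epsi}$ and $\CZ_{\psi^{k,\epsi,\,\infty}}:= j^\infty_{VF}(\CZ_{\phi^{k,\epsi}})$. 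This reproduces verbatim the formula asserted in part (i): $\dd\ev^\infty(\CZ_{\phi^{k,\epsi}}, X_{p^{k,\epsi}}) = \dd_M(j^\infty\phi^{k,\epsi})\circ(\id_{\FR^k\times\DD}, X_{p^{k,\epsi}}) + j^\infty\CZ_{\phi^{k,\epsi}}\circ(\id_{\FR^k\times\DD}\times p^{k,\epsi})$. To close part (i) I would identify the two summands with their subbundles: the first is the synthetic pushforward of the jet‑prolonged section $j^\infty\phi^{k,\epsi}$ along a tangent vector of $M$, hence factors through $HJ^\infty_M F$ by Cor. \ref{SyntheticPushfowardOfJetProlongatedSection}(ii) (the plot‑parametrized version, obtained by carrying the probe legs along inertly); the second has image in $\mathbold{\Gamma}_M(J^\infty VF)\cong\mathbold{\Gamma}_M(VJ^\infty_M F)$, hence in the vertical subbundle by Lem. \ref{VerticalJetBundleAsJetVerticalBundle}. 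That their sum is the well‑defined fibre‑wise $\FR$‑linear addition on $TJ^\infty_M F$ is exactly the splitting $TJ^\infty_M F \cong HJ^\infty_M F \oplus VJ^\infty_M F$ of Prop. \ref{SmoothSplittingProp}.

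For part (ii) I would simply read off coordinates. The horizontal summand is computed by Cor. \ref{SyntheticPushfowardOfJetProlongatedSection}(iii), producing the total‑derivative tail $X^\mu\bigl(\tfrac{\partial}{\partial x^\mu}\big|_{j^\infty_p\phi} + \sum_{|I|\geq 0}\tfrac{\partial\phi^a}{\partial x^{I+\mu}}(p)\,\tfrac{\partial}{\partial u^a_I}\big|_{j^\infty_p\phi}\bigr)$. The vertical summand is obtained by substituting the local expression \eqref{InfinityJetOfSectionPlotLocally} for the prolongation into the definition of $j^\infty_{VF}$ on $VF$ (whose fibre coordinates $\dot u^a_I$ become the tangent‑fibre coordinates $\partial/\partial u^a_I$ under $J^\infty_M VF\cong VJ^\infty_M F$), sending $\CZ_\phi = \CZ^a_\phi\,\partial/\partial u^a$ to $\sum_{|I|\geq 0}\tfrac{\partial\CZ^a_\phi}{\partial x^I}(p)\,\tfrac{\partial}{\partial u^a_I}\big|_{j^\infty_p\phi}$. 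Adding these fibre‑wise over $j^\infty_p\phi$ yields \eqref{PushforwardOfProlongatedEvaluationLocalCoords}, and the general $\FR^k\times\DD$‑plot case follows identically by carrying the probe coordinates $c^k,\epsi$ along, exactly as in \eqref{ProlongatedEvaluationPlotLocally}.

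The main obstacle is the step identifying the first summand with the \emph{horizontal} part of $TJ^\infty_M F$: one must verify that commuting the jet prolongation past the spacetime derivative $\dd_M$ genuinely produces the total‑derivative coefficients $\partial\phi^a/\partial x^{I+\mu}$ rather than mere partials, and that this contribution is complementary to the vertical one, so the sum in $TJ^\infty_M F$ is unambiguous. This is precisely what Cor. \ref{SyntheticPushfowardOfJetProlongatedSection} together with the splitting Prop. \ref{SmoothSplittingProp} package; once those are invoked, the remaining work is purely notational bookkeeping over trivializing charts.
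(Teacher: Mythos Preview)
Your proposal is correct and follows essentially the same approach as the paper: factor $\ev^\infty = \ev \circ (j^\infty,\id_M)$, apply functoriality of the synthetic tangent functor, and compose the pushforward formulas from Prop.~\ref{SyntheticPushforwardOfJetProlongation} and Prop.~\ref{PushfowardOfEvaluationMap}{\bf(ii)}, with the horizontal identification via Cor.~\ref{SyntheticPushfowardOfJetProlongatedSection}. Your write-up is considerably more detailed than the paper's (which is a one-line composition), but the logical structure is identical.
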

\begin{proof}
By the functoriality of the synthetic pushforward \eqref{SyntheticPushforward}, we have
\begin{align*}
\dd \ev^\infty = \dd \big( \ev\circ (j^\infty \times \id_M) \big) = \dd \ev \circ \dd (j^\infty \times \id_M) = \dd \ev \circ 
\big(\dd j^\infty + \id_{TM} \big) \, ,
\end{align*}
whereby the result follows by Prop. \ref{SyntheticPushfowardOfJetProlongatedSection} and Prop. \ref{PushfowardOfEvaluationMap}, with the first summand taking values in the horizontal subspace by Lem. \ref{SyntheticPushfowardOfJetProlongatedSection}. 
\end{proof}
\begin{remark}[\bf Pushforwards via derivatives of line plots]\label{PushforwardsViaDerivativesOfLinePlots}
We note that the above sequence of results may be equivalently proved by representing any plot of tangent vectors via a line plot $\phi^{k,\epsi}_t$ (Lem. \ref{LinePlotsRepresentTangentVectors}), and taking the corresponding line-parameter derivative of the jet prolongated line plot $j^\infty_F \phi^{k,\epsi}_t$, which is precisely the intuition used to motivate the (therein ad-hoc) form of the pushforward in 
\cite[Ex. 4.7, Def. 7.4]{GS23}. Here, we have recovered it naturally and rigorously as an example of the synthetic pushforward construction; hence being already in its extended thickened version. 
\end{remark}

\noindent {\bf The bicomplex of local forms and its Cartan calculus.} The traditional local bicomplex (of globally f.o. forms) on $\CF\times M$ 
(\cite[Def. 7.5]{GS23}) now appears naturally within $\ThickenedSmoothSets$, as the pullback of the variational bi-complex on $J^\infty_M F$ along the prolongated evaluation map, by employing its \textit{synthetic} pushforward (Cor. \ref{PushforwardOfProlongatedEvaluation})
\begin{align}\label{LocalBicomplex}
\Omega^{\bullet,\bullet}_\mathrm{loc}(\CF\times M)\, :&= \,  (\ev^\infty)^*\Omega^{\bullet,\bullet}_{\glb}(J^\infty_M F)
\;\; \longhookrightarrow \;\;
\Omega^\bullet(\CF\times M) \, , 
\end{align}
where
\begin{align*}
(\ev^\infty)^*\,:\, \Omega^{\bullet,\bullet}(J^\infty_M F) &\xrightarrow{\quad \quad}  \Omega^{\bullet}(\CF\times M) \\
\om &\xmapsto{\quad \quad} \om \circ \dd \ev \, . 
\end{align*}
By the local coordinate expressions of tangent vectors on $\CF\times M$ and the pushfoward from \eqref{PushforwardOfProlongatedEvaluationLocalCoords}, along with that of differential forms on $J^\infty_M F$ from \eqref{pqformInfinityJetLocalCoordinates}, it follows that any local such form be represented locally as 
$$(\ev^\infty)^* \om =  \sum_{I_1,\cdots, I_p=0} \om_{\mu_1\cdots \mu_p a_1 \cdots a_q}^{I_1\dots I_q}
\big(x, \{\partial_J \phi^b \}_{|J|\leq k}\big)\cdot  \dd_M x^{\mu_1}\wedge \cdots
\wedge \dd_M x^{\mu_p}\wedge \delta(\partial_{I_1} \phi^{a_1}) \wedge\cdots \wedge \delta(\partial_{I_q} \phi^{a_q}) \, . 
$$
The vertical and horizontal differentials are given by pulling back those of the variational bicomplex
$$
\delta \big( (\ev^\infty)^* \om\big)  \,:= \, (\ev^\infty)^* \dd_V \om
\quad \quad \mathrm{and} \quad \quad \dd_M \big( (\ev^\infty)^* \om \big) \, := \, (\ev^\infty)^* \dd_M \om
$$
and they act as traditionally expected in terms of local coordinate expressions.
We shall not repeat this discussion here, as it follows verbatim as in \cite[pp. 89-91]{GS23}. The same applies for the discussion regarding the corresponding local Cartan calculus (\cite[p. 92]{GS23}) with insertions of local vector fields $\CZ= Z\circ j^\infty \in \CX_\loc(\CF)\hookrightarrow \CX(\CF\times M)$
$$
\iota_\CZ \, : \, \Omega^{\bullet,\bullet}_\loc(\CF\times M) \xrightarrow{\quad \quad} \Omega^{\bullet-1,\bullet}_\loc(\CF\times M)
$$
and Lie derivative
$$
\CL_{\CZ}  :=  [\iota_\CZ, \delta] \;\; : \;\; \Omega^{\bullet,\bullet}_\loc(\CF\times M) \xrightarrow{\quad \quad} \Omega^{\bullet,\bullet}_\loc(\CF\times M) 
$$
and hence for the result of \cite[Prop. 7.7, cf. Rem. 7.8]{GS23} as well.

\begin{remark}
[\bf Local forms on $\CF\times M$ via the moduli space]\label{LocalFormsViaModuliSpace}
Just as in the case of forms on the jet bundle (Rem. \ref{FormsOnJetBundleViaModuliSpace}), the proof of \cite[Lem. 7.9]{GS23} holds for the same formal reasons, hence yielding a DGCA injection of local forms
$$
\Omega^\bullet_\loc(\CF\times M)\longhookrightarrow \Omega^\bullet_\mathrm{dR}(\CF\times M)
$$
into the de Rham forms \eqref{deRhamFormsOnThickenedSmoothSet} on $\CF\times M$.
\end{remark}

\subsection{On-shell fields as a thickened smooth critical set}
\label{Subsec-onshellSynth}

Smooth Lagrangian bundle maps $L:J^\infty_M F \rightarrow \wedge^d (T^*M) $, equivalently differential forms $L\in \Omega^{0,d}(J^\infty_M F$), may now be considered to be defined within thickened smooth sets. It follows that the (plot-wise) precomposition with the thickened jet prolongation $j^\infty: \CF\rightarrow \CF^\infty$ (Lem. \ref{ThickenedInfiniteJetProlongation}) yields the local Lagrangian densities of physical theories as per \cite[Def. 3.12]{GS23}  
$$
\CL\,  :=\,  L\circ j^\infty \;\; : \;\; \CF \xrightarrow{\quad \quad} \Omega^d_{\mathrm{vert}}(M)\, ,
$$
or equivalently as a horizontal local top-form
$
\CL \in  \Omega^{0,d}_\mathrm{loc}(\CF\times M).
$
Similarly, one obtains local currents on field space \cite[Def. 3.17]{GS23} by precomposing $(p,0)$-forms $P\in \Omega^{0,p}(J^\infty_M F)$ with the thickened jet prolongation
$$
\CP\,  :=\, P\circ j^\infty \;\; : \;\; \CF \xrightarrow{\quad \quad} \Omega^{p}_\mathrm{vert}(M)\, ,
$$
or equivalently $\CP \in \Omega^{0,p}_\loc(\CF\times M)$.

\medskip 
For any compact oriented submanifold $\Sigma^p \hookrightarrow M$, the notion of integration defines a map of thickened smooth sets 
\begin{align}\label{ThickenedIntegration}
\int_{\Sigma^p} \, : \;\; \Omega^p_{\mathrm{vert}}(M) \xrightarrow{\quad \quad} \FR 
\end{align}
extending that of plain smooth sets from \cite[(57)]{GS23}. This is defined by plot-wise integration, simply by carrying along the (finite) polynomial dependence on the nilpotent coordinates.
Explicitly, for any $\FR^k\times \DD$-parametrized p-form on $M$ 
$$
\om^{k,\epsi}\,= \,\sum_{|I|=1}^{l} \om^{k}_I\cdot \epsi^I \quad \in \quad  \Omega^{p}_{\mathrm{vert}}(M)(\FR^k\times \DD)\, \cong \, \Omega^{p}_{\mathrm{vert}}(M)(\FR^k)\otimes \CO(\DD)
$$
where for each $\om^{k}_I\in \Omega^p_{\mathrm{vert}}(M)(\FR^k)$, the value 
of the function $\int_M \om^{k,\epsi} \, \in \, y(\FR)(\FR^k\times \DD)\,\cong \, C^\infty(\FR^k)\otimes \CO(\DD) $ is given by  
$$
\int_M \om^{k,\epsi}\;\;:= \;\;  \sum_{|I|=1}^{l}  \; \bigg(\int_M \om^k_{I}\! \bigg) \cdot \epsi^I \quad 
$$
where $\big(\int_M \om^k_I\big)(x) :=  \int_M \iota_{x}^* (\om^k_I)$, with $\iota_x$ being the inclusion $M\xrightarrow{\sim}\{x\}\times M\subset \FR^k\times M$. With integration within $\ThickenedSmoothSets$ at hand, the definition of charges of local currents along submanifolds follows as in \cite[Def. 3.19]{GS23}  
$$
\CP_{\Sigma^p} \, := \, \int_{\Sigma^p} \circ  \;  \; \CP \quad : \quad \CF \xrightarrow{\quad \quad} \FR\, ,  
$$
and similarly for the algebra of \textit{local functionals} $C^\infty_\loc(\CF)\hookrightarrow C^\infty(\CF)$ generated by these \cite[Def. 3.20]{GS23}.

\medskip 
Assuming the spacetime manifold $M$ is compact without boundary \footnote{Otherwise one works over arbitrary compact, oriented $d$-dimensional submanifolds $\Sigma^d\hookrightarrow M$, with a slight modification of the ensuing discussion as per \cite[\S 5.3]{GS23}.
}, then the charge of the local Lagrangian $\CL$ of any theory is precisely its \textit{local action functional}
$$
S \, := \, \int_M \circ \; \; \CL\, .
$$
This allows to define the notion of criticality for arbitrary thickened (smooth) $\FR^k\times \DD$-plots of fields.

\begin{definition}[\bf  Critical plots of thickened smooth map]\label{CriticalRkDPoints}
Let $S:\CF \rightarrow \FR$ be a thickened smooth map. The \textit{critical $\FR^k\times \DD$-plots} of $S$ is the subset of $\FR^k\times \DD$-plots
\begin{align}
		\mathrm{Crit}(S)\big(\FR^k\times \DD\big):=  
        \Big\{\phi^{k,\epsi}\in \CF(\FR^k\times \DD) \; \big{|}\; \partial_t S \big(\phi^{k,\epsi}_t\big)|_{t=0}=0, \hspace{0.3cm}
  \forall \, \phi^{k,\epsi}_t\in  \CF_{\phi^{k,\epsi}}\big(\FR^k\times\DD\times \FR^1_t\big) \Big\} ,		
\end{align}
where $\CF_{\phi^{k,\epsi}}\big(\FR^k\times \DD\times\FR^1_t\big) 
= \big\{\phi^{k,\epsi}_t \in \CF\big(\FR^k\times \DD\times \FR^1_t\big) \; \big| \; \phi^{k,\epsi}_{t=0}= \phi^{k,\epsi} \in \CF(\FR^k\times \DD) \big\}$. 
\end{definition}
As with bare smooth sets, this yields  an assignment of sets of $\FR^k\times \DD$-critical plots
$$
\FR^k\times \DD \longmapsto \mathrm{Crit}(S)(\FR^k\times \DD)\, ,
$$

\noindent  for each $\FR^k \times \DD \in \ThickenedCartesianSpaces$, which for a generic $S$ might not be functorial. However, for local action functionals, this is indeed the case, generalizing \cite[Prop. 5.31]{GS23} to yield a thickened critical smooth set.

\begin{theorem}[\bf Functoriality of the thickened critical set]\label{FunctorialityOfTheThickenedCriticalSet}
Let $M$ be a compact manifold without boundary, $F\rightarrow M$ a fiber bundle, $\CF$ its smooth set of sections and 
$S:\CF \rightarrow  \FR $ a local (action) functional given by 
\begin{align*}
	S = \int_M \circ \;\; \CL
\end{align*}  
\noindent
for some local Lagrangian $\CL: \CF \rightarrow \Omega^{d}_\mathrm{vert}(M)$.
Then: 
\begin{itemize}
\item[\bf (i)] The assignment
\vspace{-1mm}
\begin{align*}
\FR^k\times \DD \longmapsto \mathrm{Crit}(S)(\FR^k\times \DD)
\end{align*}

\vspace{-1mm}
\noindent  defines a thickened smooth set.

\item[\bf (ii)]  The thickened critical smooth set is canonically identified with the on-shell space of fields $\CF_{\CE \CL }\hookrightarrow \CF$, defined as the intersection/pullback diagram of thickened smooth sets 
 \[
	\xymatrix@=1.6em  {\mathrm{Crit}(S) \cong \CF_{\CE \CL} \ar[d] \ar[rr] &&   \CF \ar[d]^{\mathcal{EL}} 
		\\ 
		\CF\ar[rr]^-{0_\CF}  && T^*_\mathrm{var} \CF
	\, , }
	\]
\noindent 	where $T^*_\var \CF$ is the variational cotangent bundle (Def. \ref{VariationalCotangentBundle}) and $\mathcal{EL}$ is the (thickened, smooth) Euler--Lagrange differential operator \eqref{EulerLagrangeOperator} of $\CL$, and $0_\CF : \CF \rightarrow T^*_\var \CF$ is the zero-section of thickened smooth sets, defined plot-wise by $\phi^{k,\epsi} \mapsto 0_{F}\circ \phi^{k,\epsi}$ where $0_F : F\rightarrow \wedge^d T^*M \otimes V^*F$ is the zero section of the vector bundle (of manifolds) over F.
\end{itemize}
\begin{proof}
Having appropriately imported all relevant notions related to the infinite jet bundle, its variational bicomplex, and the jet prolongation within $\ThickenedSmoothSets$ (p. 19-25), the proof from \cite[Prop. 5.31]{GS23} in the purely smooth setting applies verbatim for $\FR^k\times \DD$-plots up to the point of yielding 
\begin{align}\label{VariationOfActionOfPlots}
\partial_{t}S\big(\phi^{k,\epsi}_t\big)|_{t=0}\, = \, \int_{M}\circ \, \, 
\Big\langle  \mathcal{E}\mathcal{L} \big(\phi^{k,\epsi}\big)\, , \,  \partial_t \phi^{k,\epsi}_t |_{t=0} \Big\rangle 
\quad \in \quad  C^\infty(\FR^k)\otimes \CO(\DD) \, \cong \, y(\FR)(\FR^k\times \DD)\, ,
\end{align}
now taking values in $C^\infty(\FR^k)\otimes \CO(\DD)$, for any 1-parameter variation $\phi^{k,\epsi}_t$ of a fixed $\FR^k\times \DD$-plot of fields $\phi^{k,\epsi}\equiv \phi^{k,\epsi}_{t=0}\in \CF(\FR^k \times \DD)$. Although we shall not repeat the former discussion here, the final step that follows requires some further technical justification in the thickened setting.

The right-hand side above employs the composition of maps of thickened smooth sets
$$
\int_M \circ \, \, \langle - , - \rangle \quad : \quad T\CF\times T^*_\var \CF \longrightarrow \Omega^d_{\mathrm{vert}}(M) \longrightarrow \FR
$$
where the former is the canonical pairing from \eqref{VariationalCotangentTangentPairing} and the latter is the integration map from \eqref{ThickenedIntegration}. More explicitly, this means that for any probe $\FR^k \times \DD \in \ThickenedCartesianSpaces$, this pairing is the one obtained by extending plot-wise the \textit{non-degenerate} pairing of $*$-plots (by the fundamental Lemma of variational calculus) 
$$
\Gamma_M(VF)\times \Gamma_M\big(V^*F \otimes \wedge^d T^*M\big) \xrightarrow{\quad \quad} \FR 
$$
using the $C^\infty(\FR^k)\times \CO(\DD)$-module structure of each of 
$$
T\CF(\FR^k\times \DD) \, \cong \,  \mathbold{\Gamma}_M(VF)(\FR^k\times \DD) \, \cong \,  \Gamma_M(VF) \,\widehat{\otimes}\, C^\infty(\FR^k)\otimes \CO(\DD)
$$
and \footnote{Where $\widehat{\otimes}$ denotes the completed projective tensor product.} 
$$
T\CF(\FR^k\times \DD) \, \cong \,  \mathbold{\Gamma}_M\big(V^*F\otimes \wedge^d T^*M\big)(\FR^k\times \DD) \, \cong \,  \Gamma_M\big(V^*F\otimes \wedge^d T^*M\big)\, \widehat{\otimes}\, C^\infty(\FR^k)\otimes \CO(\DD)
$$
via the algebra multiplication on $C^\infty(\FR^k)\otimes \CO(\DD)$. 

In particular, since the original pairing of $*$-plots is non-degenerate as a map of $\FR$-modules, it follows that the pairing defined at each probe stage $\FR^k\times \DD$ is \textit{non-degenerate} as a map of $C^\infty(\FR^k)\otimes \CO(\DD)$-modules.
That is,
$$
\int_{M}\circ \, \, \big\langle \mathcal{B}_{\phi^{k,\epsi}} \, , \, \CZ_{\phi^{k,\epsi}} \big\rangle  \, = \, 0 \quad \in \quad  C^\infty(\FR^k)\otimes \CO(\DD) \, ,
$$
for all $\CZ_{\phi^{k,\epsi}} \in T_{\phi^{k,\epsi}}\CF(\FR^k\times \DD)$ if and only if
$$
\mathcal{B}_{\phi^{k,\epsi}} \, = \, 0 \;\; \in \;\; T^*_{\var,{\phi^{k,\epsi}}}\CF(\FR^k\times \DD) \, .
$$
Hence, since every plot of tangent vectors to fields is covered by a path of plots of fields (Lem. \ref{LinePlotsRepresentTangentVectors}), it follows by \eqref{VariationOfActionOfPlots} that an $\FR^k\times \DD$-plot is critical for the local action functional (Def. \ref{CriticalRkDPoints}) if and only if it satisfies the Euler--Lagrange equations:    
\begin{align}\label{ELcondition}
\hspace{-5mm} 
	\phi^{k,\epsi} \in \mathrm{Crit}(S)(\FR^k\times \DD ) \quad \iff  \quad \mathcal{E}\mathcal{L}(\phi^{k,\epsi})
 =0_{\phi^{k,\epsi}} \;\; \in \;\; & T^*_{\var,{\phi^{k,\epsi}}}\CF(\FR^k\times \DD)  \\  & \cong \;
 \Gamma_M\big( V^*F \otimes \wedge^d T^*M \big)\hat{\otimes} C^{\infty}(\FR^k)\otimes C^\infty(\DD)  \nn \, ,
\end{align}
where $0_{\phi^{k,\epsi}}$ is the canonical section given by composing $\phi^{k,\epsi}: \FR^k\times \DD\times M \rightarrow F$ with the zero-section $0_F : F\rightarrow VF$.
By the very construction of the Euler--Lagrange map within thickened smooth sets (Eqs. \eqref{EulerLagrangeOperator}-\eqref{ELOperatorOfSectionPlotAbusively}), the latter condition is \textit{functorial} under pullbacks along maps
of plots $f:\FR^{k'}\times \DD' \rightarrow \FR^{k}\times \DD$ , and so $\mathcal{E}\mathcal{L}\big(\phi^{k,\epsi}\big)= 0_{\phi^{k,\epsi}} \implies 
\mathcal{E}\mathcal{L}\big(f^* \phi^{k,\epsi}\big)= 0_{f^*\phi^{k,\epsi}}$. 
Hence if 
$\phi^{k,\epsi}\in \mathrm{Crit}(S)(\FR^k\times \DD)\subset \CF(\FR^k\times \DD)$, then $f^*\phi^{k,\epsi}\in \mathrm{Crit}(S)(\FR^{k'}\times \DD')\subset \CF(\FR^{k'}\times \DD')$ 
for all $f:\FR^{k'}\times \DD'\rightarrow \FR^k\times \DD$, and so the assignment of critical plots defines a presheaf on $\ThickenedCartesianSpaces$. 

Finally, since limits are in sheaf categories are computed probe-wise, the probe-wise vanishing condition of the Euler--Lagrange operator is nothing but the defining condition of the corresponding pullback diagram, which also guarantees the sheaf condition and completes the proof.
\end{proof}
\end{theorem}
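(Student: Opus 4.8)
The plan is to leverage the fact that essentially all the technical work has already been done in the smooth-set setting of \cite{GS23}, and the passage to $\ThickenedSmoothSets$ is ``for free'' once the right objects ($J^\infty_M F$, the variational bicomplex, the jet prolongation, the Euler--Lagrange operator) are known to exist there with the correct plot-wise behavior --- which the preceding lemmas (Lem.~\ref{ThickenedInfiniteJetProlongation}, Prop.~\ref{SyntheticPushforwardOfJetProlongation}, Cor.~\ref{PushforwardOfProlongatedEvaluation}) supply. So the first step is to run the computation of \cite[Prop.~5.31]{GS23} verbatim on an arbitrary $\FR^k\times\DD$-plot: integration by parts along $M$ (valid plot-wise, carrying the nilpotent $\CO(\DD)$-coefficients along trivially since they commute with $\partial_\mu$ and with $\int_M$) turns $\partial_t S(\phi^{k,\epsi}_t)|_{t=0}$ into the pairing $\int_M \langle \CE\CL(\phi^{k,\epsi}),\,\partial_t\phi^{k,\epsi}_t|_{t=0}\rangle$ valued in $C^\infty(\FR^k)\otimes\CO(\DD)$, i.e. exactly \eqref{VariationOfActionOfPlots}.

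The second step --- and the heart of the argument --- is the ``fundamental lemma of the calculus of variations'' upgraded to the module-theoretic level. For $*$-plots the pairing $\Gamma_M(VF)\times\Gamma_M(V^*F\otimes\wedge^d T^*M)\to\FR$ is non-degenerate over $\FR$; one then needs that extending scalars along $\FR\hookrightarrow C^\infty(\FR^k)\otimes\CO(\DD)$ (via the completed projective tensor product, which is how $T\CF$ and $T^*_\var\CF$ evaluate on the probe $\FR^k\times\DD$ by Prop.~\ref{SyntheticTangentBundleOfFieldSpace} and Def.~\ref{VariationalCotangentBundle}) preserves non-degeneracy of the pairing. The key point is that $C^\infty(\FR^k)\otimes\CO(\DD)$ is a faithfully flat(-enough) $\FR$-module --- indeed $C^\infty(\FR^k)$ is a free $\FR$-module and $\CO(\DD)\cong\FR\oplus V$ contains $\FR$ as a retract --- so a cotangent plot $\mathcal B_{\phi^{k,\epsi}}$ pairing to zero with \emph{every} tangent plot must vanish; concretely, testing against $*$-plots (constant in the probe directions) already forces the coefficient functions of $\mathcal B_{\phi^{k,\epsi}}$ to vanish pointwise over $M$ for each value of the probe, and the nilpotent components are killed similarly by testing against appropriately $\epsi$-dependent tangent plots. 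Combining this with Lem.~\ref{LinePlotsRepresentTangentVectors} (every plot of tangent vectors is $\partial_t\phi^{k,\epsi}_t|_{t=0}$ for some path of plots), the criticality condition of Def.~\ref{CriticalRkDPoints} becomes equivalent to the pointwise (over $M$) vanishing $\CE\CL(\phi^{k,\epsi})=0_{\phi^{k,\epsi}}$ in $T^*_{\var,\phi^{k,\epsi}}\CF(\FR^k\times\DD)$, which is \eqref{ELcondition}.

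The third step is functoriality, hence $\bf(i)$: the explicit local-coordinate form of $\CE\CL$ on plots, \eqref{ELOperatorOfSectionPlotLocally}--\eqref{ELOperatorOfSectionPlotAbusively}, is manifestly natural under pullback $f:\FR^{k'}\times\DD'\to\FR^k\times\DD$ --- precomposition with $f$ commutes with the derivations $\partial_\mu$ along $M$ and with the algebra operations --- so $\CE\CL(\phi^{k,\epsi})=0\Rightarrow\CE\CL(f^*\phi^{k,\epsi})=0$, and therefore $\FR^k\times\DD\mapsto\mathrm{Crit}(S)(\FR^k\times\DD)$ is a subpresheaf of $\CF$; being a subpresheaf cut out by a (sheaf-valued) equation, it is itself a sheaf, i.e. a thickened smooth set. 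Finally $\bf(ii)$ is now essentially a restatement: in the topos $\ThickenedSmoothSets$ limits are computed object-wise, so the pullback of $\CE\CL:\CF\to T^*_\var\CF$ against the zero section $0_\CF$ has, on each probe $\FR^k\times\DD$, exactly the set of plots satisfying $\CE\CL(\phi^{k,\epsi})=0_{\phi^{k,\epsi}}$ --- which by the equivalence \eqref{ELcondition} just established is $\mathrm{Crit}(S)(\FR^k\times\DD)$; naturality of this identification in the probe gives the claimed isomorphism $\mathrm{Crit}(S)\cong\CF_{\CE\CL}$, and simultaneously re-derives the sheaf property. I expect the main obstacle to be a careful and honest treatment of the functional-analytic subtleties in the second step: establishing that the duality pairing of sections, when tensored up to $C^\infty(\FR^k)\,\widehat\otimes\,\CO(\DD)$-modules, remains non-degenerate --- this requires being precise about the completed tensor product $\widehat\otimes$, about smoothness/continuity of the extended pairing, and about why pointwise-over-$M$ testing suffices; everything else is bookkeeping that genuinely does transfer verbatim from \cite{GS23}.
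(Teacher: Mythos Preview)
Your proposal is correct and follows essentially the same approach as the paper's own proof: reduce to \cite[Prop.~5.31]{GS23} plot-wise to obtain the pairing formula, upgrade the fundamental lemma of variational calculus to non-degeneracy of the pairing as a $C^\infty(\FR^k)\otimes\CO(\DD)$-module map, invoke Lem.~\ref{LinePlotsRepresentTangentVectors} to translate criticality into $\CE\CL=0$, and then read off functoriality and the pullback description from the explicit plot-wise form of $\CE\CL$. You correctly single out the module-level non-degeneracy as the one genuinely new technical point, and your sketch of why it holds (via the $\FR$-retract structure of $\CO(\DD)$ and freeness of $C^\infty(\FR^k)$) is in fact slightly more explicit than what the paper itself provides.
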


\begin{remark}[\bf de Rham differential of action functional]
It is in our current setting of synthetic differential geometry that Lawvere \cite{Law80} envisaged that the moduli space of one forms $\mathbold{\Omega}^1$ (Def. \ref{ModuliOfDifferentialForms}, cf. \cref{ModuliOfDifferentialForms}) might allow for ``a drastic simplification of the usual
differential form calculus''. Unfortunately, this does {\it not} seem to be the case for the purposes of variational calculus and the correct thickened smooth structure on the critical locus. Namely, one may  instead formally compute the composition 
$$
\dd_{\dd R} \circ S \;:\; \CF \longrightarrow \FR \longrightarrow \mathbold{\Omega}^1 \, ,
$$ 
whose (naive) vanishing condition, however, fails to even produce the correct smooth structure on the critical locus (see \cite[Rem. 5.41]{GS23}).
By abstract reasons \cref{ModuliOfDifferentialForms}, or by the explicit calculation above, this does encode a $1$-form in the sense of a fiber-wise linear map
$$
\delta S \, : \, T\CF \longrightarrow \FR  \, .
$$
Still, however, its further characterization given by a pairing as in \eqref{VariationOfActionOfPlots} remains crucial for identifying the correct thickened smooth set structure on its critical locus (cf. Cor. \ref{OnshellSyntheticTangentBundle})!
\end{remark}

\begin{remark}[\bf Criticality on general spacetimes]\label{CriticalityOnGeneralSpacetimes}
The result holds similarly on non-compact spacetimes, along the lines of \cite[Prop. 5.39]{GS23}, by first appropriately criticality  locally and jointly for all compact $d$-dimensional submanifolds \cite[Def. 5.37, Def. 5.38]{GS23}. The proof then follows verbatim using the petit sheaf nature of $\CF$ over $M$, given the extra technical justifications of Thm. \ref{FunctorialityOfTheThickenedCriticalSet}, so we shall not repeat this here. In particular, the critical locus of a local Lagrangian on any non-compact spacetime is still identified as the Euler--Lagrange zero locus, i.e, as the pullback diagram 
 \[
	\xymatrix@=1.6em  {\mathrm{Crit}(\CL) \cong \CF_{\CE \CL} \ar[d] \ar[rr] &&   \CF \ar[d]^-{\mathcal{EL}} 
		\\ 
		\CF\ar[rr]^-{0_\CF}  && T^*_\mathrm{var} \CF
	\,  }
	\]
of thickened smooth sets. The discussion further extends to spacetimes with boundary (Ex. \ref{ManifoldsWithCornersExample}) as per \cite[Rem. 5.40]{GS23}, now interpreted within $\ThickenedSmoothSets$.
\end{remark}

\newpage 
With the above results at hand, we are now able to rigorously recover the pushforward of the Euler--Lagrange operator $\CE \CL_*$ \cite[Def. 7.12]{GS23} and the corresponding on-shell tangent bundle to field space $T\CF_{\CE \CL}$ \cite[Def. 7.13]{GS23} as a direct application of the synthetic tangent functor (Def. \eqref{SyntheticTangentBundle}). Crucially, both of these were defined in a rather ad-hoc manner in the purely smooth setting of \cite{GS23}, and more so in all previous rigorous approaches to classical field theory we are aware of (if at all defined therein). The result below thus further justifies the intuition that infinitesimals, in the vein of synthetic differential geometry, are indeed necessary for the complete and rigorous description of field theory.
\begin{corollary}[\bf Synthetic tangent bundle of on-shell fields is the space of Jacobi fields]\label{OnshellSyntheticTangentBundle}
The \textit{synthetic tangent bundle} $T \CF_{\CE\CL}:= [\DD^1(1)\, , \, \CF_{\CE \CL}]$ to the subspace of on-shell fields is canonically identified as the pullback 	
 \[
	\xymatrix@=1.6em  {T \CF_{\CE \CL} \ar[d] \ar[rr] &&   T\CF \ar[d]^{\mathcal{EL}_*} 
		\\ 
		T\CF \ar[rr]^-{{0_\CF}_*}  && T(T^*_\var\CF)
	\, , }
	\]
\noindent 	
in $\ThickenedSmoothSets$. Explicitly, this is the thickened smooth set with $\FR^k\times\DD$-plots
$$
T\CF_{\CE\CL}\big(\FR^k\times \DD \big)= \big\{ \CZ_{\phi^{k,\epsi}} \in T \CF \;  | \; \CE \CL_*\big(\CZ_{\phi^{k,\epsi}}\big)= 0_{\phi^{k,\epsi}} \big\} \, ,
$$
where in local coordinates the synthetic pushforward of the Euler--Lagrange operator $\CE \CL_* \equiv T(\CE \CL)$ is given by
\begin{align*}
\CE \CL_* \;\;:\;\; T\CF & \; \longrightarrow \; T (T^*_\var \CF)\\
\CZ_{\phi^{k,\epsi}} & \; \longmapsto \;  \sum_{|I|=0} 
 \frac{\delta \CE \CL_a (\phi^{k,\epsi})}{\delta (\partial_I \phi^{k,\epsi,\,b})} \cdot 
\frac{\partial\CZ_{\phi^{k,\epsi}}^b}{\partial x^I} \cdot \dd u^a \, .
\end{align*}
In theoretical physics terminology, the synthetic tangent bundle of the on-shell field space recovers precisely the (thickened, smooth) space of $\textit{``Jacobi fields''}$.
\end{corollary}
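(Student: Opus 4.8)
The plan is to obtain $T\CF_{\CE\CL}$ by applying the synthetic tangent functor $T = [\DD^1(1),-]$ to the defining pullback square of $\CF_{\CE\CL}$ from Thm. \ref{FunctorialityOfTheThickenedCriticalSet}(ii). Since $T$ has a left adjoint, namely $\DD^1(1)\times(-)$ by \eqref{TangentFunctorHasLeftAdjoint}, it preserves all (small) limits, so in particular it sends the pullback
\[
\xymatrix@=1.6em  {\CF_{\CE\CL} \ar[d] \ar[rr] &&   \CF \ar[d]^{\mathcal{EL}}
		\\
		\CF\ar[rr]^-{0_\CF}  && T^*_\mathrm{var} \CF }
\]
to the pullback square
\[
\xymatrix@=1.6em  {T\CF_{\CE\CL} \ar[d] \ar[rr] &&   T\CF \ar[d]^{T(\mathcal{EL}) \,=\, \CE\CL_*}
		\\
		T\CF \ar[rr]^-{T(0_\CF)\,=\,{0_\CF}_*}  && T(T^*_\mathrm{var} \CF) \, , }
\]
which is exactly the diagram in the statement. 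This is the conceptual core, and it is essentially immediate once the adjunction is invoked; it is the same argument used in the proof of Prop. \ref{SyntheticTangentBundleOfFieldSpace} and Lem. \ref{SyntheticInfiniteJetTangentBundle}.

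Next I would spell out the $\FR^k\times\DD$-plots. Since limits in a sheaf topos are computed plot-wise, the plots of $T\CF_{\CE\CL}$ are the equalizer/fiber product of $\CE\CL_*$ and ${0_\CF}_*$ at each probe, i.e.
$$
T\CF_{\CE\CL}(\FR^k\times\DD) \;=\; \big\{\CZ_{\phi^{k,\epsi}}\in T\CF(\FR^k\times\DD)\;\big|\; \CE\CL_*(\CZ_{\phi^{k,\epsi}}) = {0_{\phi^{k,\epsi}}}\big\},
$$
using that ${0_\CF}_*$ applied to $\CZ_{\phi^{k,\epsi}}$ covers (and is determined by) the zero section over the underlying plot of fields $\phi^{k,\epsi}=\pi_F\circ\CZ_{\phi^{k,\epsi}}$, so the fiber product over $T(T^*_\var\CF)$ with two copies of $T\CF$ collapses to this single vanishing condition, just as in Thm. \ref{FunctorialityOfTheThickenedCriticalSet}. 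Then I would compute the local-coordinate form of $\CE\CL_* = T(\CE\CL)$ by differentiating the explicit expression \eqref{ELOperatorOfSectionPlotAbusively} for $\CE\CL(\phi^{k,\epsi})$: representing a plot of tangent vectors as $\CZ_{\phi^{k,\epsi}} = \partial_t\phi^{k,\epsi}_t|_{t=0}$ via Lem. \ref{LinePlotsRepresentTangentVectors}, applying the chain rule to $\CE\CL(\phi^{k,\epsi}_t)$, and recognizing that the derivative of the Euler--Lagrange expression is the linearized (Jacobi) operator
$$
\CZ_{\phi^{k,\epsi}}\;\longmapsto\;\sum_{|I|=0}^\infty \frac{\delta\, \CE\CL_a(\phi^{k,\epsi})}{\delta(\partial_I\phi^{k,\epsi,\,b})}\cdot\frac{\partial\CZ^b_{\phi^{k,\epsi}}}{\partial x^I}\cdot\dd u^a,
$$
which is where the identification with Jacobi fields comes from, in the sense of the physics literature. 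The fact that the synthetic pushforward along maps of finite-dimensional manifolds recovers the traditional one \eqref{SyntheticPushforwardRecoversTraditional}, together with Prop. \ref{SyntheticPushforwardOfJetProlongation} and Cor. \ref{PushforwardOfProlongatedEvaluation}, ensures the chain-rule computation is legitimate plot-wise.

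The main obstacle is not the abstract pullback argument but making precise that the plot-wise vanishing condition $\CE\CL_*(\CZ_{\phi^{k,\epsi}}) = 0$ is the \emph{correct} condition, i.e. that the two parallel copies of $T\CF$ in the square genuinely cut out Jacobi fields rather than something larger or smaller. Concretely, one must check that ${0_\CF}_*: T\CF\to T(T^*_\var\CF)$ really is (the pushforward of) the zero section and covers the zero section plot-wise --- here the subtlety flagged in Rem. \ref{GeneralMappingSpaceTangentBundles} about $T^*_\var\CF$ possibly lacking a fiber-wise linear structure does \emph{not} bite, since $T^*_\var\CF = \mathbold{\Gamma}_M(\wedge^d T^*M\otimes V^*F)$ is a genuine (thickened) vector bundle by Def. \ref{VariationalCotangentBundle}, so its zero section and the pushforward thereof behave as expected --- and that the linearization of $\CE\CL$ along $\DD^1(1)$ is computed pointwise over $M$, which follows from the pointwise-over-$M$ nature of the derivative in Lem. \ref{LinePlotsRepresentTangentVectors}. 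Modulo these checks, which are routine given the groundwork already laid, the corollary follows by limit-preservation plus a chain-rule calculation.
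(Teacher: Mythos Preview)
Your proposal is correct and follows essentially the same approach as the paper: apply the limit-preserving tangent functor $T=[\DD^1(1),-]$ to the defining pullback of $\CF_{\CE\CL}$, then compute $\CE\CL_*$ in local coordinates by lifting to paths via Lem.~\ref{LinePlotsRepresentTangentVectors} and applying the chain rule. The paper additionally makes the lifting step explicit via the commuting square relating $\mathbold{P}(\CF)$ and $T\CF$ through the restriction map \eqref{DerivativeOfPathsIsRestriction}, but this is exactly what you describe in words.
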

\begin{proof}
The synthetic tangent bundle of the on-shell field space $\CF_{\CE \CL}\cong Crit(\CL)$ is obtained by applying the tangent functor $T=[\DD^1(1), -]$ to the pullback diagram of Thm. \ref{FunctorialityOfTheThickenedCriticalSet} and Rem. \ref{CriticalityOnGeneralSpacetimes}. Using once again that this functor has a left adjoint \eqref{TangentFunctorHasLeftAdjoint}, it commutes with all limits and, in particular, pullbacks. Hence it follows that the on-shell synthetic tangent bundle $T\CF_{\CE \CL}$ is canonically isomorphic to the pullback diagram 
 \[
	\xymatrix@=1.6em  { T\mathrm{Crit}(\CL) \cong T \CF_{\CE \CL} \ar[d] \ar[rr] &&   T\CF \ar[d]^{\mathcal{EL}_*} 
		\\ 
		T\CF \ar[rr]^-{{0_\CF}_*}  && T(T^*_\var\CF)
	\, . }
	\]
    
It remains to demonstrate the explicit local form of the synthetic pushforward of the Euler--Lagrange operator $\CE \CL_* \equiv T(\CE \CL)$. Recall, this is defined simply by internal post-composition (i.e., plot-wise) 
$$
T\CF \, := \, \big[\DD^1(1), \CF\big] \xrightarrow {\quad \CE \CL \circ (-)\quad } \big[\DD^1(1), T^*_\var \CF\big]\, =:  \, T(T^*_\var \CF) \, , 
$$
and since the projection from path spaces of fields to fields is nothing but the restriction along $\DD^1(1)\hookrightarrow \FR^1_t $ (Eq. \eqref{DerivativeOfPathsIsRestriction}),
the following diagram commutes
 \[
	\xymatrix@R=1.6em@C=2.2em  { \mathbold{P}(\CF) \ar[d]_{\partial_t |_{t=0}}  \ar[rr]^{\mathcal{EL}_*} &&   \mathbold{P}(T^*_\var \CF) \ar[d]^{\partial_t |_{t=0}} 
		\\ 
		T\CF \ar[rr]^{\mathcal{EL}_*}  && T(T^*_\var\CF)
	\, . }
	\]
Thus it follows, since any plot of tangent vectors $\CZ_{\phi^{k,\epsi}}$ is covered by a path of plots of fields $\phi^{k,\epsi}_t$ (Lem. \ref{LinePlotsRepresentTangentVectors}), that the pushforward map $\CE \CL_*$ may be computed by first lifting to a path of plots, and then applying $\partial_t|_{t=0} \circ \CE \CL$. 
But this is precisely the calculation from $\cite[p. 96]{GS23}$, extended trivially to carry along the $\CO(\DD)$ dependence via \eqref{ELOperatorOfSectionPlotLocally}-\eqref{ELOperatorOfSectionPlotAbusively}, so that by the traditional chain rule (point-wise w.r.t M)
\begin{align*}
 \partial_t \CE \CL_a (\phi_t^{k,\epsi}) \big\vert_{t=0}  
 & = \sum_{|I|=0} 
\bigg(\frac{\partial EL_a}{\partial u^b_I}\circ j^\infty\phi^{k,\epsi} \bigg) \cdot 
 \frac{
\partial}{\partial t} \frac{\partial \phi^{k,\epsi,\,b}_t}{\partial x^I}\Big\vert_{t=0} \\
&=  \sum_{|I|=0} 
 \frac{\delta \CE \CL_a (\phi^{k,\epsi})}{\delta (\partial_I \phi^{k,\epsi,\,b})} \cdot 
\frac{\partial\CZ_{\phi^{k,\epsi}}^b}{\partial x^I}
\\
&= (\CE \CL_a)_* \big(\CZ_{\phi^{k,\epsi}}\big)
\; .
\end{align*}

\vspace{-5mm}
\end{proof}

\newpage 
\begin{remark}[\bf Variations via infinitesimals rigorously] In local coordinates and by the abuse of notation from Eq. \eqref{ELOperatorOfSectionPlotAbusively}, it is easily seen that the derivative at $t=0$ may be formally computed as the derivative of ``$\CE \CL_a (\phi^{k,\epsi} + \epsi^1 \cdot \CZ_{\phi^{k,\epsi}})$'' along a nilpotent infinitesimal coordinate $\epsi^1$, i.e.,
$$
\epsi^1 \;\; \in \;\; \CO(\DD^1(1))\cong \FR[\epsi^1]/(\epsi^1)^2,
$$
by Taylor expanding as prescribed above. This is not surprising from our rigorous formalism, since under this abuse of notation $\phi^{k,\epsi} + \epsi^1 \cdot \CZ_{\phi^{k,\epsi}}$ stands exactly for an element in $T\CF (\FR^k\times \DD) \, \cong  \, \big[\DD^1(1), \, \CF\big](\FR^k\times \DD) \, \cong \,  \CF\big(\FR^k\times \DD\times \DD^1(1)\big)$. The same holds for the derivative of the action functional from Thm. \ref{FunctorialityOfTheThickenedCriticalSet}. Thus, these results rigorously justify the formal (and ad-hoc) appearance of the nilpotent infinitesimal ``$\epsi^1$'' in textbook variational manipulations of field theories, as nothing but a stand-in for the derivative along paths (of plots) of fields. We note, however, the consistency of these formal manipulations in this infinite-dimensional setting rests crucially upon the content of Lem. \ref{LinePlotsRepresentTangentVectors}.
\end{remark}

\bigskip 
\paragraph{Outlook.} Having provided a solid foundation for synthetic constructions in field theory via infinitesimally thickened smooth sets, in the next installment of this series, we will further enlarge our probe-site inside super-commutative algebras
$$
\CartesianSpaces\longhookrightarrow \ThickenedCartesianSpaces \longhookrightarrow \mathrm{CAlg}^{\op} \longhookrightarrow \mathbb{Z}_2\mbox{-}\mathrm{CAlg}^\op \, ,
$$
to what we call super (thickened) Cartesian spaces $\mathrm{SupThCartSp}\hookrightarrow \mathbb{Z}_2\mbox{-}\mathrm{CAlg}^\op$.
As we will show therein, sheaves over this site, i.e., ``\textit{super thickened smooth sets}'', provide a natural setting for a rigorous formalization of fermionic fields alongside bosonic fields and their smooth and infinitesimal structures. 
See \cite{Gi25} for an overall exposition towards that goal.

\vspace{1cm} 
\noindent {\bf Acknowledgements.} We thank Urs Schreiber for discussions relating to various aspects of the paper, and David Jaz Myers for comments on an early version of the Appendix. 

\newpage

\section{Appendix}
\label{App-Synth}

\subsection{The $\FR$-algebraic nature of the Cahiers topos}
\label{App-CahierCorners}

In this manuscript, our aim has been to make contact with existing practice of infinitesimals in the physics literature, while using the least amount of mathematical background, but nevertheless not compromising on the level of rigor. This appendix serves to justify our usage of the $\FR$-algebraic version of Dubuc's ``Cahiers topos'' comprised of sheaves over infinitesimally thickened Cartesian probes $\ThickenedCartesianSpaces\hookrightarrow \mathrm{CAlg}^\op$, rather than the original $C^\infty$-algebraic version from \cite{Dubuc79}, via Thm. \ref{ThickenedCartesianSpacesAreSmoothLoci} and Cor. \ref{CahiersToposIsRalgebraic}. Of course, this by no means compromises the usefulness of the $C^\infty$-algebraic technology in more technical considerations of variants of thickened probes sites (cf. Rem. \ref{VariantsOfThickenedSites}, ftn. 3), especially in showing that these do indeed yield \textit{``well-adapted''} models for synthetic differential geometry \cite{Dubuc79}\cite{MoerdijkReyes}\cite{Kock06} -- including the case of the Cahiers topos we employ.
\begin{lemma}[{\bf Hadamard's Lemma for partial derivatives}] \label{PartialHadamardsLemma}
For any smooth function $f(x,y)\in C^\infty(\FR^k\times\FR^m)$ and any $l\in \NN$, there exist smooth functions $\{h_{i_{1}\cdots i_{l+1}}\}_{i_1,\cdots, i_{l+1}=1,\cdots, m}\subset C^\infty(\FR^k\times \FR^m)$ such that 
\vspace{-1mm} 
\begin{align*}
f&=  f(x,0)+ y^i \cdot \partial_{y^i} f(x,0) + \tfrac{1}{2}y^i y^j\cdot \partial_{y^i}\partial_{y^j}f(x,0) +\cdots +\; \tfrac{1}{l!} \sum_{i_{1},\cdots,i_{l}=1}^m y^{i_{1}}\cdots y^{i_{l}}\cdot \partial_{y^{i_{1}}}\cdots \partial_{y^{i_{l}}}f(x,0)
\\
 &  \quad \quad \qquad 
 + \sum_{i_{1},\cdots,i_{l+1}=1}^m y^{i_{1}}\cdots y^{i_{l+1}}\cdot h_{i_{1}\cdots i_{l+1}} (x,y) \\
&= \sum_{|I|\leq l} k_I(x) \cdot y^I + \sum_{|I|>l} y^{I+1} \cdot h_{I+1}(x,y)\, ,
\end{align*}
where $\{y^i\}_{i=1,\cdots,m}\subset C^\infty(\FR^m)$ are the canonical (linear) coordinate functions, and $k_{i_1\cdots i_{l+1}}(x) \subset C^\infty(\FR^k)$ are (uniquely) defined by the first equality.
\end{lemma}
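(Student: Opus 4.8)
The plan is to prove this by a double induction—an outer induction on $l$, and for each fixed $l$ an inner application of the classical one-variable Hadamard trick, carried out parameter-wise in the $x$-variables. First I would reduce to the key case $l=0$: namely, the claim that any $f\in C^\infty(\FR^k\times\FR^m)$ can be written $f = f(x,0) + \sum_{i=1}^m y^i\cdot h_i(x,y)$ for smooth $h_i$. This is the standard integral-representation argument: set $h_i(x,y) := \int_0^1 (\partial_{y^i}f)(x,ty)\,\dd t$, and then
\begin{align*}
\sum_{i=1}^m y^i h_i(x,y) = \int_0^1 \frac{\dd}{\dd t}\big(f(x,ty)\big)\,\dd t = f(x,y) - f(x,0)\,,
\end{align*}
with smoothness of $h_i$ following from smooth dependence of integrals on parameters (differentiation under the integral sign). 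The only mild care needed is that everything is genuinely jointly smooth in $(x,y)$, which it is since $(t,x,y)\mapsto (\partial_{y^i}f)(x,ty)$ is smooth on $[0,1]\times\FR^k\times\FR^m$.

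Next I would run the outer induction. Assume the statement holds for $l-1$, so that
\begin{align*}
f = \sum_{|I|\leq l-1} k_I(x)\cdot y^I + \sum_{|J| = l} y^{J}\cdot g_{J}(x,y)
\end{align*}
for suitable smooth $k_I$ (the Taylor coefficients $\tfrac{1}{I!}\partial_{y}^{I}f(x,0)$) and smooth remainder functions $g_J$. To push from order $l-1$ to order $l$, apply the $l=0$ case to each remainder $g_J(x,y)$: write $g_J(x,y) = g_J(x,0) + \sum_{i=1}^m y^i\cdot h_{J,i}(x,y)$. Substituting back, the terms $y^J g_J(x,0)$ are exactly the new order-$l$ polynomial terms (and a short computation identifies $g_J(x,0)$ with $\tfrac{1}{l!}\partial_y^{J}f(x,0)$ up to the multinomial bookkeeping, by matching Taylor coefficients), while the terms $y^{J}y^i h_{J,i}(x,y)$ are $(l+1)$-fold products of the $y$'s times smooth functions—collecting these and relabelling indices gives the desired $h_{i_1\cdots i_{l+1}}(x,y)$. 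The symmetry/relabelling of indices is purely formal: all that matters is that each surviving remainder term carries at least $l+1$ factors among $\{y^1,\dots,y^m\}$.

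The last point is the uniqueness claim: the $k_I(x)$ for $|I|\leq l$ are forced. This is immediate by applying $\partial_y^{I}$ to both sides of the displayed identity and evaluating at $y=0$: every term in the second sum has a factor $y^{I'+1}$ with $|I'|>l$, hence vanishes to order $> l$ at $y=0$, so $\partial_y^{I}f(x,0) = I!\,k_I(x)$ for $|I|\leq l$, pinning down $k_I(x) = \tfrac{1}{I!}\partial_y^{I}f(x,0)$, which is manifestly smooth in $x$. I expect the only genuinely fiddly step to be the combinatorial bookkeeping in the induction—keeping the multinomial coefficients straight when re-expanding the remainders and verifying that the polynomial part that emerges is precisely the order-$l$ Taylor polynomial in $y$ with $x$-dependent coefficients; this is routine but must be done carefully. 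Everything analytic (smoothness of the $h$'s) reduces cleanly to differentiation under the integral sign, so no real obstacle lies there. Finally I would remark that the statement specializes at $k=0$ to the ordinary Hadamard Lemma \ref{HadamardsLemma} and, combined with the quotient description \eqref{InfinitesimalDiskAlgebraAsQuotientOfCartesian}, is exactly what is needed to express $\CO(\FR^k\times\DD^m(l))$ as a quotient of a Cartesian smooth algebra.
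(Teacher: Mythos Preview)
Your proposal is correct and follows essentially the same approach as the paper: the integral representation $h_i(x,y) = \int_0^1 (\partial_{y^i}f)(x,ty)\,\dd t$ to establish the $l=0$ case, followed by iterating this on the remainder functions to climb in $l$. The paper's proof is more terse (first doing $k=m=1$ and then asserting the general case ``follows analogously''), whereas you spell out the general-$m$ base case and the induction step directly and add the uniqueness argument for the $k_I$, but the underlying idea is identical.
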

\begin{proof}
This follows by a slight generalization of the proof for the traditional Lemma by Hadamard (e.g. \cite[Lem. 2.8]{Nestruev03}). Consider first the case of $k,m=1$ and $l=0$ and define
\begin{align*}
g: [0,1]_t \times \FR_x &\longrightarrow \FR\\
(t,x) &\longmapsto f(x,t\cdot y) \, ,
\end{align*}
so that
$$
\partial_t g(x,t) = \frac{\partial f}{\partial y}(x, t\cdot y) \cdot y\, .
$$
It follows that
$$ 
  g(x,1) - g(x,0) 
  = 
  \int_{0}^1 
    \partial_t g(x,t) 
  \dd t 
  = 
  y 
    \cdot 
  \int_0^1  
    \frac{\partial f}{\partial y}(x, t\cdot y) 
  \dd t
  \,,
$$
and, by definition,
$$
g(x,1)-g(x,0)\equiv f(x,y)-f(x,0)\,,
$$
hence
$$
f(x,y) = f(x,0) + y \cdot h(x,y)\,,
$$
for 
$$
  h(x,y)
    := 
  \int_0^1  
    \frac{\partial f}{\partial y}(x, t\cdot y) 
  \dd t
  \, .
$$
Notice that this implies that $\partial_y f (x,0)=h(x,0)$, and so iterating the above on $h(x,y)$ yields the claim for $m=n=1$ and $l\in \NN$. The case of general $m,n$ follows analogously.
\end{proof}

\begin{corollary}[\bf Hadamard on Manifolds/Cartesian product]
\label{HadamardOnManifoldCartesianProduct}
Let $M\in \SmoothManifolds$ be a smooth manifold. For any smooth function $f\in C^\infty(M\times \FR^m)$ and any $l\in \NN$, there exist smooth functions $\{h_{i_{1}\cdots i_{l+1}}\}_{i_1,\cdots, i_{l+1}=1,\cdots, m}\subset C^\infty(
  M
\times \FR^m)$ such that 
\vspace{-2mm} 
\begin{align*}
f(x,y)&=  f(x,0)+ y^i \cdot \partial_{y^i} f(x,0) + \tfrac{1}{2}y^i y^j\cdot \partial_{y^i}\partial_{y^j}f(x,0) +\cdots +\; \tfrac{1}{l!} \sum_{i_{1},\cdots,i_{l}=1}^m y^{i_{1}}\cdots y^{i_{l}}\cdot \partial_{y^{i_{1}}}\cdots \partial_{y^{i_{l}}}f(x,0)
\\
 &  \quad \quad \qquad 
 + \sum_{i_{1},\cdots,i_{l+1}=1}^m y^{i_{1}}\cdots y^{i_{l+1}}\cdot h_{i_{1}\cdots i_{l+1}} (x,y) \\
&= \sum_{|I|\leq l} k_I(x) \cdot y^I + \sum_{|I|>l} y^{I+1} \cdot h_{I+1}(x,y)\, ,
\end{align*}
where $\{y^i\}_{i=1,\cdots,m}\subset C^\infty(\FR^m)$ are the canonical (linear) coordinate functions, and $k_{i_1\cdots i_{l+1}}(x) \subset C^\infty(M)$ are (uniquely) defined by the first equality.
\end{corollary}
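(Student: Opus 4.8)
\textbf{Proof plan for Corollary \ref{HadamardOnManifoldCartesianProduct}.} The plan is to reduce the manifold-with-extra-Cartesian-factor case to the purely Cartesian statement of Lemma \ref{PartialHadamardsLemma} by working locally on $M$ and then gluing via a partition of unity. First I would observe that the corollary is really two claims stacked: the existence of the remainder functions $h_{i_1\cdots i_{l+1}}$, and the identification of the coefficients $k_I(x)$ as the (uniquely determined) Taylor coefficients $\tfrac{1}{I!}\partial_{y^I}f(x,0)$ in the $y$-variables. The latter is forced by repeatedly differentiating the claimed expansion in $y$ and evaluating at $y=0$, so the only real content is the existence of smooth $h_{i_1\cdots i_{l+1}}$ globally defined on $M\times\FR^m$.

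\textbf{Main steps.} The first step is the local case: for any chart $\psi:\FR^k\xrightarrow{\sim} U\hookrightarrow M$, the pulled-back function $f\circ(\psi\times\id_{\FR^m})\in C^\infty(\FR^k\times\FR^m)$ satisfies Lemma \ref{PartialHadamardsLemma}, yielding smooth remainder functions on $\FR^k\times\FR^m$; transporting back along $\psi^{-1}\times\id$ gives, on each $U\times\FR^m$, functions $h^{U}_{i_1\cdots i_{l+1}}\in C^\infty(U\times\FR^m)$ with the desired expansion there. The second step is to patch these: choosing a locally finite open cover $\{U_\alpha\}$ of $M$ by chart domains with subordinate partition of unity $\{\rho_\alpha\}$ (available since $M$ is second countable and Hausdorff), I would set $h_{i_1\cdots i_{l+1}}:=\sum_\alpha (\rho_\alpha\circ\pr_M)\cdot h^{U_\alpha}_{i_1\cdots i_{l+1}}$, extending each summand by zero off $\mathrm{supp}(\rho_\alpha)\times\FR^m$. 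The third step is to verify the global identity: because the polynomial part $\sum_{|I|\le l}k_I(x)\,y^I$ is \emph{chart-independent} — its coefficients are the intrinsic $y$-derivatives of $f$ at $y=0$, which make sense without reference to any chart on $M$ — the difference $f(x,y)-\sum_{|I|\le l}k_I(x)\,y^I$ equals $\sum_\alpha(\rho_\alpha\circ\pr_M)\big(f(x,y)-\sum_{|I|\le l}k_I(x)y^I\big)=\sum_\alpha(\rho_\alpha\circ\pr_M)\sum_{|I|>l}y^{I+1}h^{U_\alpha}_{I+1}(x,y)$, and since $\sum_\alpha\rho_\alpha\equiv 1$ the remainder aggregates into $\sum_{|I|>l}y^{I+1}h_{I+1}(x,y)$ with the glued $h_{I+1}$. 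Smoothness of $h_{i_1\cdots i_{l+1}}$ is immediate from local finiteness of the cover.

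\textbf{Where the friction is.} There is no deep obstacle here — the argument is a standard partition-of-unity globalization — so the main thing to be careful about is bookkeeping: making sure the polynomial truncation used in each chart is the \emph{same} intrinsic object (the jet of $f$ along $\{y=0\}$ in the $\FR^m$ fiber directions), so that the partition of unity collapses the chart-dependent pieces only in the remainder term and leaves the polynomial part untouched. Equivalently, one can phrase this by noting that $\partial_{y^i}$ is a globally defined vector field on $M\times\FR^m$ (it involves no derivatives along $M$), so all the coefficient functions $k_I(x)=\tfrac{1}{I!}\partial_{y^I}f(x,0)$ are manifestly global and smooth from the outset, and only the Hadamard remainder genuinely needs the local-to-global passage. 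With that observation in hand the proof is essentially the one-line reduction ``apply Lemma \ref{PartialHadamardsLemma} in charts and glue,'' which is all the corollary requires.
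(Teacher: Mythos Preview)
Your proposal is correct and follows essentially the same route as the paper: work locally in product charts $\FR^n\times\FR^m\xrightarrow{\sim}U\times\FR^m$, apply Lemma \ref{PartialHadamardsLemma} there, and globalize by observing that the $\partial_{y^i}$ are globally defined operators on $M\times\FR^m$ so the polynomial truncation is intrinsic. The paper's proof is terser --- it simply notes that the local expansions ``agree on overlaps at each polynomial order in $\{y^i\}$'' --- whereas you spell out the partition-of-unity step to glue the (non-unique) remainder functions; this is a helpful clarification since the local $h^{U_\alpha}_{I+1}$ need not themselves agree on overlaps, but the content is the same.
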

\begin{proof}
Working locally in product charts $\big\{\FR^n\times \FR^m \xrightarrow{\,\, \phi_j\times \id_{\FR^m}\, \,}U_i\times \FR^m \hookrightarrow M\times \FR^m\big\}_{j\in J}$, the result follows as above, since all of the (higher) partial derivative operators appearing are globally defined with respect to the canonical global chart on $\FR^m$, and the corresponding expansions must agree on overlaps at each polynomial order in $\{y^i\}_{i=1,\cdots, m}$.
\end{proof}

The above implies that the function algebras of our thickened probes $\ThickenedCartesianSpaces$ are all, in fact, quotients of smooth Cartesian algebras $C^\infty(\FR^{k'})$. 

\begin{corollary}[\bf Thickened function algebras as quotient algebras]\label{ThickenedFunctionAlgebrasAsQuotientAlgebras}
There exist the following isomorphisms of $\FR$-algebras:

\noindent {\bf (i)}
The function algebra of any thickened Cartesian space $\FR^k\times \DD^m(l) \in \ThickenedCartesianSpaces$ is canonically identified with the quotient
$$
C^\infty(\FR^k)\otimes_\FR \CO\big(\DD^m(l)\big)\; \cong_{\mathrm{CAlg}_\FR} \; C^\infty(\FR^{k+m}) / (y_1\cdots y_m)^{l+1}\, .
$$

\noindent {\bf (ii)}
The function algebra of any thickened Manifold $M\times \DD^m(l) \in \ThickenedSmoothManifolds$ is canonically identified with the quotient
$$
C^\infty(M)\otimes_\FR \CO\big(\DD^m(l)\big)\; \cong_{\mathrm{CAlg}_\FR} \; C^\infty(M\times\FR^{m}) / (y_1\cdots y_m)^{l+1}\, .
$$

\noindent {\bf (iii)} Similarly, the corresponding thickened versions of algebras $C^\infty(\FR^k)$ and $C^\infty(M)$ by any Weil algebra $W\cong_{\mathrm{CAlg}_\FR} \CO\big(\DD^m(l)\big)/I$ may be identified with the quotients
$$
C^\infty(\FR^k)\otimes_\FR W\;  \cong_{\mathrm{CAlg}_\FR}\;
C^\infty(\FR^{k+m}) / \, \tilde{I}
$$
and 
$$
C^\infty(M)\otimes_\FR W\; \cong_{\mathrm{CAlg}_\FR} \; C^\infty(M\times \FR^m) 
/ \, \tilde{I}\, ,
$$
where $\tilde{I}$ is the `lifted' \footnote{Any $I\subset \CO\big(\DD^m(l)\big)$ is finitely presented by polynomials in the nilpotent coordinates. These lift to polynomials in $C^\infty(\FR^m)$.} ideal corresponding to $I\subset \CO\big(\DD^m(l)\big)$, so in particular $(y_1 \cdots y_m)^{l+1}\subset \tilde{I}$.
\end{corollary}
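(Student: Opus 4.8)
The plan is to reduce everything to the single key input from Corollary~\ref{HadamardOnManifoldCartesianProduct} (which itself specializes to Lemma~\ref{PartialHadamardsLemma} when $M = \FR^k$), and then unwind the algebraic bookkeeping. First I would prove {\bf (ii)}, since {\bf (i)} is the special case $M = \FR^k$ of {\bf (ii)} and {\bf (iii)} is the natural generalization. Fix the Cartesian factor $\FR^m$ with linear coordinates $\{y^i\}$ and consider the quotient map $q : C^\infty(M\times\FR^m) \longrightarrow C^\infty(M\times\FR^m)/(y^1,\cdots,y^m)^{l+1}$. The target is the algebra I want to identify with $C^\infty(M)\otimes_\FR \CO\big(\DD^m(l)\big)$. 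I would define an $\FR$-algebra map $\Phi : C^\infty(M\times\FR^m) \to C^\infty(M)\otimes_\FR \CO\big(\DD^m(l)\big)$ by sending $y^i \mapsto 1\otimes \epsi^i$ and $g\circ\pr_M \mapsto g\otimes 1$ for $g\in C^\infty(M)$, and more generally by the order-$l$ Taylor expansion along $\FR^m$ at the origin: concretely, using Corollary~\ref{HadamardOnManifoldCartesianProduct}, $f \mapsto \sum_{|I|\le l}\tfrac{1}{I!}\,\partial_{y^I}f(-,0)\otimes \epsi^I$. One must check this is a well-defined algebra homomorphism — multiplicativity follows because in $\CO\big(\DD^m(l)\big)$ the products $\epsi^{I}$ with $|I|>l$ vanish, so only the truncated Taylor polynomial survives and Leibniz gives exactly the truncated Taylor expansion of a product.

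Next I would show $\Phi$ is surjective (clear, since $1\otimes\epsi^I$ for $|I|\le l$ and $g\otimes 1$ are all hit and these generate the target as an $\FR$-algebra) and that $\kernel(\Phi) = (y^1,\cdots,y^m)^{l+1}$. The inclusion $(y^1,\cdots,y^m)^{l+1} \subseteq \kernel(\Phi)$ is immediate from the definition of $\Phi$. For the reverse inclusion, this is exactly where Corollary~\ref{HadamardOnManifoldCartesianProduct} does the work: any $f\in C^\infty(M\times\FR^m)$ decomposes as $f = \sum_{|I|\le l} k_I(x)\,y^I + \sum_{|I|>l} y^{I}\cdot h_I(x,y)$ with the second sum lying in $(y^1,\cdots,y^m)^{l+1}$ (note the paper's notation writes this second tail with multi-indices of length $l+1$ times extra smooth functions), and $\Phi(f)=0$ forces all the polynomial coefficients $k_I$, $|I|\le l$, to vanish, hence $f$ lies in the ideal. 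The first isomorphism theorem then gives {\bf (ii)}; setting $M=\FR^k$ gives {\bf (i)}.

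For {\bf (iii)}, I would first record that any Weil-algebra ideal $I\subset \CO\big(\DD^m(l)\big)$ is finitely generated by polynomials in the nilpotent generators $\epsi^i$ (this is the footnoted remark — the whole algebra is finite-dimensional, so any ideal is finitely generated, and each generator is a polynomial expression); lift a generating set to polynomials in $C^\infty(\FR^m)$, add the generators of $(y^1,\cdots,y^m)^{l+1}$, and call the resulting ideal of $C^\infty(M\times\FR^m)$ the lift $\tilde I$. Then $C^\infty(M)\otimes_\FR W \cong \big(C^\infty(M)\otimes_\FR \CO\big(\DD^m(l)\big)\big)/\big(C^\infty(M)\otimes I\big)$ by right-exactness of the tensor product over the field $\FR$, and under the isomorphism of {\bf (ii)} the ideal $C^\infty(M)\otimes I$ corresponds precisely to $\tilde I / (y^1,\cdots,y^m)^{l+1}$; composing with the quotient by $(y^1,\cdots,y^m)^{l+1}$ yields $C^\infty(M\times\FR^m)/\tilde I$. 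The Cartesian case $M=\FR^k$ is again a specialization.

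The main obstacle I anticipate is purely notational rather than conceptual: making sure the map $\Phi$ is genuinely multiplicative and that ``lifting the ideal $I$'' is canonical enough that the statement's claimed isomorphisms hold on the nose rather than merely up to a choice. Multiplicativity hinges on the compatibility between truncated Taylor expansion and the product, which is exactly the content of Hadamard's lemma applied to a product $f_1 f_2$ — so the one genuinely load-bearing step is verifying that the Hadamard decomposition respects products modulo the ideal $(y^1,\cdots,y^m)^{l+1}$, and everything else is formal diagram-chasing with the first isomorphism theorem and right-exactness of $-\otimes_\FR-$.
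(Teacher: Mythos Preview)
Your proposal is correct and follows essentially the same approach as the paper: both rely on the Hadamard decomposition from Corollary~\ref{HadamardOnManifoldCartesianProduct} (and its Cartesian specialization Lemma~\ref{PartialHadamardsLemma}) to identify the quotient, with the uniqueness of the coefficients $k_I(x)$ being the key point. The paper's proof is terser---it simply invokes the expansion and the uniqueness of the $k_I$ without explicitly naming the map $\Phi$, checking multiplicativity, or invoking the first isomorphism theorem---whereas you spell out these steps; your treatment of {\bf (iii)} via right-exactness of $-\otimes_\FR-$ is a cleaner algebraic formulation of what the paper phrases as ``keeping track of the extra quotient''.
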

\begin{proof} The first identification follows by expanding any (representative) $f\in C^\infty(\FR^{k+m})$ as 
$$
f= \sum_{|I|\leq l} k_I(x) \cdot y^I + \sum_{|I|>l} y^{I+1} \cdot h_{I+1}(x,y)\, ,
$$ 
by the uniqueness of the $\{k_I(x)\}_{|I| \leq l}$ family (Lem. \ref{PartialHadamardsLemma}). The second identification by the expansion 
of representatives via Cor. \ref{HadamardOnManifoldCartesianProduct}. The final identifications follow completely analogously, keeping track of the extra quotient by polynomials in $\CO\big(\DD^m(l)\big)$ and $C^\infty(\FR^{k+m})$, respectively, with respect to any identification 
$W\cong \CO\big(\DD^m(l)\big)/I$ ; see Lem. \ref{InfinitesimalPointsAsSubspacesOfInfinitesimalDisks}.
\end{proof}

The case {\bf (iii)} of thickened Cartesian spaces is already enough to show that they are, in fact, dual to \textit{finitely generated $C^\infty$-algebras}, i.e., dual to objects of a particularly well-behaved full subcategory of $C^\infty$-algebras \cite{MoerdijkReyes}, and not just $\FR$-algebras. We shall not need to use full definition of a $C^\infty$-algebra\footnote{While an $\FR$-algebra structure on $A$ induces, for any polynomial $p\in \mathrm{Pol}(\FR^n, \, \FR^m)$, a map $\widehat{p}:A^{\times n} \rightarrow A^{\times m}$ compatible with the algebra multiplication, a $C^\infty$-algebra structure on $A$ further induces such a map $\widehat{f}: A^{\times n} \rightarrow A^{\times m}$ for any smooth function $f\in C^\infty(\FR^n,\FR^m)$. That is, a $C^\infty$-algebra is nothing but a product-preserving functor $A:\CartesianSpaces \rightarrow \mathrm{Set}$.} here, since for our purposes we may use the characterization of finitely generated $C^\infty$-algebras and $C^\infty$-algebra morphisms between them from \cite[p. 29]{MoerdijkReyes} as a definition for the former: 

\begin{definition}[\bf Finitely generated $C^\infty$-algebras]\label{FinitelyGeneratedCinftyAlgebras}
The category $C^\infty_{f.g.}$-Alg of finitely generated $C^\infty$-algebras is equivalent to the category whose
$\,$ 

\noindent {\bf (i)} objects are $\FR$-algebras of the form
$$
  A 
    \; \cong \;  
  C^\infty(\FR^k)/I 
  \,
  , 
$$

\noindent {\bf (ii)} morphisms
$$
  \phi
  \;:\;
  A 
   \cong 
  C^\infty(\FR^k)/I 
    \xrightarrow{\quad \quad}
  B
   \cong 
  C^\infty(\FR^{k'})/I'
$$ 
between a pair of f.g. $C^\infty$-algebras are given by (equivalence classes of)
$\FR$-algebra morphisms
$$
\psi^* \;:\; C^\infty\big(\FR^k\big) \xrightarrow{\quad \quad} C^\infty\big(\FR^{k'}\big) 
$$
 (equiv. smooth $\psi: \FR^{k'}\rightarrow \FR^k$, by Prop. \ref{CartSptoAlgebras})
 such that the ideal (generated by the image) $\psi^*(I)$ is a subset of the latter target ideal $\psi^*(I)\subset I'$, and where two morphisms 
$\psi^*, \psi'^{*} : C^\infty(\FR^k) \longrightarrow C^\infty(\FR^{k'})$ are equivalent iff the maps
$$
\big(\pr_i \circ\psi - \pr_i \circ \psi'\big)\;\; : \;\; \FR^{k'}\longrightarrow \FR^k \longrightarrow \FR 
$$ 
lie in the ideal $I'\subset C^\infty(\FR^{k'})$ for all $i\in\{1,\cdots, k\}$, where
$\pr_i: \FR^k\rightarrow \FR$ are the canonical coordinate projection maps.
\end{definition}
\begin{remark}[\bf Determined by value on generators]
A most useful property of maps between finitely generated $C^\infty$-algebras is that they are completely determined by their values on the (linear coordinate) ``generators'' of the source $C^\infty$-algebra 
$$
\big[x^j\big]\equiv \big[\pr_j\big]  \; \longmapsto \; \phi \big(\big[x^j\big]\big) \equiv \big[x^j \circ \psi\big] \, .
$$
Evidently, in the characterization of Def. \ref{FinitelyGeneratedCinftyAlgebras}, this extra property is encoded by the fact that any such map descends (dually) from a smooth map of manifolds $\psi:\FR^{k'}\rightarrow \FR^k$. Crucially, this might not be true for \textit{arbitrary} $\FR$-algebra morphisms between $\FR$-algebras of the general form $C^\infty(\FR^k)/I$. However, as we shall show, it is true for thickened Cartesian algebras (Prop. \ref{RAlgebraMapsAreCinfty}).
\end{remark}
The original Cahiers topos of Dubuc \cite[\S 4]{Dubuc79}\cite{KockReyes87} is defined as sheaves over site with (Weil) thickened Cartesian spaces as the opposite of a subcategory of $C^\infty_{f.g.}$-Alg with respect to the (good open) coverage\footnote{To be precise, the original definition by Dubuc employs the site of thickened manifolds as an opposite subcategory of $C^\infty_{f.g.}\mbox{-}\mathrm{Alg}$ (rather than thickened Cartesian spaces), but for the given (good open) coverage the two corresponding sheaf categories are equivalent, precisely as in \eqref{SheavesOverManifoldsAndSheavesOverCart}. Hence, we focus solely on the thickened Cartesian $C^\infty$-algebraic site.} as in \eqref{GoodOpenCoversonThCartSp}, that is, with $C^\infty$-algebra morphisms rather than $\FR$-algebra morphisms as we have done. We denote this site by
$$
  C^\infty\mbox{-}\ThickenedCartesianSpaces
   \;  \xhookrightarrow{\quad \quad} 
   \; 
  C^\infty_{f.g.}\mbox{-Alg}^{\mathrm{op}} 
  \;
    \longhookrightarrow  
    \; 
  C^\infty\mbox{-}\mathrm{Alg}^{\mathrm{op}}
  \,  .
$$
In a bit more detail, we note that $C^\infty\mbox{-}\mathrm{Alg}$ enjoys a monoidal (tensor) product $\otimes_\infty$ which is in fact a pushout construction 
and, in particular, satisfies
$$ 
C^\infty(\FR)\otimes_\infty C^\infty(\FR)\; \cong \;  C^\infty(\FR\times \FR)\, ,
$$
and similarly for any two manifolds $M,N \in \SmoothManifolds$.\footnote{By which it immediately follows that the (fully faithful) embedding $\SmoothManifolds\hookrightarrow C^\infty\mbox{-}\mathrm{Alg}^{op}$ preserves products (and in fact furthermore pullbacks/intersections), in contrast to that into $\mathrm{CAlg}_\FR^\mathrm{op}$ from Prop. \ref{CartSptoAlgebras}. } This is in stark contrast to the $\FR$-algebraic tensor product $\otimes_\FR$ which only recovers part of the smooth function on products, $C^\infty(\FR)\otimes_\FR C^\infty(\FR) \subset C^\infty(\FR\times \FR)$. The existence of the $C^\infty$-tensor product is one of the main motivating reasons for considering the notion of $C^\infty$-algebras.

\medskip 
This means that, in particular, when one considers the $C^\infty$-algebraic version of the site, the objects $\FR^k\times \DD
\in C^\infty\mbox{-}\ThickenedCartesianSpaces$ must be given by as formal duals of
$$
C^\infty(\FR^k)\otimes_\infty \CO(\DD)
$$
However, a standard result (\cite[Thm. 5.3]{Kock06}) shows that the two tensor products coincide on products with Weil algebras
$$
C^\infty(\FR^k)\otimes_\infty \CO(\DD) \;  \cong \; 
C^\infty(\FR^k)\otimes_\FR \CO(\DD) \, ,
$$
where the isomorphism is as sets, $\FR$-algebras, or even $C^\infty$-algebras. Hence, one may think of the two versions of the site as having the same objects but potentially different Hom-sets. 

\medskip 
We will show that considering the notion of $C^\infty$-algebra morphisms is, in a sense, redundant when working with the subcategory of thickened Cartesian spaces, in that these coincide precisely with the set of $\FR$-algebra maps between any two finitely generated $C^\infty$-algebras. Notice that by Def. \ref{FinitelyGeneratedCinftyAlgebras}, any $C^\infty$-algebra map between (function algebras of) thickened Cartesian space is, in particular, an $\FR$-algebra map, i.e., there is a canonical injection
$$
  \mathrm{Hom}_{C^\infty\mbox{-}\ThickenedCartesianSpaces}
  \big(
    \FR^k\times \DD , \, \FR^{k'}\times \DD'
  \big) 
    \, \xhookrightarrow{\quad \quad} \,\mathrm{Hom}_{\ThickenedCartesianSpaces}\big(\FR^k\times \DD , \, \FR^{k'}\times \DD'\big) \, .
$$
Thus we only need to show that an arbitrary $\FR$-algebra map 
$\phi^*: \CO(\FR^{k'}\times \DD') \longrightarrow \CO(\FR^k\times \DD)$
is in fact a $C^\infty$-algebra map, thus making the above inclusion surjective and so a bijection.

\medskip 
To that end, we note that Hadamard's Lemma \ref{HadamardsLemma} (and Lem. \ref{PartialHadamardsLemma}) may be centered around points different from the origin, precisely as with Taylor expansions. A minor modification of the proof of Lem. \ref{PartialHadamardsLemma} for the case of $k=0$ yields that for any $y_0 \in \FR^m$, any smooth $f\in C^\infty(\FR^m)$ may be expanded as
$$
f(y) = \sum_{|I|\leq l} k_I(y_0) \cdot (y-y_0)^I + \sum_{|I|>l} (y-y_0)^{I+1} \cdot h_{I+1}(y)\,,
$$
for some uniquely defined smooth functions $\{k_I\}_{|I|\leq l} \subset C^\infty (\FR^m)$. Choosing $y=y_0+y_1$ then yields that for any two points $y_0,y_1\in \FR^m$
\begin{align}\label{HadamardOnSum}
f(y_0+y_1)= \sum_{|I|\leq l} k_I(y_0) \cdot (y_1)^I + \sum_{|I|>l} (y_1)^{I+1} \cdot h_{I+1}(y_0+y_1)\,.
\end{align}

\begin{proposition}[\bf $\FR$-algebra maps into thickened Cartesian algebras]\label{RAlgebraMapsAreCinfty}
$\,$

\noindent {\bf (i)} Any $\FR$-algebra map
$$
\phi \;:\; C^\infty(\FR^{k'}) \xrightarrow{\quad \quad} C^\infty(\FR^k)\otimes_\FR W\cong C^\infty(\FR^{k+m})/\,\tilde{I}
$$ 
is a $C^\infty$-algebra map, where the latter identification is via Cor. \ref{ThickenedFunctionAlgebrasAsQuotientAlgebras}.

\noindent {\bf (ii)} More generally, any $\FR$-algebra map 
$$
\phi \;:\; C^\infty(\FR^{k'})/{I'}\xrightarrow{\quad \quad} C^\infty(\FR^k)\otimes_\FR W\cong C^\infty(\FR^{k+m})/\,\tilde{I}\, ,
$$ 
where $I'=(p_1,\cdots,p_n)$ is an ideal finitely generated by \textit{polynomials}, is a $C^\infty$-algebra map.
\end{proposition}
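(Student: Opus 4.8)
The plan is to reduce part \textbf{(ii)} to part \textbf{(i)} and then prove \textbf{(i)} directly, using the quotient characterization of thickened Cartesian algebras from Cor. \ref{ThickenedFunctionAlgebrasAsQuotientAlgebras} together with the ``expansion on a sum'' identity \eqref{HadamardOnSum}. For \textbf{(i)}, an $\FR$-algebra map $\phi : C^\infty(\FR^{k'}) \to C^\infty(\FR^{k+m})/\tilde I$ is completely determined by the images $\phi([\pr_j]) = [g_j]$ of the coordinate generators $\pr_j \in C^\infty(\FR^{k'})$, for $j = 1,\dots,k'$. Writing each representative $g_j \in C^\infty(\FR^{k+m})$, by Cor. \ref{ThickenedFunctionAlgebrasAsQuotientAlgebras} and Lem. \ref{PartialHadamardsLemma} we may split $g_j = b_j(x) + n_j(x,y)$ where $b_j \in C^\infty(\FR^k)$ is the ``body'' (the $|I|=0$ term, a genuine smooth function on $\FR^k$) and $n_j(x,y)$ is a nilpotent element of the quotient (a polynomial in the $y$'s with $C^\infty(\FR^k)$-coefficients, with no constant term, killed by a high enough power). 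The target algebra being a Weil-thickening of $C^\infty(\FR^k)$, $n_j$ lies in the nilpotent ideal and hence $n_j^{N}=0$ in the quotient for some common $N$.

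The key step is then: to define $\widehat{f}\big(\phi([\pr_1]),\dots,\phi([\pr_{k'}])\big)$ for an \emph{arbitrary} smooth $f \in C^\infty(\FR^{k'})$ — which is what a $C^\infty$-algebra structure demands — one simply Taylor-expands $f$ around the body point $b(x) := (b_1(x),\dots,b_{k'}(x))$ in the nilpotent directions $n(x,y)$. Concretely, apply \eqref{HadamardOnSum} with $y_0 \rightsquigarrow b(x)$ and $y_1 \rightsquigarrow n(x,y)$ and order $l = N-1$: the error term $\sum_{|I|>l}(y_1)^{I+1}h_{I+1}$ vanishes in the quotient by nilpotency, so
\[
\widehat f\big([g_1],\dots,[g_{k'}]\big) \;=\; \sum_{|I|\le N-1} \tfrac{1}{I!}\,\partial_{|I|} f\big(b(x)\big)\cdot n(x,y)^{I} \quad \in \quad C^\infty(\FR^{k+m})/\tilde I ,
\]
which is a well-defined element since each $\partial_{|I|}f \circ b \in C^\infty(\FR^k)$ (composition of smooth maps of Cartesian spaces, legitimate by Prop. \ref{CartSptoAlgebras}) and $n(x,y)^I$ makes sense in the quotient. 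One checks this formula (a) agrees with the polynomial operations $\widehat p$ already determined by the $\FR$-algebra structure when $f$ is a polynomial, (b) is independent of the chosen representatives $g_j$ (two representatives differ by an element of $\tilde I$; since $\tilde I$ is generated by polynomials in the $y$'s, and the expansion is controlled by nilpotency, the difference is absorbed), and (c) is functorial/compatible under composition with smooth maps, i.e.\ $\widehat{f \circ h}$-compatibility — which is precisely the chain rule for the finite Taylor polynomials, an identity in $\FR$-coefficients that holds term-by-term. This exhibits $\phi$ as a product-preserving functor on $\CartesianSpaces$, i.e.\ a $C^\infty$-algebra map.

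For part \textbf{(ii)}, given $\phi : C^\infty(\FR^{k'})/I' \to C^\infty(\FR^{k+m})/\tilde I$ with $I' = (p_1,\dots,p_n)$ generated by polynomials, precompose with the quotient projection $C^\infty(\FR^{k'}) \twoheadrightarrow C^\infty(\FR^{k'})/I'$ to get an $\FR$-algebra map $\bar\phi : C^\infty(\FR^{k'}) \to C^\infty(\FR^{k+m})/\tilde I$, which is a $C^\infty$-algebra map by \textbf{(i)}. One then observes that $\bar\phi$ sends the generators $p_i$ to $0$ (since they are $0$ in the source quotient), and since $\bar\phi$ is a $C^\infty$-algebra map and the $p_i$ are polynomials, $\bar\phi(I') = 0$; hence $\bar\phi$ descends to the quotient, and the descended map is $\phi$, now recognized as a $C^\infty$-algebra morphism by the characterization in Def. \ref{FinitelyGeneratedCinftyAlgebras}. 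The main obstacle I anticipate is the representative-independence check in step (b) above: one must argue carefully that modifying $g_j$ by an element of the polynomially-generated ideal $\tilde I$ (which contains $(y_1\cdots y_m)^{l+1}$ but may be larger) does not change the value of the finite Taylor expansion in the quotient — this needs the observation that $\tilde I$ is generated by polynomials so the expansion interacts with it purely algebraically, together with the nilpotency truncation; the rest is a routine but slightly tedious bookkeeping of multi-indices and the chain rule.
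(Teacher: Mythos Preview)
Your construction of the Taylor-expansion candidate map $\psi(f) := \widehat{f}\big(\phi([\pr_1]),\dots,\phi([\pr_{k'}])\big)$ is correct and matches what the paper does (there denoted $[f]^*$). But your proof of \textbf{(i)} has a genuine gap: you verify that $\psi$ is a $C^\infty$-algebra map and that $\psi$ agrees with $\phi$ on \emph{polynomials}, but you never show that $\phi(f) = \psi(f)$ for a general smooth $f$. This is precisely the heart of the proposition --- an $\FR$-algebra map out of $C^\infty(\FR^{k'})$ is \emph{not} a priori determined by its values on the coordinate generators, since $C^\infty(\FR^{k'})$ is not generated as an $\FR$-algebra by $\pr_1,\dots,\pr_{k'}$. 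Your checks (a)--(c) are properties of $\psi$, not of $\phi$; the sentence ``This exhibits $\phi$ as a product-preserving functor'' is therefore a non-sequitur.

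The paper fills this gap as follows: for each $x_0 \in \FR^k$, compose with the evaluation $\ev_{x_0} \otimes \id_W : C^\infty(\FR^k)\otimes W \to W$ to obtain $\FR$-algebra maps $\phi_{x_0}, \psi_{x_0} : C^\infty(\FR^{k'}) \to W$ into the Weil algebra alone. At this point one invokes the \emph{known} result (\cite[Cor.~3.19]{MoerdijkReyes}, \cite[Thm.~35.14]{KMS93}) that any $\FR$-algebra map into a Weil algebra is automatically a $C^\infty$-algebra map, hence determined by its values on generators; so $\phi_{x_0} = \psi_{x_0}$ for every $x_0$. Since the family $\{\ev_{x_0}\otimes \id_W\}_{x_0 \in \FR^k}$ is jointly monic (the value of $\phi(f) \in C^\infty(\FR^k)\otimes W$ is recovered from its pointwise evaluations), one concludes $\phi = \psi$. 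Your argument for \textbf{(ii)} is essentially fine once \textbf{(i)} is established, and matches the paper's.
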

\begin{proof}
Consider first the case for $k'=1$, $I'={0}$, and recall Cor. \ref{ThickenedFunctionAlgebrasAsQuotientAlgebras} that 
\begin{align*}
C^\infty(\FR^k)\otimes W & \;\cong \; C^\infty(\FR^k) \otimes \CO\big(\DD^m(l)\big)/\big(r_1(\epsi),\cdots, r_q(\epsi)\big)\\
&\;\cong \;
C^\infty\big(\FR^{k}_x\times \FR^m_y\big)\Big/ 
\Big\langle \! \big(y^1,\cdots, y^m\big)^{l+1}, \big(r_1(y),\cdots, r_q(y)\big)\! \Big\rangle 
\end{align*}
where $\{y^i\}_{1,\cdots, m}$ are the coordinates on $\FR^m$ and $\{r_j\}_{1,\cdots,q}$ are polynomials in these coordinates. Now any $\FR$-algebra map
$$
\phi: C^\infty(\FR)  \longrightarrow C^\infty(\FR^k) \otimes \CO\big(\DD^m(l)\big)\big/
\big(r^1(\epsi),\cdots, r^q(\epsi)\big) 
$$
acts in particular on the (to be) generator of $t\equiv \id_\FR\in C^\infty(\FR)$
$$
t \quad \longmapsto \;\; \Bigg[\sum_{|I|\leq l} f_I(x) \cdot \epsi^{I}\Bigg]
=  
\Bigg[\sum_{|I|\leq l} f_I(x) \cdot \epsi^{I} + h_i(x,\epsi)\cdot r^i(\epsi)\Bigg] \quad \in \quad C^\infty(\FR^k) \otimes \CO\big(\DD^m(l)\big)/\big(r^1(\epsi),\cdots, r^q(\epsi)\big) \, .
$$
Notice the target element defines a smooth function
$$
f:= \sum_{|I|\leq l} f_I(x) \cdot y^{I} \quad \in \quad C^\infty(\FR^{k+m})\cong \mathrm{Hom}_{\SmoothManifolds}(\FR^{k+m}, \,\FR)
$$
which hence determines a unique $C^\infty\mbox{-}$algebra map 
$$
[f]^* \;:\; C^\infty(\FR) \; \longrightarrow \; C^\infty\big(\FR^{k}_x\times \FR^m_y\big)
\Big/ \Big\langle\! \big(y^1,\cdots, y^m\big)^{l+1}, \big(r_1(y),\cdots, r_q(y)\big)\!\Big\rangle
$$
via its equivalence class 
$$
\big[f\big] = \bigg[f + \sum_{|I|>l} m_I(x,y) \cdot y^I + n_i(x,y)\cdot r^i(y) 
\bigg].
$$ 
The corresponding action may be explicitly determined, for instance, by the pullback (of any representative)
\begin{align*}f^* \;:\; C^\infty(\FR) &\longrightarrow C^\infty(\FR^{k+m})\\
  g &\longmapsto g \circ f 
    = 
  g\bigg(f_0(x)+
    \sum_{1\leq |I|\leq l} f_I(x) \cdot y^{I}
  \bigg)
\end{align*} 
since the target may be expanded, uniquely up to $\mathcal{O}(y^l)$ for any representative, 
$$
g\big(f_0(x)\big) +   \sum_{1\leq |J|\leq l} \widehat{k}_J\Big(f_0(x),\{f_I(x)\}_{1\leq |I|\leq l}\Big) \cdot  y^J  +\CO\big(y^{l+1}\big)\,.
$$
Here the terms $\widehat{k}_J\big(f_0(x),\{f_I(x)\}_{1\leq |I|\leq l}\big)$ are uniquely determined by the (partial) Hadamard Lemma on sums \eqref{HadamardOnSum} on $g$, together with the corresponding multinomial expansions. The latter, hence, clearly descends to the $C^\infty$-algebra map
$$
  [f]^* \;:\; C^\infty(\FR)\longrightarrow  C^\infty \big(\FR^{k}_x\times \FR^m_y\big)\Big/ 
  \Big\langle \! \big(y^1,\cdots, y^m\big)^{l+1}, \big(r_1(y),\cdots, r_q(y)\big) \!
  \Big\rangle 
$$
claimed above.

Since $[f]^*$ is an algebra map defined via the value $\phi(t)$ it is immediate to see it coincides with
$\phi$ on the polynomial subalgebra of $C^\infty(\FR)$. To witness further that the two $\FR$-algebra maps coincide on all of $C^\infty(\FR)$, notice that for any $x_0 \in \FR_x^k$ we may compose with the evaluation algebra map
$$
\phi_{x_0} := \big(\ev_{x_0}\otimes \id_W\big) \circ \phi \;\; : \;\; 
C^\infty(\FR)  \longrightarrow C^\infty\big(\FR^k_x\big) \otimes W \longrightarrow W 
$$
and the value of $\phi(g)$ is completely determined by the family of evaluations $\phi_{x_0}(g)$, by expanding via the $\FR$-tensor product structure. In other words, the family $\{(\ev_{x_0}\otimes \id_W)\}_{x_0\in \FR^k}$ is ``jointly monic'' (cf. \cite[(8.6)]{Kock06}). The same applies for $[f]^*$ and its evaluations. But each of the $\phi_{x_0}$ evaluation maps coincides precisely with the corresponding evaluation maps $[f]^*_{x_0}$
$$
\phi_{x_0} (g) = [f]^*_{x_0}(g) = g\big(f_0(x_0)\big) +   
\sum_{1\leq |J|\leq l} \widehat{k}_J\Big(f_0(x),\{f_I(x)\}_{1\leq |I|\leq l}\Big)(x_0) \cdot \epsi^J \, ,
$$
since any $\FR$-algebra map into a Weil algebra is necessarily a $C^\infty$-algebra map, and hence completely determined by its value on the coordinate $t$ (\cite[Cor. 3.19]{MoerdijkReyes}\cite[Step 1, prf of Thm. 35.14]{KMS93}.

The case of $\FR$-algebra maps $\phi \,:\, C^\infty(\FR^{k'}) \xrightarrow{\quad \quad} C^\infty(\FR^k)\otimes_\FR W$ for $k'\geq 2$ follows analogously by evaluating $\phi$ on each of the (to be generators) linear functions $t^i = \pr_i \in C^\infty(\FR^{k'})$.

Lastly, consider the case of a source algebra $C^\infty(\FR^{k'})/I'$ with $I'=(p_1,\cdots p_n)$ where each $p_j=p_j (t^1,\cdots,t^{k'})$ is a polynomial in the coordinates of $\FR^{k'}$. The proof proceeds precisely as above, by defining $f:\FR^{k+m}\rightarrow \FR^{k'}$ via the values of $\phi$ on each of the (equivalence classes of) linear coordinates $t^i$, so that in particular $f^*(t^i) = \phi(t^i)$. The only extra thing to show is that 
$$
f^*(I') \, \subset \, \widetilde{I} \, ,
$$

\newpage 
\noindent 
so that $f^*$ indeed (dually) descends and defines a $C^\infty$-algebra morphism, which by the previous argument will coincide with $\phi$ on all of $C^\infty(\FR^{k'})/{I'}$. To see this, it suffices to notice that since $I'$ is generated by \textit{polynomials}, and $\phi$ takes values in $C^\infty(\FR^{k+m})/\widetilde{I}$, we have
$$
p\big(\phi(t^1),\cdots, \phi(t^{k'}) \big)= \phi\big( p(t^1,\cdots, t^{k'})\big) \quad \in \quad \widetilde{I}   
$$
and hence also
\begin{align*}
f^*p:&= p\circ f \equiv p\big(f^*(t^1), \cdots , f^*(t^{k'})\big) \\ 
&=
p\big(\phi(t^1),\cdots, \phi(t^{k'}) \big) = \phi\big( p(t^1,\cdots, t^{k'})\big)  \quad \in \quad \widetilde{I} \, . 
\end{align*}

\vspace{-5mm} 
\end{proof}

\begin{theorem}[\bf{Thickened Cartesian spaces are smooth loci}]\label{ThickenedCartesianSpacesAreSmoothLoci}
The categories $C^\infty\mbox{-}\ThickenedCartesianSpaces$ and $\ThickenedCartesianSpaces$ are (strictly) equivalent
$$
C^\infty\mbox{-}\ThickenedCartesianSpaces \, \cong \, \ThickenedCartesianSpaces \, .
$$
\end{theorem}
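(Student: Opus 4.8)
The plan is to exhibit the equivalence as an identity on objects together with a bijection on hom-sets, leveraging the results already assembled in this appendix. Both categories have literally the same objects by construction: an object of either is a formal dual of $C^\infty(\FR^k)\otimes\CO(\DD)$ for some $k\in\NN$ and some infinitesimally thickened point $\DD$, and by the cited coincidence of tensor products $C^\infty(\FR^k)\otimes_\infty\CO(\DD)\cong C^\infty(\FR^k)\otimes_\FR\CO(\DD)$ (as $\FR$-algebras, $C^\infty$-algebras, and sets), so there is no ambiguity in the underlying algebra whichever monoidal product one uses. Hence the only content is in comparing morphisms. By Def.~\ref{FinitelyGeneratedCinftyAlgebras}, every $C^\infty$-algebra morphism between two such function algebras is in particular an $\FR$-algebra morphism, so there is a canonical injection
$$
\mathrm{Hom}_{C^\infty\mbox{-}\ThickenedCartesianSpaces}\big(\FR^k\times \DD,\, \FR^{k'}\times \DD'\big)\,\xhookrightarrow{\quad}\, \mathrm{Hom}_{\ThickenedCartesianSpaces}\big(\FR^k\times \DD,\, \FR^{k'}\times \DD'\big),
$$
and the task reduces to surjectivity of this map.

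The surjectivity is precisely what Prop.~\ref{RAlgebraMapsAreCinfty} delivers, modulo a presentation bookkeeping step. First I would use Cor.~\ref{ThickenedFunctionAlgebrasAsQuotientAlgebras}{\bf(iii)} to present the source algebra $C^\infty(\FR^{k'})\otimes_\FR\CO(\DD')$ as a quotient $C^\infty(\FR^{k'+m'})/\widetilde{I'}$ where $\widetilde{I'}$ is a \emph{lifted polynomial ideal} (generated by the nilpotency relations $(y_1\cdots y_{m'})^{l'+1}$ together with the polynomial generators of the defining ideal $I'$ of the Weil algebra $\CO(\DD')$), and similarly present the target as $C^\infty(\FR^{k+m})/\widetilde{I}$. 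Then any $\FR$-algebra map $\phi: C^\infty(\FR^{k'+m'})/\widetilde{I'}\to C^\infty(\FR^{k+m})/\widetilde{I}$ falls exactly under the hypotheses of Prop.~\ref{RAlgebraMapsAreCinfty}{\bf(ii)}, since $\widetilde{I'}$ is finitely generated by polynomials — so $\phi$ is automatically a $C^\infty$-algebra morphism, hence lies in the image of the injection above. This shows the hom-sets coincide.

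Finally I would check that this identification is compatible with composition and identities — but this is immediate, since in both categories composition is composition of the underlying $\FR$-algebra maps (read in the formal-dual direction) and the identities agree, so the functor fixing objects and sending each morphism to itself is a (strict) isofunctor in both directions. A remark worth inserting is that one must verify the presentation of the source via Cor.~\ref{ThickenedFunctionAlgebrasAsQuotientAlgebras} genuinely produces a \emph{polynomial} ideal: this is exactly the content of the footnote there, that any ideal $I\subset\CO\big(\DD^{m}(l)\big)$ is finitely presented by polynomials in the nilpotent coordinates, which lift to polynomials in $C^\infty(\FR^{m})$. \textbf{The main obstacle} is not any single hard argument but rather ensuring that Prop.~\ref{RAlgebraMapsAreCinfty} is being applied with the correct presentation — i.e., that the defining ideal of an \emph{arbitrary} Weil algebra $\CO(\DD')$ (not just an infinitesimal disk $\DD^{m}(l)$) can be taken polynomial after choosing an embedding $\DD'\hookrightarrow\DD^{m}(l)$ via Lem.~\ref{InfinitesimalPointsAsSubspacesOfInfinitesimalDisks}; once that reduction is in place, everything else is routine.
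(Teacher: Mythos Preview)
Your proposal is correct and matches the paper's own proof essentially verbatim: identify objects, observe the canonical injection of $C^\infty$-hom-sets into $\FR$-algebra hom-sets, and obtain surjectivity by applying Prop.~\ref{RAlgebraMapsAreCinfty}{\bf(ii)} to the polynomially-presented quotient form of the source algebra furnished by Cor.~\ref{ThickenedFunctionAlgebrasAsQuotientAlgebras}. The paper also records an alternative argument via the coproduct property of $\otimes_\FR$ and $\otimes_\infty$ together with Prop.~\ref{RAlgebraMapsAreCinfty}{\bf(i)}, but your route is exactly its primary one.
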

\begin{proof}Since the two categories may be viewed as having the `same' objects, it is sufficient to show that the two kinds of Hom-sets are in canonical bijection. This follows directly by {\bf (ii)} of Prop. \ref{RAlgebraMapsAreCinfty} together with the (polynomially finitely generated) quotient identification of thickened Cartesian algebras from Cor. \ref{ThickenedFunctionAlgebrasAsQuotientAlgebras} as a source algebra. 
Alternatively, it can also be seen via the sequence of canonical bijections
\begin{align*}
\mathrm{Hom}_{\mathrm{CAlg}}\big(C^\infty(\FR^{k'})\otimes_\FR W' , \, C^\infty(\FR^k)\otimes W\big) \, &\cong  \mathrm{Hom}_{\mathrm{CAlg}}\big(C^\infty(\FR^{k'}) , \, C^\infty(\FR^k)\otimes W\big) \times \mathrm{Hom}_{\mathrm{CAlg}}\big( W' , \, C^\infty(\FR^k)\otimes W\big) \\
& \cong  \mathrm{Hom}_{C^\infty\mbox{-}\mathrm{Alg}}\big(C^\infty(\FR^{k'}) , \, C^\infty(\FR^k)\otimes_\infty \! W\big) \times \mathrm{Hom}_{C^\infty \mbox{-}\mathrm{Alg}}\big( W' , \, C^\infty(\FR^k)\otimes W\big)\\
&\cong  \mathrm{Hom}_{C^\infty\mbox{-}\mathrm{Alg}}\big(C^\infty(\FR^{k'})\otimes_\infty W' , \, C^\infty(\FR^k)\otimes W\big) \\
&\cong  \mathrm{Hom}_{C^\infty\mbox{-}\mathrm{Alg}}\big(C^\infty(\FR^{k'})\otimes_\FR W' , \, C^\infty(\FR^k)\otimes W\big)\, ,
\end{align*}
where the first uses the colimit property of $\otimes_\FR$ in $\mathrm{CAlg}$, the second uses  Prop. \ref{RAlgebraMapsAreCinfty}{\bf (i)} and the analogous result that can be proven similarly for the source algebra being a Weil algebra (in fact with target any $C^\infty$-algebra, see  \cite[Cor. 3.19]{MoerdijkReyes}\cite[Thm. 5.3]{Kock06}). Finally, we use the colimit property of the $C^\infty$-tensor product in $C^\infty\mbox{-}\mathrm{Alg}$, which in fact coincides with the $\FR$-algebraic one when tensoring with Weil algebras (\cite[Thm. 5.3]{Kock06}) (hence the reason of being unspecified in the second target of the Hom set). 
\end{proof}
Although we do not need to go into the trouble of proving the analogous statement for target algebras being those of thickened manifolds here, it should nevertheless follow similarly -- by first identifying $C^\infty(M)\cong C^\infty(\FR^k)/ I_{s(M)}$, where $s:M\hookrightarrow \FR^k$ is some smooth closed embedding (by Whitney's embedding theorem, see e.g. \cite[p.29]{MoerdijkReyes}) and $I_{s(M)}$ is the ideal of smooth functions that vanish along $s(M)\subset \FR^k$ (cf. Lem. \ref{PlotsOfSmoothClosedSubspaces}).
\begin{corollary}[\bf Cahiers topos is $\FR$-algebraic]\label{CahiersToposIsRalgebraic}
  When both categories are endowed with the good open coverage from Eq. \eqref{GoodOpenCoversonThCartSp} they are also (strictly) isomorphic as sites. It follows that the Cahiers topos of Dubuc from \cite{Dubuc79} is equivalent to the topos of thickened smooth sets from Def. \ref{ThickenedSmoothSets}
  $$
  \mathrm{Sh}(C^\infty\mbox{-}\ThickenedCartesianSpaces) \, \cong \, \ThickenedSmoothSets \, .
  $$
\end{corollary}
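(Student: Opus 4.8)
\textbf{Proof plan for Corollary \ref{CahiersToposIsRalgebraic}.} The plan is to upgrade the equivalence of categories established in Theorem \ref{ThickenedCartesianSpacesAreSmoothLoci} to an equivalence (in fact isomorphism) of \emph{sites}, and then to invoke the functoriality of the sheaf-topos construction. First I would note that both $C^\infty\mbox{-}\ThickenedCartesianSpaces$ and $\ThickenedCartesianSpaces$ have literally the ``same'' objects (by the coincidence of tensor products $C^\infty(\FR^k)\otimes_\infty \CO(\DD) \cong C^\infty(\FR^k)\otimes_\FR \CO(\DD)$ recalled in the discussion preceding the Theorem), and that Theorem \ref{ThickenedCartesianSpacesAreSmoothLoci} supplies a canonical bijection on Hom-sets which is the identity on objects; one should record that this identification is visibly functorial, i.e. it respects composition and identities, since a composite of $\FR$-algebra maps between thickened Cartesian algebras is again such and is automatically $C^\infty$-algebraic by Prop. \ref{RAlgebraMapsAreCinfty}. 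This gives the strict isomorphism of underlying categories.

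\smallskip
\noindent
Next I would check that this isomorphism of categories is compatible with the coverages. Both sites carry the coverage whose covering families are the (differentiably) good open covers extended trivially along infinitesimals, namely families of the form $\big\{ U_i\times \DD \hookrightarrow \FR^k\times \DD \big\}_{i\in I}$ as in \eqref{GoodOpenCoversonThCartSp}. Since the isomorphism of categories is the identity on objects and sends the inclusion maps $\iota_i\times\id$ (which are $C^\infty$-algebra maps, hence also $\FR$-algebra maps, and conversely) to themselves, it carries covering families to covering families in both directions. Hence it is a (strict) isomorphism of sites.

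\smallskip
\noindent
Finally, I would conclude by the standard fact that an isomorphism (or even an equivalence) of sites induces an equivalence of the associated sheaf toposes: $\mathrm{Sh}$ is functorial for morphisms of sites, and an isomorphism of sites induces an isomorphism (a fortiori equivalence) $\mathrm{Sh}(C^\infty\mbox{-}\ThickenedCartesianSpaces) \simeq \mathrm{Sh}(\ThickenedCartesianSpaces) =: \ThickenedSmoothSets$. Since the left-hand side is, by the definition recalled in the appendix, precisely Dubuc's Cahiers topos \cite{Dubuc79}\cite{KockReyes87}, this yields the claimed equivalence.

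\smallskip
\noindent
The main obstacle is entirely contained in the prior results: the non-trivial content is Prop. \ref{RAlgebraMapsAreCinfty} (that arbitrary $\FR$-algebra maps into thickened Cartesian algebras are automatically $C^\infty$-algebraic) together with the quotient presentation of thickened function algebras from Cor. \ref{ThickenedFunctionAlgebrasAsQuotientAlgebras}, both of which are already available. Given those, the present corollary is essentially bookkeeping: verifying that the coverage matches on the nose and that sheafification transports along the site isomorphism. The only point requiring a modicum of care is ensuring one works with the Cartesian (rather than manifold) version of Dubuc's site, which is harmless since the two yield equivalent sheaf categories exactly as in \eqref{SheavesOverManifoldsAndSheavesOverCart}.
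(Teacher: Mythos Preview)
Your proposal is correct and matches the paper's approach: the paper states this as an immediate corollary of Theorem \ref{ThickenedCartesianSpacesAreSmoothLoci} without further proof, and your expansion supplies exactly the natural bookkeeping (identity on objects, coverage compatibility, functoriality of $\mathrm{Sh}$) that the paper leaves implicit.
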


\begin{remark}[\bf Preservation of products]
By virtue of the Cahiers topos being a well-adapted model for synthetic differential geometry \cite{Dubuc79}\cite{MoerdijkReyes}\cite{Kock06}, the embedding
$$
y\, : \, \SmoothManifolds \xhookrightarrow{\quad \quad} \ThickenedSmoothSets
$$
preserves, in particular, products.\footnote{In fact, it also preserves transverse pullbacks (intersections).} Notice, this is not immediate from the $\FR$-algebraic incarnation of the (dual) site, if considered with its $\FR$-algebraic monoidal structure, but it follows immediately if instead supplied with the $C^\infty$-algebraic monoidal structure.
\end{remark}

\subsection{Embedding manifolds with corners and their Weil bundles}
\label{ManifoldsWithCornersAndWeilBundlesSection}

\subsubsection*{Embedding manifolds with corners}

It has been shown in \cite{Reyes07}, and then streamlined in \cite[Thm. 9.6]{Kock06}, that smooth (second countable and Hausdorff) manifolds with boundary embed fully faithfully in the standard  $C^\infty$-algebraic version of Cahiers topos via the restricted Yoneda functor
\begin{align*}
y\, : \, \SmoothManifolds^{\mathrm{bnd}} &\xhookrightarrow{\quad \quad} \mathrm{Sh}(C^\infty\mbox{-}\ThickenedCartesianSpaces)
\\
X &\xmapsto{\quad \quad} \mathrm{Hom}_{C^\infty \mbox{-} \mathrm{Alg}}\big(C^\infty(X), \, -\big)\,.
\end{align*}
Under the equivalence of Cor. \ref{CahiersToposIsRalgebraic}, this same functor is trivially transported to yield a fully faithful embedding in our $\FR$-algebraic topos. 
Notice, however, that it is a priori still using $C^\infty$-algebra morphisms. Nevertheless, by our 
Cor. \ref{MappingIntoCorneredSpaces}
below, these may be taken to be $\FR$-algebraic morphisms since these, again, are actually exhausted by $C^\infty$-algebraic maps out of manifold-with-boundary algebras. In fact, below we present a straightforward extension of this result to include smooth manifolds with corners.

\smallskip 
We begin by recalling the standard but crucial result, often referred to as ``Milnor's exercise'' (see, e.g., \cite[§35.9]{KMS93}).
\begin{proposition}[\bf Milnor's exercise]\label{MilnorsExercise}
The evaluation map of sets
\begin{align*}
\ev \, : \, |M| \longrightarrow  \mathrm{Hom}_{\mathrm{CAlg}}\big(C^\infty(M),\, \FR \big), 
\end{align*}
where $|M|$ is the underlying set of a smooth (second countable and Hausdorff) manifold $M\in \SmoothManifolds$, is a bijection.
\end{proposition}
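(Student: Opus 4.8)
\textbf{Proof plan for Milnor's exercise (Proposition \ref{MilnorsExercise}).}

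The plan is to show that the evaluation map $\ev : |M| \to \mathrm{Hom}_{\mathrm{CAlg}}(C^\infty(M),\FR)$ is both injective and surjective. Injectivity is the easy direction: given two distinct points $p \neq q$ in $|M|$, the smoothness and Hausdorff property of $M$ guarantee a bump function $f \in C^\infty(M)$ with $f(p) = 0$ and $f(q) = 1$, so that $\ev(p)(f) \neq \ev(q)(f)$, whence $\ev(p) \neq \ev(q)$ as algebra homomorphisms. The substance of the proposition lies entirely in surjectivity: every $\FR$-algebra homomorphism $\chi : C^\infty(M) \to \FR$ must be evaluation at some point of $M$.

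For surjectivity, the first step is to reduce to a local statement. Using second countability, one fixes a proper smooth embedding $M \hookrightarrow \FR^N$ (Whitney), or at least a proper smooth function $\rho : M \to \FR$ (e.g. via a locally finite partition of unity, $\rho = \sum_i i \cdot \chi_i$ adjusted so preimages of compacta are compact); then $\chi(\rho) =: c \in \FR$, and I claim the ``support'' of $\chi$ is forced into the compact set $K := \rho^{-1}([c-1, c+1])$. Concretely, if $g \in C^\infty(M)$ vanishes on a neighborhood of $K$, then $g$ and $(\rho - c)$ have disjoint vanishing behavior in a way that lets one write $1 = g \cdot h_1 + (\rho - c - \text{something}) \cdot h_2$ type identities forcing $\chi(g) = 0$; more cleanly, one shows $\chi(g) = 0$ whenever $g$ is supported away from $K$ by noting $g$ divides into a product with a function that $\chi$ sends to an invertible element times zero. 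The second step is then: on the compact set $K$, the homomorphism $\chi$ must factor through evaluation at a point. This is the classical argument: suppose not; then for every $p \in K$ there is $f_p \in C^\infty(M)$ with $\chi(f_p) = f_p(p)$ but $f_p - \chi(f_p)$ not vanishing at $p$, i.e. one can arrange $f_p(p) \neq \chi(f_p)$, so $(f_p - \chi(f_p)\cdot 1)$ is a function in $\ker\chi$ that is nonzero at $p$; after replacing $f_p$ by $(f_p - \chi(f_p))^2$ we get a nonnegative function in $\ker\chi$ strictly positive at $p$. By compactness of $K$, finitely many such functions $f_{p_1}, \dots, f_{p_n}$ have the property that $F := \sum_j f_{p_j}$ is strictly positive on all of $K$; combined with a function controlling behavior off $K$ (using the reduction from step one), $F$ becomes invertible in $C^\infty(M)$, contradicting $F \in \ker\chi$ (since $\chi$ of an invertible element is a nonzero real). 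Hence $\chi$ agrees with $\ev(p)$ for some $p \in K$, and by injectivity this $p$ is unique.

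The main obstacle I anticipate is the globalization step — handling the fact that $M$ is merely second countable and not compact, so that $\ker\chi$ could a priori ``run off to infinity.'' The clean way around this is precisely the properness input: a proper smooth map $\rho : M \to \FR$ exists by second countability, and $\chi(\rho) = c$ pins the relevant behavior of $\chi$ to the compact sublevel set $K = \rho^{-1}([c-1,c+1])$, because any $g$ supported outside $K$ can be written as $g = g \cdot u$ where $u \in C^\infty(M)$ is a function that is $1$ on $\mathrm{supp}(g)$ and whose relation to $\rho - c$ makes $\chi(u) = 0$ forced. Once $K$ is isolated, the compactness argument above closes everything. For the generalization to manifolds with corners needed elsewhere in the paper (Cor. \ref{MilnorsForManifoldsWithCorners}), the same strategy applies verbatim since $C^\infty(M)$ for a manifold with corners is still defined via local smooth extension to $\FR^{k+l}$, bump functions adapted to corner charts still exist, and properness arguments are insensitive to boundary strata; the only care needed is that the ``point'' one lands on may lie on a corner, which is fine.
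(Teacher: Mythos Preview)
The paper does not actually prove Proposition~\ref{MilnorsExercise}; it is stated as a recalled standard result with a citation to \cite[\S 35.9]{KMS93}, and is then used as a black box in the proofs of Corollaries~\ref{MilnorsExerciseForClosed} and~\ref{MilnorsForManifoldsWithCorners}. Your proof plan is essentially the standard argument found in that reference, so there is nothing to compare against.

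One simplification worth noting: your step-one reduction (``$\chi(g)=0$ whenever $g$ is supported away from $K$'') is stated vaguely and is more work than needed. It is cleaner to take $K := \rho^{-1}(c)$ directly and, in the final contradiction, simply include $(\rho - c)^2 \in \ker\chi$ as one of the summands: the function $(\rho - c)^2 + \sum_j f_{p_j}^2$ is then strictly positive on all of $M$ (positive off $K$ since $(\rho-c)^2>0$ there, positive on the compact $K$ by the finite selection), hence invertible in $C^\infty(M)$ while lying in $\ker\chi$. This bypasses your ``$g = g\cdot u$ with $\chi(u)=0$'' lemma entirely. Also, for the cornered case you mention at the end: the paper does \emph{not} rerun the argument verbatim but instead deduces it from the closed-subspace version (Cor.~\ref{MilnorsExerciseForClosed}) via a Whitney-type closed embedding $M \hookrightarrow \FR^k$, which is arguably tidier than checking corner-adapted bump functions by hand.
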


Let us easily extend this classical result to any \textit{closed} subset $|K|\subset \FR^k$, viewed as the formal dual smooth space 
$$
K\;\; \in \;\;  \mathrm{CAlg}^{\op}
$$
of the function algebra 
\begin{align}\label{FunctionsOnClosedSubspace}
C^\infty(K) \, :&= \big\{ f: |K| \rightarrow \FR \; | \; \exists \,  \widetilde{f}\in C^\infty (\FR^{k}) \mathrm{ \, \,such \, \,that \, \,} \widetilde{f}|_{K} = f \big\}\; \cong \; C^\infty(\FR^{k})/I_{K} \, ,
\end{align}
where $I_{K}\subset C^\infty(\FR^{k})$ is the ideal of smooth function on $\FR^{k}$ which vanish on $|K|\subset \FR^{k'}$.
\begin{corollary}[\bf Milnor's exercise for smooth closed subspaces]\label{MilnorsExerciseForClosed}
The evaluation map of sets
\begin{align*}
\ev \, : \, |K| \longrightarrow  \mathrm{Hom}_{\mathrm{CAlg}}\big(C^\infty(K),\, \FR \big), 
\end{align*}
for any closed subspace $K\hookrightarrow \FR^k$ is a bijection.
\end{corollary}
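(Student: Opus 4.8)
The plan is to reduce the statement for a closed subset $|K|\subset \FR^k$ to the classical ``Milnor's exercise'' (Prop. \ref{MilnorsExercise}) for the ambient Cartesian space $\FR^k$, using the presentation $C^\infty(K)\cong C^\infty(\FR^k)/I_K$ from \eqref{FunctionsOnClosedSubspace}. First I would observe that, by the universal property of the quotient algebra, an $\FR$-algebra homomorphism $C^\infty(K)\cong C^\infty(\FR^k)/I_K \to \FR$ is the same datum as an $\FR$-algebra homomorphism $\chi\colon C^\infty(\FR^k)\to \FR$ whose kernel contains the ideal $I_K$. Under the bijection of Prop. \ref{MilnorsExercise}, such a $\chi$ is necessarily evaluation at a unique point $x_0\in \FR^k$, i.e. $\chi = \ev_{x_0}$. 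Thus the Hom-set $\mathrm{Hom}_{\mathrm{CAlg}}\big(C^\infty(K),\FR\big)$ is in natural bijection with the set of points $x_0\in \FR^k$ such that $I_K\subset \ker(\ev_{x_0})$, equivalently such that $f(x_0)=0$ for all $f\in I_K$.

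It then remains to identify this subset of $\FR^k$ with $|K|$ itself. One inclusion is immediate: if $x_0\in |K|$, then by definition every $f\in I_K$ vanishes on $|K|$, hence $f(x_0)=0$, so $\ev_{x_0}$ factors through $C^\infty(K)$. For the reverse inclusion, suppose $x_0\notin |K|$; since $|K|$ is \emph{closed} in $\FR^k$, its complement is open, so there is a smooth bump function $f\in C^\infty(\FR^k)$ with $f(x_0)=1$ and $f\equiv 0$ on a neighbourhood of $|K|$ (in particular $f|_{|K|}=0$, so $f\in I_K$). This $f$ witnesses $I_K\not\subset \ker(\ev_{x_0})$, so $\ev_{x_0}$ does not descend to $C^\infty(K)$. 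Hence the subset of $\FR^k$ in question is exactly $|K|$, and the evaluation map $|K|\to \mathrm{Hom}_{\mathrm{CAlg}}\big(C^\infty(K),\FR\big)$ is a bijection; its injectivity is inherited from that of $\ev$ on $\FR^k$ in Prop. \ref{MilnorsExercise}, since distinct points of $|K|$ are distinct points of $\FR^k$ and are separated by smooth functions on $\FR^k$, which restrict to elements of $C^\infty(K)$.

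The only step requiring any genuine care — and the closest thing to an obstacle — is the use of closedness of $|K|$ to produce the separating bump function; this is precisely why the statement is phrased for \emph{closed} subsets and why, as the footnote to Example \ref{SmoothClosedSubspacesExample} notes, one must be slightly more careful (replacing $I_K$ by the ideal of functions with all derivatives vanishing along $K$) if one wants compatibility with the $C^\infty$-algebraic Hom-sets in the non-dense case. For the plain $\FR$-valued evaluation map claimed here, however, smooth Urysohn-type separation on $\FR^k$ suffices and no $C^\infty$-ring theory is needed. Everything else is a routine unwinding of the quotient presentation together with Prop. \ref{MilnorsExercise}.
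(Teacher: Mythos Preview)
Your proof is correct and follows essentially the same route as the paper: pull an algebra map $C^\infty(K)\to\FR$ back along the quotient $C^\infty(\FR^k)\twoheadrightarrow C^\infty(\FR^k)/I_K$, invoke Prop.~\ref{MilnorsExercise} to identify it with evaluation at some $p\in\FR^k$, and then use closedness of $|K|$ together with a bump function to force $p\in|K|$. The paper phrases the last step as a contradiction while you phrase it as identifying the set $\{x_0 : I_K\subset\ker(\ev_{x_0})\}$ with $|K|$, but these are the same argument.
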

\begin{proof}
Using the quotient map $\tau : C^\infty(\FR^k)\rightarrow C^\infty(\FR^k)/I_K\, \cong \, C^\infty(K)$, any $\FR$-algebra morphism $\phi:C^\infty(K)\rightarrow \FR$ pulls back to a morphism
$$
\widetilde{\phi} \, := \, \phi \circ \tau \;\; : \;\; C^\infty(\FR^k) \longrightarrow \FR\, . 
$$
By the original result of ``Milnor's exercise'' Prop. \ref{MilnorsExercise}, this corresponds uniquely to evaluating at a point $p\in \FR^k$, so that
$$
\phi(f) =: \widetilde{\phi}(\widetilde{f}) \, = \, f(p)
$$
where $\widetilde{f}|_{K} = f$.

All that remains is to show that $p$ is actually a point within $K$, i.e., $p\in |K|$. 
Towards contradiction, assume instead $p\in \FR^k - K$, and take any $g\in I_K$ such that $g(p)\neq 0$. That is, any smooth function $g:\FR^k\rightarrow \FR$ which vanishes on the closed subspace K and is non-vanishing on a point $p$ outside $K$ 
(which exist, e.g., via bump functions). Then we have
$$
\widetilde{\phi}(g) = \phi \circ \tau (g) = \phi (0) = 0
$$
but also
$$
\widetilde{\phi}(p) = g(p) \neq 0 \, ,
$$
which shows that, necessarily, $p\in |K|$.
\end{proof}
Under the Whitney's embedding theorem  and its (non-trivial) generalizations to include boundaries and corners
(see e.g.
\cite[Prop. 3.1]{Joyce19}\cite[Prop. 1.15.1]{Melrose}\cite[Cor. 11.3.10]{MROD}), the above result further implies an analogous statement for arbitrary smooth (second countable and Hausdorff) manifolds with corners. 
\begin{corollary}[\bf Milnor's exercise for manifolds with corners]\label{MilnorsForManifoldsWithCorners}
The evaluation map of sets
\begin{align*}
\ev \, : \, |M| \longrightarrow  \mathrm{Hom}_{\mathrm{CAlg}}\big(C^\infty(M),\, \FR \big), 
\end{align*}
for any smooth (second countable and Hausdorff) manifold with corners $M$, is a bijection.
\end{corollary}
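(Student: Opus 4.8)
The plan is to reduce Corollary \ref{MilnorsForManifoldsWithCorners} to the already-established Corollary \ref{MilnorsExerciseForClosed} for closed subspaces of Euclidean space, using a Whitney-type closed embedding adapted to the cornered setting. First I would invoke the generalization of Whitney's embedding theorem for manifolds with corners (as cited in the excerpt, e.g. \cite[Prop. 3.1]{Joyce19}, \cite[Prop. 1.15.1]{Melrose}, \cite[Cor. 11.3.10]{MROD}) to fix a smooth closed embedding $s : M \hookrightarrow \FR^k$ of the second countable Hausdorff manifold with corners $M$ as a closed subset $|s(M)| \subset \FR^k$. The key point here is that "smooth" for manifolds with corners means locally extendable to a smooth function on an ambient Euclidean space (as recalled in the footnote of Example \ref{ManifoldsWithCornersExample}), so that the pullback $s^* : C^\infty(\FR^k) \to C^\infty(M)$ is surjective, identifying $C^\infty(M) \cong C^\infty(\FR^k)/I_{s(M)}$ where $I_{s(M)}$ is the ideal of smooth functions on $\FR^k$ vanishing on $s(M)$; this is exactly the situation of \eqref{FunctionsOnClosedSubspace} with $K = s(M)$.

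Next I would chase the evaluation maps through this identification. We have a commutative square relating $\ev_M : |M| \to \mathrm{Hom}_{\mathrm{CAlg}}(C^\infty(M),\FR)$ and $\ev_{s(M)} : |s(M)| \to \mathrm{Hom}_{\mathrm{CAlg}}(C^\infty(s(M)),\FR)$: the left vertical map is the homeomorphism $s : |M| \xrightarrow{\sim} |s(M)|$ of underlying sets, and the right vertical map is the bijection on Hom-sets induced by the algebra isomorphism $C^\infty(M) \cong C^\infty(s(M))$. Since Corollary \ref{MilnorsExerciseForClosed} already gives that $\ev_{s(M)}$ is a bijection, and both vertical maps are bijections, it follows formally that $\ev_M$ is a bijection as well. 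Injectivity of $\ev_M$ is in any case elementary (smooth functions on $M$ separate points, again using bump functions, which exist on manifolds with corners exactly as in the boundaryless case), so the real content is surjectivity: every $\FR$-algebra homomorphism $C^\infty(M) \to \FR$ is evaluation at a point of $M$.

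I would then spell out surjectivity directly as a sanity check, mirroring the proof of Corollary \ref{MilnorsExerciseForClosed}: given $\phi : C^\infty(M) \to \FR$, precompose with the quotient map $\tau : C^\infty(\FR^k) \to C^\infty(M)$ to get $\widetilde\phi = \phi \circ \tau : C^\infty(\FR^k) \to \FR$, which by the classical Milnor's exercise (Prop. \ref{MilnorsExercise}) is evaluation at some $p \in \FR^k$. One then argues $p \in |s(M)|$ exactly as before: if $p \notin s(M)$, pick $g \in I_{s(M)}$ with $g(p) \neq 0$ (a bump function supported away from the closed set $s(M)$), and observe $\widetilde\phi(g) = \phi(\tau(g)) = \phi(0) = 0$ yet $\widetilde\phi(g) = g(p) \neq 0$, a contradiction. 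Transporting $p$ back along $s^{-1}$ gives the desired point of $M$.

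The main obstacle — such as it is — is not in the categorical diagram-chase but in justifying the two inputs that make the reduction legitimate in the cornered setting: (a) that a closed embedding $M \hookrightarrow \FR^k$ exists for second countable Hausdorff manifolds with corners, which is genuinely harder than the boundaryless Whitney theorem and is the reason the excerpt cites specialized references; and (b) that the notion of "smooth function on $M$" used to define $C^\infty(M)$ is the extendable one, so that $s^*$ is actually surjective and the quotient identification $C^\infty(M) \cong C^\infty(\FR^k)/I_{s(M)}$ holds on the nose. Once these two facts are in hand, the corollary is immediate from Corollary \ref{MilnorsExerciseForClosed}, and indeed this is precisely the route anticipated in Proposition \ref{CartSptoAlgebras} and the surrounding remarks, where $C^\infty(M)$ for cornered $M$ is presented as such a quotient. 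I would therefore keep the write-up short, deferring (a) and (b) to the cited literature and to the conventions already fixed in Example \ref{ManifoldsWithCornersExample}, and present the diagram-chase plus the explicit surjectivity argument as the proof.
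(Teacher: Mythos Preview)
Your proposal is correct and follows essentially the same route as the paper's proof: invoke the Whitney-type closed embedding for manifolds with corners to identify $C^\infty(M) \cong C^\infty(\FR^k)/I_{s(M)}$, then apply Cor.~\ref{MilnorsExerciseForClosed} directly. Your write-up is more detailed (the explicit commutative-square chase and the spelled-out surjectivity sanity check), but the paper's proof is exactly this reduction stated in three lines.
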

\begin{proof}
By Whitney's embedding theorem and its generalizations, for any $k'$-dimensional manifold with corners $M$, there exists a smooth closed embedding 
$$
s \, : \,  M \xhookrightarrow{\quad \quad} \FR^k,
$$
into a Cartesian space of some dimension $k\geq k'$, 
hence yielding a homeomorphism onto its closed image
$$
M \, \cong_{\mathrm{Top}} \, s(M) \xhookrightarrow{\qquad} \FR^k \, ,
$$
which moreover satisfies
$$
C^\infty(M) \, \cong_{\mathrm{CAlg}} \, C^\infty\big(s(M)\big)\, \cong \, C^\infty(\FR^k)/I_{s(M)} \, .
$$
In the formal dual sense of smooth spaces in $\mathrm{CAlg}^{\op}$, this identifies any manifold with corners with some (smooth) closed subspace of a Cartesian space, and so the result follows by Cor. \ref{MilnorsExerciseForClosed}.
\end{proof}
Should one wish to avoid using the non-trivial embedding result, and given the immediate validity of the result for the (closed) model cornered space\footnote{Generalizing the quadrant $\FR^{0,2}\hookrightarrow \FR^2$, its 3-dimensional extension $\FR^{1,2}\hookrightarrow \FR^3$, higher dimensional orthants $\FR^{0,l}\hookrightarrow \FR^l$ etc.}  
$$
\FR^{k,l}:= \FR^k\times [0,\infty)^{\times l}\xhookrightarrow{\qquad} \FR^n
$$ 
via Cor. \ref{MilnorsExerciseForClosed}, we expect also the direct proof for manifolds without boundary/corners \cite[§35.9]{KMS93} to hold essentially verbatim in the cornered setting. Importantly, however, the existence of closed embeddings $M\hookrightarrow \FR^n$ for manifolds with corners 
implies, in particular, that these qualify as \textit{finitely generated} $C^\infty$-algebras in the sense of Def. \ref{FinitelyGeneratedCinftyAlgebras}.

\smallskip 
The above extension of Milnor's exercise can be used to exhibit the fully faithful embedding of manifolds with corners into the opposite category of $\mathrm{CAlg}^{\op}$.
\begin{proposition}[\bf Manifolds with corners embed into algebras]
\label{ManifoldsWithCornerstoAlgebras}
 The functor 
\begin{align*}
	C^{\infty}(-) \;:\; \SmoothManifolds^\mathrm{cor} & \; \xhookrightarrow{\quad \quad } \; \mathrm{CAlg}_{\FR}^{op} \\
	M& \; \xmapsto{\quad \quad} \; C^{\infty}(M) \, ,\nn 
\end{align*}
sending a finite-dimensional smooth (second countable and Hausdorff) manifold with corners to its function algebra is fully faithful, in that for any pair $N,M \in \SmoothManifolds^\mathrm{cor}$ the smooth functions $f : N \xrightarrow{\;} M$ biject onto the algebra homomorphisms $f^\ast : C^\infty(M) \xrightarrow{\;} C^\infty(N)$.
\end{proposition}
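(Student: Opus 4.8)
The plan is to reduce the statement to the corresponding fact for Cartesian spaces (Prop. \ref{CartSptoAlgebras}) by passing through closed embeddings, exactly as in the classical argument for manifolds without boundary (\cite[\S35.10]{KMS93}). First I would record that the assignment $M \mapsto C^\infty(M)$ is visibly functorial and that the map on hom-sets, $f \mapsto f^\ast$, is well-defined; faithfulness is the easy half: if $f^\ast = g^\ast : C^\infty(M) \to C^\infty(N)$, then precomposing with evaluation at any point $q \in N$ gives equal algebra maps $C^\infty(M) \to \FR$, which by Milnor's exercise for manifolds with corners (Cor. \ref{MilnorsForManifoldsWithCorners}) must be evaluation at $f(q)$ and at $g(q)$ respectively, forcing $f(q) = g(q)$ for all $q$, hence $f = g$ as set maps, and then as smooth maps.

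The substantive half is fullness: given an $\FR$-algebra homomorphism $\phi : C^\infty(M) \to C^\infty(N)$, I must produce a smooth $f : N \to M$ with $f^\ast = \phi$. The construction of $f$ as a set map is forced: for each $q \in N$, the composite $\ev_q \circ \phi : C^\infty(M) \to \FR$ is an $\FR$-algebra homomorphism, hence by Cor. \ref{MilnorsForManifoldsWithCorners} equals evaluation at a unique point, which I define to be $f(q) \in M$. By construction $\phi(h)(q) = h(f(q))$ for all $h \in C^\infty(M)$ and $q \in N$, i.e. $\phi = f^\ast$ \emph{as soon as we know $f$ is smooth}. To check smoothness, I would use the generalized Whitney embedding theorem for manifolds with corners (as cited before Cor. \ref{MilnorsForManifoldsWithCorners}) to fix a closed embedding $s : M \hookrightarrow \FR^k$ with $C^\infty(M) \cong C^\infty(\FR^k)/I_{s(M)}$. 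Writing $\{x^1,\dots,x^k\}$ for the coordinate functions on $\FR^k$ and $\bar x^i := [x^i \circ s] \in C^\infty(M)$ for their restrictions, the functions $\phi(\bar x^i) \in C^\infty(N)$ are smooth on $N$ and assemble into a smooth map $\Phi := (\phi(\bar x^1),\dots,\phi(\bar x^k)) : N \to \FR^k$; evaluating at $q$ shows $\Phi(q) = (s(f(q))^1,\dots,s(f(q))^k) = s(f(q))$, so $\Phi = s \circ f$ takes values in the closed submanifold-with-corners $s(M)$, and since $s$ is a closed embedding, $f = s^{-1} \circ \Phi$ is smooth as a map into $M$ (smoothness of maps \emph{into} $M$ with corners is a local condition checked against charts, and being a restriction-corestriction of a smooth map into $\FR^k$ that lands in $s(M)$ suffices, cf. the footnote to Ex. \ref{ManifoldsWithCornersExample}).

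The main obstacle I anticipate is precisely this last smoothness verification: one must be careful that ``smooth map into a manifold with corners'' behaves well under corestriction along a closed embedding, since near corner and boundary points the image $s(M) \subset \FR^k$ is not a submanifold in the boundaryless sense, and one needs that $\Phi$ locally factors through the cornered charts of $M$. This is handled by the standard theory of manifolds with corners (\cite{Joyce19}\cite{Melrose}\cite{MROD}), where closed embeddings are neat enough that a smooth map into the ambient $\FR^k$ landing in the image pulls back to a smooth map into the source; alternatively, one checks smoothness against each chart $\psi : \FR^{k',l'} \xrightarrow{\sim} U_p \hookrightarrow M$ by noting that the coordinate functions of $\psi^{-1}$ extend to smooth functions on a neighborhood in $\FR^{k'}$ and pull back under $\phi$ to smooth functions on $N$, which is exactly the defining condition for $f$ to be smooth at points of $f^{-1}(U_p)$. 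Everything else — functoriality, faithfulness, and the pointwise reconstruction of $f$ — is routine given Cor. \ref{MilnorsForManifoldsWithCorners}.
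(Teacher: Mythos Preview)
Your proposal is correct and follows the same overall architecture as the paper: faithfulness via Milnor's exercise for corners (Cor.~\ref{MilnorsForManifoldsWithCorners}), pointwise reconstruction of $f$ from $\phi$ by composing with evaluations, and then a smoothness check. The only genuine difference is in that last step. You reach for a \emph{global} Whitney embedding $s:M\hookrightarrow\FR^k$ and argue that $\Phi=(\phi(\bar x^1),\dots,\phi(\bar x^k))$ is smooth into $\FR^k$ and factors through $s(M)$, leaving the corestriction $s^{-1}\circ\Phi$ to be justified by the ambient theory of closed embeddings with corners. The paper instead works \emph{locally}: for each $n\in N$ it takes a chart $\xi:V_{f(n)}\to\FR^{k,l}$ around $f(n)$, extends the local coordinate functions $x^i=\pi^i\circ\xi$ to global $\widetilde{x}^{\,i}\in C^\infty(M)$ via bump functions, and observes directly that $\widetilde{x}^{\,i}\circ f=\phi(\widetilde{x}^{\,i})\in C^\infty(N)$, so that $\xi\circ f\circ\psi^{-1}$ is smooth in any source chart $\psi$. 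This is exactly the alternative you sketch in your final sentence, and it has the advantage of being self-contained: it bypasses the corestriction-along-a-closed-embedding issue you flag as the main obstacle. Your Whitney-embedding route is a little slicker when the requisite embedding machinery is taken for granted, but the paper's chart-and-bump argument needs no such appeal.
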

\begin{proof}
Given Milnor's exercise for manifolds with corners (Cor. \ref{MilnorsForManifoldsWithCorners}), the result follows essentially the same proof as in the case of no corners and boundaries, following e.g. \cite[§35.10]{KMS93}. We carefully expand these details for completeness.

Fix some $N,M \in \SmoothManifolds^{\mathrm{cor}}$. To see that the functor is faithful (i.e., injective on hom-sets), consider the pullback map corresponding to a smooth map of manifolds with corners $f: N\rightarrow M$,
$$
f^* \, : \, C^\infty(M) \longrightarrow C^\infty(N)
$$
and its compositions with the point evaluations 
$$
\big\{ ev_{n} \circ f^* \, : \, C^\infty(M) \longrightarrow C^\infty(N)\longrightarrow \FR \big\}_{n\in N}\, .
$$
By Cor. \ref{MilnorsForManifoldsWithCorners}, the latter set is in bijection with a subset of points in $|M|$, namely 
$$
\mathrm{Im}(f) \;\; \subset \;\; |M| 
$$
via $\ev_n \circ f^* \mapsto f(n)$. Hence for any other \textit{smooth} map $g:N\rightarrow M$ such that $g^*=f^*:C^\infty(M)\longrightarrow C^\infty(N)$, it follows that pointwise 
$$
g(n) \, = \, f(n)
$$
which shows that the functor is faithful. More categorically, this can be succinctly expressed via the following commutative diagram
\[ 
\xymatrix@R=1.2em@C=3em   { &&  \mathrm{CAlg}^\op \ar[d]^{\Gamma}
	\\ 
	\SmoothManifolds^{\mathrm{cor}} \ar[rru]^{C^\infty} \ar[rr]^>>>>>>>>>>>>{|-|} && \mathrm{Set}\, ,
}   
\]
where $\Gamma$ is the functor that extracts the underlying set of ``$\FR$-points'' as $\Gamma(A):= \mathrm{Hom}_{\mathrm{CAlg}^\op}(A,\, \FR)$. Since, the forgetful functor $|-|: \SmoothManifolds^\mathrm{cor} \rightarrow \mathrm{Set}$ is faithful, it follows that $C^\infty(-)$ is also necessarily faithful.

To see that the functor is full (i.e., surjective on Hom-sets), consider an arbitrary morphism of algebras
$$
\phi \, : \, C^\infty(M) \longrightarrow C^\infty(N)
$$
and the collection of compositions with the evaluation maps
$$
\big\{ ev_{n} \circ \phi \, : \, C^\infty(M) \longrightarrow C^\infty(N)\longrightarrow \FR \big\}_{n\in N}\, .
$$
As above, by Cor. \ref{MilnorsForManifoldsWithCorners}, each element of this set is an algebra morphism $C^\infty(M)\rightarrow \FR$ and hence uniquely corresponds to some evaluation map on $|M|$
$$
\ev_n\circ \phi \longmapsto \ev_{f(n)}\,.
$$
In other words, this determines a map of sets
$$
f \, : \, |N| \longrightarrow |M|
$$
which, by construction, satisfies 
$f^* \, = \, \phi$.

It remains to check that the latter map is actually \textit{smooth} as a map of smooth manifolds with corners. Towards this, notice that its postcomposition 
with any smooth function $\sigma \in C^\infty(M)$ is smooth, since
$$
\sigma \circ f =f^*\sigma = \phi(\sigma)  \;\; \in \;\; C^\infty(N).
$$
Fix a point $n\in N$ take $\xi : V_{f(n)} \subset M \rightarrow \FR^{k}\times [0,\infty)^{\times {l} }=: \FR^{k,l}$ to be a local chart (with corners) around $f(n)\in M$. Consider the (local) coordinate projection maps
$$
x^i \, = \, \pi^i \circ \xi \;\; : \;\; V_{f(n)} \longrightarrow \FR^{k,l}\longrightarrow \FR 
$$
of which those with $i>k$ actually take values in the non-negative numbers $[0,\infty)\hookrightarrow \FR$.

Choosing a smooth, non-negative, bump function whose value is $1$ in some smaller neighborhood around $f(n)\subset V_{f(n)}$, which also exists on second countable and Hausdorff manifolds \textit{with corners} 
(see, e.g., \cite[Lem. 19]{Hajek}), there exist a \textit{smooth} global extension of each of the coordinates
$$
\widetilde{x}^{\,i} \;\; \in \;\; C^\infty(M)\, ,
$$
so that also
$$
\widetilde{x}^{\,i}\circ f = \phi\big(\widetilde{x}^i\big) \;\; \in \;\; C^\infty(N)
$$
is smooth. Thus, for any local chart around $\psi^{-1} : U_n\subset \FR^{k',l'}\rightarrow N$ around $n\in N$, it follows that the composition
$$
\big(\,\widetilde{x}^{\, i} \circ f \big) \circ \psi^{-1}\;\; : \;\; U_n\subset \FR^{k',l'}\longrightarrow N \longrightarrow \FR
$$
is a smooth map, even for any $i>k$ as a map into $[0,\infty)$.
But this says that the map
$$
\xi \circ f \circ \psi^{-1} = \big(x^1,\cdots, x^k, x^{k+1},\cdots x^{k+l}\big) 
\circ f \circ \psi^{-1} \quad : \quad U_n\subset \FR^{k',l'}\longrightarrow N \longrightarrow \FR^{k,l}
$$
is smooth. Since this holds for all points $n\in N$ and local charts around it and its image, the map $f$ is smooth as a map of manifolds with corners.
\end{proof}

The Yoneda embedding then immediately implies that smooth manifolds with corners embed fully faithfully as representable objects in the category of (pre)sheaves on $\mathrm{CAlg}^{\mathrm{op}}$. In fact, one can show that these remain ``representable from outside'' when the probes are restricted to thickened Cartesian spaces $\ThickenedCartesianSpaces\hookrightarrow \mathrm{CAlg}^{\op}$. The proof of this result follows along the lines of that from \cite[Thm. 9.6]{Kock06} which treats the case of manifolds with boundary (albeit having in mind $C^\infty$-algebraic morphisms, which is actually not necessary), with some extra bells and whistles. 
\begin{theorem}[\bf Manifolds with corners embed into the Cahiers topos]\label{ManifoldsWithCornersEmbedIntoCahiers}
The restricted Yoneda embedding
\begin{align*}
y \, : \, \SmoothManifolds^{\mathrm{cor}} &\xhookrightarrow{\quad \quad} \ThickenedSmoothSets \\
M &\xmapsto{\quad \quad} \mathrm{Hom}_{\mathrm{CAlg}_\FR}\big(C^\infty(M),\, \CO(-) \big) |_{\ThickenedCartesianSpaces}
\end{align*}
embeds smooth (second countable and Hausdorff) manifolds with corners fully faithfully in thickened smooth sets.
\end{theorem}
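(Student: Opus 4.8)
The plan is to reduce the statement to a combination of results already established in the excerpt: Milnor's exercise for manifolds with corners (Cor. \ref{MilnorsForManifoldsWithCorners}), the full faithfulness of $C^\infty(-):\SmoothManifolds^{\mathrm{cor}}\hookrightarrow\mathrm{CAlg}_\FR^{\op}$ (Prop. \ref{ManifoldsWithCornerstoAlgebras}), the quotient presentation of thickened Cartesian algebras (Cor. \ref{ThickenedFunctionAlgebrasAsQuotientAlgebras}), the automatic $C^\infty$-algebraicity of $\FR$-algebra maps into thickened Cartesian algebras (Prop. \ref{RAlgebraMapsAreCinfty}), and the equivalence $\ThickenedSmoothSets\simeq\mathrm{Sh}(C^\infty\mbox{-}\ThickenedCartesianSpaces)$ identifying our topos with Dubuc's Cahiers topos (Cor. \ref{CahiersToposIsRalgebraic}). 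First I would note that the formula for $y$ already lands in presheaves on $\ThickenedCartesianSpaces$; by Rem. \ref{PetitSheafCharacterization} (the petit sheaf characterization) one must check that $\Hom_{\mathrm{CAlg}_\FR}(C^\infty(M),\CO(\FR^k\times\DD))$ defines a topological sheaf on $\FR^k$ for each $\FR^k$ and each $\DD$ — this is the content referenced as Thm. \ref{ManifoldsWithCornersEmbedIntoCahiers} in Ex. \ref{ManifoldsWithCornersExample} and Ex. \ref{SmoothClosedSubspacesExample}, and it follows from the sheaf property of $C^\infty$-structure sheaves on manifolds with corners together with the fact that good open covers of $\FR^k$ extended trivially along $\DD$ are the covers in question. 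So $y(M)$ is a genuine thickened smooth set.

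Next, for full faithfulness I would compute, for $M,N\in\SmoothManifolds^{\mathrm{cor}}$,
\[
\Hom_{\ThickenedSmoothSets}\big(y(N),\,y(M)\big)\;\cong\;\Hom_{\ThickenedSmoothSets}\big(y(N),\,\Hom_{\mathrm{CAlg}_\FR}(C^\infty(M),\CO(-))\big).
\]
The strategy is to observe that $N\in\SmoothManifolds^{\mathrm{cor}}$ is, via Whitney-type embedding (as used in Cor. \ref{MilnorsForManifoldsWithCorners} and Prop. \ref{ManifoldsWithCornerstoAlgebras}), a finitely generated $C^\infty$-algebra, hence $C^\infty(N)\cong C^\infty(\FR^n)/I_{s(N)}$; thus $y(N)$ is a quotient object built from the representable $y(\FR^n)=\Hom_{\ThickenedCartesianSpaces}(-,\FR^n)$. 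Writing $N$ as a colimit (coequalizer) of its good open cover by cornered model charts $\FR^{k_i,l_i}$ — which themselves are closed subspaces of Cartesian spaces — one may compute the hom-set out of $y(N)$ as a limit of hom-sets out of the representables $y(\FR^{k_i,l_i})$. Over a single such chart, the Yoneda lemma gives $\Hom_{\ThickenedSmoothSets}(y(\FR^{k,l}),y(M))\cong y(M)(\FR^{k,l})=\Hom_{\mathrm{CAlg}_\FR}(C^\infty(M),C^\infty(\FR^{k,l}))$, and by the cornered Milnor / full faithfulness result (Prop. \ref{ManifoldsWithCornerstoAlgebras}) this is exactly $\Hom_{\SmoothManifolds^{\mathrm{cor}}}(\FR^{k,l},M)$. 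Reassembling the limit recovers $\Hom_{\SmoothManifolds^{\mathrm{cor}}}(N,M)$, using that $\SmoothManifolds^{\mathrm{cor}}\hookrightarrow\mathrm{CAlg}_\FR^{\op}$ preserves the relevant colimits of charts (covers glue on both sides). An equivalent, perhaps cleaner route is to invoke Cor. \ref{CahiersToposIsRalgebraic} directly: transport the known full faithfulness of $y:\SmoothManifolds^{\mathrm{bnd}}\hookrightarrow\mathrm{Sh}(C^\infty\mbox{-}\ThickenedCartesianSpaces)$ from \cite[Thm. 9.6]{Kock06}, \cite{Reyes07} across the site equivalence, and then extend the argument of those references from boundaries to corners — the only new ingredient being that cornered model spaces $\FR^{k,l}$ are closed subspaces of Cartesian spaces admitting cornered bump functions (cited in Prop. \ref{ManifoldsWithCornerstoAlgebras} via \cite{Hajek}), so that the local computations go through verbatim.

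The crucial subtlety, and the step I expect to be the main obstacle, is handling the \emph{infinitesimally thickened} probes $\FR^k\times\DD$ — equivalently, showing that maps $y(\FR^k\times\DD)\to y(M)$ are exhausted by (formal duals of) algebra maps $C^\infty(M)\to C^\infty(\FR^k)\otimes\CO(\DD)$ and that these factor appropriately through charts. For $\DD$ trivial this is Prop. \ref{ManifoldsWithCornerstoAlgebras}; for nontrivial $\DD$ one must argue that any $\FR$-algebra map $C^\infty(M)\to C^\infty(\FR^k)\otimes_\FR\CO(\DD)$ localizes: composing with evaluation at points of $\FR^k$ and reduction mod nilpotents lands in $\Hom_{\mathrm{CAlg}_\FR}(C^\infty(M),\FR)\cong|M|$, hence the underlying map $\FR^k\to M$ is smooth (by the $\DD$-trivial case) and takes values in a single chart locally, so the whole map factors through $C^\infty(\FR^{k,l})\otimes\CO(\DD)\cong C^\infty(\FR^{k+l})/(\cdots)\,\otimes\CO(\DD)$; then Prop. \ref{RAlgebraMapsAreCinfty} guarantees this is $C^\infty$-algebraic and Hadamard's lemma (Lem. \ref{PartialHadamardsLemma}, Cor. \ref{HadamardOnManifoldCartesianProduct}) controls the nilpotent directions. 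This mirrors the boundary case in \cite{Reyes07}, \cite[Prop. 2.8]{Reyes07}, but the presence of the nonnegative coordinates requires care: one must check that the polynomial-in-$\epsi$ extension of a chart-valued map still respects the $[0,\infty)$ constraints on the finite part (it does, since the nilpotent part does not affect whether the underlying real value is nonnegative, the constraint being only on the $\epsi=0$ reduction). Once this localization-through-charts-with-corners statement is in hand, full faithfulness follows by the same gluing/limit argument as in the boundaryless case, and the proof concludes by noting functoriality is immediate from the construction.
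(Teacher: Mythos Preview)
Your Route 1 contains a genuine gap: the step ``the Yoneda lemma gives $\Hom_{\ThickenedSmoothSets}(y(\FR^{k,l}),y(M))\cong y(M)(\FR^{k,l})$'' is invalid, because the cornered model spaces $\FR^{k,l}=\FR^k\times[0,\infty)^l$ are \emph{not} objects of the site $\ThickenedCartesianSpaces$ (whose objects are only $\FR^k\times\DD$ with $\DD$ a Weil-thickened point). So $y(\FR^{k,l})$ is not representable and the Yoneda lemma does not apply. In fact the isomorphism you are asserting here is essentially the special case $N=\FR^{k,l}$ of the very theorem you are trying to prove, so the argument is circular. Decomposing $N$ into cornered charts therefore does not reduce the problem. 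Your Route 2 is a citation plus ``extend to corners'', which is an outline of intent rather than a proof, and your third paragraph addresses a different question (the structure of plots $\FR^k\times\DD\to M$, i.e.\ elements of $y(M)(\FR^k\times\DD)$, which is just the definition) rather than the full-faithfulness question (natural transformations $y(N)\Rightarrow y(M)$).

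The paper's proof takes a quite different route. For fullness, given an arbitrary $\CH:y(N)\to y(M)$ it first extracts the underlying set map $h:=\CH(*):|N|\to|M|$ and reduces to $M=\FR$ via an embedding $M\hookrightarrow\FR^n$ and projection. Smoothness of $h$ at interior points follows since $h\circ g$ is smooth for every smooth $g:\FR^d\to N$ (by Yoneda on genuine Cartesian probes). The key idea you are missing is how to handle corner points: one precomposes with the \emph{squaring map} $s:\FR^{k'+l'}\to\FR^{k',l'}$, $(x^1,\dots,x^{k'+l'})\mapsto(x^1,\dots,x^{k'},(x^{k'+1})^2,\dots,(x^{k'+l'})^2)$, so that $h\circ\phi\circ s$ is smooth and \emph{even} in the last $l'$ variables; Whitney's theorem on even functions then produces a smooth extension, showing $h\circ\phi$ is smooth. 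Finally, one must show $y(h)=\CH$ on \emph{all} (thickened) plots, not just $*$-plots; this is done by showing the difference $j:=y(h)-\CH$ vanishes, reducing via the jointly-monic property of point evaluations to plots $\DD\to N$, which factor through infinitesimal neighborhoods and hence through charts, where the squaring map argument applies again (following Kock's Lem.~9.9). None of these ingredients --- the squaring map, Whitney's even-function theorem, or the reduction of $y(h)=\CH$ to Weil-algebra-valued plots --- appear in your proposal.
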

\begin{proof}
Firstly, the sheaf condition is satisfied since the value of the presheaf $y(M)$ on each thickened Cartesian space, $\FR^k\times \DD$, defines a (topological/petit) sheaf on each $\FR^k \in \CartesianSpaces$, for all infinitesimal points $\DD$ (Rem. \ref{PetitSheafCharacterization}). To see the latter, consider any closed smooth embedding $s:M\hookrightarrow \FR^n$ in $\mathrm{CAlg}^{\op}$, so that $C^\infty(M)\cong C^\infty(\FR^n)/I_{s(M)}$. Postcomposing any family of compatible maps
$$
\{\phi_i : U_i\times \DD \longrightarrow M\}_{i\in I}
$$
over an open cover $\{U_i\times \DD\hookrightarrow \FR^k\times \DD\}_{i\in I}$ with the closed embedding $s : M \hookrightarrow \FR^n$, we get a family of compatible maps into the Cartesian space $\FR^n$
$$
\big\{s\circ \phi_i : U_i\times \DD \longrightarrow M\hookrightarrow \FR^n\big\}_{i\in I}\, .
$$
By the sheaf condition on $\FR^k$, these glue to a unique map (a unique $\FR^k\times \DD$-plot) 
$$
\phi_s \, : \, \FR^k\times \DD \longrightarrow \FR^k\, .
$$
We still need to show that this map actually factors through $M$, that is, within $\mathrm{CAlg}^{\op}$. To see this,
take any $f \in I_{s(M)} \subset C^\infty(\FR^n)\cong \mathrm{hom}_{\ThickenedSmoothSets}\big(y(\FR^n),\, y(\FR)\big)$, so that 
$$
f\circ \phi_s \quad \in \quad \mathrm{Hom}_{\ThickenedSmoothSets}\big(y(\FR^n\times \DD),\, y(\FR)\big)
\cong \mathrm{Hom}_{\ThickenedCartesianSpaces}(\FR^n\times \DD ,\FR) 
$$
with the property that the restrictions factor (dually in CAlg$_\FR$) through $M$
$$
f\circ \phi_s |_{U_i} = f \circ s \circ \phi_i \;\; : \;\; U_i\times \DD \longrightarrow M \longrightarrow \FR \, ,
$$
i.e.,
$$
f\circ \phi_s |_{U_i\times \DD} \, = \, 0  \, \;\; \in \;\; \mathrm{Hom}_{\ThickenedCartesianSpaces}(U_i\times \DD ,\FR) \, .
$$
for each element $U_i\times \DD$ of the cover. Thus, by the petit sheaf property of $\CO(\FR^k\times \DD) \cong \mathrm{Hom}_{\ThickenedCartesianSpaces}(\FR^k\times \DD ,\FR)$, or equivalently the gros sheaf property of $y(\FR)$, the glued map also necessarily vanishes. i.e.
$$
f\circ \phi_s \, = \, 0 \,.
$$

Faithfulness follows analogously to that of Prop. \ref{ManifoldsWithCornerstoAlgebras}. Namely, we have the following commutative diagram
\[ 
\xymatrix@R=1.2em@C=3em   { &&  \ThickenedSmoothSets \ar[d]^{\Gamma}
	\\ 
	\SmoothManifolds^{\mathrm{cor}} \ar[rru]^{y} \ar[rr]^>>>>>>>>>>>>{|-|} && \mathrm{Set}\, ,
}   
\]
where $\Gamma:= (-) (*)$ is the functor that extracts the underlying set of $*$-plots $\Gamma(\CF):= \CF(*)$. Since, the forgetful functor $|-|: \SmoothManifolds^\mathrm{cor} \rightarrow \mathrm{Set}$ is faithful, it follows that $y$ is also necessarily faithful. 

It remains to show that the functor is full. Consider then an arbitrary map of thickened smooth sets
$$
\CH \, : \, y(N) \longrightarrow y(M)\, ,
$$
where $N,M \in \SmoothManifolds^{\mathrm{cor}}$ are any two manifolds with corners. This encodes, in particular, a map on the underlying sets of $*$-plots 
$$
h:= \CH(*) \equiv \Gamma(\CH) \;\;  : \;\; |N| \longrightarrow |M| \, ,
$$
which we wish to show qualifies as a smooth map $h:N\rightarrow M$ with respect to their manifold-with-corners  structures. Since there exists a (smooth) embedding $M\hookrightarrow \FR^n$, and a map into $M$ is smooth only if it locally extends to a smooth map in the corresponding ambient model Cartesian space $\FR^{k,l}\hookrightarrow \FR^{k+l}$, it is sufficient to show the map $h$ is smooth for the case of $M=\FR^n$. But a map $N\rightarrow \FR^n$ is smooth if and only if each projection $N\rightarrow \FR^n \xrightarrow{\pi^i} \FR$ is smooth, hence it is in fact sufficient to show the result for the case of $M=\FR$.

Now, by precomposing with any map of thickened smooth sets 
$$
\CG\; : \; y(\FR^d) \longrightarrow y(N) \,,
$$
we obtain a map in $\ThickenedSmoothSets$
$$
\CH\circ \CG \;\; : \;\; y(\FR^d) \longrightarrow y(N) \longrightarrow y(\FR)
$$
which by the (original) fully faithfulness of the Yoneda embedding over $\CartesianSpaces$ from \eqref{ThCartSpYonedaEmbedding}, corresponds to a \textit{smooth map}
$$
h\circ g \equiv \Gamma(\CG \circ \CH) \equiv \CG(*) \circ \CH(*) \;\; : \;\; \FR^d \longrightarrow \FR\, . 
$$
In words, the precomposition of $h: |N| \rightarrow \FR$ with \textit{any} smooth map $g :\FR^d \rightarrow N$ is smooth.

The latter statement is sufficient to show that $h$ is smooth as a map of manifolds with corners $N\rightarrow \FR$. This is a local (chart-wise) statement in $N$, and hence follows immediately for any $n\in \mathrm{Int}(N)$ which is necessarily contained in a local Cartesian chart without corners $\FR^{k'+l'}$. Lastly, consider the case where $p$ is instead any point of the corner/boundary set subset, together with an adapted local chart $\phi: \FR^{k',l'} \rightarrow N$ centered at $n$,
$$
\phi\big(\,\underline{a}_{k'}\,,\,\underline{0}_{l'}\big)\, = \, n\, , 
$$
and consider the corresponding smooth $\FR^{k'+l'}$-plot given by (precomposing with)
\begin{align}\label{SquaringMap}
s\quad : \quad  \FR^{k'+l'}  \qquad & \longrightarrow \; \qquad \FR^{k',l'}\\
\Big(x^1,\cdots, x^{k'}, x^{k'+1},\cdots, x^{k'+l'}\Big) 
&\; \longmapsto \; \Big(\, x^1,\cdots,\,  x^{k'}, \big(x^{k'+1}\big)^2,\cdots \big(x^{k'+l'}\big)^2\, \Big) \, . 
\end{align}
This yields the \textit{smooth} map
$$
h\circ \phi \circ s \;\; : \;\; \FR^{k'+l'} \longrightarrow \FR
$$
which is \textit{even} in (each of) the last $l$-coordinates, i.e.,
$$
h\circ \phi \circ s \, \big(x^1,\cdots, x^{k'}, {\color{red}-}x^{k'+1},\cdots, x^{k'+l'}\big) 
\, =\,  h\circ \phi \circ s \,  \big(x^1,\cdots, x^{k'}, x^{k'+1},\cdots,  x^{k'+l'}\big)\, ,
$$
and similarly for the rest of the final $l'-1$ coordinates. By the symmetry of the above function, there exists\footnote{For the case of the half line $\FR^{1+0}=\FR^1$, this is the classic result of \cite{Whitney}. The generalization to smooth functions $f: \FR^{k'+l'}\rightarrow \FR$, which are even in each of the last $l'$-coordinates, follows easily by applying the original result on each of these variables, inductively, via the use of the partial Hadamard Lemma \ref{PartialHadamardsLemma}. Alternatively, the existence of such a function is also guaranteed by the more general result of \cite{Schwartz75} applied to the case of the group action $(\RZ_2)^{\times l'}$ on $\FR^{k'+l'}$ acting by reflection on the $l'$-last coordinates.} a \textit{smooth} extension 
$$
h_\phi\, : \, \FR^{k'+l'}\longrightarrow \FR
$$
such that 
$$
h_\phi |_{\FR^{k',l'}} \, = \, h\circ \phi \, ,
$$ 
by which it follows that $h$ is smooth as a map of manifolds with corners.

It remains to show that the induced map of thickened smooth manifolds agrees with the original map $\CH$, namely,
$$
y(h) \, \equiv \, \CH \, .
$$
On $*$-plots, these necessarily agree by  definition of $h:= \CH(*)$. Thus it suffices to prove that $j:= y(h) - \CH $ is the zero map of thickened smooth sets, given that $j(*):N\rightarrow \FR$ is the zero map of sets. This is the content of \cite[Lem. 9.9]{Kock06}, which applies almost verbatim in the case of manifolds with corners, noting that the corresponding $C^\infty$-algebra morphisms mentioned therein are exhausted by $\FR$-algebra morphisms.

The outline of this fact is as follows: it suffices to show that the composition of any plot 
$$
y(\FR^k\times \DD) \longrightarrow y(N) \xlongrightarrow{j} y(\FR)
$$
is the zero map. 
But such a composition, by the Yoneda embedding, is exactly a ``pullback''
\begin{align*}
\gamma : C^\infty(\FR) &\longrightarrow C^\infty(\FR^k)\otimes \CO(\DD) \\
t &\longmapsto \gamma(t)
\end{align*}
which is completely determined by its value on the generator, by Prop. \ref{RAlgebraMapsAreCinfty}. Thus, it must be that $\gamma(t)=0$, which in turn is true if and only if the compositions with evaluation maps
$$
\ev_{p} \circ \gamma \, : \, C^\infty(\FR) \longrightarrow \CO(\DD) \, ,
$$
for all $p\in \FR^k$, i.e., $\gamma(t)(p) =0$ for all $p\in \FR^k$, by the jointly monic property of Weil algebras. This reduces the problem to showing that all compositions 
$$
y(\DD) \longrightarrow y(N) \xlongrightarrow{j} y(\FR)
$$
vanish. The first map corresponds, by the Yoneda Lemma and the definition of $y(N)$, to an algebra morphism $C^\infty(N)\rightarrow \CO(\DD)$, whose value depends only on the jets $j_n^\infty f$ of each function $f\in C^\infty(N)$ at $n:*\hookrightarrow \DD\rightarrow N$. That is, such algebra maps factor through the corresponding jet algebras $J^\infty_n N  $, or dually through the infinitesimal neighborhoods (Def. \ref{InfitesimalNbdinManTraditional}, Lem. \ref{SyntheticInfinitesimalNeighborhoodOfManifold}) at $n\in N$
$$
y(\DD)\longrightarrow \DD_{n,N} \hookrightarrow N \xlongrightarrow{j} \FR \, . 
$$

Since infinitesimal neighborhoods always lie in such local chart $\DD_{n,N} \hookrightarrow U_n \xrightarrow{\sim} \FR^{n,k}$, it suffices to prove it for the case of the trivial manifold $\FR^{n,k}$. This final statement follows as in the case of $\FR^{n,1}$ from \cite[Lem. 9.9]{Kock06}, by essentially cosmetic (i.e., numerical) changes and using instead the corresponding squaring map from \eqref{SquaringMap}.
\end{proof}

Let us close this section with a result that further supports the choice of $\FR$-algebra morphism in the definition of the embedding $y: \SmoothManifolds^{\mathrm{cor}} \hookrightarrow \ThickenedSmoothSets$ from Thm. \ref{ManifoldsWithCornersEmbedIntoCahiers}, rather the traditional $C^\infty$-algebraic approach.

\begin{lemma}[\bf Plots of smooth closed subspaces]\label{PlotsOfSmoothClosedSubspaces}Let $K\hookrightarrow \FR^n$ be a closed (smooth) subspace defined by its algebra of smooth functions $C^\infty(K)\cong C^\infty(\FR^n)/I_{K}$ as in \eqref{FunctionsOnClosedSubspace}, which further satisfies $\mathrm{Cl}(\mathrm{Int}K)=K$. Then any $\FR$-algebra map
$$
\phi\, : \, C^\infty(K) \xrightarrow{\quad \quad} C^\infty(\FR^k)\otimes_\FR W\cong C^\infty(\FR^{k+m})/\,\tilde{I}
$$
is a $C^\infty$-algebra map.
\end{lemma}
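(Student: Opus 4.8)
The plan is to reduce the statement to a case we have already handled in the appendix, namely Proposition \ref{RAlgebraMapsAreCinfty}, by exhibiting $C^\infty(K)$ as a quotient of a Cartesian algebra by an ideal generated by \emph{polynomials}. The subtlety is that $I_K \subset C^\infty(\FR^n)$ is typically \emph{not} finitely generated by polynomials — it is merely the (generally infinitely generated, non-polynomial) ideal of smooth functions vanishing on $|K|$ — so Proposition \ref{RAlgebraMapsAreCinfty}\textbf{(ii)} does not apply directly. The key observation that makes the argument work is the density hypothesis $\mathrm{Cl}(\mathrm{Int}\,K) = K$: it forces any $\FR$-algebra map out of $C^\infty(K)$ to be determined by its underlying map of \emph{points} (more precisely, its jets), at which level the ideal $I_K$ behaves like the zero ideal because the density guarantees agreement on all of $|K|$.

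\textbf{First} I would set up the factorization. Given $\phi : C^\infty(K) \to C^\infty(\FR^k)\otimes_\FR W$, precompose with the quotient $\tau : C^\infty(\FR^n) \to C^\infty(\FR^n)/I_K \cong C^\infty(K)$ to obtain an $\FR$-algebra map $\widetilde\phi := \phi\circ\tau : C^\infty(\FR^n) \to C^\infty(\FR^k)\otimes_\FR W$. By Proposition \ref{RAlgebraMapsAreCinfty}\textbf{(i)} (applied with source $C^\infty(\FR^n)$ and target the thickened Cartesian algebra $C^\infty(\FR^k)\otimes_\FR W \cong C^\infty(\FR^{k+m})/\widetilde I$), the map $\widetilde\phi$ is automatically a $C^\infty$-algebra map; equivalently, by the characterization of Definition \ref{FinitelyGeneratedCinftyAlgebras}, it descends (dually) from a \emph{smooth} map $\psi : \FR^{k+m} \to \FR^n$, i.e. $\widetilde\phi = [\psi^*]$ modulo $\widetilde I$. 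The task is then to show that this same $\psi$ serves as a witness for $\phi$ being a $C^\infty$-algebra map out of $C^\infty(K) \cong C^\infty(\FR^n)/I_K$, which by Definition \ref{FinitelyGeneratedCinftyAlgebras} amounts to verifying $\psi^*(I_K) \subset \widetilde I$.

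\textbf{Second}, to establish $\psi^*(I_K) \subset \widetilde I$ I would argue pointwise-plus-jets, exploiting the density hypothesis. Fix $f \in I_K$ and any point $q \in \FR^k$; compose with the evaluation $\ev_q \otimes \id_W : C^\infty(\FR^k)\otimes_\FR W \to W$ to get $\phi_q := (\ev_q\otimes\id_W)\circ\phi : C^\infty(K) \to W$, an $\FR$-algebra map into a Weil algebra, hence automatically a $C^\infty$-algebra map (\cite[Cor. 3.19]{MoerdijkReyes}) and determined by its underlying point $p_q \in \FR^n$ together with finitely many nilpotent "jet directions". The point $p_q$ is the image of the evaluation $C^\infty(\FR^n)/I_K \to \FR$, which by Corollary \ref{MilnorsExerciseForClosed} lies in $|K|$. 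Now here is where density enters: $\phi_q$ factors through the $\infty$-jet algebra $J^\infty_{p_q}(\FR^n)$, and since $p_q$ is a limit of interior points of $K$ (by $\mathrm{Cl}(\mathrm{Int}\,K)=K$) and $f$ vanishes identically on $K \supset \mathrm{Int}\,K$, all partial derivatives of $f$ at $p_q$ vanish — so $j^\infty_{p_q} f = 0$ and hence $\phi_q(f) = 0 \in W$. Since $\{\ev_q\otimes\id_W\}_{q\in\FR^k}$ is jointly monic (the $\FR$-tensor decomposition, \cite[(8.6)]{Kock06}), $\phi(f) = 0$, i.e. $\psi^*(f) = f\circ\psi \in \widetilde I$. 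This gives $\psi^*(I_K)\subset\widetilde I$, so $\psi$ descends to a $C^\infty$-algebra map $C^\infty(K) \to C^\infty(\FR^k)\otimes_\FR W$ which visibly agrees with $\phi$ on the generating images of the coordinate functions $[x^i]$, and therefore equals $\phi$.

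\textbf{The main obstacle} I anticipate is the passage from "$f$ vanishes on $\mathrm{Int}\,K$" to "$j^\infty_{p_q} f = 0$ at every $p_q \in |K|$". This is precisely where the density condition $\mathrm{Cl}(\mathrm{Int}\,K)=K$ must be used with care: for $p_q$ in the interior it is trivial, but for boundary-type points of $K$ one needs that $p_q$ is approached by a sequence in $\mathrm{Int}\,K$ along which $f$ and all its derivatives are zero, and then invoke continuity of the derivatives to conclude they vanish at $p_q$ as well. (Without density — e.g. if $K$ had a lower-dimensional "whisker" — one could have $f \in I_K$ with nonzero normal derivatives along the whisker, and the argument genuinely fails, which is why the footnote in Example \ref{SmoothClosedSubspacesExample} flags exactly this point and suggests replacing $I_K$ by the ideal $I_K^\infty$ of functions with \emph{all} derivatives vanishing on $K$.) Everything else — the factorization, the appeal to Proposition \ref{RAlgebraMapsAreCinfty}\textbf{(i)}, Milnor's exercise for the location of evaluation points, joint monicity — is routine bookkeeping already assembled in the surrounding appendix.
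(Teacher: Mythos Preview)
Your proposal is correct and follows essentially the same route as the paper: precompose with the quotient $\tau$ to reduce to a map out of $C^\infty(\FR^n)$, invoke Prop.~\ref{RAlgebraMapsAreCinfty}{\bf(i)} to obtain a smooth lift $\psi$, and then verify $\psi^*(I_K) \subset \widetilde I$ by using the density hypothesis $\mathrm{Cl}(\mathrm{Int}\,K)=K$ to conclude that all partial derivatives of elements of $I_K$ vanish along $K$. The only difference is presentational: the paper first works through the warm-up case $W = \CO\big(\DD^1(1)\big)$ explicitly via the Hadamard expansion of $g(f_0 + f_1\cdot y)$ before handling the general Weil algebra, whereas you go directly to arbitrary $W$ and package the same derivative-vanishing as a factorization of each pointwise evaluation $\phi_q$ through the jet algebra $J^\infty_{p_q}(\FR^n)$.
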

\begin{proof}
We need to show that any such algebra map $\phi$ is given by the  (equivalence class of a) pullback of a smooth map $f: \FR^{k+m} \rightarrow \FR^n$ as per Def. \ref{FinitelyGeneratedCinftyAlgebras}, hence extending the result of Prop. \ref{RAlgebraMapsAreCinfty}. 
Consider first the case of the target algebra being the non-thickened Cartesian algebra $C^\infty(\FR^k)$, which already contains the skeleton of the argument. Under the quotient projection 
\begin{align}\label{ClosedAlgebraQuotientProjection}
\tau : C^\infty(\FR^n) \longrightarrow  C^\infty(\FR^n)/I_{K}\, \cong \, C^\infty(K) \, , 
\end{align} 
this maps the equivalence classes of the linear coordinate functions (projections) $\{[t^i]\}_{i=1,...,n}\subset C^\infty(\FR^n)/I_{K}$
to $n$ distinct functions $\{f^i\}_{i=1,\cdots,n}\subset C^\infty(\FR^k)$,
i.e.,
$$
f^i \, := \, \phi \big([t^i]\big) \; \in \; C^\infty(\FR^k)\, . 
$$
In other words, precomposing $\phi$ with the quotient projection 
we get a map
$$
\phi \circ \tau \;\; : \;\; C^\infty(\FR^n) \longrightarrow C^\infty(\FR^n)/I_{K} \longrightarrow C^\infty(\FR^k)
$$
which sends each of generators $t^i$ to $f^i$.

By the embedding of Prop. \ref{ManifoldsWithCornerstoAlgebras}, this uniquely corresponds to the smooth map
$$
f =(f^1,\cdots, f^n) \;\; : \;\; \FR^k \longrightarrow \FR^n
$$
such that $\phi\circ \tau = f^*$. We claim that 
$$
f^*(I_K) \, \subset \, \{0\} \; \xhookrightarrow{\quad} \; C^\infty(\FR^k)\,, 
$$
that is, for all $g\in C^\infty(\FR^n)$ with $g|_{K}=0$,
$$
g\circ f \, =\, 0 \;\; \in \; C^\infty(\FR^k)\, .
$$
Since this must be the case for all such smooth $g$, whose collection  can attain arbitrary values along the open subset $\FR^n-K\subset \FR^n$ (by using bump functions), this holds if and only if 
$$
f(x) \in K \quad \forall \quad x \in \FR^k \, .
$$
Hence towards contradiction, assume otherwise, namely $\exists \, x_0 \,  \in \FR^k$ such that 
$$
f(x_0) \in \FR^k-K \, .
$$
Then take some $g\in I_K$ which is non-vanishing at $f(x_0)\in \FR^k-K$ (via bump functions), so that  
$$
f^*g (x_0) = g\big( f(x_0)\big) \neq 0 \, .
$$

\newpage 
On the other hand, by construction $f^*= \phi \circ \tau$, so that
$$
f^* = \phi \circ \tau (g) = \phi([0])= 0 \, ,
$$
i.e., a contradiction. Hence there cannot exist such $x_0$ with $f(x_0)\in \FR^n-K$, and the map $\phi$ is \textit{necessarily} $C^\infty$-algebraic.\footnote{Notice, this encodes the consistency of the intuition that a ``smooth map'' $\FR^k\rightarrow K$ in $\mathrm{CAlg}^\op$, defined dually as an algebra  map $\phi : C^\infty(K)\rightarrow C^\infty(\FR^k)$ should correspond to a smooth map (of manifolds) $\FR^k\rightarrow \FR^n$ that factors (point-wise) through the closed subspace $|K|\hookrightarrow \FR^n$.}

Next consider the case of the target function algebra being $C^\infty(\FR^k) \otimes \CO(\DD^1(1)) \cong C^\infty(\FR^k \times \FR_y)/ (y^2)$. Precomposing the algebra morphism with \eqref{ClosedAlgebraQuotientProjection} and going through the same steps as above we get a map 
$$
\phi \circ \tau \;\; : \;\; C^\infty(\FR^n) \longrightarrow C^\infty(\FR^n)/I_{K} \longrightarrow 
C^\infty\big(\FR^k_x\times \FR^y\big)/\big(y^2\big)\, ,
$$
which defines $n$ elements in the target algebra by the value of the generators
$$
\big[f^i= f^i_0(x) + f^i_1(x) \cdot y \big] \, : = 
\, \phi\circ \tau\big(t^i\big)= \phi\big([t^i]\big) \quad \in \quad C^\infty\big(\FR^k_x\times \FR_y\big)/(y^2)\, .
$$
Proceeding as in the proof of Prop. \ref{RAlgebraMapsAreCinfty}, these define a smooth function 
$$
f= (f^1,\cdots , f^n) \;\;  : \;\;  \FR^k_x\times \FR_y \longrightarrow \FR^n ,
$$
such that $\phi\circ \tau = f^*$. This displays $\phi$ as a $C^\infty$-algebra map if 
$$
f^*(I_K) \subset (y^2) \xhookrightarrow{\quad} C^\infty\big(\FR^k_x \times \FR_y\big)\, ,
$$
i.e., for all $g\in C^\infty(\FR^n)$ with $g|_{K}=0$,
$$
g\circ f = y^2 \cdot h_g(x,y)  \;\; \in \;\; C^\infty\big(\FR^k_x \times \FR^y\big)\, . 
$$
Expanding the left-hand side using Hadamard's Lemma on sums \eqref{HadamardOnSum},
$$
g\big(f_0(x) + f_1(x) \cdot y) \, = \, g\big(f_0(x)\big) + \sum_{i} \partial_{t^i}g \big(f_0(x)\big) \cdot f_1^i(x) \cdot y + \CO(y^2) \, ,
$$
which is thus of order $y^2$ if and only if both
$$
g\circ f_0 \,= \,  0 \, \quad \quad \mathrm{and} \quad \quad  \sum_i ( \partial_{t^i} g \circ f_0)\cdot f^i_1 \, = \, 0 
\;\; \in \;\; C^\infty\big(\FR^k_x\big)\, .
$$
By the proof of the non-thickened case, the former condition holds for all such $g$ if and only if the image of $f_0$ lies in $K$,
$$
f_0(x) \in K \quad \forall \quad x\in \FR^k_x\, .
$$
Now, by the assumed smoothness of all such $g$, these extend smoothly to outside the \textit{closed} subspace $K\hookrightarrow \FR^n$ while vanishing on $K$. By the assumption that $\mathrm{Cl}(\mathrm{Int}K)=K$, this automatically guarantees also that
$$
\partial_t g (k) \, = \, 0 \quad \forall \quad k\in K \, .
$$
Argueing again by contradiction, one sees that there cannot exist $x_0$ such that $f(x_0)\in \FR^n -K$, so that $\phi$ is again necessarily $C^\infty$-algebraic.

The case of an arbitrary thickening Weil algebra $W\cong \CO(\DD^m(l))/ I$ follows similarly, since the corresponding higher derivatives appearing in the corresponding higher order Hadamard expansion all vanish for any smooth function $g$ which vanishes along $K$. 
\end{proof}
Applying Lem. \ref{PlotsOfSmoothClosedSubspaces} for the closed subspace $K=\FR^{k',l'}\hookrightarrow \FR^{k'+l'}$ yields the following.
\begin{corollary}[\bf Plots of cornered spaces]\label{MappingIntoCorneredSpaces}

Let $\FR^{k',l'}:=\FR^{k'}\times [0,\infty)^{l'}$ be any higher dimensional cornered model space, thought of as the (formal dual) smooth closed subspace of $\FR^{k'+l'}$ as per \eqref{FunctionsOnClosedSubspace} with smooth functions defined by 
$$
C^\infty\big(\FR^{k',l'}\big) := C^\infty\big(\FR^{k'+l'}\big)/I_{\FR^{k',l'}},
$$
where $I_{\FR^{k',l'}}\subset C^\infty(\FR^{k'})$ is the ideal of smooth function on $\FR^{k'}$ which vanish on $\FR^{k',l'}\subset \FR^{k'+l'}$. Then any $\FR$-algebra map 
$$
\phi \;:\; C^\infty\big(\FR^{k',l'}\big) \xrightarrow{\quad \quad} C^\infty(\FR^k)\otimes_\FR W\cong C^\infty(\FR^{k+m})/\,\tilde{I}
$$  
is a $C^\infty$-algebra map.
\end{corollary}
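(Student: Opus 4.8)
The plan is to deduce this corollary directly from Lemma \ref{PlotsOfSmoothClosedSubspaces}, which is the general statement about $\FR$-algebra maps out of function algebras of smooth closed subspaces. The only thing to verify is that the cornered model space $\FR^{k',l'} = \FR^{k'}\times[0,\infty)^{l'}$, viewed as a formal dual smooth space via $C^\infty(\FR^{k',l'}) := C^\infty(\FR^{k'+l'})/I_{\FR^{k',l'}}$, actually satisfies the hypotheses of that lemma: namely that it is a closed subspace of a Cartesian space, and that it is ``dense'' in the required sense, $\mathrm{Cl}(\mathrm{Int}(K)) = K$ with $K = \FR^{k',l'}$.

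First I would note that $\FR^{k',l'}$ is manifestly a closed subset of $\FR^{k'+l'}$, being a product of the closed set $\FR^{k'}$ with the closed orthant $[0,\infty)^{l'}$; hence the algebra $C^\infty(\FR^{k'+l'})/I_{\FR^{k',l'}}$ is precisely of the form \eqref{FunctionsOnClosedSubspace} to which Lemma \ref{PlotsOfSmoothClosedSubspaces} applies. Second I would check the density condition: the interior of $\FR^{k',l'}$ inside $\FR^{k'+l'}$ is $\FR^{k'}\times(0,\infty)^{l'}$, whose closure is exactly $\FR^{k'}\times[0,\infty)^{l'} = \FR^{k',l'}$. So $\mathrm{Cl}(\mathrm{Int}(\FR^{k',l'})) = \FR^{k',l'}$, as required. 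With both hypotheses in place, Lemma \ref{PlotsOfSmoothClosedSubspaces} immediately yields that any $\FR$-algebra map
$$
\phi \;:\; C^\infty\big(\FR^{k',l'}\big) \xrightarrow{\quad} C^\infty(\FR^k)\otimes_\FR W \cong C^\infty(\FR^{k+m})/\tilde I
$$
is automatically a $C^\infty$-algebra map, which is the assertion of the corollary.

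The proof is therefore essentially a one-line invocation, and I do not anticipate a genuine obstacle — the work has already been done in establishing Lemma \ref{PlotsOfSmoothClosedSubspaces} (and, upstream, Proposition \ref{ManifoldsWithCornerstoAlgebras} and Proposition \ref{RAlgebraMapsAreCinfty}). The only point requiring a moment's care is making the identification of $\FR^{k',l'}$ with a closed subspace of $\FR^{k'+l'}$ explicit, so that the ideal $I_{\FR^{k',l'}}$ appearing here is literally the ideal $I_K$ of the lemma; once that bookkeeping is in place, the conclusion is immediate. If desired, one could also give the corollary a self-contained two-sentence proof by specializing the argument of Lemma \ref{PlotsOfSmoothClosedSubspaces}: the generators $[t^i]$ map to functions $f^i$ on $\FR^{k+m}$ assembling into a smooth $f : \FR^{k+m}\to\FR^{k'+l'}$; using that smooth functions vanishing on $\FR^{k',l'}$ together with their normal derivatives are forced to vanish (by the density of the interior, via the Hadamard expansion on sums \eqref{HadamardOnSum}), one shows $f^*(I_{\FR^{k',l'}})\subset\tilde I$, so $f^*$ descends to a $C^\infty$-algebra map coinciding with $\phi$.
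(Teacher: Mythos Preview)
Your proposal is correct and matches the paper's approach exactly: the paper simply states that applying Lemma \ref{PlotsOfSmoothClosedSubspaces} to the closed subspace $K = \FR^{k',l'} \hookrightarrow \FR^{k'+l'}$ yields the corollary, with no further argument given. Your explicit verification of the density hypothesis $\mathrm{Cl}(\mathrm{Int}(\FR^{k',l'})) = \FR^{k',l'}$ is a useful sanity check that the paper leaves implicit.
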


\subsubsection*{Weil bundles of manifolds with corners}
\label{App-WeilBundles}

Weil bundles of manifolds originate in the work of A. Weil \cite{Weil53} (see also \cite{Kolar08} for a modern recount), who showed that a natural smooth fiber bundle structure exists on the set of algebra morphisms 
$$
\Hom_{\mathrm{CAlg}}\big(C^\infty(M),\, W\big) \, ,
$$
for each Weil algebra W, where $M$ is a smooth manifold without boundary. Equivalently interpreted in our context, this construction yields a smooth manifold structure on the set of formal dual maps from any infinitesimal point $\DD\in \ThickenedCartesianSpaces$ with $\CO(\DD)= W$
$$
T_{\DD} M \, := \, \mathrm{Hom}_{\mathrm{CAlg}^\op}(\DD, \, M ) \, ,
$$
to be thought of as the smooth bundle of ``$\DD$-shaped infinitesimal points'' in $M$.
Moreover, it was  shown by Dubuc \cite{Dubuc79} that this smooth fiber bundle structure is naturally isomorphic, under the Yoneda embedding into the Cahiers topos, to one obtained via the internal mapping space construction  of thickened smooth sets (Def. \ref{ThickenedSmoothMappingSet}), the ``\textit{synthetic Weil bundle}''  
$$
y(T_\DD M ) \, \cong \, [\DD,\, M]  \;\; \in \;\; \ThickenedSmoothSets \, .
$$
Both the traditional Weil bundle construction and the above isomorphism have been extended to the case of manifolds with boundary in \cite{Reyes07}.

\smallskip 
Here, we briefly review the above in a manner that directly extends to the case of manifolds with corners, while remaining manifestly in the purely $\FR$-algebraic setting.
Towards this goal, letting $\FR^{n,k}\hookrightarrow \FR^{n+k}$ be the cornered (smooth) model space, viewed as the dual of its algebra of smooth functions (Prop. \ref{ManifoldsWithCornerstoAlgebras})
$$
\FR^{n,k} \;\; \in \;\; \mathrm{CAlg}_\FR^{\op}\, ,
$$
the following is directly established.

\begin{lemma}[\bf Weil bundle of cornered space]\label{WeilBundleOfCorneredSpace}
For any infinitesimal point $\DD\in \ThickenedCartesianSpaces\hookrightarrow \mathrm{CAlg}^\op$, dual to a Weil algebra $\CO(\DD) \cong W\cong \FR\oplus V$, there is a canonical bijection of sets
$$
T_\DD \FR^{n,k} \, : = \, \mathrm{Hom}_{\mathrm{CAlg}^\op}(\DD,\, \FR^{n,k}) \;  \cong \; \FR^{n,k}\times V^{\times (n+k)} \, .
$$
It follows that any choice of basis for $V\hookrightarrow W$ determines a bijection
$$
T_\DD \FR^{n,k} \xrightarrow{\quad \sim \quad } \FR^{n,k}\times \FR^{(w-1)(n+k)} \cong \FR^{n+(w-1)(n+k)}\times [0,\infty)^{\times k}\, ,
$$
where $w$ is the dimension of $W$, hence supplying the former with a structure a (trivial) smooth manifold with corners, which further exhibits $T_\DD \FR^{n,k}$ as a (trivial) smooth fiber bundle over $\FR^{n,k}$
$$
T_\DD \FR^{n,k} \longrightarrow \FR^{n,k} \, .
$$
\end{lemma}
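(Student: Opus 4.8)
The plan is to compute the Hom-set directly by exploiting the fact that an algebra morphism out of a Cartesian-type algebra is determined by its values on coordinate generators, together with the $\FR$-algebraic description of $C^\infty(\FR^{n,k})$ as a quotient of $C^\infty(\FR^{n+k})$ by the vanishing ideal $I_{\FR^{n,k}}$. First I would unwind the definition: a map $\DD \to \FR^{n,k}$ in $\mathrm{CAlg}^\op$ is an $\FR$-algebra morphism $\phi : C^\infty(\FR^{n,k}) \cong C^\infty(\FR^{n+k})/I_{\FR^{n,k}} \to W$. By Cor.~\ref{MappingIntoCorneredSpaces} (or directly by Prop.~\ref{RAlgebraMapsAreCinfty} applied with a polynomially finitely generated source ideal --- noting that $I_{\FR^{n,k}}$ is handled by Lem.~\ref{PlotsOfSmoothClosedSubspaces}), any such $\FR$-algebra map is automatically a $C^\infty$-algebra map, hence is completely determined by the images of the coordinate generators $[x^1],\dots,[x^{n+k}]$, and the assignment $[x^j] \mapsto w_j \in W$ extends to a (well-defined) morphism precisely when the constraints cut out by $I_{\FR^{n,k}}$ are satisfied.

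Next I would identify which tuples $(w_1,\dots,w_{n+k}) \in W^{\times(n+k)}$ are admissible. Split each $w_j = a_j + v_j$ with $a_j \in \FR$ (the ``body'') and $v_j \in V$ (the nilpotent part), using $W \cong_{\mathrm{Vect}_\FR} \FR \oplus V$. For the first $n$ coordinates (the $\FR^n$-directions) every choice $a_j \in \FR$, $v_j \in V$ is allowed, since $C^\infty(\FR^n)$ has no extra relations and the nilpotent $v_j$ may be absorbed by Hadamard expansion (Lem.~\ref{HadamardsLemma}, Cor.~\ref{ThickenedFunctionAlgebrasAsQuotientAlgebras}); this is exactly the content of Milnor-type reasoning for the body point $(a_1,\dots,a_n)$ together with a free nilpotent ``velocity'' $v_j$. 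For the last $k$ coordinates (the $[0,\infty)$-directions), the body part $a_j$ must lie in $[0,\infty)$: indeed, composing $\phi$ with the canonical projection $W \to \FR$ gives an $\FR$-algebra morphism $C^\infty(\FR^{n,k}) \to \FR$, which by Cor.~\ref{MilnorsForManifoldsWithCorners} is evaluation at a point of $|\FR^{n,k}| = \FR^n \times [0,\infty)^{\times k}$, forcing $a_{n+1},\dots,a_{n+k} \ge 0$; conversely, given any body point in $\FR^{n,k}$ and arbitrary nilpotent parts, the assignment extends --- here one uses that any smooth $g$ vanishing on $\FR^{n,k}$ has all its normal derivatives vanishing along the boundary locus (again Lem.~\ref{PlotsOfSmoothClosedSubspaces}), so the quotient relations impose nothing on the nilpotent components. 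This yields the bijection $T_\DD \FR^{n,k} \cong \FR^{n,k} \times V^{\times(n+k)}$, where the $\FR^{n,k}$-factor records the bodies $(a_1,\dots,a_{n+k})$ and the $V^{\times(n+k)}$-factor the nilpotent parts $(v_1,\dots,v_{n+k})$.

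Finally, choosing an $\FR$-linear basis of $V$ identifies $V \cong \FR^{w-1}$ where $w = \dim_\FR W$, hence $V^{\times(n+k)} \cong \FR^{(w-1)(n+k)}$, giving
\begin{align*}
T_\DD \FR^{n,k} \;\xrightarrow{\;\sim\;}\; \FR^{n,k} \times \FR^{(w-1)(n+k)} \;\cong\; \FR^{\,n + (w-1)(n+k)} \times [0,\infty)^{\times k},
\end{align*}
which is a (trivial) smooth manifold with corners; the projection onto the $\FR^{n,k}$-factor (dually, precomposition with the algebra inclusion $\FR \hookrightarrow W$, i.e.\ the augmentation $W \to \FR$ followed by nothing --- equivalently the map induced by $* \to \DD$) exhibits $T_\DD \FR^{n,k} \to \FR^{n,k}$ as a trivial smooth fiber bundle with fiber $\FR^{(w-1)(n+k)}$. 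The only real subtlety --- the step I expect to need the most care --- is the boundary/corner analysis in the second paragraph: verifying that the body coordinates in the $[0,\infty)$-directions are exactly constrained to be non-negative while the nilpotent parts remain completely unconstrained, which is where one must invoke that smooth functions vanishing on the corner locus have all normal derivatives vanishing there (the $\mathrm{Cl}(\mathrm{Int})$-density hypothesis behind Lem.~\ref{PlotsOfSmoothClosedSubspaces}, which holds for $\FR^{n,k}$), rather than merely vanishing to first order; everything else is bookkeeping with Hadamard expansions already established in the Appendix.
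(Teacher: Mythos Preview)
Your proposal is correct and follows essentially the same approach as the paper: both determine the algebra morphism by its values on the coordinate generators (via Cor.~\ref{MappingIntoCorneredSpaces}/Lem.~\ref{PlotsOfSmoothClosedSubspaces}), identify the body point in $\FR^{n,k}$ via Milnor's exercise for corners, observe that the nilpotent components are unconstrained, and then pass to a basis of $V$ for the manifold-with-corners structure and the bundle projection. Your write-up is in fact more explicit than the paper's about \emph{why} the nilpotent parts are unconstrained (invoking the vanishing of all normal derivatives for functions in $I_{\FR^{n,k}}$), which the paper leaves implicit in its citation of Lem.~\ref{PlotsOfSmoothClosedSubspaces}.
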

\begin{proof}
A map $\DD \rightarrow \FR^{n,k}$ corresponds to an algebra morphism 
$$
\phi : C^\infty(\FR^{m,k}) \longrightarrow \CO(\DD)\cong W \, ,
$$
which is in turn completely determined by its value on the coordinate functions $\{t^i\}_{i=1,\cdots,n+k}\subset C^\infty(\FR^{n,k})$ (Cor. \ref{MappingIntoCorneredSpaces},
Lem. \ref{PlotsOfSmoothClosedSubspaces}) 
$$
\big\{\phi(t^i)\big\}_{i=1,\cdots,n+k}\;\; \subset \;\;  W\cong \FR\oplus V \, .
$$
Post-composing these with the projection onto $\FR\hookrightarrow W$ corresponds uniquely (dually) to evaluation at a point
$$
p_{\phi} \, \in \, \FR^{n,k}\, , 
$$
while no further conditions are imposed on the components taking values in $V\hookrightarrow W$. This yields the first (canonical) isomorphism
as
\begin{align*}
\mathrm{Hom}_{\mathrm{CAlg}^\op}(\DD,\, \FR^{n,k}) &\longrightarrow \FR^{n,k}\times V^{n+k} 
\\[-2pt]
\phi &\longmapsto \big(p_\phi, \, \big(\phi(t^i)|_{V}\big) \big)\,.
\end{align*}

Next, a choice of a basis for $V$ amounts to an isomorphism $V\cong \FR^{w-1}$ which then further yields a particular bijection of the former Hom-set with
$$
\FR^{n,k}\times \FR^{(w-1)(n+k)} \, . 
$$
Finally, it is immediate to see that restriction along $*\hookrightarrow \DD$ corresponds precisely to the canonical (smooth) projection
$$
T_\DD \FR^{n,k} \longrightarrow \FR^{n,k} \, , 
$$
exhibiting the former as a (trivial) smooth fiber bundle with corners over $\FR^{n,k}$.
\end{proof}

Moreover, the Weil bundle construction is functorial with respect to smooth maps between cornered spaces.

\begin{lemma}[\bf Weil bundle functoriality]\label{WeilBundeFunctoriality}
For any smooth map 
$$
f \, : \, \FR^{n,k} \longrightarrow \FR^{m,l} \, ,
$$
equivalently determined by the pullback $f^* : C^\infty(\FR^{m,l})\rightarrow C^\infty(\FR^{n,k})$ (Prop. \ref{ManifoldsWithCornerstoAlgebras}), 
composition with the elements of the corresponding Weil bundles yields a smooth bundle map
over $f$,
\begin{align*}
T_\DD f \, : \, T_\DD \FR^{n,k} &\longrightarrow T_\DD \FR^{m,l}\, \\
\phi & \longmapsto \phi \circ f^*\,.
\end{align*}
\end{lemma}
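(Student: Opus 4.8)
\textbf{Proof plan for Lemma \ref{WeilBundeFunctoriality}.}

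The statement has two parts: that the set-theoretic assignment $\phi \mapsto \phi \circ f^*$ is well-defined as a map $T_\DD \FR^{n,k} \to T_\DD \FR^{m,l}$, that it covers $f$, and that it is smooth (indeed a smooth bundle map) with respect to the manifold-with-corners structures supplied by Lemma \ref{WeilBundleOfCorneredSpace}. The first two points are essentially formal. An element $\phi \in T_\DD \FR^{n,k}$ is by definition an algebra morphism $\phi : C^\infty(\FR^{n,k}) \to \CO(\DD) \cong W$; precomposing with $f^* : C^\infty(\FR^{m,l}) \to C^\infty(\FR^{n,k})$ yields an algebra morphism $\phi \circ f^* : C^\infty(\FR^{m,l}) \to W$, which is exactly an element of $T_\DD \FR^{m,l}$. (By Cor. \ref{MappingIntoCorneredSpaces} / Lem. \ref{PlotsOfSmoothClosedSubspaces}, such an $\FR$-algebra map is automatically $C^\infty$-algebraic, so no subtlety arises from the choice of algebra morphism here.) That $T_\DD f$ covers $f$ is immediate by chasing the projections: the projection $T_\DD \FR^{n,k} \to \FR^{n,k}$ is restriction along $* \hookrightarrow \DD$ (equivalently postcomposition of $\phi$ with the projection $W \to \FR$), and postcomposing $\phi \circ f^*$ with $W \to \FR$ gives the point $f(p_\phi)$ since $f^*$ followed by evaluation at $p_\phi$ is evaluation at $f(p_\phi)$ by Prop. \ref{ManifoldsWithCornerstoAlgebras}.

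The substantive point is smoothness. The plan is to compute $T_\DD f$ in the explicit coordinates furnished by Lemma \ref{WeilBundleOfCorneredSpace}. Fix a basis $\{1, e_1, \dots, e_{w-1}\}$ of $W = \FR \oplus V$, so that an element $\phi \in T_\DD \FR^{n,k}$ is recorded by $(p_\phi, (a^i_\alpha))$ where $\phi(t^i) = p^i_\phi + \sum_\alpha a^i_\alpha e_\alpha$, with $\{t^i\}$ the coordinate functions on $\FR^{n,k} \hookrightarrow \FR^{n+k}$ and $\{s^j\}$ those on $\FR^{m,l}$. The image element $\psi := \phi \circ f^*$ is determined by $\psi(s^j) = \phi(f^* s^j) = \phi(f^j)$, where $f^j := f^* s^j \in C^\infty(\FR^{n,k})$ is the $j$-th component of $f$. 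Expanding $\phi(f^j)$ using Hadamard's Lemma on sums \eqref{HadamardOnSum} (applied to a smooth extension of $f^j$ to $\FR^{n+k}$, centered at $p_\phi$) together with the nilpotency $V^{l_0+1} = 0$ of $W$, one gets a finite Taylor-type expansion: $\phi(f^j) = f^j(p_\phi) + \sum_{I \neq 0} \tfrac{1}{I!}\, \partial_I f^j(p_\phi) \cdot \big(\textstyle\sum_\alpha a^i_\alpha e_\alpha\big)^I$ truncated at nilpotency order, where the multi-index $I$ ranges over the coordinates $t^i$. Reading off the $\FR$- and $V$-components of this expression, one sees that the new coordinates $(p_\psi, (b^j_\beta))$ of $\psi$ are given by $p_\psi = f(p_\phi)$ and $b^j_\beta = $ (polynomial in the $a^i_\alpha$ with coefficients that are smooth functions of $p_\phi$, namely the relevant partial derivatives of $f$ evaluated at $p_\phi$). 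This expression is manifestly smooth in the chart coordinates of $T_\DD \FR^{n,k}$, and smooth as a map into $[0,\infty)^{\times l}$ in the last $l$ slots since those slots record the point component $f(p_\phi) \in [0,\infty)^{\times l}$ and $f$ is smooth as a map of cornered spaces. Moreover the expansion is fiberwise polynomial in the $V$-directions, with the linear term given by the Jacobian of $f$ at $p_\phi$, so $T_\DD f$ is a bundle map over $f$ covering $f$ in the required sense.

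The main obstacle, such as it is, is purely bookkeeping: keeping track of the multinomial expansion of $\big(\sum_\alpha a^i_\alpha e_\alpha\big)^I$ in $W$ and verifying that every coefficient appearing is a genuine smooth function of $p_\phi$ — this is exactly where the partial Hadamard Lemma \ref{PartialHadamardsLemma} and the uniqueness of the lower-order coefficients it provides are needed, to ensure the expansion does not depend on the choice of smooth extension of $f^j$ modulo the nilpotent ideal. One should also note, as in Lem. \ref{WeilBundleOfCorneredSpace}, that the construction is independent of the chosen basis of $V$ up to the obvious linear change of fiber coordinates, so the bundle map $T_\DD f$ is well-defined intrinsically; and functoriality $T_\DD(g \circ f) = T_\DD g \circ T_\DD f$ together with $T_\DD(\id) = \id$ is immediate from associativity of composition of algebra morphisms, $(\phi \circ f^*) \circ g^* = \phi \circ (g \circ f)^*$.
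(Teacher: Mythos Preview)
Your proposal is correct and follows essentially the same approach as the paper: both verify smoothness by computing $T_\DD f$ in the global chart of Lemma~\ref{WeilBundleOfCorneredSpace}, expanding $\phi(f^j)$ via Hadamard's lemma to exhibit the resulting coordinates as polynomials in the fiber variables with coefficients given by smooth functions (partial derivatives of $f$) of the base point. Your write-up is slightly more detailed in explicitly addressing the covering property, basis-independence, and functoriality, but the core argument is the same.
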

\begin{proof}
The map is well-defined as a map of sets,
hence the only thing to check is its smoothness under the smooth structures from Lem. \ref{WeilBundleOfCorneredSpace}. Notice, $T_\DD f$ is given by sending an  element $\phi \in T_\DD \FR^{n,k}$ determined by
$$
\big\{\phi(t^i)\big\}_{i=1,\cdots,n+k}\;\; \subset \;\;  W\cong \FR\oplus V 
$$
to the element $T_\DD f (\phi) \in T_\DD \FR^{m,l}$ determined by
$$
\big\{\phi\big(f^*(\hat{t}^j)\big) \equiv \phi(f^j)\big\}_{j=1,\cdots,m+l}\;\; \subset \;\;  W\cong \FR\oplus V 
$$
where $\{\,\hat{t}^j\}_{j=1,\cdots, m+l}\subset C^\infty(\FR^{m,l})$ are the corresponding canonical (smooth) coordinate functions and $
f = (f^1,\cdots, f^m, \cdots , f^{m+l})  : \, \FR^{n,k} \longrightarrow \FR^{m,l} \, .
$

Smoothness of $T_\DD f$ then easily follows since each $f^j \in C^\infty(\FR^{n,k})$ is smooth, by a further application of Hadamard's lemma (Lem. \ref{HadamardsLemma}). Explicitly, under some choice of basis for $W$, we have a corresponding (global) chart on $T_\DD \FR^{n,k}$
\begin{align*}
\FR^{n,k}\times \FR^{(w-1)(n+k)} &\xrightarrow{\quad \sim \quad } \;\; T_\DD \FR^{n,k} \\
\big(x^{i}_0,\,  x_1^i,\cdots, \,x^{i}_{w-1}\big) 
& \;\;\; \longmapsto  \;\; \phi_{(x^{i}_0,\,  x_1^i,\cdots, \,x^{i}_{w-1})} \, ,
\end{align*}
where $\phi_{(x^{i}_0,\,  x_1^i,\cdots, \,x^{i}_{w-1})} : C^\infty(\FR^{n,k})\rightarrow W$ is the algebra morphism determined by $\phi(t^i) =(x_0^i,\,  x_1^i, \cdots, \, x_{w-1}^i)$. Similarly,
postcomposing  with $T_\DD f$ results into the map
\begin{align*}
\FR^{n,k}\times \FR^{(w-1)(n+k)} &\xrightarrow{\quad \quad } \quad  T_\DD \FR^{m,l} \\
\big(x^{i}_0,\,  x_1^i,\cdots, \,x^{i}_{w-1}\big) & \;\; \longmapsto  \;\; 
\Big(\widehat{\phi} \,\,  \big\vert \, \, \widehat{\phi}\big(\hat{t}^j\big) = \phi\big( f^j(t^1,\cdots, t^k)\big)\!\Big)\, . 
\end{align*}
Finally, postcomposing with the (inverse of the) corresponding global chart on $T_\DD \FR^{m,l}$ yields a map of cornered spaces
\begin{align*}
\FR^{n,k}\times \FR^{(w-1)(n+k)} \xrightarrow{\quad \quad } \FR^{m,l}\times \FR^{(w-1)(m+l)} 
\end{align*}
whose value is on a point $(x^{i}_0,\,  x_1^i,\cdots, \,x^{i}_{w-1})$ of the domain is uniquely determined by the components of $ \widehat{\phi}(\hat{t}^j) = \phi\big( f^j(t^1,\cdots, t^k)\big)$, obtained by first expanding each $f^j(t^1,\cdots, t^k)$ up the nilpotency order of $W$ (cf. Lem. \ref{PlotsOfSmoothClosedSubspaces}), uniquely via Hadamard's lemma \ref{HadamardsLemma}, and then using the algebra morphism property of $\phi$. The resulting components of the map will necessarily be polynomial in the variables $(x^{i}_0,\,  x_1^i,\cdots, \,x^{i}_{w-1})$, and hence smooth.
\end{proof}

\begin{proposition}[\bf Weil bundle of manifold with corners]
\label{WeilBundleOfManifoldWithCorners}
Let $M\in \SmoothManifolds^{\mathrm{cor}}$ be a manifold with corners modelled on $\FR^{n,k}\hookrightarrow \FR^{n+k}$. 
\begin{itemize}
\item[\bf (i)] For any infinitesimal point $\DD$ with $\CO(\DD)\cong W$, there exists a natural manifold-with-corner structure on 
$$
T_\DD M \, : =  \, \mathrm{Hom}_{\mathrm{CAlg}^\op}(\DD, \, M)
$$
modelled on $\FR^{n,k}\times \FR^{(w-1)(n+k)}$. Under the latter manifold structure, restriction along inclusion $*\hookrightarrow \DD$ corresponds to a canonical projection 
$$
T_\DD M \longrightarrow M \, ,
$$
exhibiting $T_\DD M$ as a (locally trivial) fiber bundle with corners over $M$.

\item[\bf (ii)] Moreover, this construction forms a functor on the category of manifolds with corners
$$
T_\DD \, : \, \SmoothManifolds^{\mathrm{cor}} \longrightarrow \SmoothManifolds^{\mathrm{cor}}\, .
$$
\end{itemize}
\end{proposition}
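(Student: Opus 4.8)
\textbf{Proof proposal for Prop.~\ref{WeilBundleOfManifoldWithCorners}.}
The plan is to globalize the two preceding lemmas---Lem.~\ref{WeilBundleOfCorneredSpace} (local triviality of $T_\DD \FR^{n,k}$) and Lem.~\ref{WeilBundeFunctoriality} (smoothness of $T_\DD f$ for maps between model spaces)---by a standard atlas-gluing argument, carried out entirely within $\mathrm{CAlg}^\op$ so that all charts are honest morphisms of function algebras. For part \textbf{(i)}, I would first fix an atlas $\{\psi_\alpha : \FR^{n,k}_\alpha \xrightarrow{\ \sim\ } U_\alpha \hookrightarrow M\}$ of cornered charts, viewed dually via Prop.~\ref{ManifoldsWithCornerstoAlgebras}, and observe that restriction along the open embeddings $U_\alpha \hookrightarrow M$ induces a set-map $T_\DD U_\alpha \to T_\DD M$ whose images cover $T_\DD M$: this uses that any $\phi : C^\infty(M) \to W$ factors through a localization $C^\infty(M) \to C^\infty(U_{p_\phi})$ at the point $p_\phi \in M$ obtained by postcomposing with the projection $W \to \FR$ (exactly as in the proof of Lem.~\ref{SyntheticInfinitesimalNeighborhoodOfManifold}, where the value of such an algebra map was seen to depend only on germs at $p_\phi$). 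Each $T_\DD U_\alpha$ is then identified with $T_\DD \FR^{n,k}_\alpha \cong \FR^{n,k}\times \FR^{(w-1)(n+k)}$ via $T_\DD \psi_\alpha$, giving charts on $T_\DD M$. The transition maps are $T_\DD(\psi_\beta^{-1}\circ \psi_\alpha)$ restricted to the relevant overlaps, and these are smooth cornered maps by Lem.~\ref{WeilBundeFunctoriality} (applied to the smooth transition maps $\psi_\beta^{-1}\circ\psi_\alpha$ of $M$, which make sense as maps of cornered model spaces since charts of a manifold with corners overlap compatibly); one checks cocycle compatibility directly from functoriality of the construction $\phi \mapsto \phi\circ(-)^*$. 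The projection $T_\DD M \to M$ is, chart-locally, the bundle projection $T_\DD \FR^{n,k}\to \FR^{n,k}$ from Lem.~\ref{WeilBundleOfCorneredSpace}, hence $T_\DD M$ is a locally trivial fiber bundle with corners over $M$, with fiber $\FR^{(w-1)(n+k)}$. One should also verify that the resulting atlas satisfies the second-countability and Hausdorff conditions---these are inherited from $M$ together with the (second-countable Hausdorff) Euclidean fibers, by a routine argument.

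For part \textbf{(ii)}, given a smooth map $g : M \to N$ of manifolds with corners (equivalently its pullback $g^* : C^\infty(N)\to C^\infty(M)$ by Prop.~\ref{ManifoldsWithCornerstoAlgebras}), define $T_\DD g : T_\DD M \to T_\DD N$ by $\phi \mapsto \phi\circ g^*$; this is manifestly a well-defined map of sets and is functorial ($T_\DD(\mathrm{id}) = \mathrm{id}$, $T_\DD(g'\circ g) = T_\DD g' \circ T_\DD g$) since precomposition is. Smoothness is a local statement: picking charts $\psi_\alpha$ of $M$ and $\chi_\gamma$ of $N$ adapted to $g$ (so that $g$ maps $U_\alpha$ into some $V_\gamma$), the map $T_\DD g$ reads in these charts as $T_\DD$ applied to the cornered chart-representative of $g$, which is smooth by Lem.~\ref{WeilBundeFunctoriality}. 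Compatibility with the bundle projections---i.e.\ that $T_\DD g$ covers $g$---is immediate from naturality of restriction along $*\hookrightarrow \DD$.

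The main obstacle I anticipate is not any single hard estimate but rather the bookkeeping needed to ensure the chart constructions are genuinely intrinsic---independent of the chosen atlas and compatible with the cornered structure. Concretely: (a) verifying that the charts $T_\DD U_\alpha \hookrightarrow T_\DD M$ are \emph{open} embeddings of manifolds with corners, which requires knowing that restriction along an \emph{open} submanifold-with-corners inclusion $U_\alpha \hookrightarrow M$ realizes $T_\DD U_\alpha$ as an open subset of $T_\DD M$ (the factorization-through-germs observation above handles surjectivity onto an appropriate subset, but openness needs the local triviality of Lem.~\ref{WeilBundleOfCorneredSpace} together with the fact that the point-projection $T_\DD M \to M$ is continuous in the atlas topology); and (b) checking that the boundary/corner stratification of $T_\DD M$ is the expected one, namely that $T_\DD M$ has corners exactly ``over'' the corners of $M$ with the fiber directions contributing no boundary---this follows from the explicit form $\FR^{n,k}\times\FR^{(w-1)(n+k)}$ in Lem.~\ref{WeilBundleOfCorneredSpace}, but one must confirm the transition maps respect this product-of-cornered-with-boundaryless structure, which they do because $T_\DD(\text{transition})$ is, by the proof of Lem.~\ref{WeilBundeFunctoriality}, polynomial (hence boundaryless-smooth) in the fiber coordinates and merely reparametrizes the $\FR^{n,k}$ base by the original transition map. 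Once these compatibility points are dispatched, the proposition follows by assembling the local data.
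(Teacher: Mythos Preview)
Your proposal is correct and follows essentially the same approach as the paper: cover $T_\DD M$ by the $T_\DD U_\alpha$ via the germ-factorization argument, use Lem.~\ref{WeilBundleOfCorneredSpace} to get local charts, and invoke Lem.~\ref{WeilBundeFunctoriality} for smoothness of transitions and of $T_\DD g$. Your discussion of the bookkeeping obstacles (openness, corner stratification, second-countability/Hausdorff) is more explicit than the paper's, which simply takes two charts without loss of generality and leaves those checks implicit.
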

\begin{proof}
Without loss of generality, we consider the case of $M$ being covered by two charts $\psi_{1/2}: \FR^{n,k}_{\tiny 1/2}\xrightarrow{\,\sim\,} U_{1/2}\hookrightarrow M$ with $U_1\cap U_2 \neq \emptyset$. By definition of smooth functions on $M$, these induce pullback isomorphisms on local function algebras $\psi_{1/2}^* : C^\infty(U_{1/2}) \xrightarrow{\sim} C^\infty\big(\FR^{n,k}_{1/2}\big)$.

Now, any $\phi \in T_\DD M := \Hom_{\mathrm{CAlg}^\op}(\DD, \, M) \equiv  \Hom_{\mathrm{CAlg}}\big(C^\infty(M), \, W\big)$ centered at $p:*\hookrightarrow \DD \rightarrow M$, w.l.o.g say with $p\in U_1$, acts as some particular (higher-order) differential operator on smooth functions $h\in C^\infty(M)$ (cf. Lem. \ref{TangentBundleSet}, Rem. \ref{infinitesimalMappingExamples}{\bf(i)}), evaluated at $p\in M $. It follows that  $\phi(h)$ depends only on the corresponding jets $j_p^\infty h$, i.e., factoring through the infinitesimal neighborhood $\DD_{p,M}\hookrightarrow M$ and hence further through the open submanifold $U_1 \hookrightarrow M$. Thus, by the (topological/petit) sheaf property of $C^\infty(M)$ on $|M|\in \mathrm{Top}$, it follows that as sets
$$
T_\DD M \, \cong_{\mathrm{Set}} \, T_\DD U_1 \cup T_\DD U_2 \, .
$$

Fix some basis for $W\cong \CO(\DD)$ and hence a bijection $T_\DD \FR^{n,k} \, \cong \, \FR^{n,k}\times \FR^{(w-1)(n+k)} $ (Lem. \ref{WeilBundleOfCorneredSpace}). Under the Weil-bundle construction, the isomorphisms $\psi_{1/2} : \FR^{n,k}_{1/2} \xrightarrow{\,\sim\,} U_{1/2}$ further yield bijections
$$
T_\DD \big(\psi^{-1}_{1/2}\big) \, : \, T_\DD U_{1/2} \xlongrightarrow{\sim} T_\DD \FR^{n,k} \, \cong \, \FR^{n,k}\times \FR^{(w-1)(n+k)}\, ,
$$
which serve as canonical charts on $T_\DD M$. The remaining task is to show that the transition functions are smooth, namely that over overaps $U_{12} = U_1 \cap U_2$, the composition
$$
T_\DD \big(\psi^{-1}_1\big) \circ  T_\DD \psi_{2} \quad : \quad
B_2\times \FR^{(w-1)(n+k)}\cong T_{\DD}B_2 \longrightarrow T_\DD U_{12} \longrightarrow  T_\DD B_1 \cong B_1 \times \FR^{(w-1)(n+k)}
$$
where $B_2:= \psi^{-1}_2(U_{12}) \subset  \FR^{n,k}$, and $B_1:= \psi_1^{-1}(U_{12}) \subset \FR^{n,k}$ is a smooth map between (open subsets) of cornered spaces. But this follows precisely as in Lem. \ref{WeilBundeFunctoriality}, for $f= \psi^{-1}_1 \circ \psi_2 : B_2 \xrightarrow{\sim} B_1 \hookrightarrow \FR^{n,k}$, which is smooth by the manifold-with-corners structure assumption on $M$. 

Since these induced charts preserve the \textit{local} fiber-product decomposition, it follows that $T_\DD M$ qualifies as a smooth fiber bundle over $M$.

Lastly, as in the case of cornered model spaces, for any smooth map $g : M\rightarrow N$ of manifolds with corners we obtain a map of sets
$$
T_\DD g \, : \, T_{\DD} M \longrightarrow T_\DD N
$$
by postcomposition with $g$ in $\mathrm{CAlg}^\op$. Hence $T_\DD$ obviously acts functorially at the level of sets. Dually, this action is via precomposition with the pullback $g^*$ on the algebra of functions
\begin{align*}
\mathrm{Hom}_{\mathrm{CAlg}}\big(C^\infty(M), \CO(\DD)\big) &\longrightarrow \mathrm{Hom}_{\mathrm{CAlg}}\big(C^\infty(N), \CO(\DD)\big) \\ 
\phi &\longmapsto \phi \circ g^* \, , 
\end{align*}
via which $T_\DD g :T_\DD M\rightarrow T_\DD N$ can be shown to be smooth in any two (induced) local charts for $M,N$, precisely as in Lem. \ref{WeilBundeFunctoriality}, and hence qualifies as a (globally) smooth map of manifolds with corners.
\end{proof}

We close this Appendix by showing that the very same smooth structure can be obtained directly, and equivalently, in the topos of $\ThickenedSmoothSets$, a fact that we has been instrumental in a crucial field theoretic application (Lem. \ref{LinePlotsRepresentTangentVectors}, Thm. \ref{FunctorialityOfTheThickenedCriticalSet}). This generalizes the case of the synthetic tangent bundle of manifolds from Prop. \ref{TangBundlesCoincide}.

\begin{proposition}[\bf Synthetic Weil bundles]\label{SyntheticWeilBundle}
Upon embedding manifolds with corners into thickened smooth sets (Thm. \ref{ManifoldsWithCornersEmbedIntoCahiers}), the (non-thickened) smooth structure of the ``synthetic Weil bundle'', 
$
[\DD, \, M], 
$ 
naturally coincides with that of the traditional Weil bundle
$$
[\DD,\, M]\, \cong \, y(T_\DD M)   \;\; \in \;\; \SmoothSets\, .
$$
\end{proposition}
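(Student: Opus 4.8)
The statement says that the internal mapping space $[\DD,\,M]$, computed in $\ThickenedSmoothSets$, agrees with the Yoneda image of the traditional Weil bundle $T_\DD M$ \emph{at the level of its non-thickened smooth structure}, i.e.\ after restricting to the purely Cartesian probes $\CartesianSpaces \hookrightarrow \ThickenedCartesianSpaces$. The plan is to reduce everything, via the local structure of $M$, to the model-space case already established in Lem.~\ref{WeilBundleOfCorneredSpace} and Prop.~\ref{WeilBundleOfManifoldWithCorners}, and to exhibit a natural bijection of $\FR^k$-plots that is manifestly functorial under pullbacks of Cartesian probes, exactly in the spirit of the proof of Prop.~\ref{TangBundlesCoincide}.

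\textbf{Step 1: Spell out both sides on $\FR^k$-plots.} By the internal hom formula (Def.~\ref{ThickenedSmoothMappingSet}) and the Yoneda embedding of manifolds with corners (Thm.~\ref{ManifoldsWithCornersEmbedIntoCahiers}), the $\FR^k$-plots of $[\DD,\,M]$ are
$$
[\DD,\,M](\FR^k) \;=\; \mathrm{Hom}_{\ThickenedSmoothSets}\big(y(\FR^k)\times \DD,\, y(M)\big) \;\cong\; \mathrm{Hom}_{\mathrm{CAlg}_\FR}\big(C^\infty(M),\, C^\infty(\FR^k)\otimes \CO(\DD)\big),
$$
while the $\FR^k$-plots of $y(T_\DD M)$ are simply $\mathrm{Hom}_{\SmoothManifolds^{\mathrm{cor}}}(\FR^k,\, T_\DD M)$, i.e.\ smooth maps of manifolds-with-corners into the Weil bundle of Prop.~\ref{WeilBundleOfManifoldWithCorners}. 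The goal is a natural bijection between these two sets.

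\textbf{Step 2: Work locally over charts of $M$, after factoring through a jet algebra.} Exactly as in the proof of Prop.~\ref{WeilBundleOfManifoldWithCorners}, given any algebra morphism $\phi\colon C^\infty(M)\to C^\infty(\FR^k)\otimes\CO(\DD)$, post-composing with the projection $\CO(\DD)\cong\FR\oplus V\twoheadrightarrow\FR$ and applying Prop.~\ref{CartSptoAlgebras} (Milnor's exercise for corners, Cor.~\ref{MilnorsForManifoldsWithCorners}) identifies the ``body'' component with a smooth map $p\colon\FR^k\to M$. Since $\CO(\DD)$ is a Weil algebra with $V^{l+1}=\{0\}$, the higher components are ``differential operators relative to $p$'' of order $\le l$, whose value depends only on germs (indeed jets) of functions along $p$; hence $\phi$ factors through the sub-presheaf of plots that are ``constant-to-$U_i$'' over a chart cover $\{\psi_i\colon\FR^{n,k}\xrightarrow{\sim}U_i\hookrightarrow M\}$. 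By the petit-sheaf property of $C^\infty(M)$ (Rem.~\ref{PetitSheafCharacterization}), one may then glue local data, and by Cor.~\ref{MappingIntoCorneredSpaces} and Lem.~\ref{PlotsOfSmoothClosedSubspaces} every such local $\FR$-algebra morphism $C^\infty(\FR^{n,k})\to C^\infty(\FR^k)\otimes\CO(\DD)$ is automatically a $C^\infty$-algebra morphism, hence (by the characterization of finitely generated $C^\infty$-algebras, Def.~\ref{FinitelyGeneratedCinftyAlgebras}) is determined by and determines a smooth map $\FR^k\to T_\DD\FR^{n,k}$ into the model Weil bundle (using the explicit coordinates of Lem.~\ref{WeilBundleOfCorneredSpace}: the images of the coordinate functions $t^i$ give $\FR^{n,k}\times V^{\times(n+k)}$-valued data smoothly parametrized by $\FR^k$). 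The transition functions between these local identifications are the Weil-bundle transition functions of Lem.~\ref{WeilBundeFunctoriality}, which are smooth; so the locally defined smooth maps $\FR^k\to T_\DD U_i$ glue to a smooth map $\FR^k\to T_\DD M$, and conversely. Naturality in $\FR^k$ (under precomposition with smooth maps $\FR^{k'}\to\FR^k$) is immediate from the construction, since both sides are defined by composition in $\mathrm{CAlg}^{\op}$, so this bijection is an isomorphism of smooth sets $[\DD,\,M]\big|_{\CartesianSpaces}\cong y(T_\DD M)\big|_{\CartesianSpaces}$, which is the claim.

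\textbf{Main obstacle.} The delicate point is Step~2's passage ``$\FR$-algebra morphism $\Rightarrow$ $C^\infty$-algebra morphism $\Rightarrow$ smooth map into $T_\DD\FR^{n,k}$'', and in particular the gluing across charts: one must ensure that the local identifications with $\FR^{n,k}\times V^{\times(n+k)}$ are compatible on overlaps in a way that is smooth \emph{and} respects the parametrization by $\FR^k$, which is exactly where Lem.~\ref{WeilBundeFunctoriality} (smoothness of Weil transition functions, proved via Hadamard's Lemma~\ref{HadamardsLemma}) and Lem.~\ref{PlotsOfSmoothClosedSubspaces}/Cor.~\ref{MappingIntoCorneredSpaces} (so that no spurious non-$C^\infty$ morphisms appear) do the real work. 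The thickened-probe part of the isomorphism requires the extra argumentation alluded to after Prop.~\ref{TangBundlesCoincide} (following \cite{Dubuc79}, \cite[Prop.~2.8]{Reyes07}), which goes through with only cosmetic changes given Prop.~\ref{ManifoldsWithCornerstoAlgebras} and Thm.~\ref{ManifoldsWithCornersEmbedIntoCahiers}; but since the statement only asserts the non-thickened smooth structure, the proof may legitimately stop at the $\CartesianSpaces$-level bijection above.
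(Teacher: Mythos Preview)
Your proposal is correct and follows essentially the same strategy as the paper: spell out the $\FR^k$-plots of both sides, reduce to the model cornered space $\FR^{n,k}$ via charts, and glue using the petit-sheaf property of $C^\infty(M)$. The only cosmetic difference is the direction in which you build the bijection: you start from an algebra morphism $C^\infty(M)\to C^\infty(\FR^k)\otimes\CO(\DD)$, extract its body map $p\colon\FR^k\to M$, localize via the germ/jet dependence of the nilpotent components, and assemble a smooth map $\FR^k\to T_\DD M$; the paper instead starts from a smooth map $\phi^q\colon\FR^q\to T_\DD M$, uses the very definition of the chart-induced smooth structure on $T_\DD M$ (Prop.~\ref{WeilBundleOfManifoldWithCorners}) to localize, applies the model-space case (established directly via Hadamard's Lemma and coordinate counting), and then glues the resulting local algebra maps into a global $C^\infty(M)\to C^\infty(\FR^q)\otimes W$. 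Both directions invoke exactly the same ingredients (Lem.~\ref{WeilBundleOfCorneredSpace}, Lem.~\ref{WeilBundeFunctoriality}, Cor.~\ref{MappingIntoCorneredSpaces}, Hadamard), and your closing remark that the thickened-probe extension is not claimed here and defers to \cite{Dubuc79}, \cite{Reyes07} matches the paper's own disclaimer after the proof.
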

\begin{proof}
The claim is that the smooth $\FR^q$-plots of both spaces are in natural bijection. On the left-hand side, we have
$$
[\DD, \, M] (\FR^q) \, \cong \, \Hom_{\ThickenedSmoothSets}(\DD\times \FR^q, \, M) \, \cong \, \Hom_{\mathrm{CAlg}^\op}\big(C^\infty(M), \, \CO(\DD)\otimes C^\infty(\FR^q)\big) \, ,
$$
by definition and the Yoneda Lemma. On the other hand, by the Yoneda embedding, 
$$
y(T_\DD M)(\FR^q) \, \cong \, \Hom_{\SmoothManifolds^{\mathrm{cor}}}(\FR^q, \, T_\DD M) \, .
$$

Let us first show the case of $M=\FR^{n,k}$ being the model cornered space, having fixed some basis of $\CO(\DD)\cong W$ and hence also a (global) chart $T_\DD \FR^{n,k} \cong \FR^{n,k}\times \FR^{(w-1)(n+k)}$. One way to see that the two corresponding morphism sets are in canonical bijection is to first identify (Prop. \ref{ManifoldsWithCornerstoAlgebras}) the latter with $\Hom_{\mathrm{CAlg}^\op}\big( C^\infty(\FR^{n,k}\times \FR^{(w-1)(n+k)}), \, C^\infty(\FR^q)\big)$ and noting that all algebra morphisms appearing are completely determined by their action on the coordinate functions. A simple counting argument then yields the result.

Alternatively and more directly by working at the point-set level, an arbitrary map of sets 
\begin{align*}
\phi^q \, : \, \FR^q &\longrightarrow T_\DD \FR^{n,k} \\ 
y_0 &\longmapsto \phi^q_{y_0}
\end{align*}
is smooth with respect to the canonical smooth structure of $T_\DD \FR^{n,q}$ (Lem. \ref{WeilBundleOfCorneredSpace}) if and only if each component of the composition
\begin{align*}
\big( \phi^q(t^1),\cdots , \, \phi^q (t^{n+k})\big)  \, : \, \FR^q &\longrightarrow T_\DD \FR^{n,k} \xlongrightarrow{\sim} \FR^{n,k}\times \FR^{(w-1)(n+k)} \\ 
y_0 &\longmapsto \phi^q_{y_0} \longmapsto  \big( \phi^q_{y_0}(t^1),\cdots , \, \phi^q_{y_0} (t^{n+k})\big)
\end{align*}
is smooth in $y_0 \in \FR^q$. But if that's the case, it follows also 
that the value of $\phi^q$ on any function $f\in C^\infty(\FR^{n,q})$ is smooth in $y_0 \in \FR^q$ : Namely, expanding $f=f(t^1,\cdots,t^{n+k})$ via Hadamard's Lemma \ref{HadamardsLemma} and applying the algebra property of $\phi^q$ results into a (finite) product of the smooth $\{\phi^q(t^i)\}_{i=1,\cdots,n+k}$ in W, so that
$$
\phi^q (f) \, \in \, C^\infty(\FR^q)\otimes W
$$
for all $f\in C^\infty(\FR^{n,k})$.
 In other words,
$$
\Hom_{\SmoothManifolds^{\mathrm{cor}}}(\FR^q, \, T_\DD \FR^{n,q}) \, \cong \, \Hom_{\mathrm{CAlg}^\op}\big(C^\infty(\FR^{n,k}), \, \CO(\DD)\otimes C^\infty(\FR^q)\big)\, ,
$$
as expected.

Consider now the case of an general manifold with corners
$$
T_\DD M \, = \, \Hom_{\mathrm{CAlg}}\big(C^\infty(M),\, \CO(\DD)\big) \, ,
$$
so that an arbitrary map of sets 
\begin{align*}
\phi^q \, : \,\FR^q \longrightarrow T_\DD M 
\end{align*}
is smooth, by definition (Prop. \ref{WeilBundleOfManifoldWithCorners}), if and only if each 
$$
T_\DD \psi_i \circ \phi^q \, : \, \FR^q \longrightarrow T_\DD U_i \xlongrightarrow{\sim} T_\DD \FR^{n,q}
$$
is smooth, for some cover of charts $\{\psi_{i} : U_i \rightarrow \FR^{n,k}
\}_{i\in I}$ of $M$. By the previous local model case, each of the latter maps is smooth if and only if they define algebra maps over the chart-algebras taking values in $C^\infty(\FR^q)\otimes W$,
$$
T_\DD \psi_i \circ \phi^q \quad \in \quad \Hom_{\mathrm{CAlg}}\big(C^\infty(\FR^{n,k}),\, C^\infty(\FR^q)\otimes W \big) \, .
$$
By construction, these are compatible over overlaps in $|M|$, hence by the petit sheaf property of $C^\infty(M)$ over $|M|$, they glue to yield a unique morphism
$$
C^\infty(M) \longrightarrow C^\infty(\FR^q)\otimes W\, .
$$
Thus,
$$
\Hom_{\SmoothManifolds^{\mathrm{cor}}}(\FR^q, \, T_\DD M) \, \cong \, \Hom_{\mathrm{CAlg}^\op}\big(C^\infty(M), \, \CO(\DD)\otimes C^\infty(\FR^q)\big)\, ,
$$
as required.
\end{proof}

The general thickened $(\FR^q\times  \DD')$-plot case has been proven for Weil bundles of manifolds with boundary \cite[Thm. 3]{Kock81}\cite[Prop. 2.8]{Reyes07}. We expect the proof therein to hold with minimal modifications, given our Prop. \ref{ManifoldsWithCornerstoAlgebras} and Thm. \ref{ManifoldsWithCornersEmbedIntoCahiers}, but we do not expand on these details it is not necessary for the field-theoretic scope of this manuscript.

\subsection{Synthetic jet bundle and prolongation}
\label{App-SyntheticJetProlongation}
In this part of the Appendix, we explain how the jet bundle itself, along with its thickened smooth structure, may be identified as a natural functorial construction within $\ThickenedSmoothSets$. The same can be said for the jet prolongation operation of sections of jet bundles.
Towards this, we straightforwardly extend the construction of the infinitesimal neighborhoods (Defs. \ref{SyntheticInfinitesimalNeighborhood}, \ref{SyntheticInfinitesimalNeighborhoodOfSubmanifold}) to the following ``\,infinitesimal neighborhood along \textit{morphism}\,'' $\CG\rightarrow \CF$.

\begin{definition}[\bf Infinitesimal neighborhood along a morphism]\label{InfinitesimalNeighborhoodOfDiagonal}
 Let any map $\pi: \CG\rightarrow \CF$ be any morphism of thickened smooth sets. The ``\textit{infinitesimal neighborhood along} $\pi$\,'', $\frT_\CF \CG$, is defined as the pullback
\[
\xymatrix@C=1.8em@R=1.4em  { \frT_\CF \CG  \ar[rr] \ar[d] && {\CG} \ar[d]^{\eta_\CF \circ \pi} 
	 \\ 
  \CF \ar[rr]^{\eta_\CF}  &&  \frJ \CF \, .
}
\]
\end{definition}
Explicitly, this is the thickened smooth set with plots 
\begin{align}\label{plotsofFormalDiskBundleofBundle}
\frT_\CF \CG (\FR^k \times \DD) \cong
\Big\{ \big(\phi^k_\epsi, \psi^k_\epsi \big) \in \CF(\FR^k\times \DD)\times \CG(\FR^k\times \DD) \; \big\vert \; \phi^k_{\epsi=0}= \pi\big(\psi^k_{\epsi=0}\big)\in \CF(\FR^k) \Big\} \, ,
\end{align}
i.e., pairs of plots in $\CF$ and $\CG$, which agree in their finite extension upon mapping the latter in $\CF$, but are free to explore infinitesimal directions otherwise.
Denoting the left vertical map by
$$
\pi^{\frT} \,:\, \frT_\CF \CG\longrightarrow \CF\, ,
$$
we see that its fibers  in $\CF$ are precisely all plots in $\CG$, which agree upon projecting to finite directions in $\CF$.

\smallskip 
Moreover, consider $\ThickenedSmoothSets_{/\CF}$ as the slice category over a fixed thickened smooth set $\CF$, thought of as `bundles over $\CF$'. By the universal property of the pullback, it follows that  
\begin{align}\label{formalDiskbundleFunctor}
   \frT_\CF \;:\; \ThickenedSmoothSets_{/\CF} & \;\longrightarrow \; \ThickenedSmoothSets_{/\CF}\\
   (\pi:\CG \rightarrow \CF)&\;\longmapsto \; ( \pi^{\frT} : \frT_\CF \CG \rightarrow \CF ) \nn 
\end{align}
is a functor. Explicitly, if $h:\CG\rightarrow \CG'$ is a map of bundles over $\CF$, there is an induced map of corresponding morphism infinitesimal neighborhoods over $\CF$ acting (naturally) on plots as
\begin{align*}
\frT_{\CF} (h) \;:\; \frT_\CF \CG \big(\FR^k\times \DD\big) &\; \longrightarrow \; \frT_\CF \CG' \big(\FR^k\times \DD \big) \\
\big(\phi^k_\epsi, \psi^k_\epsi\big)&\; \longmapsto \; \big (\phi^k_\epsi, h(\psi^k_\epsi) \big) \, .
\end{align*}
\begin{example}[\bf Diagonal neighborhood]\label{Ex-DiagonalNeighborhood}
Consider the case where $\CG=\CF$ and the right vertical map is $\eta_\CF:\CF\rightarrow \frJ \CF$. Then the above construction defines the \textit{(synthetic) infinitesimal neighborhood of the diagonal} (see e.g. \cite[\S 4]{Kock16}), or simply \textit{diagonal neighborhood}, $\mathfrak{T} \CF$ 
as the pullback
\[
\xymatrix@C=1.8em@R=1.4em  { \frT \CF  \ar[rr] \ar[d] && \CF \ar[d]^{\eta_\CF} 
	 \\ 
  \CF \ar[rr]^{\eta_\CF}  &&  \frJ \CF \, .
}
\]
The name is justified since  $\FR^n\times \DD$-plots of $\frT \CF$ are given by
\begin{align*}
\frT \CF(\FR^k\times \DD) \cong
\Big\{ \big(\phi^k_{1,\epsi},\phi^k_{2,\epsi}\big)\in \CF(\FR^k\times \DD)\times\CF(\FR^k\times \DD) \; \big\vert \; \phi^k_{1,\epsi=0}=\phi^k_{2,\epsi=0} \, \in \CF(\FR^k) \Big\} \, ,
\end{align*}
that is, pairs of plots of $\CF$ that agree on their purely finite directions, but are free to differ in their infinitesimal extensions.
\end{example}

\begin{example}[\bf Infinitesimal neighborhoods around submanifolds]
Choosing $\CF= M$ a smooth manifold and the vertical map to be $\Sigma\hookrightarrow M$ recovers precisely the definitions of infinitesimal neighborhoods of points and submanifolds from Def. \ref{SyntheticInfinitesimalNeighborhood} and Def. \ref{SyntheticInfinitesimalNeighborhoodOfSubmanifold}, respectively.
\end{example}
Similarly, it follows that the set of $\infty$-jets of sections of $\pi: \CG\rightarrow \CF$ of Def. \ref{syntheticJets} may be equivalently written as
\begin{align*}
    J^\infty_\CF\CG(*) = \underset{p\in \CF(*)}{\bigcup}\!\mathrm{Hom}_{{\ThickenedSmoothSets}_{/\CF}}/\big(\frT_\CF(*)\, ,\, \CG \big) \, .
\end{align*}
The above formula is suggestive of a way to define the $\FR^k\times \DD$-plots of a would-be thickened smooth set $J^\infty_\CF \CG$. Indeed, for any plot $\phi^k_{\epsi}\in \CF(\FR^k\times \DD)$,i.e., equivalently by Yoneda lemma a morphism $\phi^k_{\epsi {*}}\FR^k\times \DD\rightarrow \CF$ we may consider
\begin{align}\label{formalDiskofplot}
\DD_{\phi^k_\epsi,\CF} \;:&=\;  
\frT_\CF\big(\phi^k_{\epsi{*}}: y(\FR^k\times \DD) \rightarrow \CF \big) \, ,
\end{align}
the \textit{ infinitesimal neighborhood around the $\FR^n\times \DD$-plot $\phi^k_\epsi$}. Sections over all of these are hence deserving to be considered as $\FR^k\times \DD$-plots of the corresponding the $\infty$-jet bundle.

\begin{definition}[\bf Synthetic jet bundle]\label{syntheticJetbundle}
Let $\pi: \CG\rightarrow \CF$ be a map of thickened smooth sets. Define the \textit{synthetic $\infty$-jet bundle} $J^\infty_\CF \CG$ as the assignment  
\begin{align}\label{PlotsofSyntheticJetbundle}
    J^\infty_\CF\CG(\FR^k\times \DD) \, := \underset{\phi^k_\epsi \in \CF(\FR^k\times \DD )}{\bigcup}\!\!\mathrm{Hom}_{\mathrm{\ThickenedSmoothSets}_{/\CF}}\big(\DD_{\phi^k_\epsi,\CF}\, ,\, \CG \big) \, ,
\end{align}
for any probe $\FR^k\times \DD \in \mathrm{ThCartSp}$.
\end{definition}
The definition is in line with intuition. The points of $J^\infty_\CF \CG$ are given by sections over infinitesimal neighborhoods of points in the base, while similarly $\FR^k\times \DD$-plots of $J^\infty_\CF \CG$ are given by sections over infinitesimal neighborhoods around $\FR^k\times \DD$-plots in the base. For this to define a thickened smooth set, it remains to extend the assignment functorially to maps between probe spaces. This is canonical.

\begin{lemma}[\bf Synthetic jet bundle is a thickened smooth set]
\label{SyntheticJetbundleisThickenedSmoothSet}
The assignment of the synthetic $\infty$-jet bundle $J^\infty_\CF\CG$ defines canonically a functor $\mathrm{ThCartSp}^{op}\rightarrow \mathrm{Set}$, which furthermore satisfies the sheaf condition. 
\begin{proof}
 Let $f: \FR^{k'}\times \DD'\rightarrow \FR^k\times \DD$ be a map of thickened Cartesian spaces, denote by $f_{*}: y(\FR^{k'}\times \DD')\rightarrow y(\FR^k\times \DD)$ its image under the Yoneda embedding and by $f^{*} :\CF(\FR^k\times \DD) \rightarrow \CF(\FR^{k'}\times \DD')$ the corresponding map of plots of any formal smooth set $\CF$. There is a map
 \begin{align*}
   \mathrm{Hom}\big(y(\FR^k\times \DD), \CF) \big) &\; \longrightarrow \; \mathrm{Hom}\big(y(\FR^{k'}\times \DD'), \CF \big) \\
   \phi^k_{\epsi{*}}&\; \longmapsto \; \phi^k_{\epsi{*}}\circ f_{*} = \big(f^*\phi^k_\epsi \big)_{*}
 \end{align*}
 for any $\phi^k_{\epsi{*}}$ corresponding to a plot $\phi^k_{\epsi}\in \CF(\FR^k\times \DD)$, by the Yoneda Lemma. 
 Furthermore, for each pair of a plot $\phi^k_{\epsi}$ and a map of Cartesian spaces $f$, there exists canonical a smooth map
 \begin{align*}
f_{!}:\DD_{f^*\phi^k_\epsi,\CF}&\; \longrightarrow \; \DD_{\phi^k_\epsi,\CF} \\
\big(\phi^m_{\epsi''}, \psi^m_{\epsi''}\big) &\; \longmapsto \; \big(\phi^m_{\epsi''}, f_{*}( \psi^{m}_{\epsi''})\big)
 \end{align*}
 for each $\FR^m\times \DD''$-plot of $\DD_{f^*\phi^k_{\epsi},\CF}$, identified with an element of $\CF(\FR^m\times \DD'')\times \mathrm{Hom}(\FR^m\times \DD'', \FR^{k'}\times \DD')$ by \eqref{plotsofFormalDiskBundleofBundle}. It follows that the image is indeed a plot of $D_{\phi^k_\epsi,\CF}$, since by definition $\phi^m_{\epsi''=0}=(\phi^k_\epsi\circ f) \circ \psi^{m}_{\epsi''=0}$. Abstractly, the map $f_!$ is uniquely determined by the universal property of the pullback $\DD_{\phi^k_\epsi,\CF}$, since the two pullbacks paste as 
 \[
\xymatrix@C=1.8em@R=1.4em  { \DD_{f^* \phi^k_\epsilon,\CF}  \ar[rr] \ar[d]_{f_!} && y(\FR^{k'}\times \DD') \ar[d]^{f_*} 
	 \\ 
  \DD_{\phi^k_\epsi,\CF} \ar[rr] \ar[d] &&y(\FR^k\times \DD) \ar[d]^{\eta_\CF\circ \phi^k_{\epsi{*}}}\\
  \CF \ar[rr]^{\eta_\CF}  &&  \frJ \CF  \, .
}
\]
With the map $f_!$ at hand, we may precompose any section $\sigma^k_{\epsi}$ over $\DD_{\phi^k_{\epsi},M}$ as
\[ 
\xymatrix@R=1em@C=3em  { &&&  \CG\ar[d]^{\pi}
	\\ 
	\DD_{f^*\phi^k_\epsi,\CF}\ar[r]^{f_!} &\DD_{\phi^k_{\epsi},\CF} \ar[rru]^{\sigma^k_\epsi} \ar@{^{(}->}[rr] && \CF
}   
\]
yielding a section $\sigma^{k}_\epsi \circ f_!$ over $\DD_{f^*\phi^k_\epsi,\CF}$. This construction works for every $\phi^k_\epsi  \in J^\infty_\CF \CG(\FR^k\times\DD) $, and so yields a map
\begin{align}\label{pullbackofJetBundleplots}
(-)\circ f_! \;\; :\;\;  J^\infty_\CF \CG\big(\FR^k\times \DD\big)&
\; \longrightarrow \; J^\infty_\CF(\CG) \big(\FR^{k'}\times \DD'\big)\, .
\end{align}
Finally, by construction $(g\circ f)_! = g_!\circ f_!$ and so $J^\infty_\CF \CG$ defines a functor $\mathrm{ThCartSp}^{op}\rightarrow \mathrm{Set}$. Lastly, it is not hard to check the sheaf condition explicitly.
\end{proof}
\end{lemma}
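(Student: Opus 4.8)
The statement claims that the assignment $\FR^k \times \DD \mapsto J^\infty_\CF \CG(\FR^k \times \DD)$ of Def. \ref{syntheticJetbundle} extends to a contravariant functor on $\ThickenedCartesianSpaces$ and moreover satisfies the sheaf condition with respect to the good open coverage \eqref{GoodOpenCoversonThCartSp}. The plan has two parts. First, for functoriality: given a map $f : \FR^{k'}\times \DD' \to \FR^k\times \DD$ of thickened Cartesian spaces, I need a map $J^\infty_\CF\CG(\FR^k\times \DD)\to J^\infty_\CF\CG(\FR^{k'}\times \DD')$. The key construction is a canonical map $f_! : \DD_{f^*\phi^k_\epsi,\CF}\to \DD_{\phi^k_\epsi,\CF}$ going the \emph{opposite} way to $f$, obtained purely from the universal property of the pullback defining the infinitesimal neighborhood \eqref{formalDiskofplot}; pasting the two pullback squares (the neighborhood square over $\CF \xrightarrow{\eta_\CF}\frJ\CF$ and the Yoneda image of $f$) gives $f_!$ automatically, and explicitly it acts plot-wise by $(\phi^m_{\epsi''},\psi^m_{\epsi''})\mapsto(\phi^m_{\epsi''}, f_*(\psi^m_{\epsi''}))$. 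Then precomposition of a section $\sigma^k_\epsi$ over $\DD_{\phi^k_\epsi,\CF}$ with $f_!$ produces a section over $\DD_{f^*\phi^k_\epsi,\CF}$, i.e.\ an element of $J^\infty_\CF\CG(\FR^{k'}\times\DD')$ living over the base plot $f^*\phi^k_\epsi$. Functoriality $(g\circ f)_! = g_! \circ f_!$ follows from uniqueness in the universal property (or directly from the explicit plot formula), giving $(-)\circ(g\circ f)_! = \big((-)\circ g_!\big)\circ f_!$.

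\textbf{Order of steps.} (1) Fix $f$ and a base plot $\phi^k_\epsi \in \CF(\FR^k\times\DD)$; Yoneda-translate $f$ to $f_* : y(\FR^{k'}\times\DD')\to y(\FR^k\times\DD)$ and note $\phi^k_{\epsi*}\circ f_* = (f^*\phi^k_\epsi)_*$. (2) Build the pasting diagram of pullbacks and extract $f_! : \DD_{f^*\phi^k_\epsi,\CF}\to\DD_{\phi^k_\epsi,\CF}$ as the induced map; check it is a map \emph{over} $\CF$ so that it descends to a map of bundles, and record the explicit plot formula to verify well-definedness (the finite part of the plot is unchanged, so the constancy condition $\phi^m_{\epsi''=0} = \pi(\psi^m_{\epsi''=0})$ transported correctly). (3) Define $(-)\circ f_! : J^\infty_\CF\CG(\FR^k\times\DD)\to J^\infty_\CF\CG(\FR^{k'}\times\DD')$ by sending a section $\sigma$ over $\DD_{\phi^k_\epsi,\CF}$ to $\sigma\circ f_!$ over $\DD_{f^*\phi^k_\epsi,\CF}$, doing this uniformly in the base plot $\phi^k_\epsi$. (4) Verify contravariant functoriality via $(g\circ f)_! = g_!\circ f_!$. (5) Verify the sheaf condition.

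\textbf{Sheaf condition.} For a good open cover $\{U_i\times\DD \hookrightarrow \FR^k\times\DD\}_{i\in I}$, I would use the petit-sheaf characterization (Rem. \ref{PetitSheafCharacterization}): it suffices to check that $J^\infty_\CF\CG(\FR^k\times\DD)$, as a function of the $\FR^k$-leg, is a topological sheaf on $\FR^k$ for each fixed infinitesimal point $\DD$. Given compatible elements $s_i \in J^\infty_\CF\CG(U_i\times\DD)$ — i.e.\ sections over $\DD_{\phi^k_{i,\epsi},\CF}$ with base plots $\phi^k_{i,\epsi}$ agreeing on overlaps — the base plots glue to $\phi^k_\epsi\in\CF(U\times\DD)$ since $\CF$ is a sheaf, and the infinitesimal neighborhoods are built by a pullback that commutes with the restriction maps (this is exactly the naturality of $\frT_\CF$ in its argument, together with the fact that $\DD_{\phi^k_\epsi,\CF}$ restricts to $\DD_{\phi^k_\epsi|_{U_i},\CF}$ via the $f_!$ maps for the inclusions $U_i\hookrightarrow U$). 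The sections $s_i$ then glue because $\CG$ is a sheaf and $\DD_{\phi^k_\epsi,\CF}$ is (plot-wise) covered by the $\DD_{\phi^k_{i,\epsi},\CF}$'s; uniqueness of the glued section likewise follows from the sheaf property of $\CG$.

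\textbf{Main obstacle.} The conceptually delicate point is step (2)–(5) glued together: making precise that ``the infinitesimal neighborhood construction commutes with restriction/pullback along probe maps'' so that the covering family $\{\DD_{\phi^k_{i,\epsi},\CF}\hookrightarrow\DD_{\phi^k_\epsi,\CF}\}$ is genuinely jointly covering (plot-wise) and the $f_!$ maps are compatible on overlaps. This is essentially a diagram-chase through iterated pullbacks, and the care needed is in checking that the finite/infinitesimal decomposition is respected uniformly — i.e.\ that a section over $\DD_{\phi^k_\epsi,\CF}$ is determined by its restrictions to the $\DD_{\phi^k_{i,\epsi},\CF}$, which reduces to the petit sheaf property of $\CG$ evaluated on the infinitesimally-thickened probes appearing in \eqref{plotsofFormalDiskBundleofBundle}. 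I expect no serious difficulty beyond bookkeeping, since every map involved is obtained from a universal property; the proof is ``canonical by construction,'' which is why the statement in the excerpt dispatches the sheaf condition in one line.
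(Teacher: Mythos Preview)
Your proposal is correct and follows essentially the same approach as the paper: you construct $f_!$ via the pasted pullback squares (with the same explicit plot formula), pull back sections by precomposition with $f_!$, and verify functoriality via $(g\circ f)_! = g_!\circ f_!$. You actually give more detail on the sheaf condition than the paper does --- the paper dispatches it with ``it is not hard to check the sheaf condition explicitly,'' whereas you sketch the petit-sheaf argument via Rem.~\ref{PetitSheafCharacterization}.
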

The $\infty$-jet bundle comes equipped with a canonical (smooth) projection
\begin{align}
\label{SyntheticinftyJetprojection}
\pi^\infty \quad :\quad J^\infty_\CF \CG  \qquad  &\longrightarrow \;\; \CF 
\\[-2pt]
 \Big(\DD_{\phi^k_\epsi,\CF} \xrightarrow{\sigma^k_\epsi} \CG \Big) &\longmapsto \;\; \phi^k_\epsi \nn
\end{align}
sending any section $\sigma^k_\epsi$ over the infinitesimal neighborhood around a plot $\phi^k_\epsi$ to the plot itself. It follows that for any fixed thickened $\CF$ smooth set, we get a functor of bundles over $\CF$
\begin{align}\label{inftyJetFunctor}
   J^\infty_\CF\;:\; \ThickenedSmoothSets_{/\CF} &\; \longrightarrow \; \ThickenedSmoothSets_{/\CF}\\
   \big(\pi:\CG \rightarrow \CF\big)& \;\longmapsto \;
   \big(\pi^\infty: J^\infty_\CF \CG \rightarrow \CF \big)\, , \nn 
\end{align}
as for any bundle map $h:\CG\rightarrow \CG'$ over $\CF$, there is an induced map of $\infty$-jet bundles over $\CF$ 
\begin{align}\label{inftyJetfunctoronMaps}
J^\infty_\CF(h)\;:\;J^\infty_\CF \CG &\; \longrightarrow \; J^\infty_\CF \CG'\, \\
\sigma^k_\epsi &\; \longmapsto \; h\circ \sigma^k_\epsi\,  , \nn
\end{align}
acting by  post-composition on the corresponding sections  
$\sigma^k_\epsi$ over infinitesimal nbds of $\FR^k\times \DD$-plots. Naturality under pullbacks of plots $\sigma^k_{\epsi}\mapsto \sigma^k_\epsi \circ f_!$ follows as in Lemma \ref{SyntheticJetbundleisThickenedSmoothSet}, by associativity of composition,
$$
J^\infty_\CF h\big(\sigma_\epsi^k \circ f_!\big)= 
h\circ \big(\sigma^k_\epsi \circ f_!\big) = 
\big(h\circ \sigma^k_\epsi \big)\circ f_!= 
J^\infty_\CF \big(h (\sigma^k_\epsi)\big)\circ f_!\;.
$$ 

\begin{remark}[\bf Applicability to non-concrete smooth sets]
The construction of the synthetic $\infty$-jet bundle is quite general. Beyond bundles of smooth manifolds, it applies to maps between arbitrary thickened smooth sets, even if they do not actually have any underlying points (cf. Def. \ref{ModuliOfDifferentialForms}), i.e. is not ``concrete''. Even in such a situation, the infinite jet bundle $J^\infty_\CF \CG$ exists, and will be non-concrete itself, but  potentially with non-trivial $\FR^k$-plots and their thickened versions for $k\neq 0$.
\end{remark}

Nevertheless, this more general construction still satisfies many properties of the traditional infinite jet bundle. For example, we have the following property.

\begin{example}[\bf Jet bundle of identity bundle]
\label{JetBundleofIdentitybundle}
Let $\CF$ be a thickened smooth set and consider $\id_\CF:\CF\rightarrow \CF$ as a bundle over itself. There is a canonical isomorphism
$$
J^\infty_\CF \CF  \;\cong\; \CF\, .
$$
Indeed, to see this
let $\phi^k_\epsi\in \CF(\FR^k\times \DD)$ be a plot, a section $\sigma^k_\epsi$ of $\id:\CF\rightarrow \CF$ over the infinitesimal disk $\DD_{\phi^k_\epsi,\CF}$ around the plot is the same as a lift
\[ 
\xymatrix@R=1.6em@C=3em   { &&  \CF\ar[d]^{\id}
	\\ 
	\DD_{\phi^k_\epsi,\CF} \ar[rru]^{\sigma^k_\epsi} 
 \ar@{^{(}->}[rr]^>>>>>>>>{\iota_{\phi^k_\epsi}\;\;} && \CF\, .
}   
\]
Commutativity of the diagram gives $ \id\circ \sigma^k_\epsi=\sigma^k_\epsi=\iota_{\phi^k_\epsi}$, and so fixes the $\FR^k\times \DD$-plots of $J^\infty_\CF \CF$ to be the set of canonical inclusions of infinitesimal disks around $\FR^k\times \DD$ plots of $\CF$,
\begin{align*}
J^\infty_\CF \CF (\FR^k\times \DD) = \underset{\phi^k_\epsi \in \CF(\FR^k\times \DD )}{\bigcup}
\!\!\!\Big\{\DD_{\phi^k_\epsi,\CF}\xhookrightarrow{\iota_{\phi^k_\epsi}}\CF \Big\} \, .
\end{align*}
Define the canonical map on plots to be the obvious one
\begin{align*}
g \;\;:\quad J^\infty_\CF \CF (\FR^k\times \DD) & \; \longrightarrow \; \CF(\FR^k\times \DD) 
\\[-2pt]
\Big(\DD_{\phi^k_\epsi}\xhookrightarrow{\; \iota_{\phi^k_\epsi}\;}\CF \Big)&
\; \longmapsto \; \phi^k_\epsi \, ,
\end{align*}
which is bijective since, by construction of $\DD_{\phi^k_\epsi}$ in \eqref{formalDiskofplot}, there exists a unique such canonical inclusion $\iota_{\phi^k_\epsi}$ for each plot $\phi^k_\epsi$. What remains to be shown is that the assignment is functional under maps of probes, but this follows by similar arguments. 

\end{example}

\begin{example}[\bf Synthetic jet bundle of trivial bundle]
\label{SyntheticJetoftrivialbundle}
Consider the `trivial bundle' $\pr_2: \CH\times \CF\rightarrow \CF$, then an $\FR^k \times \DD$-plot of $J^\infty_\CF(\CH\times \CF)$ is a map $\big(\sigma^k_{\epsi,1},\sigma^k_{\epsi,2}\big) :\DD_{\phi^k_\epsi}\rightarrow \CH\times \CF$ that satisfies the section condition. This fixes the form of the map to be $\big(\sigma^k_{\epsi,1},\,\iota_{\phi^k_\epsi}\big):\DD_{\phi^k_\epsi}\rightarrow \CH \times \CF$. Equivalently, we have
$$
J^\infty_\CF(\CH\times \CF) (\FR^k\times \DD) \cong \underset{\phi^k_\epsi\in \CF(\FR^k\times\DD)}{\bigcup}
\!\!\!\big\{\DD_{\phi^k_\epsi} \rightarrow \CH\big\} \, .
$$
Following the manifold intuition, this is to be thought of as the space of `jets of maps' $\CF\rightarrow \CH$. Notice, there might be many maps $\DD_{\phi^k_\epsi}\rightarrow \CH$ for each plot $\phi^k_\epsi$ of $\CF$ and so  $J^\infty_\CF(\CH\times \CF)$ is not isomorphic to $\CH \times \CF$, unless $\CH=*$ by the previous example. Nevertheless, there is still a canonical `inclusion' map 
$$
\CH\times \CF \, \longhookrightarrow \, J^\infty_\CF(\CH\times \CF)
$$
that sends a plot $\big(\psi^k_\epsi, \phi^k_\epsi\big)$ to the section
$$
\big(\psi^k_{\epsi*}\circ \pi_2 \, ,\, \iota_{\phi^k_\epsi}\big) \;:\;
\DD_{\phi^k_\epsi} \, \longrightarrow \, \CH\times \CF
$$
over $\DD_{\phi^k_{\epsi}}\xrightarrow{\iota_{\phi^k_\epsi}} \CF$, where $\pi_2:\DD_{\phi^k_\epsi}\rightarrow y(\FR^k\times \DD)$ denotes the canonical map of the pullback definition \eqref{formalDiskofplot}\, .
Note that for manifolds $U\times M\rightarrow M$, the map corresponds to $(u,p)\mapsto j^\infty_u(\mathrm{const}_p)$. 
\end{example}

By construction, the infinitesimal neighborhood bundle functor $\frT_\CF$ and the $\infty$-jet bundle $J^\infty_\CF$ functor 
are closely related, in the precise sense of forming an adjunction. An abstract proof of this relying on differential cohesion may be found in \cite{KS17}. Here, we provide an equivalent proof that relies on elementary but more explicit facts.

\begin{proposition}[\bf Infinitesimal neighborhood bundle / Jet bundle adjunction]
\label{FormaldiskbundleJetbundleAdjunction}
For each $\CF \in \ThickenedSmoothSets$,
the infinitesimal neighborhood bundle functor and the $\infty$-jet bundle functor form an adjunction 
$$
\frT_\CF \;\; \dashv \;\; J^\infty_\CF \quad : \quad \ThickenedSmoothSets \longrightarrow \ThickenedSmoothSets\,.
$$
\begin{proof}
The claim is that there exists a bijection
\begin{align*}
\mathrm{Hom}_{{\ThickenedSmoothSets}_{/\CF}}\big(\frT_\CF \CH, \, \CG\big)
\cong 
\mathrm{Hom}_{{\ThickenedSmoothSets}_{/\CF}}\big(\CH, \, J^\infty_\CF \CG\big)
\end{align*}
which is natural in both $\CH$ and $\CG$, with the fixed maps to $\CF$ left implicit. Considering first the case where
$\CG=\FR^k\times \DD$ is a representable object equipped with a plot $\phi^k_{\epsi}: \FR^n\times \DD\rightarrow \CF$, 
then 
\begin{align*}
\underset{\phi^k_\epsi \in \CF(\FR^k\times \DD )}{\bigcup} \!\! \mathrm{Hom}_{\mathrm{\ThickenedSmoothSets}_{/\CF}}\Big(\frT_\CF\big(\FR^k\times \DD\big)\, , \, \CG\Big)& \; =: \; J^\infty_\CF \CG(\FR^k\times \DD) 
\\[-12pt] 
 & 
 \; \cong \; \underset{\phi^k_\epsi \in \CF(\FR^k\times \DD )}{\bigcup} \!\! \mathrm{Hom}_{{\ThickenedSmoothSets}_{/\CF}}\Big( \FR^k\times \DD
 ,\, J^\infty_\CF \CG \Big) ,
\end{align*}
where the first line is by definition, while the second is by Yoneda lemma. In particular, this implies that 
\begin{align*}
\mathrm{Hom}_{\ThickenedSmoothSets_{/\CF}}\Big(\frT_\CF\big(\FR^k\times \DD\big)\, , \, \CG\Big)
\; \cong \; 
\mathrm{Hom}_{{\ThickenedSmoothSets}_{/\CF}}\Big( \FR^k\times \DD
 ,\, J^\infty_\CF \CG \Big)
\end{align*}
naturally in $\CG$ and in the representable space $\FR^k\times \DD$, and so the adjunction holds when the source is an arbitrary representable. 

But any source sheaf $\CH$ may be written as the colimit of representables, $\CH\cong \underset{i\in I}{\mathrm{colim}}\, (\FR^{k_i}\times \DD_i)$ and since, by stability of colimits in sheaf topoi, $\frT_\CF$ preserves colimits, it follows that
\begin{align*}
\mathrm{Hom}_{\mathrm{\ThickenedSmoothSets}_{/\CF}}(\frT_\CF \CH , \CG) & \; \cong  \;
\mathrm{Hom}_{\mathrm{\ThickenedSmoothSets}_{/\CF}}\bigg(\underset{i\in I}{\mathrm{colim}}
\, \frT_\CF (\FR^{k_i}\times \DD_i) \, , \, \CG \bigg)\\
&\cong \; \underset{i\in I}{\lim} \,\mathrm{Hom}_{{\ThickenedSmoothSets}_{/\CF}}\big(\frT_\CF (\FR^{k_i}\times \DD_i)\, , \, \CG\big) \\
&\cong \; \underset{i\in I}{\lim}\, \mathrm{Hom}_{\ThickenedSmoothSets_{/\CF}}\big(\FR^{k_i}\times \DD_i\,, \, J^\infty_\CF \CG \big) \\
&\cong \; \mathrm{Hom}_{\mathrm{\ThickenedSmoothSets}_{/\CF}} \big( \CH, J^\infty_\CF \CG)\, .
\end{align*}

\vspace{-4mm} 
\end{proof}
\end{proposition}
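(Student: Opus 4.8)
The final statement is Proposition \ref{FormaldiskbundleJetbundleAdjunction}, asserting the adjunction $\frT_\CF \dashv J^\infty_\CF$ on $\ThickenedSmoothSets$ (more precisely, on slice categories over $\CF$, though the statement is phrased for $\ThickenedSmoothSets \to \ThickenedSmoothSets$ — I would be careful to treat both functors as endofunctors of $\ThickenedSmoothSets_{/\CF}$ throughout, as in \eqref{formalDiskbundleFunctor} and \eqref{inftyJetFunctor}).

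\textbf{The plan.} The proof is a standard ``check the hom-isomorphism on representables, then extend by colimit'' argument, exploiting that both sites involved are small and that everything in sight is a (pre)sheaf category. First I would observe that the definition of $J^\infty_\CF \CG$ (Def. \ref{syntheticJetbundle}, Eq. \eqref{PlotsofSyntheticJetbundle}) literally reads off the desired bijection when the source is a representable $\FR^k\times\DD$ equipped with a chosen plot $\phi^k_\epsi : \FR^k\times\DD \to \CF$: by construction,
$$
J^\infty_\CF \CG(\FR^k\times\DD) \;=\; \bigcup_{\phi^k_\epsi}\mathrm{Hom}_{\ThickenedSmoothSets_{/\CF}}\big(\DD_{\phi^k_\epsi,\CF},\, \CG\big)\;=\;\bigcup_{\phi^k_\epsi}\mathrm{Hom}_{\ThickenedSmoothSets_{/\CF}}\big(\frT_\CF(\FR^k\times\DD),\, \CG\big),
$$
and by the Yoneda lemma the left-hand side is also $\bigcup_{\phi^k_\epsi}\mathrm{Hom}_{\ThickenedSmoothSets_{/\CF}}(\FR^k\times\DD,\, J^\infty_\CF \CG)$, where $\DD_{\phi^k_\epsi,\CF}=\frT_\CF(\phi^k_{\epsi*})$ via \eqref{formalDiskofplot}. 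Comparing the two fibered unions over the choice of plot $\phi^k_\epsi$ gives the bijection $\mathrm{Hom}_{/\CF}(\frT_\CF(\FR^k\times\DD),\CG)\cong\mathrm{Hom}_{/\CF}(\FR^k\times\DD, J^\infty_\CF\CG)$, naturally in $\CG$ (postcomposition is respected by \eqref{inftyJetfunctoronMaps}) and naturally in the representable probe (this is exactly the functoriality under pullbacks of plots \eqref{pullbackofJetBundleplots}, i.e.\ the maps $f_!$ from Lem. \ref{SyntheticJetbundleisThickenedSmoothSet}).

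\textbf{Extending to general source.} The second step is to bootstrap from representables to arbitrary $\CH\in\ThickenedSmoothSets_{/\CF}$ by writing $\CH \cong \mathrm{colim}_{i\in I}\,(\FR^{k_i}\times\DD_i)$ as a colimit of representables (the usual density/co-Yoneda fact valid in any presheaf topos, and in particular here). Because $\frT_\CF$ is defined via a pullback along $\eta_\CF$ (Def. \ref{InfinitesimalNeighborhoodOfDiagonal}) and pullback-along-a-fixed-map preserves colimits in a topos (colimits are universal/stable), $\frT_\CF$ preserves colimits. Combining this with the representable case and the fact that $\mathrm{Hom}(-,\CG)$ turns colimits into limits yields the chain of natural bijections displayed at the end of the excerpt:
$$
\mathrm{Hom}_{/\CF}(\frT_\CF\CH,\CG)\cong\lim_i \mathrm{Hom}_{/\CF}(\frT_\CF(\FR^{k_i}\times\DD_i),\CG)\cong\lim_i \mathrm{Hom}_{/\CF}(\FR^{k_i}\times\DD_i, J^\infty_\CF\CG)\cong\mathrm{Hom}_{/\CF}(\CH, J^\infty_\CF\CG).
$$
Naturality in $\CG$ is inherited stepwise; naturality in $\CH$ follows because all the isomorphisms are built from Yoneda and from the colimit presentation functorially.

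\textbf{Main obstacle.} The genuinely delicate point — and the one I would spend most care on — is the claim that $\frT_\CF$ preserves all (small) colimits. The subtlety is twofold: (a) $\frT_\CF \CH$ is a pullback of $\eta_\CF\circ\pi$ along $\eta_\CF$, so preservation of colimits in the $\CH$-variable hinges on universality of colimits in $\ThickenedSmoothSets$ (true, since it is a Grothendieck topos) \emph{and} on the colimit being computed in the slice $\ThickenedSmoothSets_{/\CF}$, which is again a topos with colimits created by the forgetful functor to $\ThickenedSmoothSets$; (b) one must be sure that the colimit presentation $\CH\cong\mathrm{colim}_i(\FR^{k_i}\times\DD_i)$ can be taken \emph{over} $\CF$, i.e.\ as a colimit in the slice, which it can, with the structure maps to $\CF$ the evident ones. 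Once these two points are pinned down cleanly, the rest is formal. An alternative, perhaps cleaner, route — which I would mention as a remark — is to invoke the differential-cohesion argument of \cite{KS17}, where $\frJ$ arises as a comonad with left adjoint (reduction $\mathfrak{R}$) and $J^\infty_\CF$ is then the fiberwise right adjoint to the ``formal disk bundle'' comonad; but the elementary colimit argument above is self-contained given what the excerpt has already established.
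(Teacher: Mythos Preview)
Your proposal is correct and follows essentially the same approach as the paper: establish the adjunction bijection on representable sources via the very definition of $J^\infty_\CF\CG$ together with Yoneda, then extend to arbitrary $\CH$ by writing it as a colimit of representables and using that $\frT_\CF$ preserves colimits by universality of colimits in the topos. Your discussion of the ``main obstacle'' (colimit preservation of $\frT_\CF$ in the slice) is more careful than the paper's one-line appeal to ``stability of colimits in sheaf topoi,'' and your mention of the alternative differential-cohesion route from \cite{KS17} matches the paper's own remark preceding the proposition.
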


The following intuitively expected result is originally due to \cite{KS17}, which shows that the traditional notion of an infinite jet bundle together with its (thickened) smooth structure coincides with the internal synthetic definition.

\begin{theorem}[\bf Synthetic jet bundle of fiber bundle]\label{SyntheticJetBundleOfFiberBundle} For any fiber bundle $F\rightarrow M$ of finite-dimensional manifolds, the synthetic jet bundle construction (Def. \ref{syntheticJetbundle}) recovers precisely  the corresponding infinite jet bundle limit $J^\infty_M F$ formed in thickened smooth sets (Def. \ref{InfiniteJetBundleFormalSmoothLimit}), i.e., 
$$
J^{\infty}_{y(M)} y(F) \, \cong \,  J^\infty_M F \, ,
$$
where $J^\infty_M F = \lim_{\ThickenedSmoothSets} y\big(J^k_M F\big)$.
\begin{proof}
(Sketch) The points of both smooth sets coincide, since by construction
 $$
 J^\infty_{y(M)}y(F)(*) \; := \; \underset{p\in M}{\bigcup} \mathrm{Hom}_{\ThickenedSmoothSets/y(M)}\big(\DD_{p,M}\, , \, y(F) \big) 
 \; \cong \; \underset{p\in M}{\bigcup}J^\infty_p F=: y(J^\infty_MF)(*) \, ,
 $$
due to Lem. \ref{JetsofSections=InfinitesimalJets} and Lem. \ref{SyntheticInfinitesimalNeighborhoodOfManifold}. The rest of the proof can be found in \cite{KS17}, following similarly as an application of the above adjunction and the local triviality of finite-dimensional manifolds. 
\end{proof}
\end{theorem}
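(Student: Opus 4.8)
The plan is to establish the isomorphism $J^\infty_{y(M)} y(F) \cong J^\infty_M F$ in thickened smooth sets by exhibiting a natural bijection of $(\FR^k \times \DD)$-plots and checking functoriality. On points this has already been observed: by Lem.~\ref{JetsofSections=InfinitesimalJets} and Lem.~\ref{SyntheticInfinitesimalNeighborhoodOfManifold}, sections over the synthetic infinitesimal neighborhood $\DD_{p,M} \hookrightarrow y(M)$ into $y(F)$ are in bijection with classical $\infty$-jets $J^\infty_p F$, and $\lim_{\ThickenedSmoothSets} y(J^k_M F)$ has exactly $\bigcup_{p \in M} J^\infty_p F$ as its set of $\ast$-plots (by Def.~\ref{InfiniteJetBundleFormalSmoothLimit} and the fact limits are computed probe-wise in a sheaf topos). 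The task is therefore to promote this point-set bijection to an isomorphism of full thickened smooth sets, i.e.\ to handle arbitrary probes $\FR^k \times \DD$.

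First I would reduce to the local case. Since $F \to M$ is a fiber bundle of finite-dimensional manifolds, choose a trivializing open cover $\{U_i \hookrightarrow M\}$ with $F|_{U_i} \cong U_i \times \FR^m$, extended trivially along infinitesimal directions to a good open cover of $y(M)$ in the sense of \eqref{GoodOpenCoversonThCartSp}. Both $J^\infty_{y(M)}y(F)$ and $J^\infty_M F$ are (by construction, resp.\ by Lem.~\ref{SyntheticJetbundleisThickenedSmoothSet} and Def.~\ref{InfiniteJetBundleFormalSmoothLimit}) thickened smooth sets, hence sheaves, so it suffices to produce a natural isomorphism of their restrictions to each $U_i \times \DD$ and check compatibility on overlaps. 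Over a trivializing chart, the synthetic side reads, via Ex.~\ref{SyntheticJetoftrivialbundle} and the representability of infinitesimal neighborhoods of plots along the lines of Lem.~\ref{SyntheticInfinitesimalNeighborhoodOfManifold}, as sections $\DD_{\phi^k_\epsi, U_i} \to U_i \times \FR^m$ over $U_i$, which by a relative version of Lem.~\ref{JetsofSections=InfinitesimalJets} (carrying the probe-dependence $c^k, \epsi$ through the Hadamard expansion as in the proof of Lem.~\ref{SyntheticInfinitesimalNeighborhoodOfManifold}) are dually determined by assignments $u^a_I \mapsto \partial_I \phi^a(x^\mu, c^k, \epsi)$. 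But this is precisely the local coordinate description \eqref{InfinityJetOfSectionPlotLocally} of plots of $J^\infty_M F = \lim y(J^n_M F)$. Thus locally the two presheaves have manifestly the same plots, and the bijection is visibly natural under pullbacks of probes.

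The conceptually cleanest route, which I would also indicate, is to deploy the adjunction $\frT_\CF \dashv J^\infty_\CF$ of Prop.~\ref{FormaldiskbundleJetbundleAdjunction} together with the limit-characterization of $J^\infty_M F$: for any test probe $\FR^k \times \DD$, one computes
\begin{align*}
\mathrm{Hom}_{\ThickenedSmoothSets/y(M)}\big(\FR^k \times \DD,\, J^\infty_{y(M)} y(F)\big)
&\cong \mathrm{Hom}_{\ThickenedSmoothSets/y(M)}\big(\frT_{y(M)}(\FR^k \times \DD),\, y(F)\big),
\end{align*}
and then one shows that mapping out of $\frT_{y(M)}(\FR^k \times \DD)$ into a locally trivial $y(F)$ is computed chart-wise and agrees with the compatible family of maps into the $y(J^n_M F)$, using the local triviality of $F$ and the sheaf condition to glue. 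This is essentially the argument of \cite{KS17}, now available in our $\FR$-algebraic incarnation of the Cahiers topos thanks to Cor.~\ref{CahiersToposIsRalgebraic} and the full faithfulness of $\SmoothManifolds^{\mathrm{cor}} \hookrightarrow \ThickenedSmoothSets$ (Thm.~\ref{ManifoldsWithCornersEmbedIntoCahiers}).

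\textbf{Main obstacle.} The delicate point is not the point-set bijection but verifying that the gluing works: one must check that the locally-defined identifications over trivializing charts $U_i$ are compatible on overlaps $U_i \cap U_j$ under the transition functions of $F \to M$. Concretely, the synthetic neighborhood $\DD_{\phi^k_\epsi, U_i}$ and the limit $\lim y(J^n_{U_i} F)$ are each glued from chart data, and one needs the two gluing recipes to be intertwined by the bijection — i.e.\ that the jet-prolongation of a bundle transition function, computed synthetically by composing with $f_!$ as in the proof of Lem.~\ref{SyntheticJetbundleisThickenedSmoothSet}, matches the classical cocycle on $\lim y(J^n_M F)$. This is where the careful bookkeeping of the Hadamard expansion (Lem.~\ref{HadamardsLemma}, Cor.~\ref{HadamardOnManifoldCartesianProduct}) and the functoriality statements from Appendix~\ref{App-SyntheticJetProlongation} carry the weight; everything else is essentially a restatement of Lem.~\ref{JetsofSections=InfinitesimalJets} with probe-parameters appended. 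Since this compatibility is exactly the content of the argument in \cite{KS17}, I would cite it for the remaining details rather than reproduce the full verification.
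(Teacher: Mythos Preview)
Your proposal is correct and takes essentially the same approach as the paper: the paper's proof is explicitly only a sketch that establishes the point-set bijection via Lem.~\ref{JetsofSections=InfinitesimalJets} and Lem.~\ref{SyntheticInfinitesimalNeighborhoodOfManifold} and then defers to \cite{KS17}, invoking precisely the adjunction $\frT_\CF \dashv J^\infty_\CF$ and local triviality of the fiber bundle. You have unpacked these same ingredients in considerably more detail (the reduction to trivializing charts, the explicit local coordinate matching, and the gluing compatibility as the main obstacle), but the strategy and the ultimate appeal to \cite{KS17} are identical.
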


In view of the former theorem, Lem. \ref{JetBundleofIdentitybundle} applied to the case of $\id:M\rightarrow M$ for a finite-dimensional manifold translates to the statement that the identity map has vanishing traditional jets. Similarly, Ex. \ref{SyntheticJetoftrivialbundle} applied to the case of $F\times M\rightarrow M$ recovers the (smooth) set of infinity jets of maps $M\rightarrow F$. Furthermore, the theorem reassures that the synthetic infinite bundle is the natural generalization of the usual jet bundle of a finite-dimensional fiber bundle, in the vast realm of maps between (thickened) smooth spaces, potentially infinite-dimensional, with no local triviality and even non-concrete spaces. Indeed, the jet prolongation of sections may defined internaly and naturally in this context too.

\begin{definition}[\bf Synthetic infinite jet prolongation]\label{SyntheticInfiniteJetProlongation} Let $\pi:\CG\rightarrow \CF$ be a map of smooth sets and $\pi^\infty : J^\infty_\CF \CG\rightarrow \CF$ its infinite jet bundle. The \textit{infinite jet prolongation} is the map (of sets)
$$
j^\infty\;:\;
\Gamma_\CF(\CG)\longrightarrow \Gamma_\CF(J^\infty \CG)
$$
defined by the composition
\begin{align*}
(\sigma:\CF \rightarrow \CG) \;\; \longmapsto \;\;\Big(\CF\xrightarrow{\sim}J^\infty_\CF \CF \xrightarrow{J^\infty_\CF( \sigma)} J^\infty_\CF \CG \Big)\,,
\end{align*}
where the first identification on the right is given in Lem. \ref{JetBundleofIdentitybundle} and the second map is the application of the infinity jet functor \eqref{inftyJetfunctoronMaps} on the smooth map $\sigma$.
\end{definition}

Explicitly, in terms of plots the section $j^\infty \sigma$ is the assignment that fits in the diagram
\[ 
\xymatrix@R=1.2em@C=3em   { &&  J^\infty_\CF \CG(\FR^k\times \DD)\ar[d]^{\pi}
	\\ 
	\CF(\FR^k\times \DD) \ar[rru]^{j^\infty \sigma} \ar[rr]^>>>>>>>>{\id} && \CF(\FR^k\times \DD)\, ,
}   
\]
naturally in $\FR^k\times \DD$, given by
\begin{align*}
j^\infty \sigma \;:\; \CF(\FR^k\times \DD) & \; \longrightarrow \; J^\infty_\CF \CG (\FR^k\times \DD) 
\\ 
\phi^k_\epsi &\; \longmapsto \; 
\big(\sigma \circ \iota_{\phi^k_\epsi}: \DD_{\phi^k_\epsi,\CF}\rightarrow \CG \big) \, .
\end{align*}
The synthetic infinite jet prolongation naturally extends to a smooth map between the corresponding smooth sets of sections of the corresponding bundles,
\begin{align}
j^\infty_\CF \;:\; \mathbold{\Gamma}_\CF(\CG) \longrightarrow \mathbold{\Gamma}_\CF(J^\infty \CG)\, ,
\end{align}
where the (thickened)  smooth set structure of the spaces of sections is exactly as in the finite-dimensional case 
\eqref{FieldSpaceAsPullback}, i.e.,
$$
\mathbold{\Gamma}_\CF(\CG) \big(\FR^{k'}\times \DD' \big)= 
\Big\{\sigma^{k'}_{\epsi'} \,:\, y\big(\FR^{k'}\times \DD' \big) \times \CF \rightarrow \CG \;\; \big\vert \;\; \pi \circ \sigma^{k'}_{\epsi'} = \pr_2 \Big\} \, , $$
with $\FR^{k'}\times \DD'$ plots of smooth sections defined as $\FR^{k'}\times \DD'$-parametrized sections of $\CG\rightarrow \CF$, and similarly for $\mathbold{\Gamma}_\CF (J^\infty \CG)$. Indeed, 
$$
j^\infty_\CF \;:\; \mathbold{\Gamma}_\CF(\CG)\big(\FR^{k'}\times \DD'\big)
\; \longrightarrow \; \mathbold{\Gamma}_\CF(J^\infty \CG)\big(\FR^{k'}\times \DD'\big) 
$$
is defined by the composition
\begin{align*}
\Big(\sigma^{k'}_{\epsi'}\,:\, y\big(\FR^{k'}\times \DD'\big)\times \CF \rightarrow \CG\Big)
\;\; \longmapsto \;\;
\bigg(y\big(\FR^{k'}\times \DD'\big)\times \CF \hookrightarrow J^\infty_{\CF}\big(y(\FR^{k'}\times \DD')\times \CF\big)\xrightarrow{\;J^\infty_\CF(\sigma^{k'}_{\epsi'})\;} J^\infty_\CF(\CG) \! \bigg)    
\end{align*}
with the first inclusion on the right being that of Ex. \ref{SyntheticJetoftrivialbundle}. Explicitly, the smooth map on the right is given on plots by
$$
\big(\psi^k_\epsi,\, \phi^k_\epsi\big) \; \longmapsto\;
\Big( \sigma^{k'}_{\epsi'}\circ \big(\psi^k_{\epsi *} \circ \pi_2 \, , \, \iota_{\phi^k_\epsi}\big)
\;:\, \DD_{\phi^k_\epsi}\rightarrow y\big(\FR^{k'}\times \DD'\big) \times \CF \rightarrow \CG \Big) \, .
$$
\begin{lemma}[\bf Synthetic jet prolongation for fiber bundles]
\label{SyntheticJetprolongforFiberBundle}
Let $\pi: F\rightarrow M$ be a fiber bundle of finite-dimensional manifolds. The synthetic prolongation 
$$
j^\infty_{y(M)}\;:\;\mathbold{\Gamma}_{y(M)}\big(y(F)\big) 
\; \longrightarrow \; 
\mathbold{\Gamma}_{y(M)}\big(J^\infty y(F) \big)\cong \mathbold{\Gamma}_{y(M)}\big(y(J^\infty F)\big)$$
coincides with  the traditional (smooth) jet prolongation
$$j^\infty_M\;:\; \mathbold{\Gamma}_M (F)\longrightarrow \mathbold{\Gamma}_M (J^\infty F)$$
from Lem. \ref{ThickenedInfiniteJetProlongation}, under the natural identifications of the corresponding smooth sets. That is, the diagram 
\[
\xymatrix@C=2.6em@R=1.4em  { \mathbold{\Gamma}_{y(M)}\big(y(F)\big) \ar[rr]^{j^\infty_{y(M)}}  && \mathbold{\Gamma}_{y(M)}\big(J^\infty y(F) \big) \ar[r]^{\sim} &\mathbold{\Gamma}_{y(M)}\big(y(J^\infty F)\big) 
	 \\ 
  \mathbold{\Gamma}_M(F) \ar[rr]^{j^\infty_M} \ar[u]^{\sim}  &&  \mathbold{\Gamma}_M(J^\infty F) \ar[ru]_{\sim} 
}
\]commutes, where the (canonical) vertical isomorphisms are those of Yoneda Lemma. 
\end{lemma}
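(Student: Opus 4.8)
The plan is to prove the commutativity of the diagram by chasing a single $(\FR^{k}\times \DD)$-plot of fields through both paths and checking that the two resulting plots of $\mathbold{\Gamma}_{M}(J^\infty F)$ agree, which suffices by naturality of the Yoneda isomorphisms and the fact that both sides of the diagram are maps of thickened smooth sets. First I would fix a probe $\FR^{k}\times\DD$ and a plot $\phi^{k,\epsi}\in \mathbold{\Gamma}_M(F)(\FR^k\times\DD)$, presented dually by an algebra map $(\phi^{k,\epsi})^\ast$ acting on adapted coordinates $\{x^\mu,u^a\}$ by $x^\mu\mapsto x^\mu$, $u^a\mapsto \phi^a(x^\mu,c^k,\epsi)$ (Def. \ref{ThickenedSmoothSetOfSections}). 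Following the lower-then-right path, the traditional thickened prolongation $j^\infty_M$ (Lem. \ref{ThickenedInfiniteJetProlongation}) sends this to the plot $j^\infty\phi^{k,\epsi}$ with dual action $u^a_I\mapsto \partial_I\phi^a(x^\mu,c^k,\epsi)$, as in \eqref{InfinityJetOfSectionPlotLocally}; then the Yoneda identification $\mathbold{\Gamma}_M(J^\infty F)\cong\mathbold{\Gamma}_{y(M)}(y(J^\infty F))$ carries this verbatim to a section of $y(J^\infty F)$ over $y(M)$.

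Next I would compute the upper path. Under $\mathbold{\Gamma}_M(F)\cong\mathbold{\Gamma}_{y(M)}(y(F))$ the plot $\phi^{k,\epsi}$ becomes a $(\FR^k\times\DD)$-parametrized section $\sigma^{k,\epsi}:y(\FR^k\times\DD)\times y(M)\to y(F)$. Applying $j^\infty_{y(M)}$ from Def. \ref{SyntheticInfiniteJetProlongation}, this is the composite $y(\FR^k\times\DD)\times y(M)\hookrightarrow J^\infty_{y(M)}(y(\FR^k\times\DD)\times y(M))\xrightarrow{J^\infty_{y(M)}(\sigma^{k,\epsi})}J^\infty_{y(M)}(y(F))$, which at the level of an individual point $p\in M$ sends $(*,p)$ to $\sigma^{k,\epsi}\circ\iota_{p}:\DD_{p,M}\to y(F)$, i.e. the section of $y(F)$ over the infinitesimal neighborhood $\DD_{p,M}\hookrightarrow y(M)$ obtained by restricting $\sigma^{k,\epsi}(-,p\text{-nearby})$. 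By Lem. \ref{JetsofSections=InfinitesimalJets} this section over $\DD_{p,M}$ is precisely the data of the jet $j^\infty_p(\phi^{k,\epsi})$, and by Lem. \ref{SyntheticInfinitesimalNeighborhoodOfManifold} the synthetic neighborhood $\DD_{p,M}$ is canonically $y(\DD_{p,M}(\infty))$, so that under the isomorphism $J^\infty_{y(M)}y(F)\cong J^\infty_M F$ of Thm. \ref{SyntheticJetBundleOfFiberBundle} this section is identified with the point $j^\infty_p\phi^{k,\epsi}$ of the traditional jet bundle. Spelling this out dually in the compatible trivialization, exactly as in the proof of Lem. \ref{JetsofSections=InfinitesimalJets}, the coordinate $u^a_I$ of $y(J^\infty F)$ is sent to the $I$-th partial derivative of $\phi^a$ — i.e. one recovers the same local expression $u^a_I\mapsto\partial_I\phi^a(x^\mu,c^k,\epsi)$ as in the lower path. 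Hence the two composites agree plot-wise; naturality in the probe follows since all the identifications invoked (Yoneda, Lem. \ref{JetsofSections=InfinitesimalJets}, Lem. \ref{SyntheticInfinitesimalNeighborhoodOfManifold}, Thm. \ref{SyntheticJetBundleOfFiberBundle}) are functorial under pullback of probes, as established in their respective proofs (cf. the naturality of \eqref{pullbackofJetBundleplots}).

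The main obstacle I anticipate is bookkeeping the chain of canonical isomorphisms carefully enough that the identification ``section over $\DD_{\phi^{k,\epsi},\CF}$ $\leftrightarrow$ jet $j^\infty\phi^{k,\epsi}$'' is seen to be the \emph{same} one that underlies Thm. \ref{SyntheticJetBundleOfFiberBundle}, rather than merely an abstractly-isomorphic one — in other words, making sure the square commutes on the nose and not just up to some unspecified automorphism of the jet bundle. This is handled by working in explicit compatible slice/trivialization charts and tracking the action on coordinate functions throughout, at which point the comparison reduces to the uncontroversial identity of two local coordinate formulas, both equal to $u^a_I\mapsto\partial_I\phi^a$. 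Everything else is routine diagram chasing once the local representatives are in hand.
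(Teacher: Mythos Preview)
Your proposal is correct and follows essentially the same route as the paper's proof: fix a plot of $\mathbold{\Gamma}_M(F)$, unwind the synthetic prolongation using Def.~\ref{SyntheticInfiniteJetProlongation} so that at each base point $p$ it yields the restriction $\sigma^{k,\epsi}\circ\iota_p:\DD_{p,M}\to F$, then invoke Lem.~\ref{JetsofSections=InfinitesimalJets}, Lem.~\ref{SyntheticInfinitesimalNeighborhoodOfManifold} and Thm.~\ref{SyntheticJetBundleOfFiberBundle} to identify this with the traditional jet $j^\infty_p\phi^{k,\epsi}$, matching the local coordinate formula $u^a_I\mapsto\partial_I\phi^a$ from \eqref{InfinityJetOfSectionPlotLocally}. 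The paper organizes the argument in two stages (first $*$-plots, i.e.\ genuine sections, then $(\FR^{k'}\times\DD')$-parametrized sections), whereas you go directly to the parametrized case, but the substance and the key identifications invoked are identical.
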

\begin{proof}
Denote by $\sigma_*$ the Yoneda embedding of a section $\sigma : M\rightarrow F$. The synthetic jet prolongation gives a section $j^\infty \sigma_*: y(M) \rightarrow J^\infty_{y(M)}\big(y(F)\big)$ which acts on points by
\begin{align*}
y(M)(*) &\;\longrightarrow \; J^\infty_{y(M)}y(F) (*)\cong y(J^\infty_M F)(*) \\
p& \; \longmapsto \; \big(\sigma_{*}\circ \iota_p : \DD_{p,y(M)}\rightarrow F\big)\, , 
\end{align*}
where the isomorphism on the right-hand side is by the previous theorem. By Lem. \ref{JetsofSections=InfinitesimalJets} and Lem. \ref{SyntheticInfinitesimalNeighborhoodOfManifold}, this corresponds to $p\mapsto j^\infty_p \sigma \in J^\infty_M(F)$, i.e.,  the traditional prolongation $j^\infty_M \sigma : M\rightarrow J^\infty_M(F)$. Analogously, $j^\infty \sigma_*$ acts on $\FR^k\times \DD$-plots by
\begin{align*}
y(M)(\FR^k\times \DD) &\;\longrightarrow \; J^\infty_{y(M)}y(F) (\FR^k\times \DD)\cong y(J^\infty_M F)\big(\FR^k\times \DD\big) \\
\phi^k_{\epsi *}&\; \longmapsto \; \big(\sigma_{*}\circ \iota_{\phi^k_\epsi} : \DD_{\phi^k_\epsi,y(M)}\rightarrow F\big)\, . 
\end{align*}
Arguing similarly to Lem. \ref{JetsofSections=InfinitesimalJets} and Lem. \ref{SyntheticInfinitesimalNeighborhoodOfManifold}, this corresponds to $\phi^k_\epsi\mapsto j^\infty_M\sigma \circ \phi^k_\epsi \in y(J^\infty_M F)(\FR^k\times \DD)$. Thus overall, the synthetic prolongation $j^\infty \sigma_*$ of a smooth section $\sigma:M\rightarrow F$ coincides with (the Yoneda embedding of) the traditional prolonged smooth section $j^\infty_ M \sigma $.

The above shows that the two jet prolongation maps agree on sections of the bundle, i.e., points of $\mathbold{\Gamma}_M (F)\cong \mathbold{\Gamma}_{y(M)}\big(y(F)\big)$. It remains to show they agree on $\FR^{k'}\times \DD'$-parametrized sections of the bundle, i.e. $\FR^{k'}\times \DD'$-plots of $\mathbold{\Gamma}_M (F)$. As before, for $\sigma^{k'}_{\epsi'}\in \mathbold{\Gamma}_M(F)(\FR^{k'}\times \DD')$ denote by $\sigma^{k'} _{\epsi' *} \in\mathbold{\Gamma}_{y(M)}\big(y(F)\big)(\FR^{k'}\times \DD')$ the corresponding plot. Then its synthetic jet prolongation is a smooth map $j^\infty \sigma^{k'} _{\epsi' *}:y(\FR^{k'}\times \DD')\times y(M) \rightarrow J^\infty_{y(M)}\big(y(F)\big)$ which acts on points by 
\begin{align*}
y\big(\FR^{k'}\times \DD'\big) \times y( M)(*) &\;\longrightarrow \; 
J^\infty_{y(M)}y(F) (*)\cong y\big(J^\infty_M F\big)(*) 
\\
(x,p)&\; \longmapsto \; \Big(\sigma^{k'}_{\epsi' *}\circ (x\circ \pi_2 , \iota_p) : \DD_{p,y(M)}\rightarrow \big(y(\FR^{k'}\times \DD') \times y(M)\big) \rightarrow F\Big)\, , 
\end{align*}

In a similar vein as before, it can be checked that this corresponds to the (vertical) traditional jet prolongation $j^\infty_M \sigma^{k'}_{\epsi'}\in  \mathbold{\Gamma}_{M}(J^\infty F)(\FR^{k'}\times \DD')$, when acting on the underlying points of $\FR^{k'}\times \DD' \times M$. Modulo explicit complicated formulas, it follows that $j^\infty \sigma^{k'}_{\epsi' *}$ corresponds to the traditional jet prolongation acting on $\FR^k\times \DD$-plots of $\FR^{k'}\times \DD' \times M$.
\end{proof}

\subsection{The full classifying nature of the de Rham moduli space}
\label{ModuliOfDifferentialForms}
In this manuscript we have defined the moduli space of de Rham forms (Ex. \ref{ThickenedModuliSpaceOfdeRhamForms}) over the site of thickened Cartesian spaces, by assigning an algebraic version of forms as its plots along infinitesimal probes. By the fully faithful embedding of manifolds into thickened smooth sets (Ex. \ref{ManifoldsWithCornersExample}), the fact that $\mathrm{Sh}(\ThickenedSmoothManifolds)\cong \ThickenedSmoothSets$ and the Yoneda Lemma, this does classify forms \textit{on smooth manifolds} in the traditional sense 
$$
\Omega^1(M) := \mathrm{Hom}^{\mathrm{fib.lin.}}_{\SmoothManifolds}(TM, \FR)  \; \cong \;  \mathrm{Hom}_{\ThickenedSmoothSets}(M,\mathbold{\Omega}^1)\, ,
$$
and similarly for $n$-forms. 

\medskip 
However, the full classifying nature of $\mathbold{\Omega}^1$ as originally envisioned in the setting of synthetic differential geometry \cite{Law80}, where it should modulate / classify forms defined as maps out of arbitrary synthetic tangent bundles $T\CF =[\DD^1(1), \, \CF] $, requires a slightly different extension in the thickened setting rather than the ``Kähler'' extension of Ex. \ref{ThickenedModuliSpaceOfdeRhamForms}. That is, instead of defining its infinitesimal plots by generalizing the algebraic incarnation of forms on the tangent bundle, one defines its plots via the -- maps out of the (synthetic) tangent bundle -- picture. For brevity, and for reasons of further generality, in this Appendix we shall denote $\CE := \ThickenedSmoothSets$ and any thickened Cartesian probe as $e\in E := \ThickenedCartesianSpaces$.

\begin{definition}[\bf Fully classifying moduli space of de Rham forms]\label{FullyClassifyingThickedModuliSpaceOf1Forms}
The fully classifying \textit{thickened moduli space of de Rham $1$-forms} $\widehat{\mathbold{\Omega}}^1$ is defined as the sheaf $\widehat{\mathbold{\Omega}}^1 \in \ThickenedSmoothSets$ with $e$-plots given by
$$
\widehat{\mathbold{\Omega}}^1 (e)\, := \, \Hom_{\CE}^{\mathrm{fib.lin.}}\big(Te, \, \FR\big)\, .  
$$ 
\end{definition}
\begin{remark}[\bf Original form moduli definitions]\label{OriginalDefinitions}
 The original axiomatic definition (cf. \cite[\S 3]{Law80}\cite[\S I.20]{Kock06}) identifies the above object instead as a certain equalizer, which can be shown to be equivalent to the above when expressed plot-wise in our setting. On the other hand, the original source \cite[\S 8]{DK84}, which firstly identifies the classifying space in an explicit sheaf topos, works over a larger category of probes whose (different) definition may still be shown to coincide with an extension of the above in such a larger sheaf topos (Cor. \ref{ModuliSpacesAreIsomorphic}).
\end{remark}

Of course, this version of the moduli space still classifies differential forms on ordinary manifolds for exactly the same reason, i.e., by the Yoneda Lemma. Moreover, as we shall see, if $Te$ were representable in $\CE$ (which is not the case for thickened smooth sets), then this definition would coincide with that of Ex. \ref{ThickenedModuliSpaceOfdeRhamForms} (Cor. \ref{ModuliSpacesAreIsomorphic}). Otherwise, as in our sheaf topos, the relation between the two is that the original $\mathbold{\Omega}$ sits as a subspace inside the latter $\widehat{\mathbold{\Omega}}^1$ (Lem. \ref{RelationOfTwoModuliSpaces}). The reason for the latter's more ``internal'' definition will become clear from the following discussion.

\medskip 
\noindent {\bf The amazing right adjoint.}
Recall, by the properties of the internal hom functor, the synthetic tangent functor $T(-):=[\DD^1(1),-]:\CE \rightarrow \CE $ enjoys a left adjoint \eqref{TangentFunctorHasLeftAdjoint}, 
$$\big(\DD^1(1)\times -\big ) \quad \dashv \quad  T(-)\, ,$$
a property that has proven to be extremely useful in thickened smooth sets (cf. Lem. \ref{ManifoldMappingSpaceTangentBundle}, Prop. \ref{SyntheticTangentBundleOfFieldSpace}, Lem. \ref{SyntheticInfiniteJetTangentBundle}, Cor. \ref{OnshellSyntheticTangentBundle}).
The fact that the synthetic tangent functor should also have a right adjoint was first pointed out in the context of synthetic differential geometry by \cite{Law80}.\footnote{Since then, an object $K\in C$ of a Cartesian closed category $C$ whose mapping space functor $[K,-]$ has a right adjoint is called ``tiny'' or ``atomic''.} 
If such a right adjoint exists, then $[\DD^1(1),-]$ must preserve colimits, which is indeed the case (see \cite[Appendix 4]{MoerdijkReyes}). It turns out that for any sheaf topos, such as $\ThickenedSmoothSets$ and $\CE$, preserving colimits is also sufficient for the existence of a right adjoint, denoted by $(-)_{\DD^1(1)}:\CE \rightarrow \CE$,
\footnote{As proven in \cite{MoerdijkReyes}, \textit{any} infinitesimal point $\DD\in \mathrm{ThCartSp}\hookrightarrow E$ in our site has the property that $[\DD,-]$ preserves colimits, and hence has also a right adjoint $(-)_{\DD}$ given by the same formula.}
$$
\big(\DD^1(1)\times -\big ) \quad \dashv \quad T(-) \quad  \dashv\quad  (-)_{\DD^1(1)} \, .
$$
In other words, there is a canonical bijection 
\begin{align}\label{RightAdjointtoTangentFunctor} 
\mathrm{Hom}_{\CE}(T\CF,\, \CG) \;\; \cong \;\; 
\mathrm{Hom}_{\CE}\big(\CF, \CG_{\DD^1(1)}\big)
\end{align}
for any two smooth spaces $\CF,\CG$, naturally in both entries. Explicitly for any $\CG\in \CE$, the space $\CG_{\DD^1(1)}$ is defined via\footnote{The formula 
defines a \textit{presheaf} $\CG_U$ for any $U\in E$. It is the colimit-preserving property of $[\DD^1(1),-]$ that guarantees the sheaf condition.} 
\begin{align*}
\CG_{\DD^1(1)}\,:\, \mathrm{E}&\xrightarrow{\quad \quad} \mathrm{Set} 
\\[-3pt]
   e &\xmapsto{\quad \quad}  \mathrm{Hom}_{\CE} \big(\big[\DD^1(1),\, e\big]\, , \, \CG \big)\, ,
\end{align*}
and so by definition of the synthetic tangent bundle, the $e$-plots consist of all smooth maps $Te\rightarrow \CG$ from the tangent bundle of $e$ \eqref{FunctionAlgebraOnTangentProbe}. Of particular importance for us is 
the value of this right adjoint on the real line, 
$$ 
\FR_{\DD^1(1)}\;\; \in \;\;  \CE
\, ,$$ whose $e$-plots are identified of \textit{all real-valued smooth maps}, $\mathrm{Hom}_\CE(Te,\FR)$, thus potentially \textit{non-linear} in the fibers. It follows by the adjoint bijection \eqref{RightAdjointtoTangentFunctor} that this thickened space \textit{classifies real valued maps out of the synthetic tangent bundle} $\CF$ of any space sheaf $\CF\in \CE$, in that
\begin{align}\label{ClassifyingFunctionsOfTangentBundle}
\mathrm{Hom}_{\CE}(T\CF,\, \FR) \; \cong \;
\mathrm{Hom}_{\CE}\big(\CF, \FR_{\DD^1(1)}\big) \, .
\end{align}
With the fully classifying moduli space 
$\widehat{\mathbold{\Omega}}^1\in \ThickenedSmoothSets$ defined as in Def. \ref{FullyClassifyingThickedModuliSpaceOf1Forms}, it is immediately seen to be a subobject of $\FR_{\DD^1(1)}$, as its $e$-plots are simply the subset of \textit{linear} (smooth) maps $Te\rightarrow \FR$ in $\CE$.
\begin{corollary}[\bf Form moduli as linear subobject]\label{1FormModuliSpaceIsSubobject}
The moduli space of 1-forms $\widehat{\mathbold{\Omega}}^1$ is canonically identified with a (thickened smooth) subspace of $\FR_{\DD^1(1)}$
\begin{align*}
\widehat{\mathbold{\Omega}}^1\longhookrightarrow \FR_{\DD^1(1)} \quad \in \;\; \CE \, . 
\end{align*}
\end{corollary}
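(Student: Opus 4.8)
The plan is to exhibit the desired inclusion at the level of plots and then invoke the fact that monomorphisms of sheaves are detected plot-wise. First I would recall that, by the very definition of $\FR_{\DD^1(1)}$ as the value of the right adjoint $(-)_{\DD^1(1)}$ on $\FR$, its $e$-plots are
\[
\FR_{\DD^1(1)}(e) \;\cong\; \Hom_{\CE}\big(Te,\, \FR\big),
\]
the set of \emph{all} smooth maps out of the synthetic tangent bundle of the probe $e$, with no linearity constraint imposed along the fibers. On the other hand, Def.~\ref{FullyClassifyingThickedModuliSpaceOf1Forms} declares the $e$-plots of $\widehat{\mathbold{\Omega}}^1$ to be precisely the \emph{fiber-wise linear} such maps, i.e. $\widehat{\mathbold{\Omega}}^1(e) = \Hom_{\CE}^{\mathrm{fib.lin.}}(Te,\FR) \subseteq \Hom_{\CE}(Te,\FR)$. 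So for each $e\in E$ there is a tautological inclusion of sets $\widehat{\mathbold{\Omega}}^1(e)\hookrightarrow \FR_{\DD^1(1)}(e)$.

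The next step is to check that this family of set-inclusions is natural in $e$, i.e. that it is compatible with the restriction maps of the two presheaves. This is where a small amount of care is warranted: the restriction map $\FR_{\DD^1(1)}(e)\to\FR_{\DD^1(1)}(e')$ along $f:e'\to e$ is given by precomposition with the synthetic pushforward $Tf: Te'\to Te$ (this is exactly the functoriality of $T=[\DD^1(1),-]$ recorded in \eqref{SyntheticPushforward}), and the same formula governs restriction in $\widehat{\mathbold{\Omega}}^1$. Since $Tf$ respects the fiber-wise $\FR$-module structure on synthetic tangent bundles of probes (this scaling structure is spelled out in Ex.~\ref{TangetBundleOfInfinitesimalPoint}, cf. \eqref{LinearityOfTangentInfinitesimalPoint} and its generalization to arbitrary thickened probes), precomposition with $Tf$ sends fiber-wise linear maps $Te\to\FR$ to fiber-wise linear maps $Te'\to\FR$. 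Hence the subset $\widehat{\mathbold{\Omega}}^1(e)$ is carried into $\widehat{\mathbold{\Omega}}^1(e')$ under the restriction of $\FR_{\DD^1(1)}$, which is exactly the naturality square commuting. Therefore the plot-wise inclusions assemble into a morphism of presheaves $\widehat{\mathbold{\Omega}}^1 \to \FR_{\DD^1(1)}$, and since both are already sheaves (Def.~\ref{FullyClassifyingThickedModuliSpaceOf1Forms} and the discussion of $(-)_{\DD^1(1)}$ preceding Cor.~\ref{1FormModuliSpaceIsSubobject}), this is a morphism in $\CE$.

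Finally I would argue it is a monomorphism: a map of sheaves is monic precisely when it is injective on each set of plots (limits, hence in particular the kernel pair, are computed object-wise in a sheaf topos), and injectivity here is immediate because the component maps are literal subset inclusions. The fact that the subobject is moreover \emph{linear} — i.e. that the fiber-wise $\FR$-module structure on $\FR_{\DD^1(1)}$ (inherited plot-wise from the codomain $\FR$) restricts to the one on $\widehat{\mathbold{\Omega}}^1$ — follows because the collection of fiber-wise linear maps $Te\to\FR$ is closed under pointwise addition and scalar multiplication, so the inclusion is a map of ``$\FR$-module objects'' over $E$. I do not anticipate a genuine obstacle here; the only subtlety worth flagging is the one already addressed above, namely that one must use the compatibility of the synthetic pushforward $Tf$ with the fiber-wise linear structure of probe tangent bundles to see that the linearity condition is stable under restriction — without that, the subset $\widehat{\mathbold{\Omega}}^1(e)$ would not obviously be a \emph{sub}presheaf of $\FR_{\DD^1(1)}$.
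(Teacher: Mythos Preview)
Your proposal is correct and follows exactly the approach the paper has in mind: the paper treats this corollary as immediate from the definitions, remarking only that the $e$-plots of $\widehat{\mathbold{\Omega}}^1$ are ``simply the subset of \textit{linear} (smooth) maps $Te\rightarrow \FR$ in $\CE$'' and offering no further proof. Your write-up spells out the naturality and monomorphism verifications that the paper leaves implicit, but there is no difference in strategy.
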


Our approach to the definition of the moduli space now allows for a direct \footnote{At least, somewhat more direct than the original source 
\cite{DK84}.} extraction of its classifying property. Indeed, given the above embedding and the fact that $\FR_{\DD^1(1)}$ classifies \textit{all real-valued} maps out of synthetic tangent bundles $T\CF$ (cf. \eqref{ClassifyingFunctionsOfTangentBundle}), the classifying nature of the moduli space of 1-forms $\widehat{\mathbold{\Omega}}^1 \in \CE$ becomes apparent: It classifies is a subset of 
smooth maps $T\CF \rightarrow \FR$ out of the synthetic tangent bundle, which in the case of (an infinitesimally linear) $\CF$ with a fiber-wise $\FR$-linear structure coincides with 
those smooth maps that are furthermore linear in the fibers. 

\begin{proposition}[\bf Classifying nature of form moduli]\label{ClassifyingNatureOfFormModuli}
$\,$
\begin{itemize}[leftmargin=19pt]
\item[\bf (i)] For any generalized thickened smooth space $\CF\in \CE$, there is a canonical injection
\begin{align*}
 \mathrm{Hom}_{\CE}\big(\CF, \widehat{\mathbold{\Omega}}^1\big) \longhookrightarrow \mathrm{Hom}_{\CE}\big(T\CF,\,\FR \big) \, .
\end{align*}
\item[\bf (ii)] Furthermore, for any (infinitesimally linear) $\CF$ whose tangent bundle $T\CF$ has a fiber-wise $\FR$-linear structure, the image of this inclusion corresponds to fiber-wise linear maps
\begin{align*}
 \mathrm{Hom}_{\CE}\big(\CF, \widehat{\mathbold{\Omega}}^1\big)\; \cong \; \mathrm{Hom}_{\CE}^{\mathrm{fib.lin.}}\big(T\CF, \, \FR \big) \, .
\end{align*}
\end{itemize} 
\begin{proof}
Let $\CF$ be any thickened smooth space. Postcomposing any smooth map $\CF\rightarrow \widehat{\mathbold{\Omega}}^1$ with the monomorphism of Cor. \ref{1FormModuliSpaceIsSubobject} yields the 
injection 
$$
\mathrm{Hom}_{\CE}\big(\CF, \widehat{\mathbold{\Omega}}^1\big) \longhookrightarrow
\mathrm{Hom}_{\CE}\big(\CF, \,\FR_{\DD^1(1)} \big) \, .
$$
Under the canonical bijection of the adjoint relation \eqref{RightAdjointtoTangentFunctor}, this yields
$$ 
\mathrm{Hom}_{\CE}\big(\CF, \,\widehat{\mathbold{\Omega}}^1\big) 
\longhookrightarrow
\mathrm{Hom}_{\CE}\big(\CF, \,\FR_{\DD^1(1)} \big) \cong \mathrm{Hom}_{\CE}(T\CF, \,\FR )\, , 
$$
proving the first statement. 

Now notice that when $\CF$ is considered as a colimit on representable probes, $\CF\cong \mathrm{colim}_{i} (e^i)$, then we also have
\begin{align*}
 \mathrm{Hom}_{\CE}\big(\CF, \,\widehat{\mathbold{\Omega}}^1\big) &\cong 
 \mathrm{Hom}_{\CE}\big(\mathrm{colim}_{i} (e^i),\,  \widehat{\mathbold{\Omega}}^1\big)  \\ &\cong  \lim_{i,\mathrm{Set}}\mathrm{Hom}_{\CE}\big(e^i,\, \widehat{\mathbold{\Omega}}^1 \big) \\ 
 &\cong \lim_{i,\mathrm{Set}}\mathrm{Hom}_{\CE}^{\mathrm{fib.lin.}}\big(Te^i , \, \FR\big)\, ,
\end{align*}
where we used that Hom functors commute with (co)limits in the first entry, and then Def. \ref{FullyClassifyingThickedModuliSpaceOf1Forms} along with the Yoneda Lemma. 
Next, consider the case where $T\CF$ has a fiber-wise linear structure. Since fiber-wise linearity is a functorial property, i.e., a composition of fiber-wise linear maps is also such, it follows that the image of the inclusion 
\begin{align*}
\lim_{i,\mathrm{Set}}\mathrm{Hom}_{\CE}^{\mathrm{fib.lin.}}(Te^i, \,\FR) 
\;\; \longhookrightarrow \;\; \lim_{i,\mathrm{Set}}\mathrm{Hom}_{\CE}(Te^i, \,\FR)
& \cong   \mathrm{Hom}_{\CE}(\mathrm{colim}_{i}(Te^i), \,\FR) 
\\[-3pt]
&\cong \mathrm{Hom}_{\CE}(T\CF, \,\FR)  
\end{align*}
is the set of the fiber-wise linear maps $T\CF\rightarrow \FR$, due to the universal cocone property of the colimit.
That is,
$$
\Hom_\CE\big(\CF, \, \widehat{\mathbold{\Omega}}^1\big)  \, \cong \, \lim_{i,\mathrm{Set}}\mathrm{Hom}_{\CE}^{\mathrm{fib.lin.}}\big(Te^i , \, \FR\big) \, \cong \,  \Hom_\CE^{\mathrm{fib.lin.}}(T\CF,\,\FR) \, \longhookrightarrow \, \Hom_\CE(T\CF ,\FR) \, .  
$$ 
The detailed justification of the latter intuition \footnote{This seems to be only implicit in the original proof from \cite[Thm. 8.2]{DK84}.} is the fact that the colimit $\mathrm{colim}_{i}(Te^i)$ may be equivalently computed in the slice category,  $\mathrm{FibLin}_\CF$, of fiber-wise linear ``bundles'' over $\CF$. Indeed, this follows since corresponding pushforward maps $\{Te^i \rightarrow T\CF\}_{i\in I}$ of the diagram $\{e^i \rightarrow \CF\}_{i\in I}$ are all fiber-wise linear and they form a covering of $T\CF$ (in the ambient topos, i.e., a jointly epic family), by virtue of $\{e^i \rightarrow \CF\}_{i\in I}$ being such a covering of $\CF$ and $T=[-\,,\,\DD^1(1)]$ preserving jointly epic families (since it preserves colimits, being a left adjoint). The latter implies precisely that $T\CF$ satisfies the corresponding universal cocone property in $\mathrm{FibLin}_\CF\hookrightarrow \CE_\CF$, namely given a compatible family of fiber-wise linear maps $f_i : T e^i \rightarrow \FR$, there exists a unique fiber-wise linear $f: T\CF \rightarrow \FR$ such that $f_i = f\circ T p_i$. \footnote{This argument is the generalization, in some (well-adapted) topos, of the following simple fact from (Cartesian) vector spaces: Consider a colimit of vector spaces $p_i : \FR^{m_i} \rightarrow \FR^m$ and some \textit{smooth} map $f: \FR^m \rightarrow \FR $ with the property that each $f\circ p_i$ is linear. Then $f$ is also necessarily linear, since each $f|_{\mathrm{Im}(p_i)}$ is linear and $\coprod \mathrm{Im}(p_i)$ covers $\FR^m$, in the set theoretic sense.} 
\end{proof}
\end{proposition}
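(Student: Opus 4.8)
\textbf{Proof plan for Proposition \ref{ClassifyingNatureOfFormModuli}.}

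The plan is to prove both statements simultaneously by writing an arbitrary generalized smooth space $\CF\in\CE$ as a colimit of representable thickened Cartesian probes and transporting the defining formula of $\widehat{\mathbold{\Omega}}^1$ through this colimit via the Yoneda Lemma. First I would recall the adjoint chain $\big(\DD^1(1)\times -\big)\dashv T(-)\dashv(-)_{\DD^1(1)}$, established in the preceding discussion, together with the subobject embedding $\widehat{\mathbold{\Omega}}^1\hookrightarrow\FR_{\DD^1(1)}$ from Cor.~\ref{1FormModuliSpaceIsSubobject}. Postcomposing any map $\CF\to\widehat{\mathbold{\Omega}}^1$ with this monomorphism, then applying the adjoint bijection \eqref{RightAdjointtoTangentFunctor} to $\Hom_\CE(\CF,\FR_{\DD^1(1)})\cong\Hom_\CE(T\CF,\FR)$, yields the canonical injection of part~\textbf{(i)}; here I only need that monos are preserved under postcomposition and that the adjunction isomorphism is natural, both routine.

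For part~\textbf{(ii)} the key computation is: for $\CF\cong\mathrm{colim}_i\,e^i$ a colimit on representables $e^i\in E$, one has
\[
\Hom_\CE\big(\CF,\widehat{\mathbold{\Omega}}^1\big)\;\cong\;\lim_{i,\mathrm{Set}}\Hom_\CE\big(e^i,\widehat{\mathbold{\Omega}}^1\big)\;\cong\;\lim_{i,\mathrm{Set}}\Hom^{\mathrm{fib.lin.}}_\CE\big(Te^i,\FR\big),
\]
using that $\Hom$ sends colimits in the first slot to limits and then the \emph{definition} of $\widehat{\mathbold{\Omega}}^1$ (Def.~\ref{FullyClassifyingThickedModuliSpaceOf1Forms}) together with the Yoneda Lemma applied at each representable probe $e^i$. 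In parallel, since $T=[\DD^1(1),-]$ is a left adjoint it preserves colimits, so $T\CF\cong\mathrm{colim}_i\,Te^i$ and $\Hom_\CE(T\CF,\FR)\cong\lim_{i,\mathrm{Set}}\Hom_\CE(Te^i,\FR)$. The injection from part~\textbf{(i)} is then identified with the map of limits induced by the inclusions $\Hom^{\mathrm{fib.lin.}}_\CE(Te^i,\FR)\hookrightarrow\Hom_\CE(Te^i,\FR)$, so the remaining task is to show its image consists of \emph{exactly} the fiber-wise linear maps $T\CF\to\FR$ when $\CF$ is infinitesimally linear with $T\CF$ carrying a fiber-wise $\FR$-linear structure.

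The main obstacle — and the step requiring genuine care rather than formal bookkeeping — is the last identification: a smooth map $f:T\CF\to\FR$ whose restriction along each $Tp_i:Te^i\to T\CF$ is fiber-wise linear must itself be fiber-wise linear. The plan here is to work in the slice $\mathrm{FibLin}_\CF$ of fiber-wise linear bundles over $\CF$ and argue that $T\CF$, equipped with its linear structure, satisfies the universal cocone property \emph{there}: the pushforwards $\{Tp_i:Te^i\to T\CF\}_{i\in I}$ are all fiber-wise linear (because each $p_i:e^i\to\CF$ is a morphism of probes and the addition/scaling maps are natural, using infinitesimal linearity of $\CF$), and they form a jointly epic family since $\{p_i\}$ covers $\CF$ and $T$ preserves jointly epic families (being a left adjoint). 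Given a compatible family of fiber-wise linear $f_i:Te^i\to\FR$, the colimit property in $\CE$ produces a unique smooth $f:T\CF\to\FR$ with $f\circ Tp_i=f_i$; one then checks $f$ is fiber-wise linear by testing against the scaling map $\cdot:\FR\times T\CF\to T\CF$ and the addition map $+:T\CF\times_\CF T\CF\to T\CF$ (from Rem.~\ref{InfinitesimallyLinearSpaces}) — the two composites $f\circ(\cdot)$ and $(\cdot)\circ(\id_\FR\times f)$ agree after precomposition with the jointly epic $\FR\times Tp_i$, hence agree, and similarly for additivity. This reduces the claim to the elementary fact that a smooth $\FR^m\to\FR$ linear on each summand of a covering colimit of vector spaces is linear. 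I expect the bookkeeping of which colimits are being compared (in $\CE$ versus in $\mathrm{FibLin}_\CF$) and the verification that the comparison injection of part~\textbf{(i)} coincides with the induced map of limits to be the fiddliest points, but none of it is conceptually deep beyond the universal-cocone argument just outlined.
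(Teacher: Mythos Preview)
Your proposal is correct and follows essentially the same route as the paper's proof: part~\textbf{(i)} via postcomposition with the mono $\widehat{\mathbold{\Omega}}^1\hookrightarrow\FR_{\DD^1(1)}$ and the adjunction $T\dashv(-)_{\DD^1(1)}$, and part~\textbf{(ii)} via writing $\CF$ as a colimit of representables, applying Yoneda and Def.~\ref{FullyClassifyingThickedModuliSpaceOf1Forms}, and then arguing in the slice $\mathrm{FibLin}_\CF$ that the jointly epic family $\{Tp_i\}$ (preserved since $T$ is a left adjoint) forces the glued map to be fiber-wise linear. Your explicit verification of linearity by testing against the scaling and addition maps through the jointly epic $Tp_i$ spells out a detail the paper leaves more implicit, but the overall architecture is the same.
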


\begin{remark}[\bf Vector-valued form classifying spaces]
The above results generalize straightforwardly to the case of maps $T\CF \rightarrow \CV$, where $\CV$ has a $\FR$-module structure, for instance if $\CV=y(V)$ is the embedding of a 
finite-dimensional vector space $V\in \mathrm{Vect}_\FR^{\mathrm{f.d.}}$ (viewed as a smooth manifold), or even a thickened smooth set of sections $\mathbold{\Gamma}_M(V)$ of vector bundle $V\rightarrow M$. In such a situation, working as above, one may
identify a sub-object $\mathbold{\Omega}^1(-,\CV) \hookrightarrow \CV_{\DD^1(1)}$ which classifies fiber-wise linear maps into $\CV$.   
\end{remark}

Let us now make explicit the relation of the above to the algebraic definition of the moduli space $\mathbold{\Omega}^1$ from Ex. \ref{ModuliOfDifferentialForms}.
\begin{lemma}[\bf Relation of the two moduli spaces]\label{RelationOfTwoModuliSpaces}
The algebraic moduli space is naturally a subspace of the fully classifying moduli space 
$$
\mathbold{\Omega}^1 \longhookrightarrow \widehat{\mathbold{\Omega}}^1 \, .
$$
\end{lemma}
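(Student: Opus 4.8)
\textbf{Proof plan for Lemma \ref{RelationOfTwoModuliSpaces}.} The plan is to exhibit a natural monomorphism $\mathbold{\Omega}^1 \longhookrightarrow \widehat{\mathbold{\Omega}}^1$ by constructing, for each probe $e = \FR^k \times \DD \in \ThickenedCartesianSpaces$, an injection of plot sets $\mathbold{\Omega}^1(e) \hookrightarrow \widehat{\mathbold{\Omega}}^1(e)$ which is natural in $e$. By Definition \ref{ThickenedModuliSpaceOfdeRhamForms} and Definition \ref{FullyClassifyingThickedModuliSpaceOf1Forms}, on the left we have the module of Kähler-type differential forms $\mathbold{\Omega}^1(\FR^k) \otimes \CO(\DD) \oplus C^\infty(\FR^k)\otimes \mathbold{\Omega}^1(\DD)$ built from \eqref{AlgebraicFormsOnInfinitesimalPoint}, while on the right we have the set $\Hom_{\CE}^{\mathrm{fib.lin.}}(Te,\, \FR)$ of fiber-wise linear maps out of the synthetic tangent bundle of the probe. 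So the heart of the matter is to turn an algebraic $1$-form on $e$ into a fiber-wise linear real-valued map on $Te = [\DD^1(1), e]$.

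First I would recall from Example \ref{TangetBundleOfInfinitesimalPoint} the explicit representing algebra $\CO\big(T(\FR^k \times \DD^m(l))\big)$ of the synthetic tangent bundle of a probe, together with its fiber-wise $\FR$-linear structure \eqref{LinearityOfTangentInfinitesimalPoint}; for a general infinitesimal point one uses an identification from Lemma \ref{InfinitesimalPointsAsSubspacesOfInfinitesimalDisks}. Then, given an algebraic $1$-form $\omega$ on $e$ — which by \eqref{AlgebraicFormsOnInfinitesimalPoint}, together with the usual de Rham forms on $\FR^k$, is an $\CO(e)$-linear combination of generators $\dd v$ for $v \in \CO(e)$ — I would define the associated map $\widehat{\omega}: Te \to \FR$ by pairing: on plots, a tangent vector is (dually) a homomorphism that assigns to each coordinate function its value together with its ``derivative'' component along the $\DD^1(1)$-direction, and $\widehat\omega$ contracts $\omega$ against that derivative component (this is exactly the contraction $\iota_X$ pattern used for jet bundles in \eqref{mformInfinityJetLocalCoordinates}). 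Concretely, writing $\omega = \sum f_i \,\dd v^i$ with $f_i, v^i \in \CO(e)$, one sends a plot of $Te$ represented by the assignment $v^i \mapsto v^i + \epsi\cdot X(v^i)$ to $\sum f_i \cdot X(v^i) \in \CO(\FR^{k'}\times \DD')$. The Leibniz relations defining $\mathbold{\Omega}^1(\DD)$ in \eqref{AlgebraicFormsOnInfinitesimalPoint} are precisely what make this assignment well-defined (independent of the presentation of $\omega$), and the linearity of the tangent-bundle structure \eqref{LinearityOfTangentInfinitesimalPoint} makes $\widehat\omega$ fiber-wise linear. Naturality in $e$ follows because both the de Rham differential and the tangent functor are functorial, so pulling back a form and pulling back the corresponding linear map agree.

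Finally I would check injectivity: if $\widehat{\omega} = \widehat{\omega'}$ as maps $Te \to \FR$, then evaluating on the ``universal'' tangent vectors (the plots picking out $X(v^i) = \epsi$-coefficient for each coordinate separately, available since $Te$ is represented from outside by $\CO(Te)$ with the $y^i$ freely there) recovers the coefficients of $\omega$ and $\omega'$ componentwise, forcing $\omega = \omega'$. This is essentially the same joint-monicity argument used throughout the appendix (cf. the proof of Proposition \ref{RAlgebraMapsAreCinfty}). Assembling the probe-wise injections into a morphism of sheaves gives the monomorphism $\mathbold{\Omega}^1 \hookrightarrow \widehat{\mathbold{\Omega}}^1$. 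The main obstacle I anticipate is bookkeeping the general-infinitesimal-point case cleanly: one must verify that the map $\widehat{(-)}$ is insensitive to the choice of identification $\CO(\DD) \cong \CO(\DD^m(l))/I$ from Lemma \ref{InfinitesimalPointsAsSubspacesOfInfinitesimalDisks}, which amounts to checking compatibility with the automorphism-equivalence there — a routine but slightly fiddly diagram chase. A secondary subtlety is making sure the ``smooth $\FR^k$-part'' and the ``infinitesimal $\DD$-part'' of $\mathbold{\Omega}^1(e)$ glue correctly into a single fiber-wise linear map on $T(\FR^k \times \DD) \cong T\FR^k \times T\DD$, which follows from the product decomposition noted in Example \ref{TangetBundleOfInfinitesimalPoint} but should be stated explicitly.
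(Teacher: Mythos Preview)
Your approach is correct and is essentially the same idea as the paper's, but the paper packages it more cleanly by \emph{factoring the inclusion through the function algebra} $\CO(Te)$. Concretely, the paper first observes that any element $v\in\CO(Te)$ yields a natural family of evaluation maps $\ev_v:\Hom_{\mathrm{CAlg}}(\CO(Te),\CO(e'))\to\CO(e')$, giving an inclusion $\CO(Te)\hookrightarrow\Hom_\CE(Te,\FR)$; then it notes the obvious inclusion $\mathbold{\Omega}^1(e)\hookrightarrow\CO(Te)$ sending $f_i(x)\,\dd x^i\mapsto f_i(x)\,y^i$ in any quotient presentation, and checks its image lands in the fiber-wise linear maps. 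Your contraction formula $\omega\mapsto\big(X\mapsto\sum f_i\cdot X(v^i)\big)$ is exactly what one gets by composing these two inclusions and unwinding $\ev_{f_i y^i}$ on plots, so the content is the same. The payoff of the paper's factorization is that the two bookkeeping obstacles you flag (independence of the identification $\CO(\DD)\cong\CO(\DD^m(l))/I$, and gluing the $\FR^k$- and $\DD$-parts) become automatic: both are absorbed into the single observation that the inclusion $\mathbold{\Omega}^1(e)\hookrightarrow\CO(Te)$ is representation-independent (since both sides are defined by the same quotient data), and injectivity is immediate as a composite of two evident injections rather than requiring a separate ``universal tangent vector'' argument.
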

\begin{proof}
Recall that $\Hom_{\CE}(Te, \FR)$ is the set of natural transformations between the sheaves $Te:= [\DD^1(1), \, e]$ and $\FR$ over $E$. By Ex. \ref{TangetBundleOfInfinitesimalPoint}, we have that $Te(e') := \Hom_{\mathrm{CAlg}}\big(\CO(Te, \, \CO(e') \big)$, while by the Yoneda embedding $y(\FR)(e') \cong \Hom_{\mathrm{CAlg}}\big(\CO(\FR),\, \CO(e')\big)$ $\cong \CO(e')$.
Thus,
$$
\Hom_\CE(Te,\, \FR) \, \cong \, \Big\{ \big\{\Hom_{\mathrm{CAlg}}\big(\CO(Te), \, \CO(e') \big) \longrightarrow \CO(e')\big\}_{\mathrm{nat.}\, \mathrm{in} \,e' \in E}\Big\}\, ,
$$
where the latter set ranges over families of maps which are natural under pullback along probes $e'\in E$. 

Now, any element $v\in \CO(Te)$ defines a family of (natural) maps on the right-hand side via its evaluation on algebra maps
\begin{align*}
\ev_v \, : \,  \Hom_{\mathrm{CAlg}}\big(\CO(Te), \, \CO(e') \big) &\xlongrightarrow{\quad \quad} \CO(e') \\
\phi_{e'}\quad  & \longmapsto \quad  \phi_{e'}(v) \, ,
\end{align*}
which yields an inclusion
\begin{align}\label{TangentFunctionAlgebraIntoInternalRealValuedMaps}
\CO(Te) \longhookrightarrow \Hom_\CE(Te,\, \FR) \, ,
\end{align}
even in the current situation where $Te$ is not representable in $\CE$.

Moreover, in analogy with the canonical embedding 
$$
\widehat{\mathbold{\Omega}}^1 (e)\, := \, \Hom_{\CE}^{\mathrm{fib.lin.}}\big(Te, \, \FR\big)\xhookrightarrow{\quad \quad} \Hom_{\CE}\big(Te, \, \FR\big) 
$$ 
of the plots of the fully classifying moduli space into the set of \textit{internal} real-valued maps on $TE$, there is a canonical embedding of the plots of the algebraic moduli space into the function algebra of $Te$,
$$
\mathbold{\Omega}^1(e) \xhookrightarrow{\quad \quad} \CO(Te)
$$
given in via any quotient representative \footnote{Note that the inclusion is independent of the quotient representative when both sides are pulled back to $\mathbold{\Omega}^1(e)$ and $\CO(Te)$.} \eqref{KahlerFormsInQuotientIdentification} as
$$
f_i(x) \cdot \dd x^i \longmapsto f_i(x) \cdot y^i \, .
$$ The latter's image under \eqref{TangentFunctionAlgebraIntoInternalRealValuedMaps} can be checked to yield (internal) fiber-wise linear maps. 

Summarizing the above observations, we have the following commuting diagram of inclusions 
$$
\xymatrix@R=1.4em@C=3em
{ \CO(Te) \ar[r] &  
\Hom_\CE(Te,\, \FR) 
	\\ 
\mathbold{\Omega}^1(e) \ar[u] \ar[r] & \widehat{\mathbold{\Omega}}^1(e) \ar[u]
	\, , } 
$$
natural in e, which completes the proof.
\end{proof}
For the purposes of local field theory, the above lemma may be interpreted as the consistency of \textit{both} $\mathbold{\Omega}^1$ and $\widehat{\mathbold{\Omega}}^1$ serving as classifying spaces for \textit{local} differential forms on the product $\CF\times M$ (i.e., a field space with the spacetime the fields are defined over). Namely, by Prop. \ref{ClassifyingNatureOfFormModuli} the latter classifies \textit{all} differential forms, i.e. fiber-wise linear maps out of $T(CF\times M)$, and hence \textit{in particular} those of local nature (factoring through $T(J^\infty_M F)$). But by our earlier results (Rem. \ref{LocalFormsViaModuliSpace}), the algebraic moduli space $\mathbold{\Omega^1}$ also classifies such local forms. By Lem. \ref{RelationOfTwoModuliSpaces}, it must be any such two modulating maps, $\widehat{\CP}_c$ and  $\CP_c$, classifying the \textit{same} local form $\CP\in \Omega^{1}_\loc(\CF\times M)$ are the ``same'', in that they commute under the canonical embedding of the classifying spaces
	\[ 
	\xymatrix@R=1.2em@C=2.6em  { && \widehat{\mathbold{\Omega}}^1
		\\ 
		\CF\times M \ar[rru]^{\widehat{\CP}_c} \ar[rr]^>>>>>>>>{\CP_c} &&  \mathbold{\Omega}^1 \ar[u] \, .
	}   
	\]

\begin{remark}[\bf Differential n-form classifying spaces]
The prior discussion generalizes essentially verbatim for the case of the classifying space(s) of differential n-forms on infinitesimally linear spaces (Rem. \ref{InfinitesimallyLinearSpaces}). In this case, the appropriate ``amazing right adjoint'' is instead $(-)_{\DD^n(1)}:\CE \rightarrow \CE $,
$$
\big(\DD^n(1)\times -\big ) \quad \dashv \quad (-)^{\DD^n(1)} \quad  \dashv\quad  (-)_{\DD^n(1)} \, ,
$$
since for infinitesimally linear spaces $T\CF^{\times^{n}_\CF} \cong \big[\DD^n(1),\, \CF\big]$. The above results then generalize accordingly to yield the sequence of subspace identifications
$$
\mathbold{\Omega}^n \longhookrightarrow \widehat{\mathbold{\Omega}}^n \longhookrightarrow \FR_{\DD^n(1)}\, ,
$$
where $\mathbold{\Omega}^n$ is the (algebraic) moduli space (Ex. \ref{ThickenedModuliSpaceOfdeRhamForms}) and $\widehat{\mathbold{\Omega}}^n$ is the fully classifying space with $e$-plots given by alternating maps $(Te)^{\times^n_e}\rightarrow \FR$ in $\CE$ (cf. Def. \ref{FullyClassifyingThickedModuliSpaceOf1Forms}).
\end{remark}

\medskip 
\noindent {\bf The case of tangent bundles being representable.}
As a final aside, we note that it is possible to make the two different definitions of moduli spaces necessarily coincide, if one is willing to work over a larger site of thickened probes. We briefly show how this works here, but we note that due to the above factorization, this is not strictly necessary for the purposes of local field theory.
As we shall see, it is sufficient to consider instead some larger site\footnote{Supplied with some coverage, whose explicit form is not relevant for what follows.} $E$ of smooth loci (duals of finitely generated $C^\infty$-algebras)
$$
\ThickenedCartesianSpaces\longhookrightarrow E \longhookrightarrow C^\infty_{\mathrm{f.g.}}\mbox{-}\mathrm{Alg}^{\op}\, , 
$$
which is \textit{closed under the formation of synthetic tangent bundles}. More precisely, this means that in the category of sheaves
$$
\CE \, := \, \mathrm{Sh}(E)
$$
the synthetic tangent bundle of any object $e\in E$ is representable
$$
\big[\DD^1(1), y(e)\big ] \, \cong \, y(Te)
$$
where
$Te\in E$ is the formal dual of some function algebra $\CO(Te)\in C^\infty_{\mathrm{f.g.}}\mbox{-}\mathrm{Alg}^{\op}$. For instance, this property is indeed implicit in the original source of \cite{DK84} on the classifying space of differential forms, where the classifying space is formalized explicitly within a sheaf topos for the first time.

\medskip 
In particular, this site should include all  tangent bundles to thickened Cartesian spaces (Ex. \ref{TangetBundleOfInfinitesimalPoint}) defined by the function algebras $\CO\big(T(\FR^k\times\DD^m(l))\big)\in C^\infty_{\mathrm{f.g.}}\mbox{-}\mathrm{Alg}^{\op}$
$$
T\big(\FR^k\times \DD^m(l)\big)\, \in \, E ,
$$
and similarly for all other objects in the site. More generally, by \cite[Prop. 1.11]{MoerdijkReyes}, this translates to the following condition on the site $E$:
$$
e\, \in\, E \quad \implies \quad Te \, \in \, E \,,
$$
where for any 
$e$ being the formal dual of 
$$
\CO(e) \, \cong \, C^\infty\big(\FR^k_x\big)/ I \, ,
$$
the corresponding ``tangent bundle'' is defined as the dual of
\begin{align}\label{FunctionAlgebraOnTangentProbe}
\CO(Te) \, : = \, C^\infty\Big(\FR^k_x \times \FR^k_y\Big) \Big/ 
\bigg(I(x), \, \bigg\{\sum_i y_i \frac{\partial f}{\partial x^i} \, | \, f\in I\bigg\} \bigg)\, . 
\end{align}
Examples of such sites are for instance the whole of $C^\infty_{\mathrm{f.g.}}\mbox{-}\mathrm{Alg}^{\op}$, or the smaller sites $\mathbb{G}$ and $\mathbb{F}$ from \cite{MoerdijkReyes}, whose details we do not need to recall for our considerations below\footnote{These sites are in fact closed under arbitrary infinitesimal exponentiations (\cite[Cor. 1.16]{MoerdijkReyes}).}. Of course, the ``minimal'' among such sites towards our goal here, extending $\ThickenedCartesianSpaces$ inside $C^\infty_{\mathrm{f.g.}}\mbox{-}\mathrm{Alg}^{\op}$, is given by including all (iterated) tangent bundles of thickened Cartesian spaces $\FR^k\times \DD$, namely the formal duals of
\begin{align*}
\CO\Big(T\big(\FR^k\times \DD^m(l)\big)\!\Big) & \cong 
C^\infty\Big(\FR^k\times \FR^k \times \FR^m_{x_0} \times \FR^m_{x_1}\Big) 
\Big/ \Big( (x_0)^{l+1}, \, x_1\cdot (x_0)^l \Big)   \, , 
\\
\CO\Big(T^2\big(\FR^k\times \DD^m(l)\big)\!\Big) &\cong 
C^\infty\Big(\big(\FR^k\big)^{\times 4} \times \FR^m_{x_0} \times \FR^m_{x_1} \times \FR^m_{x_2} \times \FR^m_{x_3}\Big) 
\\
& \phantom{AAAAAAAAAAAAA} \Big/ \Big( (x_0)^{l+1}, \,  x_1\cdot (x_0)^l, \, x_2\cdot (x_0)^l,\, x_3\cdot (x_0)^l, \, x_2\cdot x_1 \cdot (x_l)^{l-1}, \, x_3\cdot x_1 \cdot (x_l)^{l-1} \Big)    , 
\end{align*}
and so on for any $\FR^k \times \DD \in \ThickenedCartesianSpaces$.

\medskip 
The first task, then, is to provide an appropriate working definition of the \textit{algebraic} classifying space as a sheaf over this more general category of thickened probe-spaces, extending that of Ex. \ref{ThickenedModuliSpaceOfdeRhamForms}. Since these are duals of \textit{finitely-generated} $C^\infty$-algebras, one can define this directly as follows (cf. \cite[Prop. 5.6]{Joyce19}).

\begin{definition}[\bf Thickened moduli space of de Rham forms]\label{nFormsFormalSmoothSet}
\begin{itemize}
\item[\bf (i)] The \textit{thickened moduli space of de Rham $1$-forms} $\mathbold{\Omega}^1$ is defined as the sheaf $\mathbold{\Omega}^1 \in \CE$ with $e$-plots given by
$$
\mathbold{\Omega}^1 (e)\, := \, \mathrm{Span}_{\CO(e)}\big\{ \dd v \, | \, v \in \CO(e)\big\} \Big/ \dd \big( h(v^1,\cdots,v^q)\big) = \sum_i \frac{\partial h}{\partial z^i} (v^1,\cdots, v^q) \cdot \dd v^i\, ,  
$$ 
where $h\in C^\infty(\FR_z^q)$ ranges through any \textit{smooth} function in $q$-variables, for any $q\in \NN$. Under any identification  $\CO(e)\cong C^\infty(\FR^k_x)/I$, this can be equivalently expressed as
\begin{align}\label{1FormsOnSmoothLoci}
\mathbold{\Omega}^1 (e)\, \cong \, \CO\big(e\big)\cdot\big(\dd x^1,\cdots, \dd x^k\big)\,  \big/ \, \dd I\, ,  
\end{align}
\item[\bf (ii)] Similarly, the \textit{moduli space of $n$-forms} $\mathbold{\Omega}^n$ is the sheaf $\mathbold{\Omega}^n\in \CE$ with $e$-plots 
$$
\mathbold{\Omega}^{n}\big(e\big)\; := \;  \bigwedge^{n}_{\CO\left(e\right)} \mathbold{ \Omega}^1\big(e\big) \, .
$$
\end{itemize} 
\end{definition}
It is immediately obvious that for $e=\FR^k$, this recovers the standard differential forms on $\FR^k$
$$
\mathbold{\Omega}^{1}(\FR^k)\, \cong\,  \Omega^1(\FR^k)\, , 
$$
while, by Hadamard's Lemma \ref{HadamardsLemma}, for any infinitesimal point $\DD$, this recovers the algebraic ``Kähler'' differential forms of the underlying $\FR$-algebra\footnote{Namely, by Hadamard's lemma one can check that the quotient relation is equivalent to that of quotienting only by polynomials.} from \eqref{AlgebraicFormsOnInfinitesimalPoint}.
More generally, by the partial Hadamard's Lemma \ref{PartialHadamardsLemma}, one sees similarly that this definition reproduces 
$$
\mathbold{\Omega}^1(\FR^k\otimes \DD) \; \cong \; \Omega^1(\FR^k)\otimes \CO(\DD)\oplus C^\infty(\FR^k)\otimes \mathbold{\Omega}^1(\DD)
$$
on any thickened Cartesian space $\FR^k\times \DD$. The following immediate result justifies this definition on such more general thickened smooth probe spaces $e \in E$.
\begin{lemma}[\bf 1-forms inject into tangent algebras]
\label{1formsInjectIntoTangentAlgebras}
There is a \textit{canonical} injection 
\begin{align*}
\mathbold{\Omega}^1(e) \; \longhookrightarrow  \; \CO(Te) \, \cong \,  \Hom_E(Te, \, \FR)\, \cong \, \Hom_\CE(Te,\, \FR)\, 
\end{align*}
of $1$-forms on $e\in E$ into the (smooth) functions on $Te \in E$, whose image is the fiber-wise linear maps internal to $\CE$.
\end{lemma}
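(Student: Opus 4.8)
The plan is to exhibit the injection explicitly on generators, using the quotient presentations of both sides provided by Hadamard's Lemma, and then to identify its image with the fiber-wise linear maps internal to $\CE$. First I would fix an identification $\CO(e) \cong C^\infty(\FR^k_x)/I$ for some ideal $I$, so that by Def. \ref{nFormsFormalSmoothSet}(i) (via \eqref{1FormsOnSmoothLoci}) we have $\mathbold{\Omega}^1(e) \cong \CO(e)\cdot(\dd x^1,\dots,\dd x^k)/\dd I$, while by \eqref{FunctionAlgebraOnTangentProbe} the function algebra $\CO(Te)$ is the quotient of $C^\infty(\FR^k_x \times \FR^k_y)$ by the ideal generated by $I(x)$ together with $\{\sum_i y_i \,\partial_{x^i} f \mid f \in I\}$. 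The candidate map sends a class $[f_i(x)\,\dd x^i] \in \mathbold{\Omega}^1(e)$ to the class $[f_i(x)\, y^i] \in \CO(Te)$. Well-definedness amounts to checking that $\dd I$ maps into the relation ideal of $\CO(Te)$: for $f \in I$, the $1$-form $\dd f = \partial_{x^i} f \cdot \dd x^i$ maps to $\partial_{x^i} f \cdot y^i = \sum_i y_i\, \partial_{x^i} f$, which is by construction one of the generating relations of $\CO(Te)$; linearity over $\CO(e)$ is immediate, and one also checks naturality in $e$. The final identifications $\CO(Te) \cong \Hom_E(Te,\FR) \cong \Hom_\CE(Te,\FR)$ hold because $Te$ is assumed representable in $E$ (the standing hypothesis of this subsection: the site is closed under synthetic tangent bundles), so the Yoneda Lemma applies, and $\Hom_E(Te,\FR) \cong \CO(Te)$ is the fully faithful embedding of $E$ into $\mathrm{CAlg}^{\op}$ (Prop. \ref{CartSptoAlgebras} and its $C^\infty$-algebraic refinement).

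Next I would address injectivity. The cleanest route is to use the fiber-wise $\FR$-linear structure on $Te \to e$ (present because, as noted in Ex. \ref{TangetBundleOfInfinitesimalPoint} and Rem. \ref{InfinitesimallyLinearSpaces}, all objects of the site are microlinear, so their tangent bundles carry a fiber-wise module structure — dually this is the coaction $y^i \mapsto y^i_1 + y^i_2$, $x^i \mapsto x^i$ of \eqref{LinearityOfTangentInfinitesimalPoint}). An element of $\CO(Te)$ that is homogeneous of degree $1$ in the $y$'s is precisely a representative of the form $f_i(x)\, y^i$; and a general smooth function on $\FR^k_x \times \FR^k_y$ that represents a fiber-wise linear map, after expanding in the $y$-variables by Hadamard's Lemma \ref{HadamardsLemma}, must have its degree-$0$ part (in $y$) lying in $I$ and its higher-degree parts lying in the relation ideal. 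Hence the degree-$1$-in-$y$ part, modulo the ideal, exhausts the fiber-wise linear classes; this both shows the image is exactly the fiber-wise linear maps and shows the map from $\mathbold{\Omega}^1(e)$ is injective, since two $1$-forms with the same degree-$1$ representative modulo $\dd I$ are equal in $\mathbold{\Omega}^1(e)$ by definition of the quotient.

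The main obstacle I anticipate is the precise bookkeeping that the ``higher-degree-in-$y$'' parts of a fiber-wise-linear representative, and the degree-$0$ part, really do lie in the relation ideal defining $\CO(Te)$ — this is where one must invoke the (partial) Hadamard Lemma \ref{PartialHadamardsLemma} carefully and also the fact that $\FR$-algebra maps into thickened algebras are automatically $C^\infty$-algebra maps (Prop. \ref{RAlgebraMapsAreCinfty}, Lem. \ref{PlotsOfSmoothClosedSubspaces}), so that the notion of ``fiber-wise linear'' computed internally to $\CE$ agrees plot-wise with the algebraic condition of being degree-$1$-homogeneous modulo the ideal. An alternative and perhaps more robust argument for this last point, which I would present as a remark, is to reduce to representables by writing $Te$ itself via a colimit of $\FR^m$'s and using that a smooth function on $\FR^m$ whose restriction to a jointly-epic family of linear subspaces is linear is itself linear — exactly the elementary fact invoked in the proof of Prop. \ref{ClassifyingNatureOfFormModuli}. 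Modulo these checks, the statement follows; naturality in $e$ is routine once the generator-level formula is fixed, since all the maps in sight are defined by their action on coordinate functions.
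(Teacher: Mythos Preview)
Your proposal is correct and follows essentially the same route as the paper: define the map on generators by $f_i(x)\,\dd x^i \mapsto f_i(x)\,y^i$, check well-definedness via the relation ideal of $\CO(Te)$, and invoke representability of $Te$ plus Yoneda for the identifications $\CO(Te) \cong \Hom_E(Te,\FR) \cong \Hom_\CE(Te,\FR)$. The one mild difference in emphasis is that the paper essentially \emph{defines} $\Hom_E^{\mathrm{fib.lin.}}(Te,\FR)$ to be the image of this map (eq.~\eqref{FiberwiseLinearFunctionsOnTangentBundle}) and then asserts this coincides with the internally fiber-wise linear maps under Yoneda, whereas you work in the opposite direction, starting from the internal linearity structure and using a Hadamard expansion in the $y$-variables to extract the degree-$1$ part; your approach is more explicit about injectivity and about why the image is exactly the linear subobject, while the paper's is more declarative.
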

\begin{proof} This follows similarly to the corresponding part of Lem. \ref{RelationOfTwoModuliSpaces}, but now noting that $Te$ is representable in the new topos. More explicitly, the first inclusion is immediately obvious under any representation $\CO(e)\cong C^\infty(\FR^k_x)/I$ and the induced ones for $\CO(Te)$ \eqref{FunctionAlgebraOnTangentProbe} and $\mathbold{\Omega}^1(e)$ \eqref{1FormsOnSmoothLoci}, whereby it is identified with the mapping 
$$
f_i(x) \cdot \dd x^i \longmapsto f_i(x) \cdot y^i \, .
$$
Note that the inclusion is independent of the quotient representative when both sides are pulled back to $\mathbold{\Omega}^1(e)$ and $\CO(Te)$.
The latter isomorphism above follows since in any site $E$ consisting of duals of finitely generated algebras one has 
$$
\Hom_E(Te,\, \FR) \, \cong \, \Hom_{C^\infty\mbox{-}\Alg}\big(C^\infty(\FR), \, \CO(Te)\big) \, \cong \, \CO(Te) \, .
$$

This justifies thinking of $1$-forms on $e$ as the subset of \textit{smooth and fiber-wise linear} functions $Te \rightarrow e $ in the site $E$, since the inclusion maps onto elements linear in ``the coordinate along the fibers'' $f_i(x)\cdot y^i \in \CO(Te)$ from \eqref{FunctionAlgebraOnTangentProbe}. That is,
\begin{align}\label{FiberwiseLinearFunctionsOnTangentBundle}
\Hom_E^{\mathrm{fib.lin.}}(Te, \, \FR) \, := \, \mathrm{Im}\big(\Omega^1(e)\big) \;\; \subset \;\; \CO(Te)\cong \Hom_E (Te,\, \FR)\, . 
\end{align}
Indeed, under the Yoneda embedding, the above image is identified with those maps that are \textit{internally fiber-wise linear}, so that
$$
\mathbold{\Omega}^1(e) \, \cong \, \Hom_\CE^{\mathrm{fib.lin.}}(Te, \, \FR) \, .
$$

\vspace{-5mm} 
\end{proof}

Since the fully classifying space of differential forms  may be defined in this larger sheaf topos precisely as in Def. \ref{FullyClassifyingThickedModuliSpaceOf1Forms}, the following is immediate from Lem. \ref{1formsInjectIntoTangentAlgebras}. 
\begin{corollary}[\bf Moduli spaces are isomorphic]\label{ModuliSpacesAreIsomorphic}
In a larger sheaf topos where tangent bundles are representable, the algebraic moduli space  is canonically isomorphic to the corresponding fully classifying moduli space
$$
\mathbold{\mathbold{\Omega}}^1 \, \cong \, \widehat{\mathbold{\Omega}}^1\, .
$$
\end{corollary}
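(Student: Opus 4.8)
The plan is to read the isomorphism off stagewise from Lemma \ref{1formsInjectIntoTangentAlgebras} and then promote it to an isomorphism of sheaves by checking naturality in the probe. First I would record the setup: in the enlarged topos $\CE = \mathrm{Sh}(E)$ the standing hypothesis is exactly that each synthetic tangent bundle $Te = [\DD^1(1),y(e)]$ is representable, with function algebra $\CO(Te)$ presented as in \eqref{FunctionAlgebraOnTangentProbe}. Transporting Def. \ref{FullyClassifyingThickedModuliSpaceOf1Forms} to $E$, one has $\widehat{\mathbold{\Omega}}^1(e) = \Hom_\CE^{\mathrm{fib.lin.}}(Te,\FR)$, and by representability together with the Yoneda lemma this is the subset of internally fiber-wise-linear elements of $\CO(Te)\cong \Hom_\CE(Te,\FR)$. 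On the other hand Lemma \ref{1formsInjectIntoTangentAlgebras} gives a canonical injection $\mathbold{\Omega}^1(e)\hookrightarrow \CO(Te)$, written on representatives (under any presentation $\CO(e)\cong C^\infty(\FR^k_x)/I$) as $f_i(x)\,\dd x^i \mapsto f_i(x)\,y^i$, whose image is precisely $\Hom_\CE^{\mathrm{fib.lin.}}(Te,\FR)$; that is its closing line states $\mathbold{\Omega}^1(e)\cong \Hom_\CE^{\mathrm{fib.lin.}}(Te,\FR)$. Composing the two identifications yields, for every $e\in E$, a canonical bijection $\theta_e : \mathbold{\Omega}^1(e)\xrightarrow{\ \sim\ }\widehat{\mathbold{\Omega}}^1(e)$.

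Next I would verify that $\{\theta_e\}$ is natural, i.e. compatible with restriction along an arbitrary morphism $\psi : e' \to e$ in $E$ (formally dual to $\psi^* : \CO(e)\to\CO(e')$). The restriction map of $\widehat{\mathbold{\Omega}}^1$ along $\psi$ is precomposition with $T\psi := [\DD^1(1),y(\psi)] : Te'\to Te$, i.e. the pullback $(T\psi)^*:\CO(Te)\to\CO(Te')$, which in the presentation \eqref{FunctionAlgebraOnTangentProbe} sends $x^i\mapsto \psi^*x^i$ and $y^i \mapsto \sum_j y^j\,\partial_{x^j}(\psi^*x^i)$ (the explicit action of the synthetic tangent functor on a map of smooth loci, cf. Ex. \ref{TangetBundleOfInfinitesimalPoint}). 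The restriction map of $\mathbold{\Omega}^1$ along $\psi$ is the pullback of Kähler forms $\omega\mapsto\psi^*\omega$, which on $f_i(x)\,\dd x^i$ equals $(f_i\!\circ\!\psi)\cdot\dd(\psi^*x^i) = (f_i\!\circ\!\psi)\cdot\sum_j \partial_{x^j}(\psi^*x^i)\,\dd x^j$, using $\dd\circ\psi^* = \psi^*\circ\dd$ together with the Leibniz/chain rule built into the presentation of $\mathbold{\Omega}^1(e')$ in Def. \ref{nFormsFormalSmoothSet}. Applying $\theta$ to both sides and comparing with $(T\psi)^*\big(f_i(x)\,y^i\big)$ shows $\theta_{e'}\circ(\psi^*)_{\mathbold{\Omega}^1} = (\psi^*)_{\widehat{\mathbold{\Omega}}^1}\circ\theta_e$. (Alternatively one may simply invoke that both identifications in Lemma \ref{1formsInjectIntoTangentAlgebras} were asserted natural in $e$, and that $\widehat{\mathbold{\Omega}}^1$ is by construction the sheaf of internally fiber-wise-linear maps out of $T(-)$, so naturality is forced.)

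Having a natural isomorphism on all representables $e\in E$, and since $\mathbold{\Omega}^1$ and $\widehat{\mathbold{\Omega}}^1$ are both sheaves on $E$ (hence determined by their values on representables), it follows that $\theta$ is an isomorphism $\mathbold{\Omega}^1\cong\widehat{\mathbold{\Omega}}^1$ in $\CE$, as claimed. The same argument applies verbatim with $[\DD^1(1),-]$ replaced by $[\DD^n(1),-]$ and ``fiber-wise linear'' replaced by ``alternating $n$-linear'', giving $\mathbold{\Omega}^n\cong\widehat{\mathbold{\Omega}}^n$ on infinitesimally linear objects. I expect the naturality verification to be the only step with genuine content: it is where one must use the explicit description of how $[\DD^1(1),-]$ acts on morphisms of smooth loci via \eqref{FunctionAlgebraOnTangentProbe} and match it against the Leibniz-rule functoriality of Kähler differentials from Def. \ref{nFormsFormalSmoothSet}; everything else — representability, Yoneda, and the stagewise bijection — is already packaged in the hypotheses and in Lemma \ref{1formsInjectIntoTangentAlgebras}.
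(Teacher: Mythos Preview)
Your proposal is correct and follows the same route as the paper: the paper declares the corollary ``immediate from Lem.~\ref{1formsInjectIntoTangentAlgebras}'' once $\widehat{\mathbold{\Omega}}^1$ is defined over the enlarged site, which is exactly your stagewise identification $\mathbold{\Omega}^1(e)\cong \Hom_\CE^{\mathrm{fib.lin.}}(Te,\FR)=\widehat{\mathbold{\Omega}}^1(e)$. Your explicit naturality check (matching $(T\psi)^*$ against the Leibniz rule for K\"ahler differentials) is more detail than the paper supplies, but it is the obvious unwinding of ``canonical'' and does not diverge from the paper's argument.
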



\end{document}